\def\@biblabel#1{[#1]} 
\def\thebibliography#1{%
    \footnotesize
    \refsection*{{\refname}
        \@mkboth{\uppercase{\refname}}{\uppercase{\refname}}%
    }
    \list{\@biblabel{\@arabic\c@enumiv}}
       {\settowidth\labelwidth{\@biblabel{#1}}%
        \leftmargin\labelwidth
        \advance\leftmargin\bibindent
        \itemindent-\bibindent
        \itemsep2pt
        \parsep \z@
        \usecounter{enumiv}
        \let\p@enumiv\@empty
        \renewcommand\theenumiv{\@arabic\c@enumiv}%
    }%
    \let\newblock\@empty
    \sloppy
    \sfcode`\.=1000\relax
}
\DeclareDocumentCommand{\hcancel}{mO{0pt}O{1pt}O{0pt}O{-1pt}}{%
    \tikz[baseline=(tocancel.base)]{
        \node[inner sep=0pt,outer sep=0pt] (tocancel) {#1};
        \draw[gray] ($(tocancel.south west)+(#2,#3)$) -- ($(tocancel.north east)+(#4,#5)$);
    }%
}%
\let\myfresh\#
\def\#{\ensuremath{\text{\tt\myfresh}}}
\newcommand\cin{\in}
\newcommand\tneg{{\pmb\neg}}
\newcommand\ttop{{\pmb\top}}
\newcommand\tbot{{\pmb\bot}}
\newcommand\teq{{\pmb{=}}}
\newcommand\tand{{\pmb\wedge}}
\newcommand\tor{{\pmb\vee}}
\newcommand\timp{{\pmb\Rightarrow}}
\newcommand\tiff{{\pmb\Leftrightarrow}}
\newcommand\tall{{\pmb\forall}}
\newcommand\texi{{\pmb\exists}}
\newcommand\lneg{{\neg}}
\newcommand\ltop{\top}
\newcommand\lbot{\bot}
\renewcommand\land{\wedge}
\renewcommand\lor{\vee}
\newcommand\limp{\Rightarrow}
\newcommand\Tarski[1]{\f{Tarski}(#1)}
\newcommand\Func{{\Rightarrow}}
\newcommand{\fa}{\f{fa}}
\newcommand\den[1]{{\hspace{.00ex}\scalebox{.55}{$#1$}}}
\newcommand\iden{\den{\interp I}} 
\newcommand\lmathcal{\den{\mathcal L}} 
\newcommand\nden{\den{\interp N}}
\newcommand{\idenot}[1]{\denot{\interp I}{}{#1}}
\newcommand\interp[1]{\ensuremath{\ns #1}}
\newcommand{\denot}[3]{\llbracket #3 \rrbracket_{\scalebox{.6}{$#2$}}^\den{#1}} 
\newcommand{\ndenot}[2]{\denot{\mathcal N}{#1}{#2}}
\newcommand\nontriv{\f{nontriv}}
 \renewenvironment{thebibliography}[1]{%
   \begin{odlthebibliography}{#1}%
     \setlength{\parskip}{0ex}%
     \setlength{\itemsep}{3pt}%
     \fontsize{10}{10} 
     \selectfont
}%
 {%
   \end{odlthebibliography}%
 }
\newlength{\mylength}
\newenvironment{frameqn}%
{\setlength{\fboxsep}{5pt}
\setlength{\mylength}{\linewidth}%
\addtolength{\mylength}{-2\fboxsep}%
\addtolength{\mylength}{-2\fboxrule}%
\Sbox
\minipage{\mylength}%
\setlength{\abovedisplayskip}{0pt}%
\setlength{\belowdisplayskip}{0pt}%
$$}%
{$$\endminipage\endSbox
{\setlength{\abovedisplayskip}{1pt}%
\setlength{\belowdisplayskip}{0pt}%
\[\fbox{\TheSbox}\]}}
\newenvironment{frametxt}%
{\setlength{\fboxsep}{5pt}
\setlength{\mylength}{\linewidth}%
\addtolength{\mylength}{-2\fboxsep}%
\addtolength{\mylength}{-2\fboxrule}%
\Sbox
\minipage{\mylength}%
\setlength{\abovedisplayskip}{5pt}%
\setlength{\belowdisplayskip}{5pt}%
}%
{\endminipage\endSbox
{\setlength{\abovedisplayskip}{1pt}%
\setlength{\belowdisplayskip}{0pt}%
\[\fbox{\TheSbox}\]}}
\newdimen\proofrulebreadth \proofrulebreadth=.05em
\newdimen\proofdotseparation \proofdotseparation=1.25ex
\newdimen\proofrulebaseline \proofrulebaseline=2ex
\let\then\relax
\def\hfi{\hskip0pt plus.0001fil}
\mathchardef\squigto="3A3B
\newif\ifinsideprooftree\insideprooftreefalse
\newif\ifonleftofproofrule\onleftofproofrulefalse
\newif\ifproofdots\proofdotsfalse
\newif\ifdoubleproof\doubleprooffalse
\let\wereinproofbit\relax
\newdimen\shortenproofleft
\newdimen\shortenproofright
\newdimen\proofbelowshift
\newbox\proofabove
\newbox\proofbelow
\newbox\proofrulename
\def\shiftproofbelow{\let\next\relax\afterassignment\setshiftproofbelow\dimen0 }
\def\shiftproofbelowneg{\def\next{\multiply\dimen0 by-1 }%
\afterassignment\setshiftproofbelow\dimen0 }
\def\setshiftproofbelow{\next\proofbelowshift=\dimen0 }
\def\setproofrulebreadth{\proofrulebreadth}
\def\prooftree{
%
\ifnum  \lastpenalty=1
\then   \unpenalty
\else   \onleftofproofrulefalse
\fi
%
\ifonleftofproofrule
\else   \ifinsideprooftree
        \then   \hskip.5em plus1fil
        \fi
\fi
%
\bgroup
\setbox\proofbelow=\hbox{}\setbox\proofrulename=\hbox{}%
\let\justifies\proofover\let\leadsto\proofoverdots\let\Justifies\proofoverdbl
\let\using\proofusing\let\[\prooftree
\ifinsideprooftree\let\]\endprooftree\fi
\proofdotsfalse\doubleprooffalse
\let\thickness\setproofrulebreadth
\let\shiftright\shiftproofbelow \let\shift\shiftproofbelow
\let\shiftleft\shiftproofbelowneg
\let\ifwasinsideprooftree\ifinsideprooftree
\insideprooftreetrue
%
\setbox\proofabove=\hbox\bgroup$\displaystyle 
\let\wereinproofbit\prooftree
%
\shortenproofleft=0pt \shortenproofright=0pt \proofbelowshift=0pt
%
\onleftofproofruletrue\penalty1
}
\def\eproofbit{
%
\ifx    \wereinproofbit\prooftree
\then   \ifcase \lastpenalty
        \then   \shortenproofright=0pt  
        \or     \unpenalty\hfil         
        \or     \unpenalty\unskip       
        \else   \shortenproofright=0pt  
        \fi
\fi
%
\global\dimen0=\shortenproofleft
\global\dimen1=\shortenproofright
\global\dimen2=\proofrulebreadth
\global\dimen3=\proofbelowshift
\global\dimen4=\proofdotseparation
\global\count255=\proofdotnumber
%
$\egroup  
%
\shortenproofleft=\dimen0
\shortenproofright=\dimen1
\proofrulebreadth=\dimen2
\proofbelowshift=\dimen3
\proofdotseparation=\dimen4
\proofdotnumber=\count255
}
\def\proofover{
\eproofbit 
\setbox\proofbelow=\hbox\bgroup 
\let\wereinproofbit\proofover
$\displaystyle
}%
\def\proofoverdbl{
\eproofbit 
\doubleprooftrue
\setbox\proofbelow=\hbox\bgroup 
\let\wereinproofbit\proofoverdbl
$\displaystyle
}%
\def\proofoverdots{
\eproofbit 
\proofdotstrue
\setbox\proofbelow=\hbox\bgroup 
\let\wereinproofbit\proofoverdots
$\displaystyle
}%
\def\proofusing{
\eproofbit 
\setbox\proofrulename=\hbox\bgroup 
\let\wereinproofbit\proofusing
\kern0.3em$
}
\def\endprooftree{
\eproofbit 
  \dimen5 =0pt
%
\dimen0=\wd\proofabove \advance\dimen0-\shortenproofleft
\advance\dimen0-\shortenproofright
%
\dimen1=.5\dimen0 \advance\dimen1-.5\wd\proofbelow
\dimen4=\dimen1
\advance\dimen1\proofbelowshift \advance\dimen4-\proofbelowshift
%
\ifdim  \dimen1<0pt
\then   \advance\shortenproofleft\dimen1
        \advance\dimen0-\dimen1
        \dimen1=0pt
        \ifdim  \shortenproofleft<0pt
        \then   \setbox\proofabove=\hbox{%
                        \kern-\shortenproofleft\unhbox\proofabove}%
                \shortenproofleft=0pt
        \fi
\fi
%
\ifdim  \dimen4<0pt
\then   \advance\shortenproofright\dimen4
        \advance\dimen0-\dimen4
        \dimen4=0pt
\fi
%
\ifdim  \shortenproofright<\wd\proofrulename
\then   \shortenproofright=\wd\proofrulename
\fi
%
\dimen2=\shortenproofleft \advance\dimen2 by\dimen1
\dimen3=\shortenproofright\advance\dimen3 by\dimen4
%
\ifproofdots
\then
        \dimen6=\shortenproofleft \advance\dimen6 .5\dimen0
        \setbox1=\vbox to\proofdotseparation{\vss\hbox{$\cdot$}\vss}%
        \setbox0=\hbox{%
                \advance\dimen6-.5\wd1
                \kern\dimen6
                $\vcenter to\proofdotnumber\proofdotseparation
                        {\leaders\box1\vfill}$%
                \unhbox\proofrulename}%
\else   \dimen6=\fontdimen22\the\textfont2 
        \dimen7=\dimen6
        \advance\dimen6by.5\proofrulebreadth
        \advance\dimen7by-.5\proofrulebreadth
        \setbox0=\hbox{%
                \kern\shortenproofleft
                \ifdoubleproof
                \then   \hbox to\dimen0{%
                        $\mathsurround0pt\mathord=\mkern-6mu%
                        \cleaders\hbox{$\mkern-2mu=\mkern-2mu$}\hfill
                        \mkern-6mu\mathord=$}%
                \else   \vrule height\dimen6 depth-\dimen7 width\dimen0
                \fi
                \unhbox\proofrulename}%
        \ht0=\dimen6 \dp0=-\dimen7
\fi
%
\let\doll\relax
\ifwasinsideprooftree
\then   \let\VBOX\vbox
\else   \ifmmode\else$\let\doll=$\fi
        \let\VBOX\vcenter
\fi
\VBOX   {\baselineskip\proofrulebaseline \lineskip.2ex
        \expandafter\lineskiplimit\ifproofdots0ex\else-0.6ex\fi
        \hbox   spread\dimen5   {\hfi\unhbox\proofabove\hfi}%
        \hbox{\box0}%
        \hbox   {\kern\dimen2 \box\proofbelow}}\doll%
%
\global\dimen2=\dimen2
\global\dimen3=\dimen3
\egroup 
\ifonleftofproofrule
\then   \shortenproofleft=\dimen2
\fi
\shortenproofright=\dimen3
%
\onleftofproofrulefalse
\ifinsideprooftree
\then   \hskip.5em plus 1fil \penalty2
\fi
}
\newcolumntype{L}[1]{>{$}p{#1}<{$}}
\newcolumntype{C}[1]{>{\centering$}p{#1}<{$}}
\newcolumntype{R}[1]{>{\raggedleft$}p{#1}<{$}}
\newcommand\maketab[2]
\newenvironment{#1}{\begin{quotation}\noindent\begin{tabular}{#2}}{\end{tabular}\end{quotation} }
    \newenvironment{#1noquote}{\noindent\begin{tabular}{#2}}{\end{tabular}}
\newcommand\ns[1]{\mathscr{#1}}
\newcommand{\freshcap}[1]{\mbox{$\bigcap^{\hspace{-.1ex}\raisebox{-.2ex}{\scalebox{.6}{$\# #1$}}}$}}
\newcommand{\freshwedge}[1]{\mbox{$\bigwedge^{\hspace{-.5ex}\raisebox{-.2ex}{\scalebox{.6}{$\# #1$}}}$}}
\newcommand{\freshvee}[1]{\mbox{$\bigvee^{\hspace{-.2ex}\raisebox{-.2ex}{\scalebox{.6}{$\# #1$}}}$}}
\newcommand\strict{\f{Strict}}
\newcommand\powerset{\f{PowSet}}
\newcommand\nompow{\f{NomPow}} 
\newcommand{\powsigma}{\f{Pow}_{\hspace{-1pt}\sigma}}
\newcommand{\powamgis}{\f{Pow}_{\hspace{-1pt}\scalebox{.74}{$\amgis$}}}
\newcommand{\ar}{\f{ar}}
\def\:{{\hspace{-1pt}{:}\hspace{-1.25pt}{:}\hspace{-.5pt}}}
\newcommand\theory[1]{\ensuremath{\mathsf{#1}}}
\def\id{\mathtxt{id}}
\newcommand\minus{{\text{-}}}
\newcommand\ssm{{{:}\text{=}}}
\newcommand\deffont[1]{{\bf #1}}
\newcommand\mone{{{\text{-}1}}}
\newcommand\liff{\Leftrightarrow}
\newcommand\supp{\f{supp}}
\newcommand\f[1]{\mathit{#1}}
\newcommand\Points{{\f{P}\hspace{-1.75pt}\f{oints}}}
\newcommand\tf[1]{\mathsf{#1}}
\newcommand\cent{\vdash}
\newcommand{\curvedarrowtop}{\hspace{-.17ex}\raisebox{1.85ex}{\scalebox{1.15}{\rotatebox{270}{$\curvearrowleft$}}}}
\newcommand\ii[1]{{\hspace{.5pt}\raisebox{.4pt}{$\curvedarrowtop$}^{\hspace{-1pt}#1}}} 
\def\atoms{\ensuremath{\mathbb{A}}\xspace}
\newcommand\dact[1]{}
\newcommand\lmodel{[\hspace{-0.2em}[}
\newcommand\rmodel{]\hspace{-0.2em}]}
\newcommand\model[1]{{\lmodel #1 \rmodel}}
\newcommand\vect[1]{\overline{#1}}
\newcommand\act[0]{{\cdot}}
\newcommand{\Defiff}
 {\mathrel{\ \ \stackrel{\scriptstyle \mathrm{def}}{\Leftrightarrow}\ \ }}
\newcommand{\defeq}
  {\stackrel{\mathrm{def}}{\,=\,}}
\newcommand\fix{\f{fix}}
\newcounter{jamieitemcounter}
\newtheoremstyle{jamiestyle}
  {4pt}
  {0pt}
  {\it}
  {0pt}
  {\sc}
  {.}
  { }
  {}
\theoremstyle{jamiestyle}
\newtheorem{thrm}{Theorem}[subsection]
\newtheorem{prop}[thrm]{Proposition}
\newtheorem{lemm}[thrm]{Lemma}
\newtheorem{corr}[thrm]{Corollary}
\newtheoremstyle{jamienfstyle}
  {4pt}
  {0pt}
  {\normalfont}
  {0pt}
  {\sc}
  {.}
  { }
  {}
\theoremstyle{jamienfstyle}
\newtheorem{nttn}[thrm]{Notation}
\newtheorem{defn}[thrm]{Definition}
\newtheorem{xmpl}[thrm]{Example}
\newtheorem{rmrk}[thrm]{Remark}
\newcommand\sm{{\mapsto}}
\newcommand\ms{{\mapsfrom}}
\newcommand\mathtxt[1]{ \ensuremath{\mathrm{#1}} }
\newcommand\rulefont[1]{\scalebox{.9}{\ensuremath{\mathrm{\bf (#1)}}}}
\newcommand\Forall[1]{\forall #1.}
\newcommand\Exists[1]{\exists #1.}
\newcommand\amgis{\reflectbox{\ensuremath{\sigma}}}
\newcommand\new{\reflectbox{\ensuremath{\mathsf{N}}}}
\newcommand\New[1]{\new #1.}
\newcommand\lam[1]{\lambda #1.}
\newcommand\hnu{\scalebox{.8}{$\forall$}} 
\title{Semantics out of context: nominal absolute denotations for first-order logic and computation}
\author{\href{http://www.gabbay.org.uk}{Murdoch J. Gabbay}
\affil{Heriot-Watt University, Scotland, UK\quad 
\href{http://www.gabbay.org.uk}{\it http://www.gabbay.org.uk}}
}
\begin{abstract}
Call a semantics for a language with variables \emph{absolute} when variables map to fixed entities in the denotation.
That is, a semantics is absolute when the denotation of a variable $a$ is a copy of itself in the denotation. 

We give a trio of lattice-based, sets-based, and algebraic absolute semantics to first-order logic.
Possibly open predicates are directly interpreted as lattice elements / sets / algebra elements, subject to suitable interpretations of the connectives and quantifiers.
In particular, universal quantification $\Forall{a}\phi$ is interpreted using a new notion of \emph{`fresh-finite'} limit $\freshwedge{a}\model{\phi}$ and using a novel dual to substitution.

The interest of this semantics is partly in the non-trivial and beautiful technical details, which also offer certain advantages over existing semantics---but also the fact that such semantics exist at all suggests a new way of looking at variables and the foundations of logic and computation, which may be well-suited to the demands of modern computer science. 
\end{abstract}
\keywords{Nominal algebra, semantics, variables, first-order logic, mathematical foundations, sigma-algebra, amgis-algebra, fresh-finite limits, nominal lattices} 
\begin{document}
\maketitle
\tableofcontents

\section{Introduction}

We give three nominal absolute semantics to first-order logic with equality, based on lattices, sets, and algebra in a nominal universe.
Thus we provide alternatives to the \emph{de facto} standard semantics based on valuations, and in doing this we question deep-seated mathematical habits in syntax and semantics, and give evidence that the correct environment for doing logic in computer science is a mathematical foundation with \emph{names}, modelling \emph{variables}, building on the foundational work of Fraenkel and Mostowski.

We treat first-order logic because this is a paradigmatic formal language with variables. 
It is probably the simplest language with variables of any importance; 
and, it is of great importance, since it is a language for axiomatising set theory and arithmetic, amongst other things. 

The expert and impatient reader, wanting to see just how our models really differ from valuation-based models, might like to skip right to Remark~\ref{rmrk.argue}.
This is not the most technically advanced part of the paper, but it expresses a point at which we can see one concrete way in which the nominal models are clearly \emph{not} just a rephrasing of standard constructions.
Such a reader might also browse Subsection~\ref{subsect.map}, which maps out the underlying mathematics, and consider Example~\ref{xmpl.approx}.

We now take a step back and discuss the background issues in more depth.
 
\subsection{What are variables?}

Variables and quantifiers (or more generally, binders) are widespread. 
The integral $\int_a$ is a binder which binds $a$, so that $a$ in $\int_a f(a,b)$ is bound.
The same phenomenon appears in logic and computation, so that e.g. we write $\Forall{a}P(a,b)$,\ $\lam{a}ab$, $\f{let}\,a{=}2\,\f{in}\,a{+}b$, and so on.

Our notation for integration has an intended meaning: integration of the function.
In logic and computation there are many binders, and much design freedom in interpreting them.
The choices we make early on will influence the nature of the mathematics that follows out of them. 

We will compare and contrast this in detail in the Conclusions, but for here it is probably fair to say that the main methodology is to treat variables as a look-up to an external context; a quantifier interacts with this context by scanning possible values for the variable.
For instance:
\begin{itemize*}
\item
$\int_a f(a,b)$ means ``take a value for $b$ to be fixed by some context and vary possible values for $a$, taking an integral''.
\item
$\Forall{a}P(a,b)$ means ``take a value for $b$ to be fixed by some context and vary possible values for $a$, taking a(n infinite) logical conjunction''.
\item
$\lam{a}f(a)$ means ``input a value from the user, associate that value to $a$, and calculate $f$ in that context''.
\end{itemize*} 
In logic and computation this context of variable-to-value assignments is called a \emph{valuation}, and the idea is attributed to Tarski \cite{tarski:semct}.

But here are two other semantics of variables and quantifiers:
\begin{itemize*}
\item
We can treat quantifiers as \emph{modalities} (operators taking a formula and making a new formula), satisfying certain axiomatic properties.
Variables are used to label an infinite family of modalities: $\int_a$, $\int_b$, $\int_c$ or $\forall_a$, $\forall_b$, $\forall_c$, \dots

This interpretation is useful for proof---a formula may be too complicated to compute, but we can still prove it equivalent to another formula by axiomatic manipulations. 

In the context of logic, this is exemplified by cylindric and polyadic algebras due to Halmos and Tarski amongst others \cite{halmos:algl,henkin:cyla}.
\item
We can treat quantifiers as \emph{binding sites} (distinguished points in the formula), and variables are links/wires connecting different parts of the formula, via the binding site.

This interpretation is useful in programming.
For instance, a method name or function declaration binds the invocations of that method or function to the location where it is defined, and a pointer binds a location in memory to the locations in the program where it is dereferenced.
\end{itemize*}

In this paper we investigate a mathematical semantics which builds on and unifies the three interpretations above.
This works by reviving an old alternative to Zermelo--Fraenkel sets, known as \emph{Fraenkel-Mostowski sets} (\deffont{FM sets})---we will simply call this \emph{nominal techniques}. 
Variables have the properties of look-up, \emph{and} axiomatic-modality, \emph{and} binding-sites---depending on how one looks at the nominal semantics.

We can do this because we analyse variables using nominal techniques.
Detailed discussions of FM sets and their applications to computing are elsewhere \cite{gabbay:newaas-jv,gabbay:fountl,pitts:nomsns}.
What matters to us here is that in FM sets, atomic symbols are assumed to exist as  
\emph{urelemente} or \emph{atoms}.

In the language of programming we would say that atoms are a \emph{datatype} whose job is to contain infinitely many elements that are all symmetric up to permutations, and indeed this is exactly what happens in the Mur$\phi$ system \cite{dill:betvts} and the FreshML programming language \cite{shinwell:freocu}.
In some sense, atoms are a polar opposite to the familiar datatype of \emph{natural numbers}, which is a datatype whose job is to contain infinitely many elements that are all totally ordered in a single fixed and canonical manner.

Atoms are \emph{symmetric} up to permutations; symmetry is a primitive property of the mathematical universe.
To put this in the context of similar axioms, the Axiom of Infinity assumes the infinitude of the natural numbers, the Axiom of Comprehension assumes that predicates can be used to select subsets of sets, and the Axiom of Replacement assumes that functions can be applied pointwise to sets.
In FM sets we assume variable symbols, and their purpose is to be symmetric.

Because symmetry properties are foundationally assumed, they propagate naturally to constructions in FM sets.
How this works in full generality is a field of study in its own right.
What interests us here is what happens when we try to match up FM sets atoms with $\int_a$, $\forall a$, and $\lambda a$.
That is, can we model the behaviour of variables and binding as special cases of general FM behaviour?
This question was answered positively in previous work on abstract syntax, where FM atoms were used to model $\alpha$-equivalence in syntax \cite{gabbay:newaas-jv}.

But now we want to model the more complex \emph{semantic} behaviour of variables too.
We shall find that this works surprisingly well. 

At a high level, we shall see that variables are moved from being a specific property of a formal language, which requires explanation on a per-language/per-quantifier basis, to being a generic property of the mathematical universe on a par with generic concepts such as `set', `cardinality', and `function'.
These can be handled at a high level of abstraction and generalisation, and then instantiated to specific languages and applications. 

Doing this is a technical challenge, of course.
Just as important as the technical details is the \emph{ideas} that motivate them.
And, because variables and binders are so common in formal languages, and so fundamental to how they work, such new ideas about general semantics can pay worthwhile dividends.
The necessary mathematics is not particularly complicated, once we understand that we are dealing with a symmetric datatype with characteristic, albeit deep, properties.
The necessary nominal background is handled in Section~\ref{sect.basic.defs}, with plenty of examples and exposition.
We now discuss the specific technical application.

\subsection{The three paradigmatic semantics}

We study the concrete example of \emph{first-order logic} (\deffont{FOL}).
This is a paradigmatic language with binders, and is of practical importance since it is a base language for set theory and arithmetic.
The basic FOL connectives are $\tbot$ (false), $\tand$ (conjunction), $\tneg$ (negation), and $\tall a$ (first-order quantification).

FOL has three standard semantics: \emph{lattices}, \emph{sets}, and \emph{algebra} (a fourth is \emph{topology}, which is handled in a sister paper \cite{gabbay:stodfo}).
We will briefly summarise how these work:
\begin{itemize*}
\item
In lattices we take a partially ordered set and explain connectives using limits.
By this semantics we take some underlying set $X$ of \emph{truth-values} with an \emph{entailment} ordering $\leq$ and assume greatest lower bounds (and some other structure).
Given $x,y\in X$ we declare $x\land y$ to be the greatest lower bound (the limit) of $x$ and $y$.
\item
In sets we take some underlying set and explain connectives by combining subsets of the underlying set (sets intersection, complement, and so on).
That is, we take some set $X$ and subsets $U,V\subseteq X$, and we declare $U\land V$ to be $U\cap V$.
\item
In algebra we equip an underlying set with functions satisfying equalities, which should be abstractly specified, and explain connectives in terms of those functions (de Morgan laws, commutativity, associativity, and so on).
That is, we take some set $X$ and functions $\lbot$, $\land$, $\lneg$, and $\hnu$ on $X$ which must insist certain axioms---the correct axioms for $\hnu$ is a major contribution of this paper. 
\end{itemize*}
All these semantics are supposed to match up in some suitable sense.
The relevant theorems have standard names:
going from lattices to sets is a \emph{representation theorem}; going from algebras to sets is a (sets) \emph{semantics}; and going from sets back to lattices or algebras is a pair of \emph{completeness theorems}. 
We will treat these, establishing nominal lattice, nominal set, and nominal algebra treatments of FOL and indicating how to move between them.

As discussed, by the nominal approach we intepret variables directly as atoms.
We call the result \emph{absolute} because the meaning assigned to a term or predicate will not require any context or valuation; contrast this with the traditional Tarski-style valuation semantics, where the meaning of a term or predicate only exists in the context of a valuation assigning (non-nominal) denotations to the variables.
We will also give a detailed comparison with Tarski-style valuation semantics in Section~\ref{sect.complete}, showing how to build one of our models out of a Tarski-style model.
This translation will be natural, almost obvious, and it will exhibit a Tarski-style model as a particular special case of our framework (the key idea intuitively is that Tarski-style models are complete in a lattice-theoretic sense, whereas our models are in general only \emph{fresh-finite} complete, which is a weaker condition which is only expressible in a nominal semantics; see Remark~\ref{rmrk.argue}).

Returning to our three inter-translatable denotations, how specifically do we address the problem of variables and binding in each of them?
Detailed answers are in the body of the paper, but for the reader's convenience we give here---not a summary but---some precise pointers to where the key points in those answers will appear:
\begin{itemize*}
\item
In lattices, $\forall$ is a \emph{fresh-finite limit} (Subsection~\ref{subsect.fresh-finite.limit}).  
This is a new idea.
\item
In sets, $\forall$ is characterised twice: as a fresh-finite limit in the powerset considered as a lattive, and as an 
infinite sets intersection of substitution instances.
Theorem~\ref{thrm.powsigma.FOLeq} notes that these two characterisations are equivalent.   
\item
This raises an interesting question: since $\phi$ has an absolute semantics in sets (so $\phi$ is interpreted as a set, even if $\phi$ has free variables), then what notion of substitution is given to those sets?
Our answer uses \emph{$\amgis$-algebras} (Subsection~\ref{subsect.sigma.amgis}).  
These are also a new idea.
\item
In algebras, $\forall$ is an equivariant function satisfying certain axioms.
This idea is relatively recent, but has also been studied in previous work.
The axiomatisation is reminiscent of the cylindric algebra or polyadic algebra axiomatisations, though it exists in a nominal algebra framework and has its own distinct character.
We discuss this in Subsection~\ref{subsect.foleq.alg}.
\item
Lemma~\ref{lemm.technical} and Proposition~\ref{prop.all.sub.commute} are important technical results.
\end{itemize*}
More discussion is in the body of the paper and in the Conclusions.

\subsection{Map of the paper}
\label{subsect.map}

Section~\ref{sect.basic.defs} introduces the necessary nominal background.
This section need not necessarily be read first, because the results are abstract.
However, this investment in abstraction will pay dividends later when it is applied.
These applications are explained in detail in Section~\ref{sect.basic.defs}, see especially the discussion preceding Theorem~\ref{thrm.equivar}. 

In Section~\ref{sect.fol-algebra} we introduce the notion of algebra over nominal sets (i.e. a nominal set with functions on it satisfying nominal axioms).
Specifically, we consider $\sigma$- and $\amgis$-algebras.
A $\sigma$-algebra abstracts those properties of term- and predicate-syntax having to do with substitution; an $\amgis$-algebra is a dual to this.

Section~\ref{sect.nom.pow} introduces nominal posets (a nominal poset is a partially-ordered set, built in the universe of FM sets), but then does some new things with the idea: we study \emph{fresh-finite limits}, how $\sigma$-algebra structure interacts with the partial order, we note that simultaneous $\sigma$-action can also easily be modelled, we specify equality in the poset, and we conclude with the notion of \emph{FOLeq} algebra.
This establishes our lattice-theoretic semantics for first-order logic.

Section~\ref{sect.interp} spells this semantics out, by declaring first-order logic syntax and defining a formal notion of interpretation.
We prove soundness in Theorem~\ref{thrm.fol.sound}.

Section~\ref{sect.sigma.foleq} notes the important fact that every $\sigma$-powerset (Definition~\ref{defn.powsigma}) is a FOLeq algebra---just as every powerset is a Boolean algebra in Zermelo-Fraenkel set theory.
This is our sets semantics.

Section~\ref{sect.completeness} proves completeness of the sets semantics for FOLeq algebras (Corollary~\ref{corr.completeness}).
The construction visibly parallels the usual ultrafilter construction, but the details are significantly different, due to all the extra structure: 
the constructions are in nominal sets; points form an $\amgis$-algebra (not just a set); the set of all points forms a $\sigma$-powerset (not just a powerset); and we treat not only quantification $\tall$ but also equality $\teq$ which requires careful design of the structure of points.
In short, all the structure of a usual completeness proof is still there and still evident, along with extra structure relating the richer foundations. 

We are used to seeing models of first-order logic using valuations, in Zermelo-Fraenkel sets; we attribute this idea to Tarski.
Section~\ref{sect.complete} translates such models to our nominal framework, and we see that Tarski valuation models were a special case of FOLeq algebras all along.
The main result is Proposition~\ref{prop.standard.nom.bool}.
The converse is not true: not every FOLeq algebra can be expressed as a Tarski valuation model.
This is, intuitively, because Tarski models are very complete whereas FOLeq algebras satisfy a weaker property of being only fresh-finite complete in general.
What makes this interesting is that in our lattice semantics, being fresh-finite complete captures exactly what is necessary to model first-order logic; so FOLeq semantics are in this sense canonical, and the extra strength of Tarski-style models is due to the relative inexpressivity of the Zermelo-Fraenkel sets foundation which, implicitly, it assumes. 

Another standard semantics is syntax-quotiented-by-derivable-equivalence.
We call this a \emph{Herbrand} (or \emph{Lindenbaum-Tarski}) semantics and we consider it in Section~\ref{sect.herbrand}.
We do this briefly since, by now, the result should be clear; enough detail is given to reconstruct full proofs if desired.

The algebraisation of FOLeq algebras---their equational axiomatisation---is not without interest or subtlety, but it is also rather easy and has in fact already been treated in previous work \cite{gabbay:oneaah,gabbay:oneaah-jv}.\footnote{Chronologically the axioms came first, and this paper emerged from efforts to understand these axioms' semantics.}
So we treat it in a brief appendix, Appendix~\ref{sect.alg}.

\section{Background on nominal techniques}
\label{sect.basic.defs}

Intuitively, a nominal set is ``a set $\ns X$ whose elements $x\in\ns X$ may `contain' finitely many names $a,b,c\in\mathbb A$''.
We may call names \emph{atoms}.
The notion of `contain' used here is not the obvious notion of `is a set element of': formally, we say that $x$ has \emph{finite support} (Definition~\ref{defn.supp}).

For instance, here are some nominal sets:
\begin{itemize*}
\item
The set of finite sets of atoms: 
$$
\bigl\{\ \varnothing,\ \{a\},\ \{b\},\ \{c\},\dots,\ \{a,b\},\ \{a,c\},\dots\ \bigr\}.
$$
\item
The set of \emph{complements} of finite sets of atoms: 
$$
\bigl\{\ \mathbb A,\ \mathbb A{\setminus}\{a\},\ \mathbb A{\setminus}\{b\},\ \mathbb A{\setminus}\{c\},\dots,\ \mathbb A{\setminus}\{a,b\},\ \mathbb A{\setminus}\{a,c\},\dots\ 
\bigr\}.
$$
\end{itemize*}
Nominal sets are formally defined in Subsection~\ref{subsect.basic.definitions}, and examples are in Subsections~\ref{subsect.pow} and~\ref{subsect.more.examples}.

What is most important to realise is that the notion of `being in the support of $x$' is not equal to the notion of `being a set element of $x$'.
For instance if we take $x=\mathbb A{\setminus}\{a\}$, then $x$ contains infinitely many elements---but its support contains precisely the atom that is not an element of $x$, namely $a$.

Support measures \emph{name-symmetry}, not name-elementhood.
More on this below. 

The reader not interested in nominal techniques \emph{per se} might like to read this section only briefly in the first instance, and use it as a reference for the later sections, where these underlying ideas get applied. 
More detailed expositions are also in \cite{gabbay:newaas-jv,gabbay:fountl}.

In the context of the broader literature, the message of this section is as follows:
\begin{itemize*}
\item
The reader with a category-theory background can read this section as stating that we work in the category of nominal sets, or equivalently in the Schanuel topos (more on this in \cite[Section III.9]{MLM:sgl},\ \cite[A.21, page 79]{johnstone:skeett},\ or \cite[Theorem~9.14]{gabbay:fountl}).
\item
The reader with a sets background can read this section as stating that our constructions can be carried out in Fraenkel-Mostowski set theory (\deffont{FM sets}).

A discussion of this sets foundation, tailored to nominal techniques, can be found in \cite[Section~10]{gabbay:fountl}).
FM sets add \emph{urelemente} or \emph{atoms} to the sets universe.
\item
The reader not interested in foundations can note that previous work \cite{gabbay:newaas-jv,gabbay:fountl} has shown that just assuming names as primitive entities in Definition~\ref{defn.atoms} 
yields a remarkable clutch of definitions and results, notably Theorem~\ref{thrm.supp} and Corollary~\ref{corr.stuff}, and Theorems~\ref{thrm.equivar} and~\ref{thrm.new.equiv}.

Empirically these properties turn out to be incredibly useful, and they will be just what we need next in Section~\ref{sect.fol-algebra}.
\end{itemize*}

\subsection{Basic definitions}
\label{subsect.basic.definitions}

\begin{defn}
\label{defn.atoms}
Fix a countably infinite set of \deffont{atoms} $\mathbb A$.
We use a \deffont{permutative convention} that $a,b,c,\ldots$ range over \emph{distinct} atoms.
\end{defn}

\begin{defn}
A \deffont{(finite) permutation} $\pi$ is a bijection on atoms such that $\nontriv(\pi)=\{a\mid \pi(a)\neq a\}$ is finite.

Write $\id$ for the \deffont{identity} permutation such that $\id(a)=a$ for all $a$.
Write $\pi'\circ\pi$ for composition, so that $(\pi'\circ\pi)(a)=\pi'(\pi(a))$.
Write $\pi^\mone$ for inverse, so that $\pi^\mone\circ\pi=\id=\pi\circ\pi^\mone$.
Write $(a\;b)$ for the \deffont{swapping} (terminology from \cite{gabbay:newaas-jv}) mapping $a$ to $b$,\ $b$ to $a$,\ and all other $c$ to themselves, and take $(a\;a)=\id$.
\end{defn}

\begin{nttn}
\label{nttn.fix}
If $A\subseteq\mathbb A$ write 
$$
\fix(A)=\{\pi\mid \Forall{a{\in} A}\pi(a)=a\}.
$$
\end{nttn}

\begin{defn}
\label{defn.fin.supp}
\begin{enumerate}
\item
A \deffont{set with a permutation action} $\ns X$ is a pair $(|\ns X|,\act)$ of an \deffont{underlying set} $|\ns X|$ and a \deffont{permutation action} written $\pi\act x$ 
 which is a group action on $|\ns X|$, so that $\id\act x=x$ and $\pi\act(\pi'\act x)=(\pi\circ\pi')\act x$ for all $x{\in}|\ns X|$ and permutations $\pi$ and $\pi'$.
\item
Say that $A\subseteq\mathbb A$ \deffont{supports} $x{\in}|\ns X|$ when $\Forall{\pi}\pi\in\fix(A)\limp \pi\act x=x$.
If a finite $A$ supporting $x$ exists, call $x$ \deffont{finitely supported}.
\end{enumerate}
\end{defn}

\begin{frametxt}
\begin{defn}
\label{defn.nominal.set}
Call a set with a permutation action $\ns X$ a \deffont{nominal set} when every $x{\in}|\ns X|$ has finite support.
$\ns X$, $\ns Y$, $\ns Z$ will range over nominal sets.
\end{defn}
\end{frametxt}

\begin{defn}
\label{defn.equivariant}
Call a function $f\in |\ns X\Func\ns Y|$ \deffont{equivariant} when $\pi\act (f(x))=f(\pi\act x)$ for all permutations $\pi$ and $x{\in}|\ns X|$.
In this case write $f:\ns X\Func\ns Y$.

The category of nominal sets and equivariant functions between them is usually called the category of \emph{nominal sets} \cite{gabbay:thesis,gabbay:newaas-jv,gabbay:fountl,pitts:nomsns}.
\end{defn}

\begin{rmrk}
`Equivariant' appears in Definition~\ref{defn.equivariant} referring to functions between nominal sets, in Definition~\ref{defn.supp} referring to elements of nominal sets, and in Theorem~\ref{thrm.equivar} referring to predicates and functions in the language of ZFA set theory.

The notions of equivariance in Definition~\ref{defn.equivariant} and Theorem~\ref{thrm.equivar} are evidently related; the former existing inside the sets universe and the latter outside out.
The notion of equivariance in Definition~\ref{defn.supp} looks different, but we shall see in Lemma~\ref{lemm.equivar.equivar} how it closely relates to the other two. 
\end{rmrk}

\begin{defn}
\label{defn.supp}
Suppose $\ns X$ is a nominal set and $x{\in}|\ns X|$.
Define the \deffont{support} of $x$ by
$$
\supp(x)=\bigcap\{A\mid A\text{ finite and supports }x\} .
$$
If $\supp(x)=\varnothing$ we call $x$ \deffont{equivariant}.
\end{defn}

\begin{nttn}
\begin{itemize*}
\item
Write $a\#x$ as shorthand for $a\not\in\supp(x)$ and read this as $a$ is \deffont{fresh for} $x$.
\item
Given atoms $a_1,\dots,a_n$ and elements $x_1,\dots,x_m$ write $a_1,\dots,a_n\#x_1,\dots,x_m$ as shorthand for $\{a_1,\dots,a_n\}\cap\bigcup_{1{\leq}j{\leq}m}\supp(x_j)=\varnothing$.
That is: $a_i\#x_j$ for every $i$ and $j$.
\end{itemize*}
\end{nttn}

%
%

\begin{prop}
\label{prop.intersect.AB}
If $A\subseteq\atoms$ is finite and supports $x$, and $a\in\atoms$ and $a\#x$, then $A{\setminus}\{a\}$ supports $x$.
\end{prop}
\begin{proof}
Suppose $\pi\in\fix(A{\setminus}\{a\})$.
We assumed $a\#x$ so choose an $A'\subseteq\atoms$ such that $A'$ is finite, $A'$ supports $x$, and $a\not\in A'$.
Also, choose some fresh $a'$ (so $a'\not\in A{\cup}\{a\}{\cup}A'$).

Write $\tau=(a'\ a)$.
Note that $\tau\act x=x$ by Definition~\ref{defn.fin.supp}, because $A'$ supports $x$ and $\tau\in\fix(A')$.

It is a fact that $(\tau\circ\pi\circ\tau)(a)=a$ for every $a{\in}A$, so $\tau\circ\pi\circ\tau\in\fix(A)$.
Also by the group action $(\tau\circ\pi\circ\tau)\act x = \tau\act(\pi\act(\tau\act x))$.
Since $A$ supports $x$ we have $\tau\act(\pi\act(\tau\act x))=x$ by Definition~\ref{defn.fin.supp}.

We apply $\tau$ to both sides, recall that $\tau\act x=x$, and conclude that $\pi\act x=x$ as required.
\end{proof}

\begin{thrm}
\label{thrm.supp}
Suppose $\ns X$ is a nominal set and $x\in|\ns X|$.
Then $\supp(x)$ is the unique least finite set of atoms that supports $x$.
\end{thrm}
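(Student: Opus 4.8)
The plan is to reduce everything to one closure property of supporting sets: that the \emph{intersection} of two finite supporting sets is again a supporting set. Granting this, the theorem follows quickly. Since $\ns X$ is a nominal set, $x$ has at least one finite support $A_0$, so $\supp(x)=\bigcap\{A\mid A\text{ finite, supports }x\}\subseteq A_0$ is finite; and by construction $\supp(x)$ is contained in every finite supporting set. Thus the only two things left to establish are that $\supp(x)$ \emph{itself} supports $x$, and that a least finite supporting set is automatically unique --- the latter being immediate, since if $S$ and $S'$ are both least then $S\subseteq S'\subseteq S$.

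The crux --- and the step I expect to be the main obstacle --- is the \textbf{Intersection Lemma}: if $A$ and $B$ both support $x$, then so does $A\cap B$. I would prove this by showing that every $\pi\in\fix(A\cap B)$ satisfies $\pi\act x=x$. Because the finitary symmetric group on $\mathbb A\setminus(A\cap B)$ is generated by transpositions, it suffices to treat $\pi=(a\;b)$ with $a,b\notin A\cap B$. If $\{a,b\}\cap A=\varnothing$ then $(a\;b)\in\fix(A)$ and we are done since $A$ supports $x$; symmetrically if $\{a,b\}\cap B=\varnothing$. The only remaining case, up to interchanging the roles of $A$ and $B$, is $a\in A\setminus B$ and $b\in B\setminus A$. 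Here I would pick a fresh atom $c\notin A\cup B$ (possible as $A\cup B$ is finite while $\mathbb A$ is not) and use the conjugation identity $(a\;b)=(a\;c)(b\;c)(a\;c)$. Now $(a\;c)\in\fix(B)$ (as $a,c\notin B$) and $(b\;c)\in\fix(A)$ (as $b,c\notin A$), so each of the three factors fixes $x$; composing them gives $(a\;b)\act x=x$. Hence every generator fixes $x$, and therefore so does every $\pi\in\fix(A\cap B)$, i.e.\ $A\cap B$ supports $x$.

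Finally I would upgrade the two-set Intersection Lemma to the full, possibly infinite, intersection. Fix the finite support $A_0$ from above. Since $\supp(x)\subseteq A_0$, we have $\supp(x)=\bigcap_A(A\cap A_0)$, where $A$ ranges over all finite supporting sets; each $A\cap A_0$ supports $x$ by the lemma and is a subset of the finite set $A_0$, so only finitely many distinct sets $B_1,\dots,B_k$ occur among them. Applying the Intersection Lemma $k-1$ times shows that $B_1\cap\cdots\cap B_k=\supp(x)$ supports $x$. Combined with the fact that $\supp(x)$ is contained in every finite supporting set, this exhibits $\supp(x)$ as the least finite supporting set, and uniqueness is then automatic, completing the argument.
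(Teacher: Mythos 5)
Your proof is correct, and it is essentially the proof the paper itself points to: the paper gives no internal argument for Theorem~\ref{thrm.supp}, citing instead part~1 of Theorem~2.21 of \cite{gabbay:fountl}, whose proof is exactly your route --- the Intersection Lemma for two finite supporting sets, established on transposition generators of $\fix(A\cap B)$ via the conjugation identity $(a\;b)=(a\;c)(b\;c)(a\;c)$ with $c$ fresh, followed by cutting the infinite intersection down to the finitely many sets $A\cap A_0\subseteq A_0$. Your handling of the remaining bookkeeping (finiteness of $\supp(x)$, minimality by construction, uniqueness of a least element) is also sound, so there are no gaps.
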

\begin{proof}
Consider a permutation $\pi\in\fix(\supp(x))$ and write $\{a_1,\dots,a_n\}=\f{nontriv}(\pi)$.
Choose any finite $A{\subseteq}\atoms$ that supports $x$, so by construction $\supp(x){\subseteq}A$.

By Proposition~\ref{prop.intersect.AB} $A{\setminus}\nontriv(\pi)$ supports $x$.
By construction $\pi\in\fix(A{\setminus}\nontriv(\pi))$, so by Definition~\ref{defn.fin.supp} $\pi\act x=x$ as required.
\end{proof}

\begin{corr}
\label{corr.stuff}
\begin{enumerate*}
\item\label{stuff.fixsupp.fixelt}
If $\pi(a)=a$ for all $a\in\supp(x)$ then $\pi\act x=x$.
\item
If $\pi(a)=\pi'(a)$ for every $a{\in}\supp(x)$ then $\pi\act x=\pi'\act x$.
\item
$a\#x$ if and only if $\Exists{b}(b\#x\land (b\;a)\act x=x)$.
\end{enumerate*}
\end{corr}
\begin{proof}
By routine calculations from the definitions and from Theorem~\ref{thrm.supp}.
\end{proof}

\subsection{Examples}
\label{subsect.pow}

Suppose $\ns X$ and $\ns Y$ are nominal sets, and suppose $\ns Z$ is a set with a permutation action.
We consider some examples of sets with a permutation action and of nominal sets.
These will be useful later on in the paper.

\subsubsection{Atoms}
\label{subsect.xmpl.atoms}

$\mathbb A$ is a nominal set with the \emph{natural permutation action} $\pi\act a=\pi(a)$.

For the case of $\mathbb A$ only we will be lax about the difference between $\mathbb A$ (the set of atoms) and $(|\mathbb A|,\act)$ (the nominal set of atoms with its natural permutation action).
What that means in practice is that we will write $a\in\mathbb A$ and never write $a{\in}|\mathbb A|$.\footnote{Just sometimes, pedantry has its limit.}

\subsubsection{Cartesian product}

$\ns X\times\ns Y$ is a nominal set with underlying set $\{(x,y)\mid x{\in}|\ns X|, y{\in}|\ns Y|\}$ and the \emph{pointwise} action $\pi\act(x,y)=(\pi\act x,\pi\act y)$.

\subsubsection{Tensor product}
\label{subsect.otimes}

$\ns X\otimes\ns Y$ is a nominal set with underlying set $\{(x,y)\mid x{\in}|\ns X|, y{\in}|\ns Y|,\ \supp(x)\cap\supp(y)=\varnothing\}$ and the pointwise action.
For the pointwise action here to be well-defined depends on $\pi$ being a permutation and the fact (Proposition~\ref{prop.pi.supp} below) that $\supp(\pi\act x)=\pi\act \supp(x)$.

\subsubsection{Full function space}
\label{subsect.full.function.space}

$\ns X{\to}\ns Y$ is a set with a permutation action with underlying set all functions from $|\ns X|$ to $|\ns Y|$, and the \deffont{conjugation} permutation action 
$$
(\pi\act f)(x)=\pi\act(f(\pi^\mone\act x)) .
$$

The conjugation action can be rephrased as `permutations distribute' (cf. Theorem~\ref{thrm.equivar} below):
\begin{lemm}
\label{lemm.useful.distrib}
If $f{\in}|\ns X{\to}\ns Y|$ then $\pi\act f(x)=(\pi\act f)(\pi\act x)$.
\end{lemm}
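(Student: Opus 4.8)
The plan is to prove the identity by a direct unfolding of the definition of the conjugation action on the right-hand side, followed by simplification using only the group action axioms of Definition~\ref{defn.fin.supp}. This is a purely computational lemma; there is no genuine obstacle, and the entire content is that conjugation is engineered precisely so that the permutation action commutes past function application.

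First I would expand $(\pi\act f)(\pi\act x)$ using the defining equation of the conjugation action, namely $(\pi\act f)(y)=\pi\act(f(\pi^\mone\act y))$, instantiated at $y:=\pi\act x$. This yields $(\pi\act f)(\pi\act x)=\pi\act\bigl(f(\pi^\mone\act(\pi\act x))\bigr)$. Next I would simplify the inner argument $\pi^\mone\act(\pi\act x)$: by the group action axiom $\pi\act(\pi'\act x)=(\pi\circ\pi')\act x$ we have $\pi^\mone\act(\pi\act x)=(\pi^\mone\circ\pi)\act x$, and since $\pi^\mone\circ\pi=\id$ and $\id\act x=x$ this is just $x$. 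Substituting back gives $(\pi\act f)(\pi\act x)=\pi\act(f(x))=\pi\act f(x)$, which is exactly the left-hand side.

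The only point requiring any care is notational rather than mathematical: in the definition of the conjugation action the expression $\pi^\mone(x)$ must be read as the permutation action $\pi^\mone\act x$ on the element $x\in|\ns X|$, not as literal function application of $\pi^\mone$ to $x$ (these coincide when $\ns X=\mathbb A$ with its natural action, but the lemma is stated for a general $\ns X$). Once this reading is fixed, the verification above is immediate, and no appeal to finite support, equivariance, or Theorem~\ref{thrm.supp} is needed---only that $\act$ is a genuine group action.
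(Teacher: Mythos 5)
Your proof is correct and is precisely the ``easy calculation'' the paper leaves implicit: unfold the conjugation action at $\pi\act x$, cancel $\pi^\mone\act(\pi\act x)=x$ via the group action axioms, and read off the identity. Your notational caveat about reading $\pi^\mone(x)$ in Subsection~\ref{subsect.full.function.space} as $\pi^\mone\act x$ for general $\ns X$ is also well taken.
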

\begin{proof}
By easy calculations.
\end{proof}

\subsubsection{Finitely supported function space}

$\ns X\Func\ns Y$ is a nominal set with underlying set the functions from $|\ns X|$ to $|\ns Y|$ with finite support under the conjugation action, and the conjugation permutation action.

\begin{lemm}
\label{lemm.equivar.equivar}
$f\in |\ns X\Func\ns Y|$ is equivariant in the sense of Definition~\ref{defn.equivariant} if and only if it is equivariant in the sense of Definition~\ref{defn.supp}. 
\end{lemm}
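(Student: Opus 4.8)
The plan is to show that both forms of equivariance are equivalent to the single condition that $\pi\act f=f$ for every permutation $\pi$, using the ``permutations distribute'' identity of Lemma~\ref{lemm.useful.distrib} as the bridge between the function-level statement of Definition~\ref{defn.equivariant} and the support-level statement of Definition~\ref{defn.supp}.

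First I would unpack Definition~\ref{defn.supp}. Since $f\in|\ns X\Func\ns Y|$ lives in a nominal set, $\supp(f)$ is well-defined, and by Theorem~\ref{thrm.supp} it is the least finite set of atoms supporting $f$. Hence $\supp(f)=\varnothing$ holds if and only if $\varnothing$ itself supports $f$. As $\fix(\varnothing)$ is the set of \emph{all} permutations, this is exactly the assertion that $\pi\act f=f$ for every $\pi$ (equivalently, via Corollary~\ref{corr.stuff}(1)). So equivariance in the sense of Definition~\ref{defn.supp} is equivalent to the condition $\pi\act f=f$ for all $\pi$.

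For the direction from Definition~\ref{defn.supp} to Definition~\ref{defn.equivariant}, I would assume $\pi\act f=f$ for all $\pi$. Then for any $\pi$ and any $x\in|\ns X|$, Lemma~\ref{lemm.useful.distrib} gives $\pi\act f(x)=(\pi\act f)(\pi\act x)=f(\pi\act x)$, which is precisely Definition~\ref{defn.equivariant}. For the converse, I would assume $\pi\act(f(x))=f(\pi\act x)$ for all $\pi$ and $x$; fixing $\pi$ and applying Lemma~\ref{lemm.useful.distrib} rewrites the hypothesis as $(\pi\act f)(\pi\act x)=f(\pi\act x)$ for every $x$. The map $x\mapsto\pi\act x$ is a bijection of $|\ns X|$, with inverse $x\mapsto\pi^\mone\act x$ by the group-action axioms, so as $x$ ranges over $|\ns X|$ the argument $\pi\act x$ ranges over all of $|\ns X|$; hence $(\pi\act f)(y)=f(y)$ for every $y$, i.e.\ $\pi\act f=f$. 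Since $\pi$ was arbitrary this recovers $\pi\act f=f$ for all $\pi$, and by the first step $\supp(f)=\varnothing$.

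The argument is short and entirely a matter of substituting into Lemma~\ref{lemm.useful.distrib}; the only mild subtlety — the ``hard part'', such as it is — is the surjectivity observation in the converse, where one must note that $x\mapsto\pi\act x$ is a bijection so that an equation known \emph{a priori} only on the image of this map in fact holds at every argument.
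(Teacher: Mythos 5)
Your proof is correct and follows essentially the same route as the paper's: both directions hinge on the conjugation action via Lemma~\ref{lemm.useful.distrib}, and your surjectivity observation in the converse is just a reparametrisation of the paper's direct computation $(\pi\act f)(x)=\pi\act(f(\pi^\mone\act x))=f(x)$, i.e.\ instantiating the hypothesis at $\pi^\mone\act x$. Your opening paragraph usefully makes explicit a step the paper leaves implicit --- that equivariance in the sense of Definition~\ref{defn.supp} (namely $\supp(f)=\varnothing$) is equivalent to $\pi\act f=f$ for all $\pi$, via Theorem~\ref{thrm.supp} and Corollary~\ref{corr.stuff} --- but this does not change the substance of the argument.
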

\begin{proof}
We sketch the proof: If $\pi\act f=f$ then for any $x{\in}\ns X$ we have by Lemma~\ref{lemm.useful.distrib} that $\pi\act (f(x))=(\pi\act f)(\pi\act x)=f(\pi\act x)$.
Conversely if for any $x{\in}\ns X$ we have $\pi\act(f(x))=f(\pi\act x)$ then by the conjugation action 
$$
(\pi\act f)(x)=
\pi\act (f(\pi^\mone\act x))=f(\pi\act(\pi^\mone\act x))=f(x).
\qedhere$$
\end{proof}

\subsubsection{Full powerset}
\label{subsubsect.powerset}

\begin{defn}
\label{defn.pointwise.action}
Suppose $\ns X$ is a set with a permutation action.
Give subsets $X\subseteq|\ns X|$ the \deffont{pointwise} permutation action
$$
\pi\act X=\{\pi\act x\mid x\in X\} .
$$
\end{defn}

\begin{xmpl}
A useful instance of the pointwise action is for sets of atoms.
As discussed in Subsection~\ref{subsect.xmpl.atoms} above, if $a\in\mathbb A$ then $\pi\act a=\pi(a)$.
Thus if $A\subseteq\mathbb A$ then 
$$
\pi\act A\quad\text{means}\quad \{\pi(a)\mid a\in A\} .
$$
\end{xmpl}

\begin{lemm}
\label{lemm.when.set.equivar}
Continuing the notation of Definition~\ref{defn.pointwise.action}, 
$$
X\ \text{is equivariant in the sense of Definition~\ref{defn.supp}}
\quad\liff\quad
\Forall{\pi}\Forall{x{\in}|\ns X|}(x\in X\liff \pi\act x\in X) .
$$
\end{lemm}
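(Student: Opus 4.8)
The plan is to first reduce the abstract notion of equivariance to a concrete invariance condition, and then verify the resulting biconditional by elementary set manipulation.

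First I would unfold what the left-hand side means. By Definition~\ref{defn.supp}, $X$ is equivariant exactly when $\supp(X)=\varnothing$. By Corollary~\ref{corr.stuff}(1) this gives $\pi\act X=X$ for every permutation $\pi$, since the premise $\pi(a)=a$ for all $a\in\supp(X)$ is then vacuous; conversely, if $\pi\act X=X$ for every $\pi$, then in particular every $\pi\in\fix(\varnothing)$ fixes $X$, so $\varnothing$ is a finite supporting set and hence $\supp(X)=\varnothing$ by Theorem~\ref{thrm.supp}. Thus the left-hand side is equivalent to $\Forall{\pi}\pi\act X=X$, and it suffices to prove
$$
(\Forall{\pi}\pi\act X=X)\quad\liff\quad\Forall{\pi}\Forall{x{\in}|\ns X|}(x\in X\liff \pi\act x\in X).
$$

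The key observation driving both directions is that $\pi\act(\minus)$ is a bijection on $|\ns X|$ (with inverse $\pi^\mone\act(\minus)$), so membership transports along it: unwinding the pointwise action of Definition~\ref{defn.pointwise.action} and using injectivity, $\pi\act x\in \pi\act X$ holds if and only if $x\in X$. For the forward implication I would simply substitute the hypothesis $\pi\act X=X$ into this equivalence to read off the required biconditional. For the reverse implication I would fix $\pi$ and prove $\pi\act X=X$ by mutual inclusion: the inclusion $\pi\act X\subseteq X$ follows by applying the hypothesis directly to $\pi$, while $X\subseteq\pi\act X$ follows by applying the hypothesis to $\pi^\mone$, so that $y\in X$ yields $\pi^\mone\act y\in X$ and hence $y=\pi\act(\pi^\mone\act y)\in\pi\act X$.

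I do not expect a genuine obstacle here, as the content is essentially definitional. The only point that requires care is the first step---correctly identifying `equivariant' (empty support) with invariance under \emph{all} permutations---since this is where the results of Subsection~\ref{subsect.basic.definitions} are actually invoked; the remainder is routine bijective bookkeeping with the pointwise action.
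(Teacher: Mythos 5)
Your proof is correct and follows essentially the same route as the paper's: both directions reduce `equivariant' to $\Forall{\pi}\pi\act X=X$ via Corollary~\ref{corr.stuff} (you also cite Theorem~\ref{thrm.supp}, noting that $\varnothing$ supports $X$), and then conclude by elementary membership transport along the bijection $\pi\act(\text{-})$, exactly as in the paper's chain $x\in X\liff\pi^\mone\act x\in X\liff x\in\pi\act X$. Your mutual-inclusion phrasing of the reverse direction is just a slightly more explicit rendering of the same computation.
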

\begin{proof}
If $X$ is equivariant then by Corollary~\ref{corr.stuff} $\Forall{\pi}\pi\act X=X$. 
It follows that $x\in X\liff x\in\pi^\mone\act X \liff \pi\act x\in X$.

Conversely if 
$\Forall{\pi}\Forall{x{\in}|\ns X|}(x\in X\liff \pi\act x\in X)$
then also $x\in X\liff \pi^\mone\act x\in X \liff x\in\pi\act X$, so that $\pi\act X=X$.
This is for any $\pi$, so using Corollary~\ref{corr.stuff} we have that $\supp(X)=\varnothing$.
\end{proof}

\begin{defn}
\label{defn.powerset}
Define $\powerset(\ns X)$ the \deffont{full powerset} of $\ns X$ to be the set with a permutation action with 
\begin{itemize*}
\item
underlying set $\{X\mid X\subseteq|\ns X|\}$ (the set of all subsets of $|\ns X|$), and 
\item
the pointwise action $\pi\act X=\{\pi\act x\mid x\in X\}$.
\end{itemize*}
\end{defn}

\begin{rmrk}
Even if $\ns X$ is a nominal set, $\powerset(\ns X)$ need not be a nominal set.
To see why, take $\ns X$ to be equal to $\mathbb A=\{a,b,c,d,e,f,\dots\}$ and consider the set 
$$
\f{comb}=\{a,c,e,\dots\}
$$
of `every other atom'.
This does not have finite support, though permutations still act on it pointwise.
For more discussion of this point, see \cite[Remark~2.18]{gabbay:fountl}.
\end{rmrk}

We consider further examples in Subsection~\ref{subsect.more.examples}.

\subsection{The principle of equivariance and the NEW quantifier}
\label{subsect.pre-equivar}

We come to Theorem~\ref{thrm.equivar}, a result which is central to the `look and feel' of nominal techniques.
It enables a particularly efficient management of renaming and $\alpha$-conversion in syntax and semantics and captures why it is so useful to use \emph{names} in the foundations of our semantics and not some other infinite set, such as numbers.

Names are by definition symmetric (i.e. can be permuted).
Taking names and permutations as \emph{primitive} implies that permutations propagate to the things we build using them.
This is the \emph{principle of equivariance} (Theorem~\ref{thrm.equivar} below; see also \cite[Subsection~4.2]{gabbay:fountl} and \cite[Lemma~4.7]{gabbay:newaas-jv}).

The principle of equivariance implies that, provided we permute names uniformly in all the parameters of our definitions and theorems, we then get another valid set of definitions and theorems.
This is not true of e.g. numbers because our mathematical foundation equips numbers by construction with numerical properties such as \emph{less than or equal to $\leq$}, which can be defined from first principles with no parameters.

So if we use numbers for names then we do not care about $\leq$ because we just needed a countable set of elements, but we repeatedly have to \emph{prove} that we did not use an asymmetric property like $\leq$.
In contrast, with nominal foundations and atoms, we do not have to explicitly prove symmetry because we can just look at our mathematical foundation and note that it is naturally symmetric under permuting names; we reserve numbers for naturally \emph{a}symmetric activities, such as counting.

This style of name management is characteristic of nominal techniques.
The reader will find it used often, e.g. in Lemmas~\ref{lemm.fresh.sub}, \ref{lemm.pow.closed}, and~\ref{lemm.pow.nu.closed}, Propositions~\ref{prop.amgis.2} and~\ref{prop.char.freshwedge}, and Definitions~\ref{defn.exact.amgis.algebra}, \ref{defn.sub.sets}, and~\ref{defn.eq.powamgis}.

\begin{rmrk}
The languages of ZFA set theory and FM set theory are identical: first-order logic with equality $=$ and sets membership $\in$.
\end{rmrk}

\begin{thrm}
\label{thrm.equivar}
\label{thrm.no.increase.of.supp}
If $\vect x$ is a list $x_1,\ldots,x_n$, write $\pi\act \vect x$ for $\pi\act x_1,\ldots,\pi\act x_n$.
Suppose $\Phi(\vect x)$ is a predicate in the language of ZFA/FM set theory, with free variables $\vect x$.
Suppose $\Upsilon(\vect x)$ is a function specified in the language of ZFA/FM set theory, with free variables $\vect x$.
Then we have the following principles:
\begin{enumerate*}
\item\label{equivar.pred}
\deffont{Equivariance of predicates.} \ $\Phi(\vect x) \liff \Phi(\pi\act \vect x)$.\footnote{Here $\vect x$ is understood to contain \emph{all} the variables mentioned in the predicate.
It is not the case that $a=a$ if and only if $a=b$---but it is the case that $a=b$ if and only if $b=a$.}
\item
\deffont{Equivariance of functions.}\quad\,$\pi\act \Upsilon(\vect x) = \Upsilon(\pi\act \vect x)$.
\item
\deffont{Conservation of support.}\quad\ \
If $\vect x$ denotes elements with finite support \\ then
$\supp(\Upsilon(\vect x)) \subseteq \supp(x_1){\cup}\cdots{\cup}\supp(x_n)$.
\end{enumerate*}
\end{thrm}
\begin{proof}
See Theorem~4.4, Corollary~4.6, and Theorem~4.7 from \cite{gabbay:fountl}.
\end{proof}

\begin{rmrk}
Theorem~\ref{thrm.equivar} is three fancy ways of observing that if a specification is symmetric in atoms,
the the result must be at least as symmetric as the inputs.
The benefit of using atoms (instead of e.g. numbers) to model names makes this a one-line argument.\footnote{%
The reasoning in this paper could in principle be fully formalised in a sets foundation with atoms, such as Zermelo-Fraenkel set theory with atoms \deffont{ZFA}.
Nominal sets can be implemented in ZFA sets such that nominal sets map to equivariant elements (elements with empty support) and the permutation action maps to `real' permutation of atoms in the model.
See \cite[Subsection~9.3]{gabbay:fountl} and \cite[Section~4]{gabbay:fountl}.
}
\end{rmrk}

\begin{prop}
\label{prop.pi.supp}
$\supp(\pi\act x)=\pi\act\supp(x)$ (which means $\{\pi(a)\mid a\in\supp(x)\}$).
\end{prop}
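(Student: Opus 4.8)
The fastest route is to observe that Proposition~\ref{prop.pi.supp} is nothing more than an instance of equivariance of functions. The map $x\mapsto\supp(x)$ is a function specified in the language of ZFA/FM set theory (its definition in Definition~\ref{defn.supp} is an intersection of a set comprehension), with the single free variable $x$. Applying part~2 of Theorem~\ref{thrm.equivar} with $\Upsilon=\supp$ therefore gives $\pi\act\supp(x)=\supp(\pi\act x)$ immediately, which is exactly the claim. So if one is willing to lean on the principle of equivariance, there is essentially nothing to do.

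Should one prefer an elementary argument from first principles, I would instead derive the result from the minimality in Theorem~\ref{thrm.supp}, proving the two inclusions separately. First I would show that $\pi\act\supp(x)$ supports $\pi\act x$. Take any $\sigma\in\fix(\pi\act\supp(x))$; then $\sigma\act(\pi\act x)=\pi\act x$ is equivalent, after composing with $\pi^\mone$ on the left, to $(\pi^\mone\circ\sigma\circ\pi)\act x=x$. The point is that $\pi^\mone\circ\sigma\circ\pi$ fixes every $a\in\supp(x)$: since $\pi(a)\in\pi\act\supp(x)$ we have $\sigma(\pi(a))=\pi(a)$, hence $\pi^\mone(\sigma(\pi(a)))=a$. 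Thus $\pi^\mone\circ\sigma\circ\pi\in\fix(\supp(x))$, and because $\supp(x)$ supports $x$ we get $(\pi^\mone\circ\sigma\circ\pi)\act x=x$ as required. Since $\pi\act\supp(x)$ is a finite set supporting $\pi\act x$, minimality (Theorem~\ref{thrm.supp}) yields $\supp(\pi\act x)\subseteq\pi\act\supp(x)$.

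For the reverse inclusion I would run the same argument with $\pi^\mone$ in place of $\pi$ and $\pi\act x$ in place of $x$, obtaining $\supp(\pi^\mone\act(\pi\act x))\subseteq\pi^\mone\act\supp(\pi\act x)$, that is $\supp(x)\subseteq\pi^\mone\act\supp(\pi\act x)$; applying $\pi\act(-)$ to both sides gives $\pi\act\supp(x)\subseteq\supp(\pi\act x)$. Combining the two inclusions gives the equality.

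There is no real obstacle here: the content is entirely bookkeeping. The only step that needs a moment's care in the elementary proof is the conjugation computation showing $\pi^\mone\circ\sigma\circ\pi\in\fix(\supp(x))$, and if one instead cites equivariance the only thing to verify is that $\supp$ genuinely counts as a set-theoretic function of $x$, which it does by Definition~\ref{defn.supp}.
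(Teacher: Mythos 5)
Your proposal is correct and your primary argument is exactly the paper's proof: the paper also disposes of Proposition~\ref{prop.pi.supp} as an immediate instance of part~2 of Theorem~\ref{thrm.equivar} with $\Upsilon=\supp$. Your elementary fallback (the conjugation computation showing $\pi^\mone\circ\sigma\circ\pi\in\fix(\supp(x))$, plus minimality from Theorem~\ref{thrm.supp} and the $\pi^\mone$ trick for the reverse inclusion) is also sound, and is precisely the direct-calculation proof the paper's footnote points to in \cite[Theorem~2.19]{gabbay:fountl}.
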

\begin{proof}
Immediate consequence of part~2 of Theorem~\ref{thrm.equivar}.\footnote{There is also a nice proof of this fact by direct calculations; see \cite[Theorem~2.19]{gabbay:fountl}.
However, it just instantiates Theorem~\ref{thrm.equivar} to the particular $\Upsilon$ specifying support.}
\end{proof}

\begin{frametxt}
\begin{defn}
\label{defn.New}
Write 
$\New{a}\Phi(a)$ for `$\{a\mid \neg\Phi(a)\}$ is finite'.
We call this the \deffont{$\new$ quantifier}.
\end{defn}
\end{frametxt}

\begin{rmrk}
We can read $\new$ as `for all but finitely many $a$', `for cofinitely many $a$', `for fresh $a$', or `for new $a$'.
It captures a \emph{generative} aspect of names, that for any $x$ we can find plenty of atoms $a$ such that $a\not\in\supp(x)$.
$\new$ was designed in \cite{gabbay:newaas-jv} to model the quantifier being used when we informally write 
``rename $x$ in $\lam{x}t$ to be fresh'', or ``emit a fresh channel name'' or ``generate a fresh memory cell''.
\end{rmrk}

\begin{rmrk}
$\new$ is a `for most' quantifier \cite{westerstahl:quafnl}, and is a \emph{generalised quantifier} \cite[Section 1.2.1]{keenan:genqll}.

But importantly, $\new$ over nominal sets satisfies the \emph{some/any property} that to prove a $\new$-quantified property we test it for \emph{one} fresh atom;
we may then use it for \emph{any} fresh atom.
This is Theorem~\ref{thrm.new.equiv}, which we use implicitly when later we choose a `fresh atom' without proving that it does not matter which one we choose.
We will do this often.
See for instance the proofs of Lemmas~\ref{lemm.pow.closed} and~\ref{lemm.technical.contradiction} (where we write `for fresh $a$', we are using the $\new$-quantifier) and Definitions~\ref{defn.exact.amgis.algebra}, \ref{defn.sub.sets}, and~\ref{defn.eq.powamgis} where $\new$ is made explicit.

It is important to understand that $\new$ simply means `for all but finitely many atoms'.
This can be encoded in first-order logic.
What makes this quantifier so special is the $\forall/\exists$ symmetry property which arises \emph{specifically} when $\new$ is applied to symmetric atoms with a background assumption of finite (or more generally, `small') support---i.e. in a nominal context.

So it is not any one piece of the puzzle that makes this work, but how the pieces interact when they are fit together.
This is expressed in Theorem~\ref{thrm.new.equiv}: 
\end{rmrk}

\begin{thrm}
\label{thrm.new.equiv}
Suppose $\Phi(\vect z,a)$ is a predicate in the language of ZFA/FM set theory, with free variables $\vect z,a$. 
Suppose $\vect z$ denotes elements with finite support.
Then the following are equivalent:
$$
\Forall{a}(a\in\atoms\land  a\# \vect z) \limp \Phi(\vect z,a)
\qquad
\New{a}\Phi(\vect z,a)
\qquad
\Exists{a}a\in\atoms\land a\#\vect z  \land \Phi(\vect z,a)
$$
\end{thrm}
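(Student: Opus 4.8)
The plan is to prove the cyclic chain of implications: the universal statement $\Rightarrow$ the $\new$ statement $\Rightarrow$ the existential statement $\Rightarrow$ the universal statement, matching the order in which the three are listed. Throughout, write $A=\supp(\vect z)=\supp(z_1)\cup\dots\cup\supp(z_n)$, which is finite since each $z_i$ has finite support, and recall from Definition~\ref{defn.atoms} that $\atoms$ is infinite, so that $\atoms\setminus A$ is cofinite and in particular nonempty. With this setup the two ``counting'' implications are immediate and the single symmetry implication carries all the content.

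First I would handle the two easy implications by finiteness bookkeeping. Suppose the universal statement $\Forall{a}(a\in\atoms\land a\#\vect z)\limp\Phi(\vect z,a)$ holds; then any atom $a$ with $\neg\Phi(\vect z,a)$ must fail to be fresh for $\vect z$, so $\{a\mid\neg\Phi(\vect z,a)\}\subseteq A$, which is finite, and this is exactly $\New{a}\Phi(\vect z,a)$ by Definition~\ref{defn.New}. Next, suppose $\New{a}\Phi(\vect z,a)$, so that $\{a\mid\Phi(\vect z,a)\}$ is cofinite; since $\{a\mid a\#\vect z\}=\atoms\setminus A$ is also cofinite, the intersection of the two is cofinite and hence nonempty, and any atom in it witnesses $\Exists{a}a\in\atoms\land a\#\vect z\land\Phi(\vect z,a)$.

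The third implication is the heart of the argument and the only place where the symmetry of atoms is used. Suppose there is some atom $a_0\#\vect z$ with $\Phi(\vect z,a_0)$, and let $b\#\vect z$ be arbitrary; I must show $\Phi(\vect z,b)$. Consider the swapping $(a_0\;b)$. Since $a_0,b\notin\supp(\vect z)$, this permutation fixes every atom of $\supp(\vect z)$, so by Corollary~\ref{corr.stuff}(1) we have $(a_0\;b)\act\vect z=\vect z$. Because $\Phi$ has free variables exactly $\vect z,a$, equivariance of predicates (Theorem~\ref{thrm.equivar}(1)) applied with $\pi=(a_0\;b)$ gives
$$\Phi(\vect z,a_0)\ \liff\ \Phi\bigl((a_0\;b)\act\vect z,\ (a_0\;b)\act a_0\bigr)\ =\ \Phi(\vect z,b),$$
and since $\Phi(\vect z,a_0)$ holds, so does $\Phi(\vect z,b)$. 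As $b\#\vect z$ was arbitrary, the universal statement follows, closing the cycle.

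The only real obstacle lies in this last step: one must take care that the swapping $(a_0\;b)$ genuinely fixes $\vect z$ (which is where both freshness hypotheses $a_0,b\#\vect z$ are needed, via Corollary~\ref{corr.stuff}(1)), and that equivariance is invoked for a predicate in which \emph{every} free variable is permuted simultaneously---as the footnote to Theorem~\ref{thrm.equivar} stresses. Granting that, the some/any collapse is forced, and everything else reduces to the elementary observation that a cofinite set of atoms meets any cofinite set.
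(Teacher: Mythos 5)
Your proof is correct, and complete: the two counting implications are handled properly (cofinite meets cofinite in a countably infinite set of atoms), the swapping step correctly uses Corollary~\ref{corr.stuff}(1) together with equivariance of predicates with \emph{all} free variables permuted, and even the degenerate case $b=a_0$ is harmless since $(a\;a)=\id$. The paper itself gives no inline proof---it cites Theorem~6.5 of \cite{gabbay:fountl} and Proposition~4.10 of \cite{gabbay:newaas-jv}---and your argument is precisely the standard proof found in those sources, so this is essentially the same approach.
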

\begin{proof}
Where convenient we may write $\vect z$ as $z_1,\dots,z_n$.
\begin{itemize}
\item
Suppose $\Phi(\vect z,a)$ holds for every atom $a\in\atoms{\setminus}\bigcup_{1{\leq}i{\leq}n}\supp(z_i)$.

By assumption $\vect z$ denotes elements with finite support, and it is a fact that a finite union of finite sets is finite, so $\atoms\setminus\bigcup_{1{\leq}i{\leq}n}\supp(z_i)$ is cofinite.

It follows that $\New{a}\Phi(\vect z,a)$ holds.
\item
Suppose $A\subseteq\atoms$ is cofinite and $\Phi(\vect z,a)$ for every $a{\in}A$.
As in the previous point, there exists some $a{\in}A$ such that $a\#z_i$ for every $1{\leq}i{\leq}n$.

It follows that
$\Exists{a{\in}\atoms}\bigl(a\#\vect z \land \Phi(\vect z,a)\bigr)$.
\item
Now suppose $\Phi(\vect z,a)$ holds for some $a\in\atoms{\setminus}\bigcup_{1{\leq}i{\leq}n}\supp(z_i)$.

By part~\ref{equivar.pred} of Theorem~\ref{thrm.equivar} $\Phi((a'\ a)\act\vect z,a')$ holds for any $a'{\in}\atoms$.
Choosing $a'\#\vect z$ we have by part~\ref{stuff.fixsupp.fixelt} of Corollary~\ref{corr.stuff} that $(a'\ a)\act z_i=z_i$ for every $1{\leq}i{\leq}n$.

Thus $\Forall{a{\in}\atoms}\bigl(a\# \vect z \limp \Phi(\vect z,a)\bigr)$ holds.
\qedhere\end{itemize}
\end{proof}

\begin{rmrk}
It is impossible to overstate the importance and convenience of the $\new$-quantifier and Theorem~\ref{thrm.new.equiv}, which appears in the literature for instance as Theorem~6.5 from \cite{gabbay:fountl} or Proposition~4.10 from \cite{gabbay:newaas-jv}.

Consider Definition~\ref{defn.eq.powamgis} and the proof of Proposition~\ref{prop.xeqx}.  In that proof, we use $\alpha$-equivalence to assume $a$ is fresh for $u$ and $v$.  
By of the some/any property we can immediately apply Definition~\ref{defn.eq.powamgis}.

\emph{If} we had used a $\forall$ or $\exists$ quantifier in Definition~\ref{defn.eq.powamgis} then we would have had to worry whether the fresh $a$ in the proof of Proposition~\ref{prop.xeqx} was the same fresh atom as that used in Definition~\ref{defn.eq.powamgis}.  
`Obviously' this is just a hassle; `obviously' this choice does not matter.
Theorem~\ref{thrm.new.equiv} and the $\new$-quantifier capture precisely and succinctly what that word `obvious' means.
\end{rmrk}

\subsection{Further examples}
\label{subsect.more.examples}

We now consider the finitely supported powerset and the strictly finitely supported powerset.
These examples are more technically challenging and will be key to the later constructions.

\subsubsection{Finitely supported powerset}
\label{subsect.finsupp.pow}

$\nompow(\ns X)$ (the \deffont{nominal powerset}) is a nominal set, with 
\begin{itemize*}
\item
underlying set those $X{\in}|\powerset(\ns X)|$ that are finitely supported, and
\item
with the \deffont{pointwise} action $\pi\act X=\{\pi\act x\mid x\in X\}$ inherited from Definition~\ref{defn.pointwise.action}.
\end{itemize*}
As the name suggests, the nominal powerset is the powerset object in the category of nominal sets \cite[Lemma~9.10]{gabbay:fountl}.

A common source of confusion is to suppose that if $A$ supports $X{\in}|\nompow(\ns X)|$ then $A$ must support every $x\in X$.
This is incorrect:
\begin{lemm}
\label{lemm.pow.not.true}
It is not true in general that if $X{\in}|\nompow(\ns X)|$ and $x\in X$ then $\supp(x)\subseteq\supp(X)$.
\end{lemm}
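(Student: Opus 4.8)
The statement is a negative one, so the plan is simply to exhibit a single counterexample: a nominal set $\ns X$, a finitely supported subset $X\in|\nompow(\ns X)|$, and an element $x\in X$ for which $\supp(x)\not\subseteq\supp(X)$. The cleanest choice takes $\ns X=\mathbb A$, the nominal set of atoms with its natural permutation action from Subsection~\ref{subsect.xmpl.atoms}, and takes $X=\mathbb A$ itself, viewed as a subset of $|\mathbb A|$.

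First I would check that $X=\mathbb A$ really is an element of $|\nompow(\mathbb A)|$, i.e. that it is finitely supported. Every permutation $\pi$ is a bijection on atoms, so $\pi\act\mathbb A=\{\pi(a)\mid a\in\mathbb A\}=\mathbb A$. Hence $\varnothing$ supports $\mathbb A$, and by Theorem~\ref{thrm.supp} we get $\supp(\mathbb A)=\varnothing$; in particular $X$ is finitely supported (indeed equivariant), so it genuinely lies in $|\nompow(\mathbb A)|$.

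Next I would pick any atom $a$ and note $a\in X=\mathbb A$. The support of an atom is its own singleton: $\{a\}$ supports $a$ because any $\pi\in\fix(\{a\})$ satisfies $\pi\act a=\pi(a)=a$, while $\varnothing$ does not support $a$ because $(a\;b)\act a=b\neq a$ for any other atom $b$; so by Theorem~\ref{thrm.supp}, $\supp(a)=\{a\}$. Putting the pieces together, $\supp(a)=\{a\}\not\subseteq\varnothing=\supp(X)$, which witnesses the failure of the containment and proves the lemma.

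There is no real obstacle here; the only thing to keep straight is the conceptual point the example dramatises, namely that support measures name-symmetry rather than set-membership. The whole of $\mathbb A$ is perfectly symmetric and so has empty support, even though it contains elements whose supports are nonempty. The same phenomenon is visible in the cofinite example $\mathbb A{\setminus}\{a\}$ flagged in the earlier remark, which would serve equally well as a witness (there $\supp(\mathbb A{\setminus}\{a\})=\{a\}$, yet $\supp(b)=\{b\}\not\subseteq\{a\}$ for any $b\neq a$).
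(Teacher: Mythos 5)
Your proposal is correct and matches the paper's own proof exactly: both take $\ns X=\mathbb A$ and $X=\mathbb A\subseteq|\mathbb A|$, observe $\supp(X)=\varnothing$ while $\supp(a)=\{a\}$, and conclude. Your verification that $\varnothing$ supports $\mathbb A$ and that $\supp(a)=\{a\}$ simply spells out the steps the paper labels ``easy to check''.
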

\begin{proof}
It suffices to provide a counterexample.
Take $\ns X=\mathbb A$ (the nominal set of atoms with the natural permutation action, from Subsection~\ref{subsect.xmpl.atoms}) and $X=\mathbb A\subseteq|\mathbb A|$ (the underlying set of the nominal set of all atoms, i.e. the set of all atoms!).

It is easy to check that $\supp(X)=\varnothing$ and $a\in X$ and $\supp(a)=\{a\}\not\subseteq\varnothing$.
\end{proof}

Lemma~\ref{lemm.pow.not.true} will lead us to the notion of the \emph{strictly finitely supported} powerset in a moment.
For completeness we take just a moment to mention Lemma~\ref{lemm.weak.pow.true}, which describes a weaker property than that of Lemma~\ref{lemm.pow.not.true} which \emph{is} valid in general; see \cite[Lemma~5.2]{gabbay:forcie} and \cite[Corollary~4.30]{gabbay:nomuae} for applications, and \cite[Lemma~7.6.2]{gabbay:nomtnl} for the more general context.
\begin{lemm}
\label{lemm.weak.pow.true}
If $X{\in}|\nompow(\ns X)|$ and $a\#X$ then there exists an $x\in X$ with $a\#x$.
\end{lemm}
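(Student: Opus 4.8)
The plan is to exploit the fact that although $a$ being fresh for the \emph{set} $X$ does not force $a$ to be fresh for its \emph{elements} (this is precisely the content of Lemma~\ref{lemm.pow.not.true}), we can nevertheless \emph{rename} a badly-chosen element by a swapping that leaves $X$ fixed setwise. Throughout I assume $X\neq\varnothing$, since otherwise there is no $x\in X$ to produce; the lemma is understood to concern nonempty $X$.

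First I would pick any element $x_0\in X$. Since $\ns X$ is a nominal set, $\supp(x_0)$ is finite, as is $\supp(X)$, so I can choose a fresh atom $b$ with $b\#X$ and $b\#x_0$. The key observation is that the swapping $(a\;b)$ fixes every atom of $\supp(X)$ pointwise: by hypothesis $a\#X$ and by choice $b\#X$, so neither $a$ nor $b$ lies in $\supp(X)$. Hence by part~1 of Corollary~\ref{corr.stuff} we get $(a\;b)\act X=X$.

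Now set $x=(a\;b)\act x_0$. Since the action on subsets is pointwise (Definition~\ref{defn.pointwise.action}) and $X=(a\;b)\act X$, we have $x=(a\;b)\act x_0\in(a\;b)\act X=X$. It remains to check $a\#x$. By Proposition~\ref{prop.pi.supp}, $\supp(x)=(a\;b)\act\supp(x_0)$, so $a\in\supp(x)$ if and only if $(a\;b)(a)=b\in\supp(x_0)$; but $b\#x_0$ by our choice, hence $a\notin\supp(x)$, i.e. $a\#x$. This $x$ is the required element.

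The argument is short, and the only genuine subtlety is the support bookkeeping in the last step: one must track that the swapping sends the offending atom $a$ to the freshly chosen $b$, and that $b$ was deliberately kept out of $\supp(x_0)$, so that $a$ vanishes from the support of the renamed element while $X$ itself is left invariant. No case split on whether $a\#x_0$ is needed, since the computation goes through uniformly (if $a\#x_0$ already holds, then $(a\;b)$ fixes $x_0$ and $x=x_0$).
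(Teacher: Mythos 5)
Your proof is correct and is essentially identical to the paper's: the paper also picks an arbitrary $x'\in X$, chooses $b$ fresh for $x'$ and $X$, uses part~1 of Corollary~\ref{corr.stuff} to get $(b\ a)\act X=X$ (hence $(b\ a)\act x'\in X$), and concludes $a\#(b\ a)\act x'$ via Proposition~\ref{prop.pi.supp}. Your explicit remark that $X$ must be nonempty is a small point the paper leaves implicit, but otherwise the two arguments coincide step for step.
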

\begin{proof}
Choose $x'\in X$ and fresh $b$ (so $b\#x',X$).
By Corollary~\ref{corr.stuff} $(b\ a)\act X=X$, so $(b\ a)\act x\in X$.
By Proposition~\ref{prop.pi.supp} $a\#(b\ a)\act x$. 
\end{proof}

\subsubsection{Strictly finitely supported powerset}
\label{subsect.strict.pow}

Suppose $\ns X$ is a nominal set.

\begin{defn}
\label{defn.strictpow}
Call $X\subseteq|\ns X|$ \deffont{strictly supported} by $A\subseteq\mathbb A$ when 
$$
\Forall{x{\in} X} \supp(x)\subseteq A .
$$
If there exists some finite $A$ which strictly supports $X$, then call $X$ \deffont{strictly finitely supported} (see \cite[Theorem~2.29]{gabbay:fountl}).
 
Write $\strict(\ns X)$ for the set of strictly finitely supported $X\subseteq|\ns X|$.
That is:
$$
\strict(\ns X)=\{X\subseteq|\ns X|\mid \Exists{A{\subseteq}\mathbb A}A\text{ finite}\land X\text{ strictly supported by }A\}
$$
\end{defn}

\begin{lemm}
\label{lemm.strict.support}
If $X\in\strict(\ns X)$ then:
\begin{enumerate*}
\item
$\bigcup\{\supp(x)\mid x{\in}X\}$ is finite.
\item
\label{item.strict.support.elts}
$\bigcup\{\supp(x)\mid x{\in}X\}=\supp(X)$. 
\item
If $X\subseteq|\ns X|$ is strictly finitely supported then it is finitely supported. 
\item
$x\in X$ implies $\supp(x)\subseteq\supp(X)$ (contrast this with Lemma~\ref{lemm.pow.not.true}).
\item
$\strict(\ns X)$ with the pointwise permutation action is a nominal set. 
\end{enumerate*}
\end{lemm}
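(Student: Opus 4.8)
The plan is to treat part~2 as the crux and derive the other four parts from it. Throughout I would write $B=\bigcup\{\supp(x)\mid x{\in}X\}$ for the union of the supports of the elements of $X$; the whole lemma turns on identifying $B$ with $\supp(X)$.

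First I would dispatch part~1 and the easy half of part~2. Since $X\in\strict(\ns X)$, fix a finite $A$ strictly supporting $X$, so that $\supp(x)\subseteq A$ for every $x\in X$; then $B\subseteq A$, so $B$ is finite, giving part~1. Next I would check that $B$ supports $X$: if $\pi\in\fix(B)$ then $\pi$ fixes $\supp(x)$ pointwise for each $x\in X$, so $\pi\act x=x$ by the first part of Corollary~\ref{corr.stuff}, whence $\pi\act X=X$. As $B$ is a finite supporting set, Theorem~\ref{thrm.supp} gives $\supp(X)\subseteq B$.

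The main obstacle is the reverse inclusion $B\subseteq\supp(X)$, which is exactly where \emph{strict} (as opposed to merely finite) support is needed---recall from Lemma~\ref{lemm.pow.not.true} that this inclusion fails for general finitely supported sets. I would argue by contradiction: suppose $a\in\supp(x)$ for some $x\in X$ but $a\#X$. Using that $B$ is finite, choose a fresh $b\notin B$; then $b\#X$ as well (since $\supp(X)\subseteq B$), so the swapping $(b\ a)$ fixes $\supp(X)$ pointwise and hence $(b\ a)\act X=X$ by Corollary~\ref{corr.stuff}. Consequently $(b\ a)\act x\in X$, and by Proposition~\ref{prop.pi.supp} its support is $(b\ a)\act\supp(x)$, which contains $b$ since $a\in\supp(x)$ while $b\notin\supp(x)$ (as $b\notin B$). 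Thus $b\in B$, contradicting the choice of $b$. This establishes $B\subseteq\supp(X)$ and completes part~2.

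Parts~3--5 then follow with little further work. Part~3 is immediate, since $\supp(X)=B$ is finite by parts~1 and~2; part~4 is the observation that $x\in X$ gives $\supp(x)\subseteq B=\supp(X)$. For part~5 I would first confirm the pointwise action is well-defined on $\strict(\ns X)$: if $A$ strictly supports $X$ then $\pi\act A$ strictly supports $\pi\act X$, because each $y\in\pi\act X$ has the form $\pi\act x$ with $\supp(\pi\act x)=\pi\act\supp(x)\subseteq\pi\act A$ by Proposition~\ref{prop.pi.supp}, and $\pi\act A$ is finite. Finite support of each element then follows from part~3, so $\strict(\ns X)$ is a nominal set.
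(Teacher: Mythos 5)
Your proposal is correct, and it follows essentially the same route as the paper: part~1 from the bounding set $A$, part~2 by a fresh-atom swapping calculation in the spirit of Corollary~\ref{corr.stuff} (the paper states this step tersely and defers the details to an external reference, which your contradiction argument with a fresh $b\notin B$ fills in correctly), and parts~3--5 read off from parts~1 and~2. Your additional check in part~5 that the pointwise action maps strictly finitely supported sets to strictly finitely supported sets, via Proposition~\ref{prop.pi.supp}, is exactly the routine verification the paper leaves implicit.
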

\begin{proof}
The first part is immediate since by assumption there is some finite $A{\subseteq}\mathbb A$ that bounds $\supp(x)$ for all $x\in X$.
The second part follows by an easy calculation using part~3 of Corollary~\ref{corr.stuff}; full details are in \cite[Theorem~2.29]{gabbay:fountl}, of which Lemma~\ref{lemm.strict.support} is a special case. 
The other parts follow by definitions from the first and second parts.
\end{proof}

\begin{xmpl}
\begin{enumerate*}
\item
$\varnothing\subseteq|\mathbb A|$ is finitely and strictly finitely supported by $\varnothing$.
\item
$\{a\}$ is finitely supported by $\{a\}$ and also strictly finitely supported by $\{a\}$.
\item
$\mathbb A\subseteq|\mathbb A|$ is finitely supported by $\varnothing$ but not strictly finitely supported.
\item
$\mathbb A{\setminus}\{a\}$ is finitely supported by $\{a\}$ but not strictly finitely supported.
\end{enumerate*}
\end{xmpl}

\section{Algebras over nominal sets}
\label{sect.fol-algebra}

\subsection{Definitions: sigma-algebra and amgis-algebra} 
\label{subsect.sigma.amgis}

\maketab{tab0}{@{\hspace{2em}}L{6em}@{\ }L{4em}@{\ }R{8em}@{\ }L{10em}@{\ }L{10em}}

\begin{figure}[tH]
\begin{minipage}{\textwidth}
\begin{tab0}
\rulefont{\sigma a} && a[a\sm x]=&x
\\[.5ex]
\rulefont{\sigma id} && x[a \sm a]=&x
\\
\rulefont{\sigma\#} &a\#x\limp& x[a \sm u]=&x
\\
\rulefont{\sigma\alpha}&b\#x\limp&x[a \sm u]=&((b\;a)\act x)[b \sm u]
\\
\rulefont{\sigma\sigma} &a\#v\limp & x[a \sm u][b \sm v]=&x[b \sm v][a \sm u[b \sm v]]
\\[2.2ex]
\rulefont{\amgis\sigma}&a\#v\limp& p[v \ms b][u \ms a]=&p[u[b \sm v] \ms a][v \ms b]
\end{tab0}
\end{minipage}
\caption{Nominal algebra axioms for $\sigma$ and $\protect\amgis$} 
\label{fig.nom.sigma}
\label{fig.amgis}
\end{figure}

Definitions~\ref{defn.term.sub.alg},\ \ref{defn.sub.algebra},\ and~\ref{defn.bus.algebra} assemble three key technical structures (see also Definitions~\ref{defn.powamgis} and~\ref{defn.powsigma}).

\begin{defn}
\label{defn.term.sub.alg}
A \deffont{termlike $\sigma$-algebra} is a tuple $\ns U=(|\ns U|,\act,\tf{sub}_{\ns U},\tf{atm}_{\ns U})$ of: 
\begin{itemize*}
\item
a nominal set $(|\ns U|,\act)$ which we may write just as $\ns U$; and
\item
an equivariant \deffont{$\sigma$-action} $\tf{sub}_{\ns U}:(\ns U\times\mathbb A\times\ns U)\Func \ns U$, written infix $v[a{\sm}u]$; and 
\item
an equivariant injection $\tf{atm}_{\ns U}:\mathbb A\Func\ns U$, 
\end{itemize*}
such that the equalities \rulefont{\sigma a}, \rulefont{\sigma id}, \rulefont{\sigma\#}, \rulefont{\sigma\alpha}, and \rulefont{\sigma\sigma} of Figure~\ref{fig.nom.sigma} hold, where $x$, $u$, and $v$ range over elements of $|\ns U|$.
We usually write $\tf{atm}_{\ns U}$ invisibly (so we write $\tf{atm}_{\ns U}(a)$ just as $a$), and we may omit subscripts (so we may write $\tf{sub}_{\ns U}$ as $\tf{sub}$ if we are confident that $\ns U$ is clear and understood). 
\end{defn}

\begin{rmrk}
We unpack what equivariance from Definition~\ref{defn.equivariant} means for the $\sigma$-action from Definition~\ref{defn.sub.algebra}: for every $x{\in}|\ns X|$,\ atom $a$, and $u{\in}|\ns U|$, and for every permutation $\pi$, we have that
$$
\pi\act(x[a\sm u])=(\pi\act x)[\pi(a)\sm\pi\act u] .
$$
Similarly for the equivariant $\amgis$-action in Definition~\ref{defn.bus.algebra} below.
\end{rmrk}

\begin{xmpl}
\label{xmpl.fot}
First-order terms form a termlike $\sigma$-algebra:
\begin{itemize*}
\item
Variables are atoms, 
\item
the permutation action is pointwise, and 
\item
the $\sigma$-action is `real' substitution.
\end{itemize*}
Consider for instance a first-order term language for arithmetic: then $a*b$ is a term, $\pi\act(a*b)$ is $\pi(a)*\pi(b)$, and $(a*b)[a\sm 2][b\sm 3]$ is $2*3$.
Such a syntax is defined in this paper in Definition~\ref{defn.terms.and.predicates} as 
$
r::= a \mid \tf f(r_1,\dots,r_{\ar(\tf f)}).
$

Similarly, untyped $\lambda$-terms quotiented by $\alpha$-equivalence form a termlike $\sigma$-algebra, where the $\sigma$-action is capture-avoiding substitution.
So $[(\lam{a}b)]_\alpha[b\sm a]=[\lam{a'}a]_\alpha$.

Predicates of first-order logic quotiented by $\alpha$-equivalence are not a termlike $\sigma$-algebra under capture-avoiding substitution action for terms 
because predicates do not belong to the same class as terms $u$.
However, predicates do form a (not-necessarily-termlike) $\sigma$-algebra; see Definition~\ref{defn.sub.algebra}. 
\end{xmpl}

\begin{rmrk}
The `$\sigma$' in $\sigma$-action stands for `substitution'.
No connection is suggested with the notion of sigma-algebra from measure theory.
\end{rmrk}

\begin{defn}
\label{defn.sub.algebra}
Suppose $\ns U=(|\ns U|,\act,\tf{sub},\tf{atm})$ is a termlike $\sigma$-algebra.
A \deffont{$\sigma$-algebra} over $\ns U$ is a tuple $\ns X=(|\ns X|,\act,\ns U,\tf{sub})$ of:
\begin{itemize*}
\item
A nominal set $(|\ns X|,\act)$ which we may write just as $\ns X$; and
\item
an equivariant \deffont{$\sigma$-action} $\tf{sub}_{\ns X}:(\ns X\times\mathbb A\times\ns U)\Func\ns X$, written infix $x[a{\sm}u]$; 
\end{itemize*}
such that the equalities \rulefont{\sigma id}, \rulefont{\sigma\#}, \rulefont{\sigma\alpha}, and \rulefont{\sigma\sigma} of Figure~\ref{fig.nom.sigma} hold,%
\footnote{That is, the $\sigma$ axioms except \rulefont{\sigma a}, since we do not assume a function $\tf{atm}_{\ns X}$.} 
where $x$ ranges over elements of $|\ns X|$ and $u$ and $v$ range over elements of $|\ns U|$.
As for termlike $\sigma$-algebras, we may omit the subscript $\ns X$.
\end{defn}

\begin{defn}
\label{defn.bus.algebra}
Suppose $\ns U=(|\ns U|,\act,\tf{sub},\tf{atm})$ is a termlike $\sigma$-algebra. 

An \deffont{$\amgis$-algebra} (spoken: \deffont{amgis}-algebra) over $\ns U$ is a tuple $\ns P=(|\ns P|,\act,\ns U,\tf{amgis}_{\ns P})$ of: 
\begin{itemize*}
\item
a set with a permutation action $(|\ns P|,\act)$ which we may write just as $\ns P$; and 
\item
an equivariant \deffont{amgis}-action $\tf{amgis}_{\ns P}:(\ns P\times\ns U\times\mathbb A)\Func \ns P$, written infix $p[u\ms a]$, 
\end{itemize*}
such that the equality \rulefont{\amgis \sigma} of Figure~\ref{fig.nom.sigma} holds, where $p$ ranges over elements of $|\ns P|$ and $u$ and $v$ range over elements of $|\ns U|$.
We may omit the subscript $\ns P$.
\end{defn}

\begin{rmrk}
$[u\ms a]$ looks like $[a\sm u]$ written backwards, and a casual glance at \rulefont{\amgis\sigma} suggests that it is just \rulefont{\sigma\sigma} written backwards.
This is not quite true: we have $u[b\sm v]$ on the right in \rulefont{\amgis\sigma} and not `$u[v\ms b]$' (which would make no sense, since $\ns U$ has no amgis-action).

Discussion of the origin of the axioms of $\amgis$-algebras is in Subsections~\ref{subsect.sigma.to.amgis} and~\ref{subsect.amgis.to.sigma}; see also Proposition~\ref{prop.amgis.2} and Remark~\ref{rmrk.subsequent}.
\end{rmrk}

We conclude this subsection with some technical lemmas which will be useful later.

Lemma~\ref{lemm.fresh.sub} is a technical corollary of \rulefont{\sigma\alpha}, and is used in Lemma~\ref{lemm.sim.sigma}:
\begin{lemm}
\label{lemm.fresh.sub}
If $a\#u$ then $a\#x[a\sm u]$.
As a corollary, 
$$
\supp(x[a\sm u])\subseteq(\supp(x){\setminus}\{a\})\cup\supp(u) .
$$
\end{lemm}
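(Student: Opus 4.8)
The plan is to establish the freshness claim first and then read off the support inclusion as a corollary. The essential tool is $\alpha$-renaming via \rulefont{\sigma\alpha}, which lets me rewrite $x[a\sm u]$ so that the distinguished atom $a$ is replaced throughout by a completely fresh atom; conservation of support (part~3 of Theorem~\ref{thrm.equivar}) then finishes the job, since after the rename $a$ occurs in none of the arguments of the $\sigma$-action.

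First I would fix a fresh atom $b$, meaning $b\#x$, $b\#u$, and $b\neq a$; by the some/any property (Theorem~\ref{thrm.new.equiv}) the particular choice is immaterial. Since $b\#x$, axiom \rulefont{\sigma\alpha} gives
$$
x[a\sm u]=((b\;a)\act x)[b\sm u].
$$
I then check that $a$ is fresh for each of the three arguments on the right. Clearly $a\neq b$, and $a\#u$ holds by hypothesis. For $(b\;a)\act x$, I use Proposition~\ref{prop.pi.supp} to compute $\supp((b\;a)\act x)=(b\;a)\act\supp(x)$, and observe that $a\in(b\;a)\act\supp(x)$ would force $b=(b\;a)(a)\in\supp(x)$, contradicting $b\#x$; hence $a\#(b\;a)\act x$. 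Applying conservation of support to the equivariant $\sigma$-action yields
$$
\supp\bigl(((b\;a)\act x)[b\sm u]\bigr)\subseteq\supp((b\;a)\act x)\cup\{b\}\cup\supp(u),
$$
and $a$ lies in none of these three sets, so $a\#x[a\sm u]$ as required.

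For the corollary, conservation of support applied directly to $x[a\sm u]$ gives the coarse bound $\supp(x[a\sm u])\subseteq\supp(x)\cup\{a\}\cup\supp(u)$. When $a\#u$ the first part tells me $a\notin\supp(x[a\sm u])$, so I may delete $a$ from the bound; since then $a\notin\supp(u)$, this yields exactly $(\supp(x){\setminus}\{a\})\cup\supp(u)$. (In the remaining case $a\in\supp(u)$ the stated inclusion is immediate, as $a$ already appears on the right.) The only real content of the argument is the middle step — recognising that the $\alpha$-rename in \rulefont{\sigma\alpha}, together with Proposition~\ref{prop.pi.supp} and $b\#x$, forces $a\#(b\;a)\act x$ — after which the result collapses to the generic support-conservation bound and elementary set arithmetic.
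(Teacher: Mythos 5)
Your proof is correct, and it shares the crucial first move with the paper's own proof---the $\alpha$-renaming $x[a\sm u]=((b\ a)\act x)[b\sm u]$ via \rulefont{\sigma\alpha} for fresh $b$---but the two arguments finish differently. The paper computes, using equivariance (Theorem~\ref{thrm.equivar}) and $(b\ a)\act u=u$ (part~1 of Corollary~\ref{corr.stuff}, valid since $a,b\#u$), that $(b\ a)\act(x[a\sm u])=x[a\sm u]$, and then concludes $a\#x[a\sm u]$ from the swapping characterisation of freshness (part~3 of Corollary~\ref{corr.stuff}). You instead check that $a$ is fresh for each argument of the renamed expression---$a\#(b\ a)\act x$ via Proposition~\ref{prop.pi.supp} and $b\#x$, $a\neq b$, and $a\#u$ by hypothesis---and apply conservation of support (part~3 of Theorem~\ref{thrm.no.increase.of.supp}) once; this is a legitimately different final step resting on a different key fact. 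Your route is marginally more direct, avoiding both the equivariance computation and the exists-a-fresh-swap test; the paper's route exercises the idiomatic nominal pattern ``show the element is fixed by a swap with a fresh atom'', which recurs elsewhere (e.g.\ in Lemma~\ref{lemm.a.fresh.bigset}), and so is arguably the more instructive template for the later proofs. On the corollary the two treatments agree, with one small bonus on your side: your explicit case split on whether $a\in\supp(u)$ shows the stated inclusion holds unconditionally, a point the paper's one-line derivation leaves implicit.
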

\begin{proof}
Choose fresh $b$ (so $b\#x,u$).
By \rulefont{\sigma\alpha} $x[a\sm u]=((b\ a)\act x)[b\sm u]$.
Also by part~1 of Corollary~\ref{corr.stuff} $(b\ a)\act u=u$ and by Theorem~\ref{thrm.equivar} $(b\ a)\act(x[a\sm u])=((b\ a)\act x)[b\sm (b\ a)\act u]$.
We put this all together and we deduce that $(b\ a)\act (x[a\sm u])=x[a\sm u]$.
It follows by part~3 of Corollary~\ref{corr.stuff} that $a\not\in\supp(x[a\sm u])$.

The corollary follows since by Theorem~\ref{thrm.no.increase.of.supp} $\supp(x[a\sm u])\subseteq\supp(x){\cup}\{a\}{\cup}\supp(u)$.
\end{proof}

\begin{rmrk}
The reader should know Lemma~\ref{lemm.fresh.sub} for concrete syntax, which is a $\sigma$-algebra.

For instance, if $\phi$ is a predicate of first-order logic (up to $\alpha$-equivalence) and $r$ is a term, then \emph{support} coincides with \emph{free atoms/variables} $\fa(\text{-})$ and $\fa(\phi[a\sm r])\subseteq(\fa(\phi){\setminus}\{a\})\cup\fa(r)$.  

But, Lemma~\ref{lemm.fresh.sub} is an abstract property of models of nominal algebra axioms.
Syntax is one concrete instance of the abstract class (as natural numbers are a concrete instance of rings). 
\end{rmrk}

Lemma~\ref{lemm.sub.alpha} goes back to \cite{gabbay:capasn,gabbay:capasn-jv}, where it was taken as an axiom or the original nominal algebraic treatment of substitution. 
In the presence of \rulefont{\sigma\alpha} it is equivalent to \rulefont{\sigma id};
it is useful in the proofs of Propositions~\ref{prop.char.freshwedge} and~\ref{prop.eq.closed}: 
\begin{lemm}
\label{lemm.sub.alpha}
If $b\#x$ then $x[a\sm b]=(b\ a)\act x$.
\end{lemm}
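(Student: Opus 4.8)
The plan is to derive the identity directly from two of the nominal $\sigma$-axioms of Figure~\ref{fig.nom.sigma}, namely \rulefont{\sigma\alpha} and \rulefont{\sigma id}, both of which are available in any $\sigma$-algebra (they are among the axioms retained in Definition~\ref{defn.sub.algebra}, so the argument does not rely on the termlike-only axiom \rulefont{\sigma a}). The hypothesis $b\#x$ is precisely the side-condition needed to fire \rulefont{\sigma\alpha}, which is the whole point of the lemma.

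First I would note that in the expression $x[a\sm b]$ the symbol $b$ on the right of $\sm$ really denotes $\tf{atm}_{\ns U}(b)\in\ns U$, written invisibly as per the convention of Definition~\ref{defn.term.sub.alg}; this is the legitimate way an atom appears in the $u$-position of a $\sigma$-action. With this understood, I instantiate \rulefont{\sigma\alpha}, which reads $b\#x\limp x[a\sm u]=((b\;a)\act x)[b\sm u]$, at $u:=b$. Using the assumption $b\#x$ this yields
$$
x[a\sm b]=((b\;a)\act x)[b\sm b].
$$

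Next I apply \rulefont{\sigma id}, which reads $y[c\sm c]=y$, with $y:=(b\;a)\act x$ and $c:=b$; the right-hand substitution $[b\sm b]$ is a literal instance of the pattern $[c\sm c]$, so this gives $((b\;a)\act x)[b\sm b]=(b\;a)\act x$. Chaining the two equalities produces $x[a\sm b]=(b\;a)\act x$, as required.

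There is essentially no obstacle here beyond bookkeeping, so if anything the ``hard part'' is simply keeping the side-conditions straight: one must check that $a\neq b$ (guaranteed by the permutative convention of Definition~\ref{defn.atoms}, which is exactly why \rulefont{\sigma\alpha} applies with these two distinct atoms), that the freshness hypothesis of \rulefont{\sigma\alpha} is discharged by the lemma's assumption $b\#x$, and that $b$ is read throughout as $\tf{atm}_{\ns U}(b)$ so that $[b\sm b]$ genuinely matches the shape required by \rulefont{\sigma id}.
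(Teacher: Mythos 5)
Your proof is correct and takes essentially the same route as the paper: the paper's own proof chains exactly the same two axioms, reading $(b\ a)\act x \stackrel{\rulefont{\sigma id}}{=} ((b\ a)\act x)[b\sm b] \stackrel{\rulefont{\sigma\alpha}}{=} x[a\sm b]$, which is your argument written in the reverse direction. Your bookkeeping remarks (reading $b$ as $\tf{atm}_{\ns U}(b)$, the permutative convention, and the freshness side-condition) are all accurate and consistent with the paper.
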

\begin{proof}
By \rulefont{\sigma id} $(b\ a)\act x=((b\ a)\act x)[b\sm b]$.
By \rulefont{\sigma\alpha} $((b\ a)\act x)[b\sm b]=x[a\sm b]$.
\end{proof}

Lemma~\ref{lemm.a.fresh.bigset} will be useful to prove Proposition~\ref{prop.char.freshwedge} and~\ref{prop.char.freshwedge.names}.  See also the related Lemma~\ref{lemm.fresh.ii}:
\begin{lemm}
\label{lemm.a.fresh.bigset}
If $U{\subseteq}|\ns U|$ is finitely supported and $a\#U$ then $a\#\{x[a\sm u]\mid u{\in}U\}$.
As corollaries, the following freshnesses all hold: 
$$
\begin{array}{l}
a\#\{x[a\sm u]\mid u{\in}|\ns U|\}
\\
a\#\{x[a\sm n]\mid n{\in}\mathbb A\}
\end{array}
$$
\end{lemm}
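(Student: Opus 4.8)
The plan is to prove $a\#S$ directly, where $S=\{x[a\sm u]\mid u\in U\}$ carries the pointwise permutation action of Definition~\ref{defn.pointwise.action}. First I would observe that $S$ is finitely supported: it is obtained from the finitely supported data $x$, $a$, and $U$ by a ZFA-definable operation, so by conservation of support (Theorem~\ref{thrm.no.increase.of.supp}) we get $\supp(S)\subseteq\supp(x)\cup\{a\}\cup\supp(U)$. This bound still mentions $a$, so it does not finish the job by itself; it will serve only to supply a fresh atom. The real work is to apply part~3 of Corollary~\ref{corr.stuff}, which reduces $a\#S$ to exhibiting some $b$ with $b\#S$ and $(b\;a)\act S=S$.

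So the second step is to choose $b$ fresh (so $b\#x,U$ and $b\neq a$) and to compute $(b\;a)\act S$. Since the action on $S$ is pointwise and the $\sigma$-action is equivariant, $(b\;a)\act(x[a\sm u])=((b\;a)\act x)[b\sm (b\;a)\act u]$. Because $a\#U$ and $b\#U$, the swapping $(b\;a)$ lies in $\fix(\supp(U))$, so by part~1 of Corollary~\ref{corr.stuff} $(b\;a)\act U=U$; hence as $u$ ranges over $U$ so does $(b\;a)\act u$, and the set is unchanged under this reindexing. Finally, since $b\#x$, axiom $\rulefont{\sigma\alpha}$ gives $((b\;a)\act x)[b\sm w]=x[a\sm w]$ for every $w$, collapsing each element back into the original form. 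Putting these together yields $(b\;a)\act S=\{x[a\sm u]\mid u\in U\}=S$.

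Then $b\#S$ follows from the support bound of the first step (as $b\notin\supp(x)\cup\{a\}\cup\supp(U)$), and part~3 of Corollary~\ref{corr.stuff} delivers $a\#S$. The two displayed corollaries follow immediately by instantiating $U$: take $U=|\ns U|$ and $U=\{\tf{atm}(n)\mid n\in\mathbb A\}$ respectively, each of which is equivariant (empty support) and so trivially satisfies $a\#U$.

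The main subtlety to flag is \emph{why} the lemma is not simply an elementwise application of Lemma~\ref{lemm.fresh.sub}: the hypothesis $a\#U$ does \emph{not} entail $a\#u$ for the individual $u\in U$ (this is exactly the phenomenon of Lemma~\ref{lemm.pow.not.true}), so one cannot conclude $a\#x[a\sm u]$ pointwise and then collect. The argument must instead treat $S$ as a single finitely supported object, and the one place where genuine care is needed is the interaction in the second step between the reindexing $(b\;a)\act U=U$ and the rewriting by $\rulefont{\sigma\alpha}$—both are required, and each relies on a different freshness assumption on $b$ ($b\#U$ for the reindexing, $b\#x$ for $\rulefont{\sigma\alpha}$).
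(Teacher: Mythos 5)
Your proof is correct and follows essentially the same route as the paper's: choose fresh $b$ with $b\#x,U$, use $(b\;a)\act U=U$ (from $a,b\#U$) to reindex, apply \rulefont{\sigma\alpha} with $b\#x$ to get $(b\;a)\act S=S$, and conclude by part~3 of Corollary~\ref{corr.stuff}. Your explicit verification that $b\#S$ via the support bound from Theorem~\ref{thrm.no.increase.of.supp} is a detail the paper leaves implicit, and your remark about why elementwise use of Lemma~\ref{lemm.fresh.sub} fails correctly identifies the point of the lemma.
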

\begin{proof}
We use part~3 of Corollary~\ref{corr.stuff}.
Choose fresh $b$ (so $b\#x,U$).
Note by Corollary~\ref{corr.stuff} (since $a,b\#U$) that $(b\ a)\act U=U$ so that by Lemma~\ref{lemm.when.set.equivar} $\Forall{u{\in}|\ns U|}u\in U\liff (b\ a)\act u\in U$.
Then we reason as follows:
$$
\begin{array}[b]{r@{\ }l@{\qquad}l}
(b\ a)\act \{x[a\sm u]\mid u{\in}U\}
=&
\{(b\ a)\act (x[a\sm u])\mid u{\in}U\}
&\text{Pointwise action}
\\
=&
\{((b\ a)\act x)[b\sm (b\ a)\act u]\mid u{\in}U\}
&\text{Theorem~\ref{thrm.equivar}}
\\
=&
\{((b\ a)\act x)[b\sm u]\mid u{\in}U\}
&(b\ a)\act U=U
\\
=&
\{x[a\sm u]\mid u{\in}|\ns U|\}
&\rulefont{\sigma\alpha},\ b\#x
\end{array}
$$
The proof for $a\#\{x[a\sm n]\mid n{\in}\mathbb A\}$ is similar.
\end{proof}

\maketab{tab1}{@{\hspace{-2em}}R{10em}@{\ }L{12em}L{12em}}
\maketab{tab2}{@{\hspace{-2.3em}}R{10em}@{\ }L{14em}L{13.25em}}
\maketab{tab2r}{@{\hspace{-0.3em}}R{10em}@{\ }L{14em}L{12em}}
\maketab{tab2b}{@{\hspace{-5em}}R{10em}@{\ }L{19.5em}L{12em}}
\maketab{tab3}{@{\hspace{-3em}}R{10em}@{\ }L{10em}L{16.25em}}

\subsection{Duality I: sigma to amgis} 
\label{subsect.sigma.to.amgis}

In Subsection~\ref{subsect.sigma.to.amgis} we explore how to move from a $\sigma$-algebra to an $\amgis$-algebra; we explore the other direction in Subsection~\ref{subsect.amgis.to.sigma}.

Given a $\sigma$-algebra we generate an $\amgis$-algebra out of its subsets.
This is Proposition~\ref{prop.amgis.2}.

\begin{defn}
\label{defn.p.action}
Suppose $\ns X=(|\ns X|,\act,\ns U,\tf{sub}_{\ns X})$ is a $\sigma$-algebra over a termlike $\sigma$-algebra $\ns U$.

Give subsets $p\subseteq|\ns X|$ \deffont{pointwise} actions as follows:
\begin{frameqn}
\begin{array}{r@{\ }l@{\qquad}l}
\pi\act p=&\{\pi\act x\mid x\in p\}
\\
p[u\ms a]=&\{x{\in}|\ns X| \mid x[a\sm u]\in p\} 
&u{\in}|\ns U|
\end{array}
\end{frameqn}
\end{defn}

\begin{prop}
\label{prop.sigma.iff}
Suppose $\ns X$ is a $\sigma$-algebra over a termlike $\sigma$-algebra $\ns U$. 
Suppose $p\subseteq|\ns X|$.
Then:
\begin{itemize*}
\item
$x\in p[u\ms a]$ if and only if $x[a\sm u]\in p$.
\item
$x\in \pi\act p$ if and only if $\pi^\mone\act x\in p$.
\end{itemize*}
\end{prop}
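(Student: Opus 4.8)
The plan is to observe that both equivalences are essentially direct unfoldings of the pointwise actions specified in Definition~\ref{defn.p.action}, with the only nontrivial ingredient being the group-action axioms recalled in Definition~\ref{defn.fin.supp}.

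For the first bullet, I would simply read off the definition: by construction $p[u\ms a]=\{x{\in}|\ns X| \mid x[a\sm u]\in p\}$, so to say $x\in p[u\ms a]$ is by definition exactly to say $x[a\sm u]\in p$. There is nothing further to prove here; it is a definitional restatement, and I would present it as a single line.

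For the second bullet, I would unfold $\pi\act p=\{\pi\act x\mid x\in p\}$, so that $x\in\pi\act p$ holds if and only if there exists $y\in p$ with $x=\pi\act y$. The key step is then to show $x=\pi\act y \liff y=\pi^\mone\act x$, using that the permutation action is a group action: applying $\pi^\mone$ to $x=\pi\act y$ and using $\pi^\mone\act(\pi\act y)=(\pi^\mone\circ\pi)\act y=\id\act y=y$ gives the forward direction, and the reverse is symmetric via $\pi\act(\pi^\mone\act x)=x$. Substituting $y=\pi^\mone\act x$ into the existential then collapses it, yielding $x\in\pi\act p \liff \pi^\mone\act x\in p$.

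I do not anticipate any genuine obstacle: the statement is a bookkeeping lemma recording how the two pointwise actions of Definition~\ref{defn.p.action} behave on elements, and both parts follow immediately from the relevant definitions together with the fact that $\pi$ is a bijection on $|\ns X|$. The only point requiring a moment's care is invoking the group-action laws $\id\act x=x$ and $\pi\act(\pi'\act x)=(\pi\circ\pi')\act x$ to justify the change of witness in the second bullet; I would state this explicitly but keep it brief.
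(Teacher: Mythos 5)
Your proposal is correct and matches the paper's intent exactly: the paper disposes of this proposition with ``by easy calculations on the pointwise actions in Definition~\ref{defn.p.action}'', and your unfolding---the first bullet as a definitional restatement, the second via the group-action laws to rewrite the witness $y=\pi^\mone\act x$---is precisely that calculation spelled out. No gaps.
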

\begin{proof}
By easy calculations on the pointwise actions in Definition~\ref{defn.p.action}.
\end{proof}

\begin{rmrk}
We take a moment to suggest intuitively why Definition~\ref{defn.p.action} is natural.

The pointwise permutation action is the natural action on subsets, also mentioned in Subsection~\ref{subsect.finsupp.pow}. 
This comes from the Fraenkel-Mostowski foundations.

The pointwise $\amgis$-action is its natural generalisation from a group to a monoid (which need not necessarily have inverses).
But why \emph{amgis}?  Why does $[a\sm u]$ get turned round to $[u\ms a]$?

When we take the powerset of a set $V$, $V$ is in \emph{negative position} (i.e. $\powerset(V)$ is equivalent to a function-space ${V{\to}2}$ and here $V$ is to the left of the arrow).
This implies that any modal operations which we assume on elements of $V$, need to be `flipped'.
Thus, $\sigma$-algebras turn into $\amgis$-algebras and vice versa.
This is why an underlying set with a $\sigma$-action gives rise to a `flipped'---a \emph{dual}---$\amgis$-action on the subsets.
\end{rmrk}

\begin{defn}
\label{defn.powamgis}
Suppose $\ns X$ is a $\sigma$-algebra over a termlike $\sigma$-algebra $\ns U$.

Define the \deffont{$\amgis$-powerset} algebra $\powamgis(\ns X)$ by setting 
\begin{itemize*}
\item
$|\powamgis(\ns X)|$ to be the set of all subsets $p\subseteq|\ns X|$ (Definition~\ref{defn.powerset}) with 
\item
permutation action $\pi\act p$ and 
\item
amgis-action $p[u\ms a]$ from Definition~\ref{defn.p.action}.
\end{itemize*}
\end{defn}

\begin{rmrk}
\label{rmrk.p.not.finite.support.1}
Note that $p\subseteq|\ns X|$ need not have finite support in Definition~\ref{defn.powamgis}.

Our notion of $\amgis$-algebra (Definition~\ref{defn.bus.algebra}) admits $p$ without finite support.
We will need this: the $\amgis$-algebras we construct in Definition~\ref{defn.points} need not have finite support; see the discussion in Remark~\ref{rmrk.p.not.finite.support}.
The action happens in Theorem~\ref{thrm.maxfilt.zorn} where we use Zorn's Lemma to make infinitely many choices.
\end{rmrk}

\begin{prop}
\label{prop.amgis.2}
If $\ns X$ is a $\sigma$-algebra over a termlike $\sigma$-algebra $\ns U$ then $\powamgis(\ns X)$ (Definition~\ref{defn.powamgis}) is an $\amgis$-algebra over $\ns U$.
\end{prop}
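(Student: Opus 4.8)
The plan is to verify the three requirements of Definition~\ref{defn.bus.algebra} in turn: that $\powamgis(\ns X)$ is a set with a permutation action, that its $\amgis$-action is equivariant, and that it validates $\rulefont{\amgis\sigma}$. The first requirement is immediate, since by Definition~\ref{defn.powamgis} the carrier $|\powamgis(\ns X)|$ is the full powerset of $|\ns X|$ with the pointwise action, which is a set with a permutation action by Definition~\ref{defn.powerset}. Throughout, the key tool is the membership characterisation of Proposition~\ref{prop.sigma.iff}, which lets me reduce any assertion about the $\amgis$-action $p[u\ms a]$ to an assertion about the underlying $\sigma$-action on elements of $\ns X$.

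For equivariance I would fix $\pi$, $p$, $u$, and $a$, and show $\pi\act(p[u\ms a]) = (\pi\act p)[\pi\act u\ms\pi(a)]$ by proving the two subsets have the same elements. Unfolding the left-hand side via Proposition~\ref{prop.sigma.iff} gives that $x\in\pi\act(p[u\ms a])$ iff $(\pi^\mone\act x)[a\sm u]\in p$. Unfolding the right-hand side the same way gives that $x\in(\pi\act p)[\pi\act u\ms\pi(a)]$ iff $\pi^\mone\act(x[\pi(a)\sm\pi\act u])\in p$, and equivariance of the underlying $\sigma$-action $\tf{sub}_{\ns X}$ rewrites $\pi^\mone\act(x[\pi(a)\sm\pi\act u])$ as $(\pi^\mone\act x)[a\sm u]$. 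The two conditions coincide, so the sets are equal.

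The heart of the argument is $\rulefont{\amgis\sigma}$. Assuming $a\#v$, I would again compare the two sides elementwise using Proposition~\ref{prop.sigma.iff}. Reading off the left-hand side, $x\in p[v\ms b][u\ms a]$ iff $x[a\sm u][b\sm v]\in p$; reading off the right-hand side, $x\in p[u[b\sm v]\ms a][v\ms b]$ iff $x[b\sm v][a\sm u[b\sm v]]\in p$. But the side-condition $a\#v$ is exactly the hypothesis of $\rulefont{\sigma\sigma}$, which holds in the $\sigma$-algebra $\ns X$ and gives $x[a\sm u][b\sm v]=x[b\sm v][a\sm u[b\sm v]]$. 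Hence the two membership conditions are about the very same element of $\ns X$, so the two subsets agree.

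I expect the only real subtlety to be the one flagged in the Remark after Definition~\ref{defn.bus.algebra}: the right-hand side of $\rulefont{\amgis\sigma}$ contains a genuine $\sigma$-substitution $u[b\sm v]$ taken \emph{inside} $\ns U$, and not any $\amgis$-action, so the reduction to $\rulefont{\sigma\sigma}$ must keep the $\sigma$-action (on elements of $\ns X$) and the $\amgis$-action (on subsets) carefully apart. Once Proposition~\ref{prop.sigma.iff} has peeled off both $\amgis$-actions on each side, everything reduces cleanly to $\rulefont{\sigma\sigma}$ for elements, with no further freshness bookkeeping required.
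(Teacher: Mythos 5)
Your proposal is correct and follows essentially the same route as the paper: both verify $\rulefont{\amgis\sigma}$ by unfolding both sides elementwise via Proposition~\ref{prop.sigma.iff} and closing the chain with $\rulefont{\sigma\sigma}$ under the hypothesis $a\#v$. The only cosmetic difference is that you check equivariance of the pointwise actions by explicit calculation, where the paper dispatches it in one line by appeal to the equivariance principle (Theorem~\ref{thrm.equivar}); your calculation is a correct hand-unfolding of exactly that fact.
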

\begin{proof}
By Theorem~\ref{thrm.equivar} the operations are equivariant.
We verify rule \rulefont{\amgis\sigma} from Figure~\ref{fig.amgis}:
\begin{itemize*}
\item
\emph{Property \rulefont{\amgis\sigma}.}\quad
We use \rulefont{\sigma\sigma}.
Suppose $a\#v$.
Then:
\begin{tab2}
x\in p[v\ms b][u\ms a]\liff &x[a{\sm}u][b{\sm}v]\in p 
&\text{Proposition~\ref{prop.sigma.iff}}
\\
\liff& x[b{\sm}v][a{\sm}u[b{\sm}v]]\in p
&\rulefont{\sigma\sigma},\ a\#v
\\
\liff& x\in p[u[b{\sm}v]\ms a][v\ms b] 
&\text{Proposition~\ref{prop.sigma.iff}}
\qedhere\end{tab2}
\end{itemize*}
\end{proof}

\begin{rmrk}
\label{rmrk.subsequent}
It is interesting to note a non-result of $\powamgis(\ns X)$.
Consider \rulefont{\sigma\#}; let us try to dualise it as we dualised \rulefont{\sigma\sigma} to \rulefont{\amgis\sigma} in Proposition~\ref{prop.amgis.2}.
\maketab{taba}{R{13em}@{\ }L{6em}@{}L{10em}}
\begin{taba}
\rulefont{\amgis\#}\qquad 
a\#p\limp p[u\ms a]=&p & \text{Bad axiom} 
\end{taba}
Suppose $a\#p$.
Then
\begin{taba}
x\in p[u\ms a]\liff & x[a\sm u]\in p &\text{Proposition~\ref{prop.sigma.iff}}
\\
\liff& \text{????}
\end{taba}
and now we are stuck: $a\#p$ does not imply that $a\#x$, so we cannot use \rulefont{\sigma\#} (e.g. if $p=\mathbb A$ then $b\in \mathbb A$ and $b\in\supp(b)$; more on this in \cite[Subsection~9.5]{gabbay:fountl}).
\end{rmrk}

\begin{rmrk}
It has been suggested that if $\sigma$-algebra algebraises substitution on syntax, then $\amgis$-algebra should algebraise pattern-matching on syntax.
This is not quite right; pattern-matching does not satisfy \rulefont{\amgis\sigma}.
Suppose a syntax of terms with pairing $(r_1,r_2)$ and one unary term-former $\tf f$.
Consider the term $(\tf f(c),\tf f(b))$ and interpret $[a\sm r]$ as substitution and $[r\ms a]$ as pattern-matching.
Then
$$
\begin{array}{l}
(\tf f(c),\tf f(b))[c\ms b][\tf f(b)\ms a]=(\tf f(b),\tf f(b))[\tf f(b)\ms a]=(a,a)
\quad\text{and}
\\
(\tf f(c),\tf f(b))[\tf f(b)[b\sm c]\ms a][c\ms b]=
(\tf f(c),\tf f(b))[\tf f(c)\ms a][c\ms b]=
(a,\tf f(b))[c\ms b]=
(a,\tf f(b))
\end{array}
$$
and $(a,a)\neq(a,\tf f(b))$. 

The natural examples 
of $\amgis$-algebras are the $\amgis$-powersets as constructed above in Definition~\ref{defn.powamgis}.
\end{rmrk}

\subsection{Exact amgis-algebra} 
\label{subsect.exact.amgis}

We now strengthen Proposition~\ref{prop.amgis.2} to Proposition~\ref{prop.powamgis.exact}, which states that $\powamgis(\ns X)$ is an \emph{exact} $\amgis$-algebra (Definition~\ref{defn.exact.amgis.algebra}).

Exactness will be useful later to interpret equality in $\powsigma$; see Subsection~\ref{subsect.powsigma.equality} and in particular Lemma~\ref{lemm.eq.sanity.check}.

To understand exactness, it is interesting to consider an easy property of $\sigma$-algebras:
\begin{lemm}
\label{lemm.garbage}
If $a\#x,y$ then $x[a\sm u]=y[a\sm u]$ implies $x=y$.
\end{lemm}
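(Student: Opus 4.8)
The plan is to observe that this is an immediate consequence of axiom \rulefont{\sigma\#}, so no real work is required beyond applying that axiom twice. The freshness hypothesis $a\#x,y$ unpacks to $a\#x$ and $a\#y$, and \rulefont{\sigma\#} says precisely that a substitution $[a\sm u]$ acts trivially on anything fresh for $a$.

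First I would apply \rulefont{\sigma\#} to $x$: since $a\#x$, we get $x[a\sm u]=x$. Then I would apply \rulefont{\sigma\#} to $y$: since $a\#y$, we get $y[a\sm u]=y$. Now I chain these two equalities through the hypothesis $x[a\sm u]=y[a\sm u]$, giving
$$
x \;=\; x[a\sm u] \;=\; y[a\sm u] \;=\; y,
$$
which is the desired conclusion.

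There is no obstacle here: the content of the lemma is entirely carried by \rulefont{\sigma\#}, and the point is really just that substituting for a fresh atom is the identity, so two elements that are fresh for $a$ and agree after substituting for $a$ must have agreed already. The value $u$ plays no role, and none of the other $\sigma$-axioms (\rulefont{\sigma id}, \rulefont{\sigma\alpha}, \rulefont{\sigma\sigma}) are needed. I expect the author's proof to be a single line invoking \rulefont{\sigma\#}.
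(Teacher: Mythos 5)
Your proof is correct and is exactly the paper's argument: the paper's proof reads simply ``Immediate from \rulefont{\sigma\#}'', and your chain $x = x[a\sm u] = y[a\sm u] = y$ spells out precisely that one-line invocation. Nothing further is needed.
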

\begin{proof}
Immediate from \rulefont{\sigma\#}.
\end{proof}

We cannot hope to replicate the proof above directly for $\amgis$-algebras using an axiom \rulefont{\amgis\#} because, as noted in Remark~\ref{rmrk.subsequent}, we do \emph{not} want such an axiom. 

But $\amgis$ is a kind of dual to $\sigma$, and \emph{exactness} is a kind of dual to Lemma~\ref{lemm.garbage}. 
Recall the definition of $\amgis$-algebra from Definition~\ref{defn.bus.algebra} and the $\new$-quantifier from Definition~\ref{defn.New}.

\begin{frametxt}
\begin{defn}
\label{defn.exact.amgis.algebra}
Call an $\amgis$-algebra $\ns P$ \deffont{exact} when if $p,q{\in}|\ns P|$ and $u{\in}|\ns U|$ then 
$$
\New{c}p[u\ms c]=q[u\ms c]
\quad\text{implies}\quad
p=q.
$$
\end{defn}
\end{frametxt}

In words, $\ns P$ is exact when for every $p$, $q$, and $u$, if $p[u\ms c]$ and $q[u\ms c]$ are equal for \emph{most} (meaning `for all but finitely meany') $c$ then $p$ and $q$ are equal.

\maketab{tab6}{@{\hspace{1em}}R{10em}@{\ }L{12em}L{12em}}
\maketab{tab7}{@{\hspace{1em}}R{12em}@{\ }L{12em}}

\begin{prop}
\label{prop.powamgis.exact}
Suppose $\ns X$ is a $\sigma$-algebra over a termlike $\sigma$-algebra $\ns U$.
Then $\powamgis(\ns X)$ is an exact $\amgis$-algebra over $\ns U$.
\end{prop}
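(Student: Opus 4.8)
Since Proposition~\ref{prop.amgis.2} already supplies the $\amgis$-algebra structure on $\powamgis(\ns X)$, the only new content is exactness in the sense of Definition~\ref{defn.exact.amgis.algebra}. The plan is to assume $p,q{\in}|\powamgis(\ns X)|$ and $u{\in}|\ns U|$ with $\New{c}\,p[u\ms c]=q[u\ms c]$, and to derive $p=q$. Because both the hypothesis and the conclusion are symmetric in $p$ and $q$, I would first reduce the problem to establishing the single inclusion $p\subseteq q$; the reverse inclusion then follows by the identical argument with $p$ and $q$ exchanged.

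To show $p\subseteq q$, I would fix $y\in p$ and aim for $y\in q$. The idea is to exploit that $y$, being an element of the nominal set $\ns X$, has finite support. That lets me pick an atom $c$ that is simultaneously fresh for $y$ and \emph{good}, in the sense that $p[u\ms c]=q[u\ms c]$: such a $c$ exists by a cofiniteness count, since $\supp(y)$ is finite and, by Definition~\ref{defn.New}, the set $\{c\mid p[u\ms c]\neq q[u\ms c]\}$ is also finite, so the two exceptional sets cannot exhaust the infinitely many atoms. With $c\#y$ in hand, \rulefont{\sigma\#} collapses $y[c\sm u]$ to $y$. Then Proposition~\ref{prop.sigma.iff} converts $y\in p$ (equivalently $y[c\sm u]\in p$) into $y\in p[u\ms c]$; the chosen equation turns this into $y\in q[u\ms c]$; and Proposition~\ref{prop.sigma.iff} together with $y[c\sm u]=y$ turns it back into $y\in q$, as required.

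The step I would watch most carefully is the choice of $c$. It is tempting to justify it via the some/any property (Theorem~\ref{thrm.new.equiv}) with parameters $p$, $q$, $u$, $y$, but this would be illegitimate: as Remark~\ref{rmrk.p.not.finite.support.1} stresses, $p$ and $q$ need not be finitely supported, and so are not admissible parameters for $\new$. The whole design of the argument is to avoid ever needing $c\#p$ or $c\#q$: I only require $c\#y$, in order to fire \rulefont{\sigma\#}, plus the single instance $p[u\ms c]=q[u\ms c]$, which the hypothesis supplies for cofinitely many $c$. Thus exactness rests on a bare count of atoms rather than on the full $\new$-quantifier machinery, with \rulefont{\sigma\#} playing here the role dual to the one it plays in Lemma~\ref{lemm.garbage}.
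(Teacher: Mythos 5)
Your proof is correct and follows essentially the same route as the paper: the paper likewise defers the $\amgis$-algebra structure to Proposition~\ref{prop.amgis.2} and proves exactness by combining \rulefont{\sigma\#} (via finite support of the element $x\in|\ns X|$, never of $p$ or $q$) with Proposition~\ref{prop.sigma.iff} and the cofiniteness reading of $\new$, just packaged as a single chain of $\new$-quantified equivalences $x\in p\liff \New{c}\,x[c\sm u]\in p\liff\dots\liff x\in q$ rather than your explicit choice of one good fresh atom per element. Your cautionary note about not invoking Theorem~\ref{thrm.new.equiv} with $p,q$ as parameters is exactly right and matches the paper's design as flagged in Remark~\ref{rmrk.p.not.finite.support.1}.
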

\begin{proof}
By Proposition~\ref{prop.amgis.2} $\powamgis(\ns X)$ is an $\amgis$-algebra over $\ns U$.
It remains to prove exactness.

Suppose $p,q{\in}|\powamgis(\ns X)|$ and $u{\in}|\ns U|$, and suppose 
$\New{c} p[u\ms c]=q[u\ms c]$. 
We now check that for all $x{\in}|\ns X|$ it is the case that $x\in p$ if and only if $x\in q$.
We reason as follows:
\begin{tab6}
x\in p\liff& \New{c} x[c\sm u]\in p
&\rulefont{\sigma\#},\ c\#x
\\
\liff& \New{c} x\in p[u\ms c]
&\text{Proposition~\ref{prop.sigma.iff}}
\\
\liff& \New{c} x\in q[u\ms c]
&\text{Assumption}
\\
\liff& \New{c} x[c\sm u]\in q 
&\text{Proposition~\ref{prop.sigma.iff}}
\\
\liff& x\in q
&\rulefont{\sigma\#},\ c\#x
\qedhere\end{tab6}
\end{proof}

\begin{xmpl}
The set of atoms $\mathbb A$ is a termlike $\sigma$-algebra over itself (so $a[a\sm c]=c$ and $a[b\sm c]=a$).
By Proposition~\ref{prop.amgis.2} the full powerset of $\mathbb A$ is an $\amgis$-algebra over $\mathbb A$, and by Proposition~\ref{prop.powamgis.exact} this $\amgis$-algebra is exact.

So suppose $p,q\subseteq\mathbb A$ and $\New{c}p[a\ms c]=q[a\ms c]$. 
This means that for all but finitely many atoms $c$ it is the case that for every $x{\in}\mathbb A$,\ $x[c\sm a]\in p$ if and only if $x[c\sm a]\in q$.
In particular for every $x$ there exists some $c\neq x$ such that $x\in p$ if and only if $x\in q$.
Thus, $p$ and $q$ are equal.
\end{xmpl}

Note that exactness is not an algebraic property (it has the form \emph{if} \dots \emph{then} rather than the form \emph{LHS = RHS}).
So although the class of all $\amgis$-algebras is algebraic, the class of \emph{exact} $\amgis$-algebras 
is not.\footnote{To be precise, the class of all $\amgis$-algebras is a \emph{variety} in the nominal algebraic sense of \cite{gabbay:nomahs}, whereas the class of exact $\amgis$-algebras is a \emph{quasivariety.}} 
This will not be a problem.

\subsection{Duality II: amgis to sigma} 
\label{subsect.amgis.to.sigma}

In Subsection~\ref{subsect.sigma.to.amgis} we showed how to build an $\amgis$-algebra out of a $\sigma$-algebra.
Dually, we can build a $\sigma$-algebra out of an $\amgis$-algebra; this is a little harder because, while we are free to define $\amgis$-algebra to suit ourselves, the notion of $\sigma$-algebra is already quite fixed by the behaviour of substitution, which it is intended to model.
The set of all finitely supported subsets is too large, so Definition~\ref{defn.powsigma} cuts this down with additional conditions~\ref{item.fresh.powsigma} and~\ref{item.alpha.powsigma}.
 
\subsubsection{The pointwise sigma-action on subsets of an amgis-algebra}

\begin{defn}
\label{defn.sub.sets}
Suppose $\ns P=(|\ns P|,\act,\ns U,\tf{amgis}_{\ns P})$ is an $\amgis$-algebra over a termlike $\sigma$-algebra $\ns U$. 
Give subsets $X\subseteq|\ns P|$ \deffont{pointwise} actions as follows: 
\begin{frameqn} 
\begin{array}{r@{\ }l@{\qquad}l}
\pi\act X=&\{\pi\act x \mid x\in X\}
\\
X[a{\sm}u]=&\{p{\in}|\ns P| \mid \New{c} p[u\ms c]\in (c\ a)\act X\}
&u{\in}|\ns U|
\end{array}
\end{frameqn}
\end{defn}

\begin{prop}
\label{prop.amgis.iff}
Suppose $\ns P$ is an $\amgis$-algebra over a termlike $\sigma$-algebra $\ns U$ and suppose $X\subseteq|\ns P|$.\footnote{We care most about the case that $X\in|\nompow(\ns P)|$---$X$ is finitely supported---but this result does not depend on that.}
Suppose $p{\in}|\ns P|$ and $u{\in}|\ns U|$ and $a\#u$.
Then:
\begin{enumerate*}
\item
$p\in X[a\sm u]$ if and only if $\New{c}p[u\ms c]\in (c\ a)\act X$.\footnote{Recall that $p$ need not have finite support here; see Remark~\ref{rmrk.p.not.finite.support.1}.}
\item
$p\in \pi\act X$ if and only if $\pi^\mone\act p\in X$.
\end{enumerate*}
\end{prop}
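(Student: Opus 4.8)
The plan is to treat this as a routine unpacking of the two pointwise actions introduced in Definition~\ref{defn.sub.sets}, in direct analogy with the two bullets of Proposition~\ref{prop.sigma.iff}; neither part should need real work beyond chasing the relevant definition.

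For part~1 I would simply unfold the set-builder definition of $X[a\sm u]$. Since $p{\in}|\ns P|$ is assumed, the membership $p{\in}X[a\sm u]$ abbreviates precisely $\New{c}\,p[u\ms c]\in(c\ a)\act X$, which is the stated right-hand side; so the equivalence is immediate from Definition~\ref{defn.sub.sets}. The standing hypothesis $a\#u$ is not in fact needed for this particular equivalence — it is recorded because the later uses of the $\sigma$-action on subsets live in that regime — so I would not invoke it here. The one point to flag is that, as noted in Remark~\ref{rmrk.p.not.finite.support.1}, $p$ (and indeed $X$) need not be finitely supported; I would therefore read $\New{c}$ purely as its definition ``for all but finitely many $c$'' (Definition~\ref{defn.New}) and deliberately avoid appealing to the some/any reformulation of Theorem~\ref{thrm.new.equiv}, whose hypotheses require finite support and so are unavailable here.

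For part~2 I would give the standard bijection argument for a pointwise permutation action. By definition $\pi\act X=\{\pi\act x\mid x\in X\}$, so $p\in\pi\act X$ holds iff $p=\pi\act x$ for some $x\in X$. Applying $\pi^\mone$ and using the group-action laws $\pi^\mone\act(\pi\act x)=(\pi^\mone\circ\pi)\act x=\id\act x=x$, this is equivalent to $\pi^\mone\act p\in X$, since $\pi$ is a bijection on $|\ns P|$. This is word-for-word the calculation behind the second bullet of Proposition~\ref{prop.sigma.iff}.

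There is essentially no obstacle: both equivalences are definitional. The only thing requiring genuine attention is discipline about finite support in part~1 — resisting the temptation to rewrite the $\new$-quantifier via the some/any property, which would be unsound here because $p$ may lack finite support — together with the observation that the purpose of recording this proposition at all is to have a clean, citable unpacking of Definition~\ref{defn.sub.sets} for use when interpreting quantification and equality later.
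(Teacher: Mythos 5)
Your proposal is correct and matches the paper's proof, which disposes of part~1 as ``direct from Definition~\ref{defn.sub.sets}'' and part~2 as direct from Theorem~\ref{thrm.equivar} (your explicit group-action calculation is just that citation spelled out, exactly as in the proof of Proposition~\ref{prop.sigma.iff}). Your added care about reading $\new$ as ``all but finitely many'' rather than via the some/any property of Theorem~\ref{thrm.new.equiv} is sound and consistent with Remark~\ref{rmrk.p.not.finite.support.1}, since $p$ and $X$ may lack finite support here.
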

\begin{proof}
\begin{enumerate*}
\item
Direct from Definition~\ref{defn.sub.sets}.
\item
Direct from Theorem~\ref{thrm.equivar}.
\qedhere\end{enumerate*}
\end{proof}

\begin{rmrk}
Definitions~\ref{defn.p.action} and~\ref{defn.sub.sets} are not perfectly symmetric; Definition~\ref{defn.sub.sets} contains a $\new$-quantifier.
It is there specifically to make Lemma~\ref{lemm.sigma.alpha} true; in other words we guarantee axiom \rulefont{\sigma\alpha} from Figure~\ref{fig.nom.sigma} by construction:\footnote{The proof of Lemma~\ref{lemm.eq.aeq} will be in one line precisely because of this design for Definition~\ref{defn.sub.sets}.}
\end{rmrk}

\begin{lemm}[$\alpha$-equivalence]
\label{lemm.sigma.alpha}
Suppose $\ns P$ is an $\amgis$-algebra over a termlike $\sigma$-algebra $\ns U$ and suppose $X\subseteq|\ns P|$. 
Then if $b\#X$ then $X[a\sm u]=((b\ a)\act X)[b\sm u]$.
\end{lemm}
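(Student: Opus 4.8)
The plan is to unfold both sides directly using Definition~\ref{defn.sub.sets} and reduce the claim to a single equality of pointwise permutation actions on $X$ that holds for every sufficiently fresh $c$.

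First I would write out the two sets explicitly. By Definition~\ref{defn.sub.sets}, $p\in X[a\sm u]$ holds exactly when $\New{c}\,p[u\ms c]\in(c\ a)\act X$, and $p\in((b\ a)\act X)[b\sm u]$ holds exactly when $\New{c}\,p[u\ms c]\in(c\ b)\act((b\ a)\act X)$. So it suffices to show that these two $\new$-quantified conditions coincide, and for that it is enough to show that $(c\ a)\act X=(c\ b)\act((b\ a)\act X)$ for all but finitely many $c$; concretely, for every $c$ with $c\#X$ and $c$ distinct from $a$ and $b$.

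The key step is that computation. By the group action law $(c\ b)\act((b\ a)\act X)=((c\ b)\circ(b\ a))\act X$, so I must compare the permutations $(c\ a)$ and $(c\ b)\circ(b\ a)$ on $\supp(X)$. Since $b\#X$ and (for fresh $c$) $c\#X$, neither $b$ nor $c$ lies in $\supp(X)$. Hence for $d\in\supp(X)$ there are only two cases: either $d=a$, in which case both permutations send $a$ to $c$; or $d\neq a,b,c$, in which case both permutations fix $d$. Thus the two permutations agree on $\supp(X)$, and part~2 of Corollary~\ref{corr.stuff} gives the desired equality $(c\ a)\act X=(c\ b)\act((b\ a)\act X)$ for all such $c$.

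Finally I would discharge the $\new$-quantifier. Because the set equality holds whenever $c\#X$ and $c\neq a,b$, the two sets $(c\ a)\act X$ and $(c\ b)\act((b\ a)\act X)$ agree for cofinitely many $c$, so the conditions $\New{c}\,p[u\ms c]\in(c\ a)\act X$ and $\New{c}\,p[u\ms c]\in(c\ b)\act((b\ a)\act X)$ are equivalent for each $p$; this is exactly the some/any reasoning licensed by Theorem~\ref{thrm.new.equiv}. It follows that the two sets $X[a\sm u]$ and $((b\ a)\act X)[b\sm u]$ have the same members, as required. I expect the only real subtlety to be the permutation bookkeeping---choosing $c$ fresh enough (fresh for $X$ and distinct from $a,b$) so that $b,c\notin\supp(X)$, which is precisely what makes the two permutations agree on the support. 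Notably, no appeal to the $\amgis$-axiom \rulefont{\amgis\sigma} is needed: the lemma is purely about how the \emph{defined} $\sigma$-action on subsets interacts with renaming, which is exactly why the $\new$ was built into Definition~\ref{defn.sub.sets} in the first place.
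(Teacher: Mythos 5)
Your proof is correct and takes essentially the same route as the paper's: unfold both sides via Definition~\ref{defn.sub.sets} (equivalently Proposition~\ref{prop.amgis.iff}) into $\new$-quantified membership conditions, then use Corollary~\ref{corr.stuff} with $b\#X$ and fresh $c$ to identify $(c\ a)\act X$ with $(c\ b)\act((b\ a)\act X)$. Your explicit check that the two permutations agree on $\supp(X)$ merely spells out the step the paper leaves terse, and your observation that \rulefont{\amgis\sigma} is never needed matches the paper's proof as well.
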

\begin{proof}
By part~1 of Proposition~\ref{prop.amgis.iff} $p\in X[a\sm u]$ if and only $\New{c}p[u\ms c]\in (c\ a)\act X$, and $p\in ((b\ a)\act X)[b\sm u]$ if and only if $\New{c}p[u\ms c]\in (c\ b)\act((b\ a)\act X)$.
By Corollary~\ref{corr.stuff} $(c\ a)\act X=(c\ b)\act ((b\ a)\act X)$ since $b\#X$.
The result follows.
\end{proof}

Lemma~\ref{lemm.sub.sub} is useful, amongst other things, in Lemma~\ref{lemm.pow.closed}.
On syntax it is known as the \emph{substitution lemma}, but here it is about an action on sets $X$, and the proof is different:
\begin{lemm}
\label{lemm.sub.sub}
Suppose $\ns P$ is an $\amgis$-algebra over a termlike $\sigma$-algebra $\ns U$ and suppose $X\subseteq|\ns P|$. 
Suppose $u,v{\in}|\ns U|$.
Then
$$
a\#v\quad\text{implies}\quad
X[a{\sm}u][b{\sm}v]=X[b{\sm}v][a{\sm}u[b{\sm}v]].
$$
\end{lemm}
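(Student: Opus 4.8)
The statement is exactly the substitution-lemma axiom \rulefont{\sigma\sigma}, but for the pointwise $\sigma$-action on subsets introduced in Definition~\ref{defn.sub.sets}. The plan is to prove it by a membership chase: I would fix $p{\in}|\ns P|$ and show that $p\in X[a{\sm}u][b{\sm}v]$ if and only if $p\in X[b{\sm}v][a{\sm}u[b{\sm}v]]$. Throughout I would work with two auxiliary atoms chosen fresh for $p,u,v,X,a,b$, appealing to the some/any property of $\new$ (Theorem~\ref{thrm.new.equiv}) to fix them; the two layers of $\new$-quantifier arising from the two nested actions combine into a single choice of a fresh pair.

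To unfold the left-hand side I would first expand the outer $[b{\sm}v]$ by Definition~\ref{defn.sub.sets}, obtaining for fresh $c$ the condition $p[v\ms c]\in (c\;b)\act(X[a{\sm}u])$. Pushing the swap inside by equivariance of the $\sigma$-action (Theorem~\ref{thrm.equivar}), and using $(c\;b)(a)=a$ (valid since $a,b,c$ are distinct), gives $(c\;b)\act(X[a{\sm}u])=((c\;b)\act X)[a{\sm}(c\;b)\act u]$. Expanding this inner action once more by Definition~\ref{defn.sub.sets} for fresh $c'$ turns the left-hand side into: for fresh $c,c'$, $\;p[v\ms c][((c\;b)\act u)\ms c']\in (c'\;a)(c\;b)\act X$. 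The analogous expansion of the right-hand side, writing $w=u[b{\sm}v]$, gives: for fresh $d,d'$, $\;p[w\ms d][v\ms d']\in (d'\;b)(d\;a)\act X$. The hypothesis $a\#v$ enters exactly here: it is what lets me simplify $(d\;a)\act v=v$ when expanding the outer $[a{\sm}w]$, so that the remaining inner action is again a clean $[b{\sm}v]$.

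It then remains to reconcile the two conditions. On the left I would apply \rulefont{\amgis\sigma} to $p[v\ms c][((c\;b)\act u)\ms c']$ — legitimate since $c'\#v$ — rewriting it as $p[\,t\ms c'][v\ms c]$ where $t=((c\;b)\act u)[c{\sm}v]$; and $t=u[b{\sm}v]=w$ is an instance of \rulefont{\sigma\alpha} (using $c\#u$). So the left condition becomes $p[w\ms c'][v\ms c]\in (c'\;a)(c\;b)\act X$. Renaming the right-hand fresh atoms by $d:=c'$ and $d':=c$, its condition reads $p[w\ms c'][v\ms c]\in (c\;b)(c'\;a)\act X$. The two are identical, because the swaps $(c'\;a)$ and $(c\;b)$ act on the disjoint pairs $\{c',a\}$ and $\{c,b\}$ (all four atoms distinct) and therefore commute, giving $(c'\;a)(c\;b)\act X=(c\;b)(c'\;a)\act X$. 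This establishes the membership equivalence for every $p$, hence the equality of the two sets.

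I expect the main obstacle to be the bookkeeping of the two $\new$-bound auxiliary atoms: keeping the permutations straight through both expansions and arranging the freshness so that the single application of \rulefont{\amgis\sigma} lines up the nested $\amgis$-actions while the stray renaming term $(c\;b)\act u$ is absorbed back into $u[b{\sm}v]$ by \rulefont{\sigma\alpha}. Once the two conditions are in the form above, the crux is simply the disjointness — hence commutation — of the two swaps, which is precisely where the permutative convention and the freshness of $c,c'$ are used.
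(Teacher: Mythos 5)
Your proof is correct and is essentially the paper's own argument: the paper establishes the identity by a single chain of equivalences (unfold both actions via Proposition~\ref{prop.amgis.iff} and equivariance, apply \rulefont{\amgis\sigma} with $a'\#v$ and \rulefont{\sigma\alpha} with $b'\#u$, then refold using $a\#v$), whereas you unfold both sides and meet in the middle, with your explicit commutation of the disjoint swaps and renaming of the $\new$-bound atoms being exactly what the paper absorbs silently into its notation $\pi=(a'\,a)\circ(b'\,b)$ and its refolding order. One cosmetic caution: requiring your auxiliary atoms to be ``fresh for $p$'' is neither used by any of your steps nor in general meaningful, since $p$ need not be finitely supported (Remark~\ref{rmrk.p.not.finite.support}); all the freshness your argument actually needs concerns only $u$, $v$, $a$, and $b$, which is precisely how the paper's side conditions read.
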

\begin{proof}
We reason as follows, where we write $\pi=(a'\ a)\circ(b'\ b)$:
$$
\begin{array}[b]{r@{\ }l@{\quad}l}
p\in X[a\sm u][b\sm v]\liff &\New{a',b'}p[v\ms b'][(b'\ b)\act u\ms a']\in \pi\act X &\text{Proposition~\ref{prop.amgis.iff}} 
\\
\liff &\New{a',b'}p[((b'\ b)\act u)[b'\sm v]\ms a'][v\ms b']\in \pi\act X &\rulefont{\amgis\sigma}\ a'\#v
\\
\liff &\New{a',b'}p[u[b\sm v]\ms a'][v\ms b']\in \pi\act X &\rulefont{\sigma\alpha}\ b'\#u
\\
\liff &\New{a'}p[u[b\sm v]\ms a']\in ((a'\ a)\act X)[b\sm v] &\text{Proposition~\ref{prop.amgis.iff}}
\\
\liff &\New{a'}p[u[b\sm v]\ms a']\in ((a'\ a)\act X)[b\sm (a'\ a)\act v] &\text{Corollary~\ref{corr.stuff}}\ a',a\#v
\\
\liff &\New{a'}p[u[b\sm v]\ms a']\in (a'\ a)\act (X[b\sm v]) &\text{Theorem~\ref{thrm.equivar}}
\\
\liff &p\in X[b\sm v][a\sm u[b\sm v]] &\text{Proposition~\ref{prop.amgis.iff}}
\end{array}
\qedhere$$
\end{proof}

\subsubsection{The $\sigma$-powerset $\powsigma(\ns P)$}

Recall from Subsection~\ref{subsect.finsupp.pow} the \emph{finitely supported powerset} $\nompow(\ns X)$ of a nominal set $\ns X$.
\begin{defn}
\label{defn.powsigma}
Suppose $\ns P$ is an $\amgis$-algebra over a termlike $\sigma$-algebra $\ns U$.
Define the \deffont{$\sigma$-powerset} algebra $\powsigma(\ns P)$
by setting
\begin{itemize*}
\item
$|\powsigma(\ns P)|$ to be those $X{\in}|\nompow(\ns P)|$ (finitely supported subsets of $|\ns P|$; see Subsection~\ref{subsect.finsupp.pow}) with 
\item
the actions $\pi\act X$ and $X[a\sm u]$ from Definition~\ref{defn.sub.sets}, 
\end{itemize*}
satisfying conditions \ref{item.fresh.powsigma} and~\ref{item.alpha.powsigma} below, where $u{\in}|\ns U|$ and $p{\in}|\ns P|$:
\begin{frametxt}
\begin{enumerate*}
\item
\label{item.fresh.powsigma}
$\Forall{u}\New{a}\Forall{p}(p[u\ms a]\in X\liff p\in X)$.
\item
\label{item.alpha.powsigma}
$\Forall{a}\New{b}\Forall{p}(p[b\ms a]\in X\liff (b\ a)\act p\in X)$.
\end{enumerate*}
\end{frametxt}
\end{defn}

Lemma~\ref{lemm.X.sub.fresh.alpha} rephrases 
conditions~\ref{item.fresh.powsigma} and~\ref{item.alpha.powsigma} of Definition~\ref{defn.powsigma}, in a simpler language, albeit one which requires the $\sigma$-action on subsets of an $\amgis$-algebra from Definition~\ref{defn.sub.sets}:
\begin{lemm}
\label{lemm.X.sub.fresh.alpha}
Continuing the notation of Definition~\ref{defn.powsigma}, if $X{\in}|\powsigma(\ns P)|$ then
\begin{enumerate*}
\item
If $a\#X$ then $X[a\sm u]=X$.
\item
If $b\#X$ then $X[a\sm b]=(b\ a)\act X$.
\end{enumerate*}
\end{lemm}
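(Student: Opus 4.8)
The plan is to prove both parts purely at the level of membership ``$p\in(\text{-})$'', unfolding Definition~\ref{defn.sub.sets} and then discharging everything against the two defining conditions of $\powsigma(\ns P)$. Permutations will be moved around using Proposition~\ref{prop.amgis.iff}(2) and simplified using Corollary~\ref{corr.stuff}, and at each step I will freely sharpen any $\new$-bound atom to be fresh for $X$ as well, which is legitimate by Theorem~\ref{thrm.new.equiv}.

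For part~1 I would argue as follows. Assuming $a\#X$, Definition~\ref{defn.sub.sets} gives $p\in X[a\sm u]\liff\New{c}\ p[u\ms c]\in(c\ a)\act X$. Sharpening $\new c$ to $c\#X$ and invoking $a\#X$, Corollary~\ref{corr.stuff} yields $(c\ a)\act X=X$, so the body simplifies to $p[u\ms c]\in X$. Condition~\ref{item.fresh.powsigma} of Definition~\ref{defn.powsigma}, read at this $u$, is exactly $\New{c}\,\Forall{p}(p[u\ms c]\in X\liff p\in X)$, so intersecting the two cofinite sets of good $c$ shows that $\New{c}\ p[u\ms c]\in X$ holds iff $p\in X$. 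This is the whole of part~1; note it uses the definition directly rather than Proposition~\ref{prop.amgis.iff}(1), so no hypothesis $a\#u$ is needed.

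For part~2, assuming $b\#X$, I would show $p\in X[a\sm b]\liff p\in(b\ a)\act X$. Unfolding the definition with $u=b$ gives $\New{c}\ p[b\ms c]\in(c\ a)\act X$; Proposition~\ref{prop.amgis.iff}(2) followed by equivariance of the $\amgis$-action (Theorem~\ref{thrm.equivar}), using $(c\ a)(b)=b$ and $(c\ a)(c)=a$, rewrites the body to $((c\ a)\act p)[b\ms a]\in X$. The crucial move is to apply condition~\ref{item.alpha.powsigma}: since $b\#X$ and $b\neq a$ this fixed $b$ is a legitimate witness (Theorem~\ref{thrm.new.equiv}), and the condition trades the remaining $\amgis$-action for a permutation, giving $(b\ a)(c\ a)\act p\in X$. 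A one-line permutation check gives $(b\ a)\circ(c\ a)=(a\ c\ b)$, so $p\in X[a\sm b]\liff\New{c}\ (a\ c\ b)\act p\in X$. I would then flip this permutation back onto $X$ by Proposition~\ref{prop.amgis.iff}(2): since $(a\ c\ b)^\mone=(a\ b\ c)=(a\ b)\circ(b\ c)$, the body becomes $p\in(a\ b)\act((b\ c)\act X)$, and now $b\#X$ together with a fresh $c\#X$ collapses the inner swap to $(b\ c)\act X=X$ (Corollary~\ref{corr.stuff}), leaving the $c$-independent statement $p\in(a\ b)\act X=(b\ a)\act X$.

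The main obstacle is the permutation bookkeeping in part~2. The key idea that makes it go through is the order of operations: one must first use condition~\ref{item.alpha.powsigma} to convert the stubborn $\amgis$-action $[b\ms a]$ into an honest permutation, and only afterwards transpose that permutation onto $X$, because it is freshness of the atoms \emph{for $X$} that does the simplifying work. This matters because $p$ need not be finitely supported (Remark~\ref{rmrk.p.not.finite.support.1}), so the fresh $c$ can be taken fresh for $X$ but not for $p$; the reduction is therefore deliberately arranged so that $c$ is eliminated by acting on $X$, never by acting on $p$.
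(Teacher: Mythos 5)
Your proposal is correct and takes essentially the same approach as the paper: part~1 is the paper's argument verbatim (unfold Definition~\ref{defn.sub.sets}, collapse $(c\ a)\act X=X$ via Corollary~\ref{corr.stuff} using $a,c\#X$, then discharge against condition~\ref{item.fresh.powsigma}), and part~2 is precisely the paper's one-line proof---``combine Proposition~\ref{prop.amgis.iff} with condition~\ref{item.alpha.powsigma}''---with the permutation bookkeeping written out in full (and your computations $(b\ a)\circ(c\ a)=(a\ c\ b)$ and its inverse check out). Your closing remark, that the fresh $c$ must be eliminated by acting on $X$ rather than on the possibly non-finitely-supported $p$, is exactly the right reading of Remark~\ref{rmrk.p.not.finite.support.1} and makes explicit a subtlety the paper leaves implicit.
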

\begin{proof}
\begin{enumerate*}
\item
Suppose $a\#X$.
By part~1 of Lemma~\ref{prop.amgis.iff} $p\in X[a\sm u]$ if and only if $\New{c}p[u\ms c]\in (c\ a)\act X$.
By Corollary~\ref{corr.stuff} $(c\ a)\act X=X$ and
by condition~\ref{item.fresh.powsigma} of Definition~\ref{defn.powsigma} $p[u\ms c]\in X$ if and only if $p\in X$,
so this is if and only if $\New{c}(p\in X)$, that is $p\in X$.
\item
We combine Proposition~\ref{prop.amgis.iff} with condition~\ref{item.alpha.powsigma} of Definition~\ref{defn.powsigma}, since $a\#b$.
\qedhere\end{enumerate*}
\end{proof}

\begin{corr}
\label{corr.amgis.id.sub}
Suppose $X{\in}|\powsigma(\ns P)|$.
Then $X[a\sm a]=X$.
\end{corr}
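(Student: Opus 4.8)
The plan is to reduce the degenerate substitution $X[a\sm a]$ to one on which the freshness clauses of Lemma~\ref{lemm.X.sub.fresh.alpha} can fire. The obstacle---really the only point requiring thought---is that neither clause applies directly to $X[a\sm a]$: part~1 would need $a\#X$, and part~2 would need the replacement atom to be fresh for $X$, whereas here the atom being substituted and the atom substituted in coincide at $a$, which may well lie in $\supp(X)$. So a preliminary $\alpha$-renaming is forced before the lemma becomes usable.

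First I would choose a fresh atom $b$, so that $b\#X$; this is possible since $X{\in}|\powsigma(\ns P)|$ is finitely supported. Using the $\alpha$-equivalence Lemma~\ref{lemm.sigma.alpha} (with $u=a$) I rewrite
$$X[a\sm a]=((b\ a)\act X)[b\sm a].$$
Writing $Y=(b\ a)\act X$, Proposition~\ref{prop.pi.supp} gives $\supp(Y)=(b\ a)\act\supp(X)$, and since $b\#X$ this set cannot contain $a$ (an atom maps to $a$ under $(b\ a)$ only if it is $b$, which is not in $\supp(X)$); hence $a\#Y$.

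Finally I would apply part~2 of Lemma~\ref{lemm.X.sub.fresh.alpha} to $Y[b\sm a]$, whose replacement atom $a$ is now fresh for $Y$, obtaining $Y[b\sm a]=(a\ b)\act Y=(a\ b)\act(b\ a)\act X$. By the group action law of Definition~\ref{defn.fin.supp} this equals $((a\ b)\circ(b\ a))\act X$, and since a swap is its own inverse, $(a\ b)\circ(b\ a)=\id$, so the expression collapses to $X$. Chaining the equalities yields $X[a\sm a]=X$, as required. Every step after the initial renaming is a mechanical application of an already-proved result, so the proof is short once the need to rename through a fresh $b$ is recognised; alternatively one could split on whether $a\#X$ (applying part~1 directly in that case), but the uniform $\alpha$-renaming argument avoids the case analysis.
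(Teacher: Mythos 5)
Your proof is correct and is essentially identical to the paper's: both choose a fresh $b$, rewrite $X[a\sm a]=((b\ a)\act X)[b\sm a]$ via Lemma~\ref{lemm.sigma.alpha}, observe $a\#(b\ a)\act X$ using Proposition~\ref{prop.pi.supp}, and then apply part~2 of Lemma~\ref{lemm.X.sub.fresh.alpha} so the two swaps cancel. Your write-up merely spells out the freshness verification and the group-action cancellation in more detail than the paper does.
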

\begin{proof}
Suppose $b\#X$.
By Lemma~\ref{lemm.sigma.alpha} $X[a\sm a]=((b\ a)\act X)[b\sm a]$.
Note that by Proposition~\ref{prop.pi.supp} $a\#(b\ a)\act X$.
By part~2 of Lemma~\ref{lemm.X.sub.fresh.alpha} $((b\ a)\act X)[b\sm a]=(b\ a)\act((b\ a)\act X)=X$.
\end{proof}

\begin{lemm}
\label{lemm.pow.closed}
If $X{\in}|\powsigma(\ns P)|$ and $u{\in}|\ns U|$ then also $X[a\sm u]{\in}|\powsigma(\ns P)|$.

As a corollary, in Definition~\ref{defn.powsigma}, $|\powsigma(\ns P)|$ is closed under the $\sigma$-action from Definition~\ref{defn.sub.sets}.
\end{lemm}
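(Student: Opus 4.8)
The plan is to verify the three requirements for membership in $|\powsigma(\ns P)|$: that $X[a\sm u]$ is a finitely supported subset of $|\ns P|$, and that it satisfies conditions~\ref{item.fresh.powsigma} and~\ref{item.alpha.powsigma} of Definition~\ref{defn.powsigma}. Finite support is immediate: since $X[a\sm u]$ is specified in the language of FM set theory from the finitely supported data $X$, $a$, and $u$, conservation of support (part~3 of Theorem~\ref{thrm.equivar}) gives $\supp(X[a\sm u])\subseteq\supp(X)\cup\{a\}\cup\supp(u)$, which is finite. Before checking the two conditions I would reduce to the case $a\#u$: choosing $a''$ fresh for $X$ and $u$, Lemma~\ref{lemm.sigma.alpha} gives $X[a\sm u]=((a''\ a)\act X)[a''\sm u]$, and $(a''\ a)\act X$ again lies in $|\powsigma(\ns P)|$ because the two conditions are equivariant predicates (Theorem~\ref{thrm.equivar}) and support is simply permuted by the permutation action. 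So I may assume $a\#u$ and work with the clean characterisation of Proposition~\ref{prop.amgis.iff}: $p\in X[a\sm u]$ iff $\New{c}p[u\ms c]\in(c\ a)\act X$.

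For condition~\ref{item.fresh.powsigma} I would fix $v\in|\ns U|$, take $d$ fresh (for $X$, $u$, $a$, $v$), and chase an arbitrary $p$ through Proposition~\ref{prop.amgis.iff}. The key algebraic move is to commute the two $\amgis$-actions: choosing the bound $c$ fresh for $d$ as well, rule \rulefont{\amgis\sigma} gives $(p[v\ms d])[u\ms c]=p[u[d\sm v]\ms c][v\ms d]$, and since $d\#u$ we have $u[d\sm v]=u$ by \rulefont{\sigma\#}, so this is $(p[u\ms c])[v\ms d]$. Now condition~\ref{item.fresh.powsigma} applied to the set $(c\ a)\act X$ (which satisfies it by equivariance), instantiated at the element $p[u\ms c]$ and the fresh atom $d$ via the some/any property (Theorem~\ref{thrm.new.equiv}), removes the trailing $[v\ms d]$. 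Running this back through Proposition~\ref{prop.amgis.iff} yields $p[v\ms d]\in X[a\sm u]$ iff $p\in X[a\sm u]$, as required.

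Condition~\ref{item.alpha.powsigma} is the subtler of the two and is where I expect the real work. Fixing $a'$ and taking $b$ fresh, I would again unfold via Proposition~\ref{prop.amgis.iff} and \rulefont{\amgis\sigma} to obtain $(p[b\ms a'])[u\ms c]=p[u[a'\sm b]\ms c][b\ms a']$, then use Lemma~\ref{lemm.sub.alpha} (since $b\#u$) to rewrite $u[a'\sm b]=(b\ a')\act u$. The crucial step is to apply condition~\ref{item.alpha.powsigma} of $(c\ a)\act X$ to this element, trading the trailing $[b\ms a']$ for the permutation $(b\ a')$; a short equivariance calculation, using that $(b\ a')$ fixes the fresh $c$ and is self-inverse, then identifies $(b\ a')\act\bigl(p[(b\ a')\act u\ms c]\bigr)$ with $((b\ a')\act p)[u\ms c]$. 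Threading this equivalence back through $\New{c}$ and Proposition~\ref{prop.amgis.iff} gives $p[b\ms a']\in X[a\sm u]$ iff $(b\ a')\act p\in X[a\sm u]$.

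The main obstacle throughout is bookkeeping: every step hides freshness side-conditions on the auxiliary atoms $c$, $d$, $b$ and on the permuted sets $(c\ a)\act X$, and the two defining conditions of $X$ must be invoked not for $X$ itself but for its permutations, at specific fresh atoms. The some/any property of the $\new$-quantifier (Theorem~\ref{thrm.new.equiv}) is exactly what licenses instantiating those cofinitely-true statements at the particular atoms produced by the calculation, so that once the $a\#u$ reduction is in place the verification is essentially routine equational rewriting. The corollary---closure of $|\powsigma(\ns P)|$ under the $\sigma$-action of Definition~\ref{defn.sub.sets}---is then immediate.
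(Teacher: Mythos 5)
Your proof is correct, and it follows the paper's overall skeleton---finite support from part~3 of Theorem~\ref{thrm.no.increase.of.supp}, a without-loss-of-generality reduction to $a\#u$ via Lemma~\ref{lemm.sigma.alpha} and equivariance, then verification of conditions~\ref{item.fresh.powsigma} and~\ref{item.alpha.powsigma}---but it verifies the two conditions at a different level of abstraction. The paper stays entirely at the level of the set-valued $\sigma$-action: it commutes $[a\sm u]$ past $[b\sm v]$ using the already-established substitution lemma for sets (Lemma~\ref{lemm.sub.sub}), garbage-collects $u[b\sm v]=u$ with \rulefont{\sigma\#}, and then discharges the conditions in the rephrased set-level form of Lemma~\ref{lemm.X.sub.fresh.alpha} (namely $X[b\sm v]=X$ and $X[a\sm b]=(b\ a)\act X$ for fresh $b$), yielding two three-line equational chains with no mention of individual points $p$. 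You instead verify the conditions of Definition~\ref{defn.powsigma} literally and elementwise, unfolding membership through Proposition~\ref{prop.amgis.iff} and commuting the $\amgis$-actions by hand with \rulefont{\amgis\sigma}, \rulefont{\sigma\#}, and Lemma~\ref{lemm.sub.alpha}---in effect inlining the content of Lemma~\ref{lemm.sub.sub}'s proof into the verification. What the paper's route buys is brevity and modularity, reusing lemmas it has already paid for; what yours buys is fidelity to the definition as stated: you preserve the $\New{d}\Forall{p}$ quantifier order of Definition~\ref{defn.powsigma} exactly, choosing $d$ (respectively $b$) fresh only for the finitely supported data $X,u,a,v$ and never for the possibly infinitely supported $p$, with Theorem~\ref{thrm.new.equiv} licensing each instantiation just as you say. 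This is worth noting because the set-level identities of Lemma~\ref{lemm.X.sub.fresh.alpha}, taken on their own, only record $\Forall{p}\New{c}$-type information (the exceptional atoms may vary with $p$, and $p$ need not be finitely supported), so your elementwise verification is, if anything, marginally the more careful reading of what must be checked. Your freshness bookkeeping throughout (in particular $c\neq b,a'$ so that $(b\ a')$ fixes $c$, and $d\#(c\ a)\act X$ via $d\#X,a$ and $c\neq d$) is sound, and the corollary follows as you state.
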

\begin{proof}
By construction $X[a\sm u]\subseteq|\ns P|$, so we now check the properties listed in Definition~\ref{defn.powsigma}.

By assumption in Definition~\ref{defn.powsigma}, $X$ is finitely supported.
Finite support of $X[a\sm u]$ is from Theorem~\ref{thrm.no.increase.of.supp}.

We check the conditions of Definition~\ref{defn.powsigma} for $X[a\sm u]$:
\begin{enumerate*}
\item
\emph{For fresh $b$ (so $b\#u,X$),\ $X[a\sm u][b\sm v]=X[a\sm u]$.}\quad

We use Lemma~\ref{lemm.sigma.alpha} to assume without loss of generality that $a\#u$.
It suffices to reason as follows:
\begin{tab2}
X[a\sm u][b\sm v]=&X[b\sm v][a\sm u[b\sm v]] &\text{Lemma~\ref{lemm.sub.sub}},\ a\#v
\\
=&X[b\sm v][a\sm u] &\rulefont{\sigma\#},\ b\#u
\\
=&X[a\sm u] &\text{Part~1 of Lemma~\ref{lemm.X.sub.fresh.alpha}},\ b\#X
\end{tab2}
\item
\emph{For fresh $b'$ (so $b'\#u,v,X$) $X[a\sm u][b\sm b']=(b'\ b)\act(X[a\sm u])$.}\quad

It suffices to reason as follows:
\begin{tab2}
X[a\sm u][b\sm b']
=&X[b\sm b'][a\sm u[b\sm b']]
&\text{Lemma~\ref{lemm.sub.sub}},\ a\#b'
\\
=&((b'\ b)\act X)[a\sm (b'\ b)\act u]
&\text{Lemma~\ref{lemm.X.sub.fresh.alpha}},\ b'\#u,X
\\
=&(b'\ b)\act (X[a\sm u])
&\text{Part~2 of Theorem~\ref{thrm.equivar}}
\end{tab2}
\qedhere\end{enumerate*}
\end{proof}

\begin{prop}
\label{prop.pow.sub.algebra}
If $\ns P$ is an $\amgis$-algebra over a termlike $\sigma$-algebra $\ns U$ then $\powsigma(\ns P)$ (Definition~\ref{defn.powsigma}) is a $\sigma$-algebra over $\ns U$.
\end{prop}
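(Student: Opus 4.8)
The plan is to run through the clauses of Definition~\ref{defn.sub.algebra} one at a time, observing that almost all the content has already been isolated in the preceding lemmas, so that the proof is essentially an assembly. Recall that a $\sigma$-algebra over $\ns U$ consists of a nominal set carrying an equivariant $\sigma$-action satisfying \rulefont{\sigma id}, \rulefont{\sigma\#}, \rulefont{\sigma\alpha}, and \rulefont{\sigma\sigma} (but \emph{not} \rulefont{\sigma a}, since we assume no $\tf{atm}$ map on $\powsigma(\ns P)$).

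First I would confirm that $\powsigma(\ns P)$ is a genuine nominal set. By Definition~\ref{defn.powsigma} its underlying set consists of finitely supported subsets $X\subseteq|\ns P|$, so it is a subcollection of $|\nompow(\ns P)|$, which is a nominal set (Subsection~\ref{subsect.finsupp.pow}). It then remains only to check closure under the pointwise permutation action: if $X$ satisfies conditions~\ref{item.fresh.powsigma} and~\ref{item.alpha.powsigma} then so does $\pi\act X$. Since both conditions are predicates in the language of set theory, built from the $\amgis$-action, the permutation action, and the $\new$-quantifier, this is immediate from equivariance of predicates (Theorem~\ref{thrm.equivar}). Next I would check that the $\sigma$-action is well-typed and equivariant: well-typedness—that $X[a\sm u]\in|\powsigma(\ns P)|$ whenever $X\in|\powsigma(\ns P)|$ and $u\in|\ns U|$—is exactly Lemma~\ref{lemm.pow.closed}, while equivariance follows from Theorem~\ref{thrm.equivar}, as $X[a\sm u]$ is specified in set-theory language from the data of $\ns P$.

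Finally, each of the four required equalities has already been established as a separate lemma: \rulefont{\sigma id} is Corollary~\ref{corr.amgis.id.sub}; \rulefont{\sigma\#} is part~1 of Lemma~\ref{lemm.X.sub.fresh.alpha}; \rulefont{\sigma\alpha} is Lemma~\ref{lemm.sigma.alpha}; and \rulefont{\sigma\sigma} is Lemma~\ref{lemm.sub.sub}. So at this level the proof reduces to citing these results in order and noting that \rulefont{\sigma a} is not among the obligations.

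The genuine difficulty does not surface at this level; it is packaged inside the supporting lemmas, chiefly Lemma~\ref{lemm.pow.closed} and Lemma~\ref{lemm.sub.sub}. Within the present proof, the step demanding the most care is the well-definedness (closure) claim: because the $\sigma$-action on subsets in Definition~\ref{defn.sub.sets} is defined through a $\new$-quantifier and a twisting swap $(c\ a)$, one must confirm that the resulting set still lies in the \emph{restricted} collection $\powsigma(\ns P)$—finitely supported \emph{and} satisfying conditions~\ref{item.fresh.powsigma}--\ref{item.alpha.powsigma}—rather than merely in $\nompow(\ns P)$. That is precisely what Lemma~\ref{lemm.pow.closed} secures, so here it suffices to invoke it.
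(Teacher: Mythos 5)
Your proposal is correct and matches the paper's proof essentially step for step: closure of $\powsigma(\ns P)$ under the $\sigma$-action via Lemma~\ref{lemm.pow.closed}, closure under the permutation action via Theorem~\ref{thrm.equivar}, and then \rulefont{\sigma id}, \rulefont{\sigma\#}, \rulefont{\sigma\alpha}, and \rulefont{\sigma\sigma} cited to Corollary~\ref{corr.amgis.id.sub}, Lemma~\ref{lemm.X.sub.fresh.alpha}, Lemma~\ref{lemm.sigma.alpha}, and Lemma~\ref{lemm.sub.sub} respectively, exactly as in the paper. Your closing observation that the real work lives in Lemmas~\ref{lemm.pow.closed} and~\ref{lemm.sub.sub} is also accurate.
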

\begin{proof}
By Lemma~\ref{lemm.pow.closed} the $\sigma$-action does indeed map to $|\powsigma(\ns P)|$.
By Theorem~\ref{thrm.equivar} so does the permutation action.
It remains to check validity of the axioms from Definition~\ref{defn.sub.algebra}.
\begin{itemize*}
\item
Axiom \rulefont{\sigma id} is Corollary~\ref{corr.amgis.id.sub}.
\item
Axiom \rulefont{\sigma\#} is part~1 of Lemma~\ref{lemm.X.sub.fresh.alpha}.
\item
Axiom \rulefont{\sigma\alpha} is Lemma~\ref{lemm.sigma.alpha}.
\item
Axiom \rulefont{\sigma\sigma} is Lemma~\ref{lemm.sub.sub}.
\qedhere\end{itemize*}
\end{proof}

\begin{xmpl}
\label{xmpl.approx}
Consider some set of terms considered as a termlike $\sigma$-algebra over themselves (Example~\ref{xmpl.fot}; so $s[a\sm t]$ is $s$ with $t$ substituted for $a$).
Write this $\ns{TRM}$.

By Propositions~\ref{prop.pow.sub.algebra} and~\ref{prop.amgis.2} $\powsigma(\powamgis(\ns{TRM}))$ is a $\sigma$-algebra over $\ns{TRM}$.\footnote{$\powamgis$ is the $\amgis$-powerset and is from Definition~\ref{defn.powamgis}.  $\powsigma$ is the \emph{$\sigma$-powerset} and is from~\ref{defn.powsigma}.}
It has Boolean structure given by sets intersection and complement, and by design it has a $\sigma$-algebra structure.

So perhaps $\powsigma(\powamgis(\ns{TRM}))$ is a model of first-order logic with equality; it is a $\sigma$-algebra so we might model quantification an infinite sets intersection.
How to interpret equality is less obvious \dots but we might be lucky.

In the rest of this paper we make this formal, prove it, and put the result in the context of the other models: in nominal posets, using maximally consistent sets (for the completeness result), Tarski-style valuation models, and Herbrand models.
If the reader holds on to the idea that 
\begin{quote}
\emph{to a first approximation this paper is about abstracting, axiomatising, and analysing the behaviour of $\powsigma(\powamgis(\ns{TRM}))$,} 
\end{quote}
then they should not go too far wrong. 
\end{xmpl}

\subsection{Brief interlude: simultaneous sigma- and amgis-actions} 

In Subsection~\ref{subsect.sigma.amgis} we only defined a $\sigma$-action $x[a\sm u]$ for a single atom at a time.
From Subsection~\ref{subsect.eq} and onwards it will be useful to consider a simultaneous $\sigma$-action.
In fact the axioms of Figure~\ref{fig.nom.sigma} give us the power of a simultaneous action, for $\sigma$ (but not $\amgis$, that is not $p[u\ms a]$, as we shall observe).

Suppose $\ns X$ is a $\sigma$-algebra 
over a termlike $\sigma$-algebra $\ns U$.

\begin{defn}
\label{defn.sim.sub}
Suppose $u_1,\dots,u_n{\in}|\ns U|$ and suppose $a_1,\dots,a_n$ are $n$ distinct atoms not in $\bigcup_i\supp(u_i)$.
Then for $x{\in}|\ns X|$ define
$$
\begin{array}{r@{\ }l}
x[a_1\sm u_1,\dots,a_n\sm u_n]=&x[a_1\sm u_1]\dots[a_n\sm u_n] .
\end{array}
$$
\end{defn}

We need to show that Definition~\ref{defn.sim.sub} does not depend on the order in which we take the $u_i$.
This follows using \rulefont{\sigma\sigma} and \rulefont{\amgis\sigma}, and \rulefont{\sigma\#}, because we assumed that each $a_i$ is not in $\bigcup_i\supp(u_i)$.

Now we extend this to the case where the $a_i$ are not necessarily fresh for the $u_i$:
\begin{defn}
\label{defn.sim.sub.alpha}
Suppose $u_1,\dots,u_n{\in}|\ns U|$ and suppose $a_1,\dots,a_n$ are any $n$ distinct atoms.
Then for $x{\in}|\ns X|$ define
$$
\begin{array}{r@{\ }l}
x[a_1\sm u_1,\dots,a_n\sm u_n]=&
(((a_1'\,a_1)\circ\dots\circ(a_n'\,a_n))\act x)[a_1'\sm u_1]\dots[a_n'\sm u_n] .
\end{array}
$$
where we choose $a_1',\dots,a_n'$ fresh (so not in $\{a_1,\dots,a_n\}\cup\supp(x)\cup\bigcup_i\supp(u_i)$).
\end{defn}

\begin{lemm}
\label{lemm.sim.sub.1}
The choice of fresh $a_i'$ and the order of the $u_i$ in Definition~\ref{defn.sim.sub.alpha}, do not matter.
\end{lemm}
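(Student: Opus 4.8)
The plan is to prove the two assertions separately: first the independence of the choice of fresh atoms $a_i'$, and then the independence of the order, using the former to reduce the latter to the order-independence of Definition~\ref{defn.sim.sub}. I begin by recording the commutation fact underlying that order-independence: if $a\#v$ and $b\#u$ then \rulefont{\sigma\sigma} gives $x[a\sm u][b\sm v]=x[b\sm v][a\sm u[b\sm v]]$, and $b\#u$ with \rulefont{\sigma\#} yields $u[b\sm v]=u$, so $x[a\sm u][b\sm v]=x[b\sm v][a\sm u]$. Since in Definition~\ref{defn.sim.sub.alpha} the $a_i'$ are chosen outside $\bigcup_i\supp(u_i)$, any two of the substitutions $[a_i'\sm u_i]$ commute, so the trailing block $[a_1'\sm u_1]\dots[a_n'\sm u_n]$ is invariant under reordering.

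For choice-independence, suppose $a_1',\dots,a_n'$ and $a_1'',\dots,a_n''$ are two admissible choices. Inserting a third choice disjoint from both if necessary, I may assume the two tuples are disjoint, all their atoms being fresh for $x$ and for every $u_j$. Write $\pi=\prod_i(a_i'\,a_i)$, $\pi''=\prod_i(a_i''\,a_i)$, and $\tau=\prod_i(a_i''\,a_i')$; in each product the swaps have pairwise disjoint support, so the factors commute. The key observation is that $E=(\pi\act x)[a_1'\sm u_1]\dots[a_n'\sm u_n]$ is fixed by $\tau$: iterating Lemma~\ref{lemm.fresh.sub} (each $a_i'\#u_i$, and no later substitution reintroduces $a_i'$ since $a_i'$ is fresh for the remaining $u_j$) gives $a_i'\#E$, while each $a_i''$ is fresh for $x$ and every $u_j$ and hence $a_i''\#E$; so every atom moved by $\tau$ is fresh for $E$, and Corollary~\ref{corr.stuff} gives $\tau\act E=E$. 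On the other hand, equivariance (Theorem~\ref{thrm.equivar}) computes $\tau\act E=(\tau\pi\act x)[a_1''\sm u_1]\dots[a_n''\sm u_n]$, using $\tau(a_i')=a_i''$ and $\tau\act u_j=u_j$; and since $\tau\pi$ agrees with $\pi''$ on $\supp(x)$ (both send each $a_i$ to $a_i''$ and fix the remaining atoms of $\supp(x)$), Corollary~\ref{corr.stuff} gives $\tau\pi\act x=\pi''\act x$. Combining, $E=\tau\act E=(\pi''\act x)[a_1''\sm u_1]\dots[a_n''\sm u_n]$, which is exactly the value computed from the second choice.

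For order-independence in general, let $\rho$ permute $\{1,\dots,n\}$. By choice-independence I may compute both $x[a_1\sm u_1,\dots,a_n\sm u_n]$ and $x[a_{\rho(1)}\sm u_{\rho(1)},\dots,a_{\rho(n)}\sm u_{\rho(n)}]$ using the same fresh atom $a_i'$ paired with $a_i$. The renaming permutation $\prod_i(a_i'\,a_i)$ is then literally the same in both computations, since its commuting factors may be reordered without changing the product; and the two trailing blocks differ only by the order of the mutually commuting substitutions $[a_i'\sm u_i]$, so by the first paragraph they agree. Hence the two values coincide.

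The main obstacle is the choice-independence step, and specifically the support bookkeeping: one must verify both that $E$ is fresh for every $a_i'$ (which requires checking that a later substitution never reintroduces an already-discharged $a_i'$, using the support bound of Lemma~\ref{lemm.fresh.sub}) and that $E$ is fresh for every $a_i''$, so that the renaming swap $\tau$ acts as the identity on $E$. Once this freshness is secured, the remaining work is routine equivariance together with the commutation of the fresh substitutions.
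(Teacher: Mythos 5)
Your proof is correct, and since the paper's own proof consists of the single line ``By routine calculations using \rulefont{\sigma\alpha}, \rulefont{\sigma\sigma}, and \rulefont{\sigma\#}'', what you have done is supply those calculations, with one genuine difference in organisation. Your order-independence argument is exactly the expected one: \rulefont{\sigma\sigma} plus \rulefont{\sigma\#} gives $x[a\sm u][b\sm v]=x[b\sm v][a\sm u]$ when $a\#v$ and $b\#u$, which makes the trailing block $[a_1'\sm u_1]\dots[a_n'\sm u_n]$ reorderable; combined with the fact that the disjoint swaps in $(a_1'\,a_1)\circ\dots\circ(a_n'\,a_n)$ commute, this settles the order question once choice-independence is known. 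For choice-independence, however, the routine route the paper gestures at would rename one fresh atom at a time using \rulefont{\sigma\alpha} directly, i.e.\ $y[a_i'\sm u_i]=((a_i''\,a_i')\act y)[a_i''\sm u_i]$ whenever $a_i''\#y$, threading the freshness hypotheses through the $n$ substitutions. You instead perform all renamings at once: you show the single permutation $\tau=\prod_i(a_i''\,a_i')$ fixes the value $E$, because every atom $\tau$ moves is fresh for $E$ (the $a_i'$ by iterating Lemma~\ref{lemm.fresh.sub}, whose support bound correctly rules out reintroduction by later substitutions; the $a_i''$ by conservation of support, Theorem~\ref{thrm.equivar}), so Corollary~\ref{corr.stuff} gives $\tau\act E=E$, and then equivariance pushes $\tau$ inside to produce the second choice's value, with Corollary~\ref{corr.stuff} again identifying $\tau\pi\act x$ with $\pi''\act x$ on $\supp(x)$. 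Here \rulefont{\sigma\alpha} enters only indirectly, via Lemma~\ref{lemm.fresh.sub}. The trade is reasonable: the iterated-\rulefont{\sigma\alpha} version applies the axiom $n$ times but needs little support bookkeeping at each step, whereas your single-permutation version concentrates all the work into one freshness verification for $E$ and then runs on pure equivariance; your reduction to disjoint tuples via a third mutually fresh choice is also the standard and correct way to remove the disjointness assumption. Your last paragraph closes the loop cleanly by using choice-independence to align the fresh atoms before invoking commutation, so no step is missing.
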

\begin{proof}
By routine calculations using \rulefont{\sigma\alpha}, \rulefont{\sigma\sigma}, and \rulefont{\sigma\#}.
\end{proof}

It is natural to try to duplicate Definition~\ref{defn.sim.sub.alpha} and Lemma~\ref{lemm.sim.sub.1} for $\amgis$-algebras.
Suppose $\ns P$ is an $\amgis$-algebra over a termlike $\sigma$-algebra $\ns U$.
Lemma~\ref{lemm.amgis.swap} is a partial dual to Lemma~\ref{lemm.sim.sub.1} and has a simple proof:
\begin{lemm}
\label{lemm.amgis.swap}
Suppose $u,v{\in}|\ns U|$ and $p{\in}|\ns P|$.
Then 
$$
a\#v,b\#u\limp p[u\ms a][v\ms b]=p[v\ms b][u\ms a].
$$
\end{lemm}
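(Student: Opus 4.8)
The plan is to derive the commutation directly from the single $\amgis$-algebra axiom \rulefont{\amgis\sigma}, using the extra hypothesis $b\#u$ only to clean up a residual term-substitution. The key observation is that \rulefont{\amgis\sigma} already expresses a swap of two $\amgis$-actions, but at the cost of replacing $u$ by $u[b\sm v]$; the additional freshness $b\#u$ is exactly what is needed to collapse this back to $u$, restoring symmetry.

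First I would start from the right-hand side $p[v\ms b][u\ms a]$ and apply \rulefont{\amgis\sigma}, which is available since $a\#v$. This rewrites the expression to $p[u[b\sm v]\ms a][v\ms b]$.

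Next I would use the hypothesis $b\#u$ together with axiom \rulefont{\sigma\#} of Figure~\ref{fig.nom.sigma}, applied inside the termlike $\sigma$-algebra $\ns U$, to obtain $u[b\sm v]=u$. Substituting this into the previous line yields $p[u\ms a][v\ms b]$, which is the left-hand side of the desired equation; reading the chain backwards gives $p[u\ms a][v\ms b]=p[v\ms b][u\ms a]$, as required.

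I do not anticipate any real obstacle here. Unlike the simultaneous $\sigma$-action of Definitions~\ref{defn.sim.sub} and~\ref{defn.sim.sub.alpha}, where one must check independence of ordering and of the choice of fresh names, the $\amgis$ case needs no $\alpha$-renaming at all. The only point worth flagging is that \rulefont{\amgis\sigma} on its own is \emph{not} symmetric in $u$ and $v$ — it leaves the asymmetric inner term $u[b\sm v]$ behind — so the proof genuinely relies on the second freshness assumption $b\#u$ to recover symmetry. This is also consistent with the surrounding discussion that a full simultaneous $\amgis$-action is unavailable in general: the clean swap holds only under the joint freshness condition $a\#v$ and $b\#u$.
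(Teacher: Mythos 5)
Your proof is correct and is exactly the paper's own argument: apply \rulefont{\amgis\sigma} (justified by $a\#v$) to rewrite $p[v\ms b][u\ms a]$ as $p[u[b\sm v]\ms a][v\ms b]$, then collapse $u[b\sm v]$ to $u$ via \rulefont{\sigma\#} using $b\#u$. Your closing observation about the asymmetry of \rulefont{\amgis\sigma} and the role of the second freshness hypothesis matches the paper's surrounding discussion as well.
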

\begin{proof}
We reason as follows:
\begin{tab6}
p[v\ms b][u\ms a]
=&
p[u[b\sm v]\ms a][v\ms b]
&\rulefont{\amgis\sigma},\ a\#v
\\
=&
p[u\ms a][v\ms b]
&\rulefont{\sigma\#},\ b\#u
\qedhere\end{tab6} 
\end{proof}
Lemma~\ref{lemm.amgis.swap} asserts that the atoms provided are `sufficiently fresh', then the order of the $\amgis$-action does not matter.
However, we cannot freshen atoms as we can for $\sigma$-action using \rulefont{\sigma\alpha} from Figure~\ref{fig.nom.sigma}---that is, there is no rule \rulefont{\amgis\alpha} of the form \hcancel{$b\#p\limp p[u\ms a]=(b\ a)\act (p[u\ms b])$}.
Lemma~\ref{lemm.amgis.alpha} is the closest we can get to such a result.
We mention it without proof: 
\begin{lemm}
\label{lemm.amgis.alpha}
Suppose $\ns P$ is a finitely-supported $\amgis$-algebra (so every $p{\in}|\ns P|$ has finite support) and $\ns X$ is a $\sigma$-algebra over a termlike $\sigma$-algebra $\ns U$ and suppose $u{\in}|\ns U|$.
Then:
\begin{enumerate*}
\item
If $X{\in}|\powsigma(\ns P)|$ and $b\#X,p,u$ and $p{\in}|\ns P|$ and $a\#p,u$ then $p[u\ms a]\in X$ if and only if $(b\ a)\act (p[u\ms b])\in X$.
\item
If $x{\in}|\ns X|$ and $b\#x$ and $p{\in}|\powamgis(\ns X)|$ then $x\in p[u\ms a]$ if and only if $x\in (b\ a)\act (p[u\ms b])$. 
\end{enumerate*}
\end{lemm}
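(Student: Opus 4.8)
The plan is to treat the two parts separately, because despite their visual similarity they rest on different machinery. Part~2 concerns the \emph{concrete} pointwise $\amgis$-action of Definition~\ref{defn.p.action}, where $p\in|\powamgis(\ns X)|$ ranges over \emph{all} subsets of $|\ns X|$ and so need not have finite support, and there we have the explicit characterisations of Proposition~\ref{prop.sigma.iff} available. Part~1 concerns an \emph{abstract} finitely-supported $\amgis$-algebra, where no such pointwise description is to hand, but every $p$ does have a well-defined finite support; so there I would argue by equivariance instead.

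For part~2 I would simply unwind the definitions and apply \rulefont{\sigma\alpha}. Reading the chain from the right-hand side: by the second clause of Proposition~\ref{prop.sigma.iff} together with $(b\ a)^\mone=(b\ a)$, we get $x\in(b\ a)\act(p[u\ms b])$ iff $(b\ a)\act x\in p[u\ms b]$; by the first clause this holds iff $((b\ a)\act x)[b\sm u]\in p$. The linchpin is now \rulefont{\sigma\alpha}: since $b\#x$ we have $((b\ a)\act x)[b\sm u]=x[a\sm u]$, and one further application of the first clause of Proposition~\ref{prop.sigma.iff} rewrites $x[a\sm u]\in p$ as $x\in p[u\ms a]$. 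This closes the equivalence.

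For part~1 the concrete characterisation is unavailable, so instead I would show that the two displayed elements are \emph{literally equal}, whence they are trivially equimembered in $X$. By equivariance of the $\amgis$-action (the $\amgis$-analogue of Theorem~\ref{thrm.equivar} recorded in the Remark after Definition~\ref{defn.bus.algebra}), $(b\ a)\act(p[u\ms b])=((b\ a)\act p)[(b\ a)\act u\ms(b\ a)(b)]$, and $(b\ a)(b)=a$. Here is exactly where finite support of $\ns P$ is used: since $a,b\#p$ and $a,b\#u$, the swap $(b\ a)$ fixes $\supp(p)$ and $\supp(u)$ pointwise, so by part~1 of Corollary~\ref{corr.stuff} we have $(b\ a)\act p=p$ and $(b\ a)\act u=u$. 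Therefore $(b\ a)\act(p[u\ms b])=p[u\ms a]$, and membership of either side in $X$ is the same statement. (It is worth noting that the hypotheses $X\in|\powsigma(\ns P)|$ and $b\#X$ are then not actually needed for part~1; they merely match the intended use-site.)

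I do not expect a serious obstacle, since both parts reduce to short symbol-pushing once the right tool is chosen; the only real work is bookkeeping. In part~2 the thing to get right is the inverse in the permutation clause of Proposition~\ref{prop.sigma.iff}, and invoking \rulefont{\sigma\alpha} precisely in the direction licensed by $b\#x$. In part~1 the point to remember is that the equivariance and Corollary~\ref{corr.stuff} steps genuinely depend on $p$ and $u$ having finite support, which is exactly why this part is stated only for finitely-supported $\amgis$-algebras, in contrast with part~2 where $p$ may have no finite support at all.
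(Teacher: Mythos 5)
Your proposal is correct, and it is worth noting at the outset that there is no in-paper argument to compare it against: the paper states Lemma~\ref{lemm.amgis.alpha} explicitly \emph{without} proof, so your argument fills a gap rather than paralleling an existing proof.

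Checking the details: part~2 is exactly the computation the surrounding machinery invites. You unwind both memberships with Proposition~\ref{prop.sigma.iff} (correctly using $(b\ a)^\mone=(b\ a)$ in the permutation clause) and bridge with \rulefont{\sigma\alpha}, whose hypothesis $b\#x$ is precisely the freshness assumed; no finite support of $p$ is needed anywhere, which matches the fact that elements of $\powamgis(\ns X)$ may be unsupported. Part~1 is also sound: equivariance of the abstract $\amgis$-action gives $(b\ a)\act(p[u\ms b])=((b\ a)\act p)[(b\ a)\act u\ms a]$, and since $a,b\#p$ and $a,b\#u$, with $\ns P$ finitely supported (so that $\supp(p)$ exists and part~1 of Corollary~\ref{corr.stuff} applies, $|\ns P|$ being a nominal set in this case), both the permuted $p$ and $u$ collapse, yielding the literal identity $(b\ a)\act(p[u\ms b])=p[u\ms a]$. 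Your parenthetical observation is apt and slightly sharpens the statement: under the stated freshness conditions, part~1 holds for \emph{arbitrary} $X\subseteq|\ns P|$, so the hypotheses $X\in|\powsigma(\ns P)|$ and $b\#X$ are inert and presumably only record the intended use-site. This also squares nicely with the paper's remark that the unrestricted rule \rulefont{\amgis\alpha} is false in general: your proof makes visible that it is exactly the side-conditions $a,b\#p,u$ (in part~1) or the restriction to testing membership of a single $x$ with $b\#x$ (in part~2) that rescue $\alpha$-renaming for the $\amgis$-action.
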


\section{Nominal posets}
\label{sect.nom.pow}

\subsection{Nominal posets and fresh-finite limits}
\label{subsect.fresh-finite.limit}

\begin{defn}
\label{defn.nom.poset}
A \deffont{nominal poset} is a tuple $\mathcal L=(|\mathcal L|,\act,\leq)$ where
\begin{itemize*}
\item\  
$(|\mathcal L|,\act)$ is a nominal set, and 
\item\  
The relation $\leq\ \subseteq|\mathcal L|{\times}|\mathcal L|$ is an equivariant partial order.\footnote{So $x\leq y$ if and only if $\pi\act x\leq\pi\act y$.}
\end{itemize*}
Call $\mathcal L$ \deffont{finitely fresh-complete} or say it has \deffont{fresh-finite limits} when for every finite subset $X\subseteq|\mathcal L|$ and every finite set of atoms $A\subseteq\mathbb A$ the set of \deffont{$A$-fresh lower bounds}
$$
\{x'{\in}|\mathcal L| \mid A\cap\supp(x')=\varnothing\ \wedge\ \Forall{x{\in} X}x'{\leq} x\}
$$
has a $\leq$-greatest element $\freshwedge{A}X$, which we may call the \deffont{$A\#$limit} ($A$-fresh limit) or \deffont{$A\#$greatest lower bound} of $X$. 

Similarly call $\mathcal L$ \deffont{finitely fresh-cocomplete} or say it has \deffont{fresh-finite colimits} when for every finite $X$ and $A$ as above the set of $A$-fresh upper bounds 
$$
\{x'{\in}|\mathcal L| \mid A\cap\supp(x')=\varnothing\ \wedge\ \Forall{x{\in} X}x{\leq} x'\}
$$
has a $\leq$-least element $\bigvee^{\#A}X$, which we may call its \deffont{$A\#$colimit} or \deffont{$A\#$least upper bound} of $X$.
\end{defn}

\begin{xmpl}
Predicates of first-order logic quotiented by derivable logical equivalence and partially ordered by logical entailment, form a nominal poset; that is $[\phi]$ is the derivable equivalence class of $\phi$ and $[\phi]\leq[\psi]$ when $\phi\cent\psi$.
The permutation action is pointwise on variable symbols.

Then it is a fact that $[\phi\land\psi]$ is a limit for $\{[\phi],[\psi]\}$ and $[\tall a.\phi]$ is an $a\#$limit for $\{[\phi]\}$.
We sketch this particular example in a little more detail in Section~\ref{sect.herbrand}.
\end{xmpl}

\begin{nttn}
\label{nttn.lall}
Suppose $\mathcal L=(|\mathcal L|,\act,\leq)$ is nominal poset.
Suppose $X\subseteq|\mathcal L|$ and $A\subseteq\atoms$ are finite and suppose $x{\in}|\mathcal L|$.
\begin{itemize*}
\item
Call $\freshwedge{\varnothing}X$ a \deffont{limit} or \deffont{greatest lower bound} of $X$.
\item
Call $\freshwedge{\{a\}}\{x\}$ the $a\#$limit or $a\#$greatest lower bound of $x$ and write it $\freshwedge{a}x$.
Unpacking Definition~\ref{defn.nom.poset}, 
$$
\freshwedge{a}x\quad
\text{is the greatest element of}
\quad
\{ x'{\in}|\mathcal L| \mid x'{\leq} x\ \land\ a\#x'\} .
$$ 
\item
Call $\freshvee{\varnothing}X$ a \deffont{colimit} or \deffont{least upper bound} of $X$.
\item
Call $\freshvee{\{a\}}\{x\}$ the $a\#$colimit or $a\#$least upper bound of $x$ and write it $\freshvee{a}x$. 
\item
Write $\ltop$ for the greatest lower bound and $\lbot$ for the least upper bound of the empty set $\varnothing$. 
\item
Write $x\land y$ for the greatest lower bound and $x\lor y$ for the least upper bound of $\{x,y\}$. 
\end{itemize*}
\end{nttn}

\begin{rmrk}
So \emph{$A\#$(co)limits} generalise (co)limits; if we take $A=\varnothing$ then we get a limit (greatest lower bound) just as we are used to. 
\end{rmrk}

There is a convenient factoring of `finitely fresh-complete' into three constituent parts:
\begin{prop}
\label{prop.ffc.char}
Suppose $\mathcal L$ is a nominal poset.
Then $\mathcal L$ is finitely fresh-complete if and only if it has limits of the following three forms:
\begin{itemize*}
\item
$\ltop$, a greatest element (limit for the empty set $\varnothing$).
\item
$x\land y$, a limit for $\{x,y\}$.
\item
$\freshwedge{a}x$, an $a\#$limit for $\{x\}$.
\end{itemize*}
Similarly finitely fresh-cocomplete is equivalent to having $\lbot$, $x\lor y$, and $\freshvee{a}x$.
\end{prop}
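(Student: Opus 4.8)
The plan is to prove the two directions separately, the left-to-right implication being immediate and the right-to-left implication carrying all the content. For the forward direction, if $\mathcal L$ is finitely fresh-complete then $\ltop$, $x\land y$, and $\freshwedge{a}x$ are simply the three instances $\freshwedge{\varnothing}\varnothing$, $\freshwedge{\varnothing}\{x,y\}$, and $\freshwedge{\{a\}}\{x\}$ of Definition~\ref{defn.nom.poset}, so they exist. The substance is the converse: assuming only these three kinds of limit, I must manufacture an arbitrary $A\#$limit $\freshwedge{A}X$ for finite $A=\{a_1,\dots,a_n\}$ and finite $X$.

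First I would reduce to the case of a single element. Using $\ltop$ and binary meets, an easy induction on $|X|$ produces the ordinary greatest lower bound $y=\bigwedge X$ (with $\bigwedge\varnothing=\ltop$). Since $y$ is the greatest lower bound of $X$, an element $x'$ satisfies $x'\leq x$ for all $x{\in}X$ exactly when $x'\leq y$; hence $X$ and $\{y\}$ have literally the same set of $A$-fresh lower bounds, and therefore the same $A\#$greatest lower bound. It thus suffices to construct $\freshwedge{A}\{y\}$, which I do by iterating the singleton fresh-limit: set $w_0=y$ and $w_k=\freshwedge{a_k}w_{k-1}$, and claim $w_n=\freshwedge{A}\{y\}$.

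The main obstacle is verifying that $w_n$ is actually $A$-fresh, i.e. that applying a later $\freshwedge{a_k}$ does not reintroduce an earlier atom $a_j$ into the support. This is exactly where the nominal foundation does the work: by conservation of support (part~3 of Theorem~\ref{thrm.equivar}) $\supp(w_k)\subseteq\{a_k\}\cup\supp(w_{k-1})$, while by construction $a_k\#w_k$; combining these gives $\supp(w_k)\subseteq\supp(w_{k-1})$, so the supports form a decreasing chain $\supp(w_n)\subseteq\dots\subseteq\supp(y)$. Since each $a_k\#w_k$ and supports only shrink thereafter, $a_k\#w_n$ for every $k$, that is $A\#w_n$; and $w_n\leq y$ by transitivity, so $w_n$ is an $A$-fresh lower bound of $y$.

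It remains to check $w_n$ is the greatest such, which is a clean induction: given any $A$-fresh lower bound $x'$ of $y$, I show $x'\leq w_k$ for all $k$. The base case is $x'\leq y=w_0$; for the step, $x'\leq w_{k-1}$ together with $a_k\#x'$ (as $a_k{\in}A$) exhibits $x'$ as an $a_k$-fresh lower bound of $w_{k-1}$, so the universal property of $\freshwedge{a_k}w_{k-1}=w_k$ recorded in Notation~\ref{nttn.lall} gives $x'\leq w_k$. Hence $x'\leq w_n$, so $w_n=\freshwedge{A}\{y\}=\freshwedge{A}X$ as required. As a pleasant by-product, uniqueness of greatest elements means the answer is independent of the order in which $A$ was enumerated, so no separate commutation lemma is needed. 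The finitely fresh-cocomplete statement is proved by the exact order-dual argument, replacing $\ltop$, $\land$, $\freshwedge{a}$ by $\lbot$, $\lor$, $\freshvee{a}$ and greatest lower bounds by least upper bounds.
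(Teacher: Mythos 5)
Your proposal is correct and takes essentially the same route as the paper, which constructs $\freshwedge{a_1}\dots\freshwedge{a_n}(\dots(x_1\land x_2)\dots\land x_n)$ and leaves the verification as ``not hard''; your reduction to $y=\bigwedge X$ followed by iterated singleton fresh-limits is the same construction, with the details (the shrinking support chain via conservation of support, matching Corollary~\ref{corr.supp.freshwedge}, and the inductive universal-property check) filled in correctly.
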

\begin{proof}
The interesting part is the right-to-left implication where $X$ is non-empty.
Given finite non-empty $\{x_1,\dots,x_n\}\subseteq|\mathcal L|$ and $\{a_1,\dots,a_n\}\subseteq\mathbb A$, it is not hard to verify that $\freshwedge{a_1}\dots\freshwedge{a_n}(\dots(x_1\land x_2)\dots\land x_n)$ is an $A\#$limit for $X$.
\end{proof}

\begin{defn}
\label{defn.complement}
Suppose $\mathcal L$ is a partial order (it need not be nominal, though we will care most about the case that it is).
Call $x'{\in}|\mathcal L|$ a \deffont{complement} of $x{\in}|\mathcal L|$ when $x\land x'=\lbot$ and $x\lor x'=\ltop$.

If every $x{\in}|\mathcal L|$ has a complement say that $\mathcal L$ \deffont{is complemented}, and write the complement of $x$ as $\lneg x$.
\end{defn}

\begin{lemm}
\label{lemm.comp.unique}
Fresh-finite (co)limits, and complements, are unique if they exist, and $\lneg\lneg x=x$.
\end{lemm}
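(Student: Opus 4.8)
The plan is to treat the three assertions separately, since they rest on different facts: the (co)limit uniqueness is purely order-theoretic, the complement uniqueness is lattice-theoretic, and the double-negation law is a formal consequence of the latter.

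First I would dispatch uniqueness of fresh-finite limits and colimits. By Definition~\ref{defn.nom.poset} the value $\freshwedge{A}X$ is \emph{defined} as the $\leq$-greatest element of the set of $A$-fresh lower bounds of $X$, and dually $\freshvee{A}X$ is the $\leq$-least element of the $A$-fresh upper bounds. So there is nothing special about freshness here: a greatest (resp.\ least) element of \emph{any} subset $S\subseteq|\mathcal L|$ is unique, because if $g,g'$ are both greatest elements of $S$ then $g,g'\in S$ forces $g\leq g'$ and $g'\leq g$, whence $g=g'$ by antisymmetry of $\leq$. Applying this to the defining sets gives uniqueness of $\freshwedge{A}X$ and $\freshvee{A}X$, and in particular of the instances $\ltop,\lbot,x\land y,x\lor y,\freshwedge{a}x,\freshvee{a}x$ named in Notation~\ref{nttn.lall}.

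Next, complement uniqueness. Suppose $y$ and $z$ are both complements of $x$ in the sense of Definition~\ref{defn.complement}, so $x\land y=\lbot=x\land z$ and $x\lor y=\ltop=x\lor z$. The natural computation is
\[
y \;=\; y\land\ltop \;=\; y\land(x\lor z)\;=\;(y\land x)\lor(y\land z)\;=\;\lbot\lor(y\land z)\;=\;y\land z,
\]
so $y\leq z$; swapping the roles of $y$ and $z$ gives $z\leq y$, and antisymmetry yields $y=z$. The double-negation law then follows immediately: the defining equations for ``complement'' are symmetric in their two arguments (using commutativity of $\land$ and $\lor$, itself a consequence of the just-proved uniqueness of binary limits and colimits), so $x$ is a complement of $\lneg x$, and uniqueness forces $\lneg\lneg x=x$.

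The one step that is not formal is the distributive identity $y\land(x\lor z)=(y\land x)\lor(y\land z)$ used in the central display. This does \emph{not} follow from the existence of fresh-finite limits and colimits alone---in a non-distributive lattice such as the diamond $M_3$ a single element can carry several distinct complements---so the genuine content of the complement clause is that it is being applied where $\mathcal L$ is distributive. I expect this to be the main point to pin down: either distributivity is a standing hypothesis on the posets in play at this stage, or the clause should be read as conditional on it. Everything else reduces to routine appeals to antisymmetry and to the defining universal properties of the limits involved.
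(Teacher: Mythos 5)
Your proposal is correct, and it is in fact more careful than the paper's own proof, which disposes of the entire lemma in one line: ``Using the fact that for a partial order, $x\leq y$ and $y\leq x$ imply $x=y$.'' For the fresh-finite (co)limit clause your argument coincides with the paper's: by Definition~\ref{defn.nom.poset}, $\freshwedge{A}X$ and $\freshvee{A}X$ are the greatest/least elements of the sets of $A$-fresh lower/upper bounds, and greatest/least elements of any subset are unique by antisymmetry. For the complement clause, however, antisymmetry alone does not suffice, exactly as you observe: in $M_3$ each of the three middle elements has two distinct complements, so distributivity must enter somewhere, and your computation $y=y\land\ltop=y\land(x\lor z)=(y\land x)\lor(y\land z)=\lbot\lor(y\land z)=y\land z$ (with the symmetric argument and antisymmetry) is the standard way to supply it. The paper never makes this explicit---distributivity is only introduced later, in Definition~\ref{defn.distrib}, while Definition~\ref{defn.complement} speaks of an arbitrary partial order---but the complement-uniqueness clause is only ever invoked for FOLeq algebras (in Theorem~\ref{thrm.eq}), which are distributive by Definition~\ref{defn.FOLeq}, so your reading that the clause is implicitly conditional on distributivity is the right one: you have located a genuine (if harmless in context) gap in the paper's one-line proof rather than in your own. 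Your derivation of $\lneg\lneg x=x$---noting that the defining equations are symmetric in their two arguments, so $x$ is a complement of $\lneg x$, whence $\lneg\lneg x=x$ by the uniqueness just proved---is likewise the intended argument, which the paper leaves unstated.
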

\begin{proof}
Using the fact that for a partial order, $x\leq y$ and $y\leq x$ imply $x=y$.
\end{proof}

\begin{corr}
\label{corr.supp.freshwedge}
Suppose $\mathcal L=(|\mathcal L|,\act,\leq)$ is a nominal poset.
\begin{enumerate*}
\item
Suppose $X{\subseteq}|\mathcal L|$ is finite and $A{\subseteq}\mathbb A$, and $\freshwedge{A} X$ exists. 

Then $\supp(\freshwedge{A} X){\subseteq} \bigcup\{\supp(x)\mid x{\in} X\}{\setminus} A$.
\item
Suppose $x{\in}|\mathcal L|$, and suppose $\lneg x$ exists.
Then $\supp(\lneg x)=\supp(x)$.
\end{enumerate*}
\end{corr}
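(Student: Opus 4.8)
The plan is to derive both parts from the \emph{conservation of support} principle (part~3 of Theorem~\ref{thrm.no.increase.of.supp}), combined with the defining properties of the objects involved. The key observation is that $\freshwedge{A}X$ and $\lneg x$ are each \emph{uniquely} specified (by Lemma~\ref{lemm.comp.unique}) by a predicate in the language of ZFA/FM set theory whose only parameters carrying support are the explicit arguments; the ambient poset structure $(\leq,\act)$ is equivariant and so contributes no support of its own.

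For part~1, I would first regard $\freshwedge{A}\{x_1,\dots,x_n\}$ as a function specified in ZFA of the arguments $x_1,\dots,x_n$ and $A$ (well-defined by the existence assumption together with uniqueness from Lemma~\ref{lemm.comp.unique}). Conservation of support then gives
$$
\supp(\freshwedge{A}X)\ \subseteq\ \bigcup\{\supp(x)\mid x{\in}X\}\cup A ,
$$
using that $\supp(A)=A$ for a finite set of atoms. The extra ingredient that removes $A$ is that $\freshwedge{A}X$ is, by Definition~\ref{defn.nom.poset}, an $A$-fresh lower bound, so $A\cap\supp(\freshwedge{A}X)=\varnothing$. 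Intersecting the two facts: any atom in $\supp(\freshwedge{A}X)$ lies in the displayed union but not in $A$, hence in $\bigcup\{\supp(x)\mid x{\in}X\}\setminus A$, which is exactly the claimed bound.

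For part~2, one inclusion is again immediate from conservation of support: since the complement is unique (Lemma~\ref{lemm.comp.unique}), $\lneg(\text{-})$ is a ZFA-specified function of its single argument, so $\supp(\lneg x)\subseteq\supp(x)$. For the reverse inclusion the decisive move is the involutivity $\lneg\lneg x=x$, also from Lemma~\ref{lemm.comp.unique}: applying the inclusion just established with $\lneg x$ in place of $x$ yields $\supp(\lneg\lneg x)\subseteq\supp(\lneg x)$, that is $\supp(x)\subseteq\supp(\lneg x)$. The two inclusions combine to the desired equality.

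I do not expect a serious obstacle here; the only point requiring care is the bookkeeping in part~1, namely recognising that conservation of support alone only bounds the support by $\bigcup\{\supp(x)\mid x{\in}X\}\cup A$, and that it is the \emph{defining} freshness condition on $\freshwedge{A}X$ that licenses discarding $A$ from this bound. In part~2, the symmetry trick $\lneg\lneg x=x$ is precisely what upgrades the easy inclusion to an equality.
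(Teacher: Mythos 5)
Your proof is correct and is essentially the paper's own argument: part~1 applies conservation of support (part~3 of Theorem~\ref{thrm.no.increase.of.supp}) and then discards $A$ using the defining condition $A\cap\supp(\freshwedge{A}X)=\varnothing$, and part~2 applies the same theorem together with the observation that $x\mapsto\lneg x$ is its own inverse ($\lneg\lneg x=x$ from Lemma~\ref{lemm.comp.unique}). The paper merely states this more tersely; your explicit bookkeeping (well-definedness via uniqueness, and the two-inclusion argument for part~2) is exactly what its one-line proof compresses.
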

\begin{proof}
\begin{enumerate*}
\item
By part~3 of Theorem~\ref{thrm.no.increase.of.supp} $\supp(\freshwedge{A}X)\subseteq\bigcup\{\supp(x)\mid x\in X\}\cup A$.
Since by assumption $A\cap\supp(\freshwedge{A}X)=\varnothing$, the result follows.
\item
By part~3 of Theorem~\ref{thrm.no.increase.of.supp} observing that the map $x\mapsto\lneg x$ is its own inverse.
\qedhere\end{enumerate*}
\end{proof}

Lemma~\ref{lemm.freshwedge.alpha} is $\alpha$-equivalence for fresh-finite limits:
\begin{lemm}
\label{lemm.freshwedge.alpha}
Suppose $\mathcal L$ is a nominal poset and $x{\in}|\mathcal L|$.

If $b\#x$ then $\freshwedge{a}x=\freshwedge{b}(b\ a)\act x$ and $\freshvee{a}x=\freshvee{b}(b\ a)\act x$, where $\freshwedge{a}x$ and $\freshvee{a}x$ exist.
\end{lemm}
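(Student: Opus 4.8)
The plan is to prove the limit identity and then observe that the colimit identity is entirely dual. The key observation is that the fresh-finite limit operation $\freshwedge{(-)}(-)$ is specified by a formula in the language of FM set theory using only the equivariant data of $\mathcal L$ (the order $\leq$ and the support function). Hence it is equivariant in the sense of part~2 of Theorem~\ref{thrm.equivar}, so applying the swap $(b\ a)$ and using that it sends the atom $a$ to $b$ gives
$$
(b\ a)\act\freshwedge{a}x = \freshwedge{b}((b\ a)\act x) .
$$
The right-hand side exists precisely because it is the image under a permutation of the element $\freshwedge{a}x$, which exists by hypothesis. It therefore remains only to show that $(b\ a)$ fixes $\freshwedge{a}x$.

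For this I would invoke part~1 of Corollary~\ref{corr.supp.freshwedge}, which yields $\supp(\freshwedge{a}x)\subseteq\supp(x){\setminus}\{a\}$. In particular $a\#\freshwedge{a}x$, since $a$ is explicitly removed, and $b\#\freshwedge{a}x$, since $b\#x$ by assumption means $b\notin\supp(x)$. Thus both $a$ and $b$ are fresh for $\freshwedge{a}x$; as the swap $(b\ a)$ fixes every atom other than $a$ and $b$, it fixes every atom in $\supp(\freshwedge{a}x)$, and so by part~1 of Corollary~\ref{corr.stuff} we conclude $(b\ a)\act\freshwedge{a}x=\freshwedge{a}x$. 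Combining the two displayed equalities gives $\freshwedge{a}x = (b\ a)\act\freshwedge{a}x = \freshwedge{b}((b\ a)\act x)$, as required.

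The colimit identity $\freshvee{a}x = \freshvee{b}((b\ a)\act x)$ follows by exactly the same argument, now using the $A$-fresh upper-bound clause of Definition~\ref{defn.nom.poset} together with the corresponding support bound $\supp(\freshvee{a}x)\subseteq\supp(x){\setminus}\{a\}$; this dual bound is not stated in Corollary~\ref{corr.supp.freshwedge} but is proved identically, via part~3 of Theorem~\ref{thrm.no.increase.of.supp} and the $A$-freshness constraint built into the colimit.

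I expect no serious obstacle here: the whole content is the interplay between equivariance of the limit operation and the support bound $\supp(\freshwedge{a}x)\subseteq\supp(x){\setminus}\{a\}$. The only point requiring a moment's care is confirming that \emph{both} $a$ and $b$ are fresh for $\freshwedge{a}x$, so that the swap acts trivially; this is immediate from the support bound together with the hypothesis $b\#x$.
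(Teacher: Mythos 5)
Your proof is correct and is essentially the paper's own argument: both rest on the support bound of Corollary~\ref{corr.supp.freshwedge} to get $a,b\#\freshwedge{a}x$, then part~1 of Corollary~\ref{corr.stuff} to show the swap $(b\ a)$ fixes $\freshwedge{a}x$, and part~2 of Theorem~\ref{thrm.equivar} to identify $(b\ a)\act\freshwedge{a}x$ with $\freshwedge{b}(b\ a)\act x$. Your one refinement over the paper's terse ``the case of $\freshvee{a}x$ is similar'' is correctly noting that the dual support bound for colimits is not literally stated in Corollary~\ref{corr.supp.freshwedge} but follows by the identical argument from part~3 of Theorem~\ref{thrm.no.increase.of.supp}.
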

\begin{proof}
By Corollary~\ref{corr.supp.freshwedge} $a,b\#\freshwedge{a}x$, so by part~1 of Corollary~\ref{corr.stuff} $\freshwedge{a}x=(b\ a)\act\freshwedge{a}x$.
By part~2 of Theorem~\ref{thrm.equivar} $(b\ a)\act\freshwedge{a}x=\freshwedge{b}(b\ a)\act x$.
The case of $\freshvee{a}x$ is similar.
\end{proof}

We will concentrate on limits from now on; the case of colimits is dual and in the presence of negation can be obtained directly from limits. 
A sequel to this subsection---which we postpoone until we need it later---is in Subsection~\ref{subsect.more.on.limits}.

\subsection{$\sigma$-algebra structure and fresh-finite limits}

\begin{defn}
\label{defn.fresh.continuous}
Suppose a finitely fresh-complete and finitely fresh-cocomplete nominal poset $\mathcal L=(|\mathcal L|,\act,\leq)$ is also a $\sigma$-algebra $(|\mathcal L|,\act,\ns U,\tf{sub}_{\ns U})$ over a termlike $\sigma$-algebra $\ns U$.
Call the $\sigma$-algebra structure \deffont{monotone} when for every $x,y{\in}|\mathcal L|$ and $u{\in}|\ns U|$
$$
x\leq y \text{\quad implies\quad} x[a\sm u]\leq y[a\sm u]
$$
and \deffont{compatible} when for every finite $X\subseteq|\ns L|$ and $A\subseteq\mathbb A$ and $u{\in}|\ns U|$ 
$$
\begin{array}{r@{\ }l@{\ }l}
(\freshwedge{A}X)[a\sm u] =& \freshwedge{A}\{x[a\sm u]\mid x\in X\} &\text{ provided }A\cap (\supp(u)\cup\{a\}) = \varnothing 
\\
(\freshvee{A}X)[a\sm u] =& \freshvee{A}\{x[a\sm u]\mid x\in X\} &\text{ provided }A\cap (\supp(u)\cup\{a\}) = \varnothing 
\\
(\lneg x)[a\sm u]=&\lneg(x[a\sm u]) .
\end{array}
$$ 
\end{defn}

\begin{rmrk}
As standard, negation converts greatest lower bounds to least upper bounds.
Thus we obtain $\freshvee{A}X$ as $\lneg\freshwedge{A}\{\lneg x\mid x\in X\}$.
So the second condition above follows from the first and third.
\end{rmrk}  

\begin{lemm}
\label{lemm.easy}
Given a finitely fresh-complete nominal poset with a compatible $\sigma$-action:
\begin{enumerate*}
\item
$(x\land y)[a\sm u]=x[a\sm u]\land (y[a\sm u])$ and similarly for $\lor$.
\item\label{easy.compatible.monotone}
If $x\leq y$ then $x[a\sm u]\leq y[a\sm u]$ (so a compatible $\sigma$-action is monotone).
\item
If $b\#a,u$ then $(\freshwedge{b}x)[a\sm u]=\freshwedge{b}(x[a\sm u])$ and similarly for $\freshvee{b}$.
\item
$\ltop[a\sm u]=\ltop$ and $\lbot[a\sm u]=\lbot$. 
\end{enumerate*}
\end{lemm}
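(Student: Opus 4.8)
The plan is to observe that $\ltop$, $x\land y$, and $\freshwedge{b}x$ are all, by Notation~\ref{nttn.lall}, special cases of the generic fresh-finite limit $\freshwedge{A}X$: taking $A=\varnothing$ and $X=\varnothing$ gives $\ltop$; taking $A=\varnothing$ and $X=\{x,y\}$ gives $x\land y$; and taking $A=\{b\}$ and $X=\{x\}$ gives $\freshwedge{b}x$. Consequently, parts~1, 3, and~4 will each be a direct instance of the compatibility equation of Definition~\ref{defn.fresh.continuous}, once I check that its side-condition $A\cap(\supp(u)\cup\{a\})=\varnothing$ is met in each case.

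For part~4 I would apply compatibility with $A=\varnothing$ and $X=\varnothing$: the side-condition is vacuous, and the right-hand side $\freshwedge{\varnothing}\{x[a\sm u]\mid x\in\varnothing\}$ is just $\freshwedge{\varnothing}\varnothing=\ltop$. For part~1, I apply compatibility with $A=\varnothing$ and $X=\{x,y\}$, again with vacuous side-condition, giving $(x\land y)[a\sm u]=\freshwedge{\varnothing}\{x[a\sm u],y[a\sm u]\}=x[a\sm u]\land y[a\sm u]$. For part~3, the hypothesis $b\#a,u$ says exactly that $b\neq a$ and $b\notin\supp(u)$, so $\{b\}\cap(\supp(u)\cup\{a\})=\varnothing$ and compatibility with $A=\{b\}$, $X=\{x\}$ yields $(\freshwedge{b}x)[a\sm u]=\freshwedge{b}(x[a\sm u])$. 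The dual cases ($\lor$, $\lbot$, $\freshvee{b}$) follow in the same way from the $\freshvee$ clause of compatibility---or, equivalently, by combining the $\freshwedge$ cases with the negation clause $(\lneg x)[a\sm u]=\lneg(x[a\sm u])$, exactly as in the remark following Definition~\ref{defn.fresh.continuous}.

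Part~2 is the only clause not immediately an instance of compatibility, and is where the (modest) work lies. The plan is to use the standard order-theoretic fact that in a poset with binary greatest lower bounds, $x\leq y$ holds if and only if $x\land y=x$ (if $x\leq y$ then $x$ is a lower bound of $\{x,y\}$ below which every common lower bound sits, hence $x=x\land y$; conversely $x=x\land y\leq y$). So if $x\leq y$ then $x\land y=x$, and applying part~1 gives
$$
x[a\sm u]=(x\land y)[a\sm u]=x[a\sm u]\land y[a\sm u],
$$
whence $x[a\sm u]\leq y[a\sm u]$ by the same characterisation read in the other direction. I do not anticipate a genuine obstacle here; the only point to keep straight is that monotonicity is being \emph{derived} from compatibility via part~1 rather than assumed separately, which is precisely the content of the parenthetical claim in the statement.
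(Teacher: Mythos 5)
Your proposal is correct and follows essentially the same route as the paper: the paper likewise obtains parts~1, 3, and~4 as special cases of the compatibility equation of Definition~\ref{defn.fresh.continuous} (with the dual cases handled via $\freshvee$ or negation), and derives part~2 from part~1 using the characterisation $x\leq y$ if and only if $x\land y=x$. Your version simply makes explicit the instantiations of $A$ and $X$ and the verification of the side-condition, which the paper leaves implicit.
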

\begin{proof}
Parts~1, 3, and~4 are special cases of compatibility.\footnote{Part~4 has an alternative one--line proof from Theorem~\ref{thrm.no.increase.of.supp} and \rulefont{\sigma\#}; we use this argument in Lemma~\ref{lemm.sub.commute} for the case of $\lbot$.}
Part~2 follows from part~1 using the fact that $x\leq y$ if and only if $x\land y=x$.
\end{proof}

\begin{lemm}
\label{lemm.fresh.glb.sub}
In a finitely fresh-complete nominal poset:
\begin{itemize*}
\item
If $z\leq x$ and $a\#z$ then $z\leq\freshwedge{a}x$.
\item
If the $\sigma$-action is monotone then
if $z\leq x$ and $a\#z$ then $z\leq x[a\sm u]$ for every $u$, so that $z$ is a lower bound for the (in general infinite) set $\{x[a\sm u]\mid u{\in}|\ns U|\}$.\footnote{The set $\{x[a\sm u]\mid u{\in}|\ns U|\}$ could be finite: either $a\#x$ so that by \rulefont{\sigma\#} $x[a\sm u]=x$ for all $u$; or $|\ns U|$ is finite, so that in particular $\tf{atm}_{\ns U}$ maps every $a$ to some constant element in $|\ns U|$.}
\end{itemize*}
\end{lemm}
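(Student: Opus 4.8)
The plan is to prove the two bullets separately, each as a direct consequence of the relevant definition; neither needs induction or any construction.

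For the first bullet, I would unpack $\freshwedge{a}x$ using Notation~\ref{nttn.lall} (which specialises Definition~\ref{defn.nom.poset}): it is the $\leq$-greatest element of $\{x'{\in}|\mathcal L|\mid x'\leq x\ \land\ a\#x'\}$. The two hypotheses $z\leq x$ and $a\#z$ say precisely that $z$ belongs to this set, and a greatest element dominates every member of its set, so $z\leq\freshwedge{a}x$ is immediate.

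For the second bullet, the one idea needed is to absorb the substitution into $z$ via its freshness. Since $a\#z$, axiom \rulefont{\sigma\#} gives $z[a\sm u]=z$ for every $u{\in}|\ns U|$. Applying monotonicity of the $\sigma$-action (Definition~\ref{defn.fresh.continuous}) to $z\leq x$ yields $z[a\sm u]\leq x[a\sm u]$, and chaining the two gives $z=z[a\sm u]\leq x[a\sm u]$. As $u$ is arbitrary, this exhibits $z$ as a lower bound of $\{x[a\sm u]\mid u{\in}|\ns U|\}$, as claimed.

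I expect no genuine obstacle: both parts collapse to a single line once the definitions are written out. The only point meriting attention is the direction of the rewrite in the second bullet---one must rewrite the left-hand side $z$ as $z[a\sm u]$ (using $a\#z$) rather than touch $x$, so that monotonicity can then be applied with $z[a\sm u]$ on the left; this is exactly where the hypothesis $a\#z$ is used.
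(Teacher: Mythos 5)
Your proposal is correct and follows exactly the paper's own proof: part one is read off directly from the definition of the $a\#$limit as the greatest element of the set of $a\#$-fresh lower bounds, and part two combines \rulefont{\sigma\#} (giving $z=z[a\sm u]$ from $a\#z$) with monotonicity applied to $z\leq x$. Nothing is missing; your closing remark about rewriting $z$ rather than $x$ correctly identifies the only point where $a\#z$ enters.
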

\begin{proof}
The first part is direct from the definition of fresh-finite limit; $\freshwedge{a}x$ is by definition a least upper bound for all $z\leq x$ such that $a\#z$.

For the second part, since the $\sigma$-action is monotone $z[a\sm u]\leq x[a\sm u]$ for every $u{\in}|\ns U|$.
By \rulefont{\sigma\#} also $z=z[a\sm u]$.
It follows that $z\leq x[a\sm u]$ for every $u{\in}|\ns U|$.
\end{proof}

\begin{rmrk}
\label{rmrk.odd}
Clearly, we intend the $a\#$limit $\freshwedge{a}x$ to model the universal quantifier $\tall a.\phi$, just as the finite limit $x\land y$ models logical conjunction $\phi\land\psi$.

An odd thing about the quantifier rules is that \rulefont{\tall L} expresses $\Forall{a}\phi$ as a greatest lower bound of an infinite set $\{\phi[a\sm r]\mid \text{all }r\}$ whereas \rulefont{\tall R} expresses $\Forall{a}\phi$ as something else and more finite (these standard derivation rules feature in this paper in Figure~\ref{fig.FOL}).
Lemma~\ref{lemm.fresh.glb.sub} explains this dual nature of $\forall$ as an equality between 
\begin{itemize*}
\item
the limit of a fairly large set of elements $\{x[a\sm u]\mid u{\in}|\ns U|\}$---nice for an elimination rule, so we can eliminate for many $u$---and 
\item
the (fresh-)finite limit of the singleton set $\{x\}$---nice for an introduction rule, giving us but a single proof-obligation.
\end{itemize*} 
See also Proposition~\ref{prop.char.freshwedge} and Subsection~\ref{subsect.powsigma.quant}.
\end{rmrk}

Proposition~\ref{prop.char.freshwedge} proves an equality between the infinite greatest lower bound $\bigwedge_u x[a\sm u]$ and the (fresh-)finite greatest lower bound $\freshwedge{a}x$, provided one of these exists and the $\sigma$-action is monotone. 
That is, the limit of a diagram over infinitely many elements $x[a\sm u]$ is equivalent to the $a\#$limit of just $x$.
\begin{prop}
\label{prop.char.freshwedge}
Suppose $\mathcal L$ is a nominal poset with a monotone $\sigma$-action\footnote{We will only really care about the case that the $\sigma$-action satisfies the stronger property of being \emph{compatible} (Definition~\ref{defn.fresh.continuous}).  
However, compatibility includes conditions involving $\freshwedge{a}$, so we prefer to be more precise and only insist on what we need to make the result work, which is monotonicity.}
and $x{\in}|\mathcal L|$.
Then:
\begin{enumerate*}
\item
If $\freshwedge{a}x$ exists then so does $\bigwedge_{u{\in}|\ns U|} x[a\sm u]$ the limit for $\{x[a\sm u]\mid u{\in}|\ns U|\}$, and they are equal.
In symbols: 
$$
\freshwedge{a}x = \bigwedge_{u{\in}|\ns U|} x[a\sm u].
$$
\item
If $\bigwedge_u x[a\sm u]$ exists then so does $\freshwedge{a}x$, and they are equal.
\end{enumerate*}
\end{prop}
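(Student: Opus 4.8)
The plan is to characterise each side of the claimed equality as the greatest element of an appropriate set of lower bounds, and then show these two descriptions pick out the same element. Two ingredients recur throughout. First, by the monotonicity hypothesis and Lemma~\ref{lemm.fresh.glb.sub}, any $z$ with $z\leq x$ and $a\#z$ satisfies $z\leq x[a\sm u]$ for every $u{\in}|\ns U|$; conversely, interpreting the atom $a$ as the element $\tf{atm}_{\ns U}(a)$ of $|\ns U|$ and using \rulefont{\sigma id} gives $x[a\sm a]=x$, so if $z\leq x[a\sm u]$ for all $u$ then in particular $z\leq x$. Thus for $a$-fresh $z$, being a lower bound of $\{x[a\sm u]\mid u{\in}|\ns U|\}$ is exactly being $\leq x$. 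Second, by Lemma~\ref{lemm.a.fresh.bigset} we have $a\#\{x[a\sm u]\mid u{\in}|\ns U|\}$, so by conservation of support (Theorem~\ref{thrm.no.increase.of.supp}) any greatest lower bound of this set, being specified from it, is itself $a$-fresh.

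I would do part~2 first, as it is the cleaner direction. Write $w=\bigwedge_{u}x[a\sm u]$. By the second ingredient $a\#w$, and by the $u=a$ instance of \rulefont{\sigma id} we get $w\leq x$; so $w$ is an $a$-fresh lower bound of $x$. For greatest-ness, take any $z$ with $z\leq x$ and $a\#z$; by Lemma~\ref{lemm.fresh.glb.sub} we have $z\leq x[a\sm u]$ for all $u$, hence $z\leq w$. Therefore $w$ is the greatest element of $\{z\mid z\leq x,\ a\#z\}$, which is precisely the defining property of $\freshwedge{a}x$; so $\freshwedge{a}x$ exists and equals $w$.

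For part~1, write $g=\freshwedge{a}x$, so $g\leq x$ and $a\#g$. By Lemma~\ref{lemm.fresh.glb.sub}, $g\leq x[a\sm u]$ for every $u$, i.e.\ $g$ is a lower bound of $\{x[a\sm u]\mid u{\in}|\ns U|\}$. The work is to show $g$ is the \emph{greatest} lower bound. Here the obstacle is that an arbitrary lower bound $z$ of the diagram need not be $a$-fresh, so we cannot directly feed it into the defining property of $g$. I would resolve this by freshening: given $z\leq x[a\sm u]$ for all $u$, pick $b\#x,z,g$; then $z\leq x[a\sm b]=(b\ a)\act x$ by Lemma~\ref{lemm.sub.alpha}, so $(b\ a)\act z\leq x$ by equivariance of $\leq$, while $a\#(b\ a)\act z$ by Proposition~\ref{prop.pi.supp} (since $b\#z$). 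Thus $(b\ a)\act z$ is an $a$-fresh lower bound of $x$, giving $(b\ a)\act z\leq g$; applying $(b\ a)$ again and using $(b\ a)\act g=g$ (as $a,b\#g$, by Corollary~\ref{corr.stuff}) yields $z\leq g$. Hence $g$ is the greatest lower bound, so $\bigwedge_u x[a\sm u]$ exists and equals $\freshwedge{a}x$.

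The main obstacle is exactly this last point of part~1: the passage from ``lower bound of the infinite diagram'' to ``below the fresh-finite limit'' fails for non-fresh $z$ and must be recovered by the name-swapping argument above. Everything else is a direct application of Lemma~\ref{lemm.fresh.glb.sub}, \rulefont{\sigma id}, and support bookkeeping via Lemma~\ref{lemm.a.fresh.bigset} and Theorem~\ref{thrm.no.increase.of.supp}.
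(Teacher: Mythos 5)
Your proof is correct and takes essentially the same route as the paper's: Lemma~\ref{lemm.fresh.glb.sub} in one direction, and in the other the instances $u=a$ (via \rulefont{\sigma id}) and $u=b$ (via Lemma~\ref{lemm.sub.alpha}) together with the support bookkeeping $a\#\bigwedge_u x[a\sm u]$ from Lemma~\ref{lemm.a.fresh.bigset} and Theorem~\ref{thrm.no.increase.of.supp}. The only (cosmetic) difference is in the greatest-ness step of part~1, where you transport the arbitrary lower bound $z$ across the swapping $(b\ a)$ to obtain an $a$-fresh lower bound $(b\ a)\act z\leq\freshwedge{a}x$ and swap back, whereas the paper instead transports the limit, using $z\leq\freshwedge{b}(b\ a)\act x=\freshwedge{a}x$ via Lemma~\ref{lemm.freshwedge.alpha}; the two are interchangeable by equivariance of $\leq$.
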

\begin{proof}
Suppose $\freshwedge{a}x$ exists.
\begin{itemize*}
\item
By Lemma~\ref{lemm.fresh.glb.sub} $\freshwedge{a}x$ is a lower bound for $\{x[a\sm u]\mid u{\in}|\ns U|\}$.
\item
Now suppose $z$ is any lower bound for $\{x[a\sm u]\mid u{\in}|\ns U|\}$.
So $z\leq x[a\sm u]$ for every $u{\in}|\ns U|$.
Note that we do not know \emph{a priori} that $a\#z$. 
Choose $b$ fresh (so $b\#z,x$) and take $u=b$.
Then $z\leq x[a\sm b]\stackrel{\text{L\ref{lemm.sub.alpha}}}{=} (b\ a)\act x$.
We assumed that $x$ has an $a\#$limit so by Theorem~\ref{thrm.equivar} also $(b\ a)\act x$ has a $b\#$limit (which by Lemma~\ref{lemm.freshwedge.alpha} is equal to the $a\#$limit of $x$).
Since $b\#z$ it follows that $z\leq\freshwedge{b}(b\ a)\act x\stackrel{\text{L\ref{lemm.freshwedge.alpha}}}{=}\freshwedge{a}x$.
\end{itemize*}
It follows that $\freshwedge{a}x = \bigwedge_u x[a\sm u]$.

Now suppose $\bigwedge_u x[a\sm u]$ exists.
By Lemma~\ref{lemm.a.fresh.bigset} and part~2 of Theorem~\ref{thrm.no.increase.of.supp} we have that $a\#\bigwedge_u x[a\sm u]$. 
\begin{itemize*}
\item
By assumption $\bigwedge_u x[a\sm u]\leq x[a\sm a]\stackrel{\rulefont{\sigma id}}{=} x$. 
Thus $\bigwedge_u x[a\sm u]$ is an $a\#$lower bound for $x$.
\item
Now suppose $z\leq x$ and $a\#z$; we need to show that $z\leq \bigwedge_u x[a\sm u]$.
This is direct from Lemma~\ref{lemm.fresh.glb.sub}.
\end{itemize*}
It follows that $\bigwedge_u x[a\sm u]=\freshwedge{a}x$.
\end{proof}

\begin{rmrk}
Note that Proposition~\ref{prop.char.freshwedge} does \emph{not} mean `finitely fresh-complete\ =\ complete', where \emph{being complete} means having limits for all sets (or rather, all finitely supported sets). 
Proposition~\ref{prop.char.freshwedge} only shows that `finitely fresh-complete' is the same as `complete for sets that can be expressed as a finite union of sets of the form $\{x[a\sm u]\mid u{\in}|\ns U|\}$'.
\end{rmrk}

We briefly mention another characterisation of $\freshwedge{a}x$ using a `smaller' conjunction, though we never use it:
\begin{prop}
\label{prop.char.freshwedge.names}
Suppose $\mathcal L$ is a nominal poset with a monotone $\sigma$-action
and $x{\in}|\mathcal L|$.
Then:
\begin{enumerate*}
\item
If $\freshwedge{a}x$ exists then so does $\bigwedge_{n} x[a\sm n]$ where $n$ ranges over all atoms, and they are equal.
\item
If $\bigwedge_n x[a\sm n]$ exists then so does $\freshwedge{a}x$, and they are equal.\footnote{Strictly speaking we should write $\bigwedge_n x[a\sm \tf{atm}_{\ns U}(n)]$.  See the notation in Definition~\ref{defn.term.sub.alg}.}
\end{enumerate*}
\end{prop}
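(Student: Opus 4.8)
The plan is to run the proof of Proposition~\ref{prop.char.freshwedge} again, essentially verbatim, after making one observation: the argument given there only ever substitutes \emph{atoms} for $a$ — a fresh atom $b$ in Part~1, and $a$ itself in Part~2 — so restricting the index set from $|\ns U|$ down to $\mathbb A$ loses nothing, and the smaller conjunction $\bigwedge_{n} x[a\sm n]$ already carries all the information the earlier proof used.

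For Part~1 I would suppose $\freshwedge{a}x$ exists. By Lemma~\ref{lemm.fresh.glb.sub} it is a lower bound for the whole family $\{x[a\sm u]\mid u{\in}|\ns U|\}$, hence \emph{a fortiori} for the subfamily $\{x[a\sm n]\mid n{\in}\mathbb A\}$. For greatest-ness, I would take an arbitrary lower bound $z$ of $\{x[a\sm n]\mid n{\in}\mathbb A\}$, choose $b$ fresh (so $b\#z,x$), and instantiate at the atom $n=b$ to get $z\leq x[a\sm b]$. By Lemma~\ref{lemm.sub.alpha} $x[a\sm b]=(b\ a)\act x$, and since $\freshwedge{a}x$ exists, Theorem~\ref{thrm.equivar} gives that $(b\ a)\act x$ has a $b\#$limit; as $b\#z$ this yields $z\leq\freshwedge{b}(b\ a)\act x$, which equals $\freshwedge{a}x$ by Lemma~\ref{lemm.freshwedge.alpha}. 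Hence $\freshwedge{a}x=\bigwedge_{n} x[a\sm n]$. For Part~2 I would suppose $\bigwedge_{n} x[a\sm n]$ exists and first establish $a\#\bigwedge_{n} x[a\sm n]$, combining the atom-indexed corollary $a\#\{x[a\sm n]\mid n{\in}\mathbb A\}$ of Lemma~\ref{lemm.a.fresh.bigset} with the support-conservation content of Theorem~\ref{thrm.no.increase.of.supp}, exactly as in the proof of Proposition~\ref{prop.char.freshwedge}. Since $a$ is itself an atom, $x[a\sm a]$ occurs in the conjunction, so $\bigwedge_{n} x[a\sm n]\leq x[a\sm a]=x$ by \rulefont{\sigma id}; together with $a$-freshness this makes $\bigwedge_{n} x[a\sm n]$ an $a\#$lower bound for $x$. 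For greatest-ness I would take $z\leq x$ with $a\#z$ and apply Lemma~\ref{lemm.fresh.glb.sub} to get $z\leq x[a\sm n]$ for every atom $n$, so $z$ is a lower bound of the family and $z\leq\bigwedge_{n} x[a\sm n]$. Thus $\bigwedge_{n} x[a\sm n]=\freshwedge{a}x$.

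There is no real obstacle here beyond Proposition~\ref{prop.char.freshwedge} itself; the work is in checking that nothing in that proof needed a non-atom substitution. The two points that deserve explicit mention are the freshness fact $a\#\bigwedge_{n} x[a\sm n]$ in Part~2, which must invoke the \emph{atom-indexed} corollary of Lemma~\ref{lemm.a.fresh.bigset} rather than the $|\ns U|$-indexed one, and the application of \rulefont{\sigma id} at the atom $a$, which is legitimate precisely because $a\in\mathbb A$ and so $x[a\sm a]$ genuinely appears among the $x[a\sm n]$.
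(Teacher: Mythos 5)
Your proposal is correct and is essentially the paper's own proof: the paper also proves this by rerunning the argument of Proposition~\ref{prop.char.freshwedge}, singling out as the key point exactly the freshness fact $a\#\bigwedge_{n} x[a\sm n]$ obtained from the atom-indexed corollary of Lemma~\ref{lemm.a.fresh.bigset} together with Theorem~\ref{thrm.no.increase.of.supp}. Your additional explicit checks (instantiating only at a fresh atom $b$ in Part~1, and noting that $x[a\sm a]$ genuinely occurs in the conjunction so \rulefont{\sigma id} applies) are precisely what makes the restriction from $|\ns U|$ to $\mathbb A$ harmless.
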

\begin{proof}
As the proof of Proposition~\ref{prop.char.freshwedge}.
The important point is that by Lemma~\ref{lemm.a.fresh.bigset} and part~2 of Theorem~\ref{thrm.no.increase.of.supp} we have $a\#\bigwedge_n x[a\sm n]$. 
\end{proof}

\subsection{Equality}
\label{subsect.eq}

Suppose $\mathcal L$ is a finitely fresh-complete nominal poset\footnote{In fact we only care about $\land$ here, not $\freshwedge{a}$.} with a compatible $\sigma$-algebra structure over a termlike $\sigma$-algebra $\ns U$.

\begin{defn}
\label{defn.eq}
An \deffont{equality} is an element $(a{=^\lmathcal}b){\in}|\mathcal L|$ with $\supp(a{=^\lmathcal}b)\subseteq\{a,b\}$ and such that:
\begin{enumerate*}
\item
For every $u{\in}|\ns U|$,\ 
$$
(a{=^\lmathcal}b)[a\sm u,b\sm u]=\ltop.
$$
\item
For every $u,v{\in}|\ns U|$ and $z{\in}|\mathcal L|$,\ 
$$
(a{=^\lmathcal}b)[a\sm u,b\sm v]\land z[a\sm u]=(a{=^\lmathcal}b)[a\sm u,b\sm v]\land z[a\sm v].
$$
\end{enumerate*}
\end{defn}

It might interest the reader to look briefly ahead to Definition~\ref{defn.eq.powamgis}, where a very special and useful equality element will be constructed.

\begin{rmrk}
\label{rmrk.short}
The choice of $a$ and $b$ is arbitrary (simultaneous $\sigma$-action $[a\sm u,b\sm v]$ is from Definition~\ref{defn.sim.sub}).

Anticipating the notation of Definition~\ref{defn.extend.f.P}, if we write $(a{=^\lmathcal}b)[a\sm u,b\sm v]$ as $(u{=^\lmathcal}v)$ then properties~1 and~2 of Definition~\ref{defn.eq} can be rewritten in the following more economical form
$$
(u{=^\lmathcal}u)=\ltop
\qquad
(u{=^\lmathcal}v)\land z[a\sm u] = (u{=^\lmathcal}v)\land z[a\sm v]
$$
and this looks like a purely equational rendering of the sequent rules \rulefont{{\teq}R} and \rulefont{{\teq}L} of Figure~\ref{fig.FOL} (sequent rules for equality).
This is deliberate; see condition~\ref{item.eq.cont} of Theorem~\ref{thrm.eq}.
\end{rmrk}

In Definition~\ref{defn.eq} we talk about \emph{an} equality.
However---in the same spirit as Lemma~\ref{lemm.comp.unique}---if one such element exists then it is unique:
\begin{prop}
\label{prop.unique}
An equality in $\mathcal L$ is unique, up to choice of $a$ and $b$, if it exists.
\end{prop}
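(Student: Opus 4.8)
The plan is to show that if $e$ and $e'$ both satisfy Definition~\ref{defn.eq} for the same pair of atoms $a,b$, then $e=e'$. Since $\leq$ is a partial order it suffices, by antisymmetry, to prove $e\leq e'$ and $e'\leq e$; and since the hypotheses on $e$ and $e'$ are symmetric, it is enough to prove one of these, say $e\leq e'$, and then swap the roles of the two elements. I will use the standard semilattice fact that $x\leq y$ iff $x\land y=x$, so the goal reduces to $e\land e'=e$.

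The core of the argument is the Leibniz-style property (condition~2 of Definition~\ref{defn.eq}) of $e$, instantiated at $u=a$, $v=b$ (the atoms themselves, via $\tf{atm}_{\ns U}$) and at $z=e'$. Since substituting each atom for itself is the identity, we have $e[a\sm a,b\sm b]=e$ and $e'[a\sm a]=e'$ (by \rulefont{\sigma id} together with the well-definedness of the simultaneous action of Definitions~\ref{defn.sim.sub} and~\ref{defn.sim.sub.alpha}), so condition~2 collapses to
\[
e\land e' \;=\; e\land e'[a\sm b].
\]
It therefore remains to show that $e'[a\sm b]=\ltop$, for then $e\land e'=e\land\ltop=e$ and hence $e\leq e'$.

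The fact $e'[a\sm b]=\ltop$ is precisely the reflexivity condition~1 of Definition~\ref{defn.eq} for $e'$: intuitively $e'[a\sm b]$ is the element $b{=^\lmathcal}b$, which is $\ltop$. Formally, condition~1 with $u=b$ gives $e'[a\sm b,b\sm b]=\ltop$, and $e'[a\sm b,b\sm b]=e'[a\sm b]$ because the component $[b\sm b]$ is an identity substitution (\rulefont{\sigma id}) and may be dropped from the simultaneous action. Running the symmetric argument---condition~2 of $e'$ at $z=e$ together with $e[a\sm b]=\ltop$---yields $e'\leq e$, and antisymmetry gives $e=e'$.

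I expect the main obstacle to be purely bookkeeping: reconciling the simultaneous $\sigma$-action with single and identity substitutions, i.e. justifying $e[a\sm a,b\sm b]=e$ and $e'[a\sm b,b\sm b]=e'[a\sm b]$. These are intuitively clear (substituting an atom for itself changes nothing) and follow from \rulefont{\sigma id} and Lemma~\ref{lemm.sim.sub.1}, but they require care precisely because $[b\sm b]$ is not a \emph{fresh} substitution, so Definition~\ref{defn.sim.sub.alpha} rather than Definition~\ref{defn.sim.sub} governs it, and the $\alpha$-renaming shortcut of Lemma~\ref{lemm.sub.alpha} is unavailable once $b\in\supp(e')$. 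Everything else is a two-line application of conditions~1 and~2 of Definition~\ref{defn.eq}.
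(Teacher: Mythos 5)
Your proof is correct and takes essentially the same route as the paper's: both instantiate condition~2 of Definition~\ref{defn.eq} with $z$ set to the other equality and $u,v$ the atoms themselves, collapse the identity substitutions via \rulefont{\sigma id}, invoke condition~1 to reduce the remaining substitution instance to $\ltop$, and conclude $e\leq e'$, finishing by symmetry of the hypotheses and antisymmetry of $\leq$. Your extra care over the simultaneous-action bookkeeping (that $e[a\sm a,b\sm b]=e$ and that the $[b\sm b]$ component may be dropped) is sound and simply spells out what the paper compresses into the phrase ``using \rulefont{\sigma id}''.
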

\begin{proof}
Consider two equalities $a{=^\lmathcal_1}b$ and $a{=^\lmathcal_2}b$.
Take $z=(a{=^\lmathcal_2}b)$ and $u=a$ and $v=b$ in condition~2 of Definition~\ref{defn.eq}.
Then 
using \rulefont{\sigma id},
$$
(a{=^\lmathcal_1}b)\land (a{=^\lmathcal_2}b)=(a{=^\lmathcal_1}b)\land (a{=^\lmathcal_2}b)[b\sm a] .
$$
Now by condition~1 for $(a{=^\lmathcal_2}b)$ we know that $(a{=^\lmathcal_2}b)[b\sm a]=\ltop$.
It follows that $(a{=^\lmathcal_1}b)\leq (a{=^\lmathcal_2}b)$.
By symmetry also $(a{=^\lmathcal_2}b)\leq (a{=^\lmathcal_1}b)$ and we are done.
\end{proof}

It is not hard to similarly prove that equality is reflexive, symmetric, and transitive, if it exists.

\subsection{Definition of a FOLeq algebra}

\begin{defn}
\label{defn.distrib}
Suppose $\mathcal L$ is a finitely fresh-complete and finitely fresh-cocomplete nominal poset (Definition~\ref{defn.nom.poset}).

Call $\mathcal L$ \deffont{distributive} when:
\begin{enumerate*}
\item
$x\lor (y\land z)=(x\lor y)\land(x\lor z)$.
\item
If $a\#x$ then $x\lor(\freshwedge{a}y)=\freshwedge{a}(x\lor y)$.
\end{enumerate*}
These two conditions can be unified into a single condition as follows, for finite $A\subseteq\mathbb A$ and $X\subseteq|\mathcal L|$:
$$
A{\cap}\supp(x){=}\varnothing\ \limp\ x\lor\freshwedge{A}X = \freshwedge{A}\{x{\lor} x'\mid x'{\in} X\}
$$ 
\end{defn}

\begin{rmrk}
Evidently, Definition~\ref{defn.distrib} generalises the usual notion of distributivity.
A dual version of part~1 of Definition~\ref{defn.distrib} is
$x\land (y\lor z)=(x\land y)\lor(x\land z)$ and by a standard argument \cite[Lemma~4.3]{priestley:intlo} the two are equivalent.

A dual version of part~2 of Definition~\ref{defn.distrib} is that if $a\#x$ then $x\land(\freshvee{a}y)=\freshvee{a}(x\land y)$.
This is \emph{not} equivalent to part~2 of Definition~\ref{defn.distrib} in general, but in the presence of complements it is, just by taking negation.
\end{rmrk}

\begin{frametxt}
\begin{defn}
\label{defn.FOLeq}
A \deffont{FOLeq algebra} over a termlike $\sigma$-algebra $\ns U$ is a nominal poset $\mathcal L$ with the following properties:
\begin{itemize*}
\item
$\mathcal L$ is finitely fresh-complete and finitely fresh-cocomplete.
\item
$\mathcal L$ is distributive.
\item
$\mathcal L$ is complemented.
\item
$\mathcal L$ has a compatible $\sigma$-algebra structure over $\ns U$.
\item
$\mathcal L$ has an equality. 
\end{itemize*}
If $\mathcal L$ is as above except that it does not have an equality, then we call $\mathcal L$ a \deffont{FOL algebra}. 
\end{defn}
\end{frametxt}

\begin{rmrk}
For the reader's convenience we break down Definition~\ref{defn.FOLeq} with precise references:
\begin{itemize*}
\item
\emph{Nominal poset} and \emph{finitely fresh-(co)complete} are Definition~\ref{defn.nom.poset}.
\item
\emph{Distributive} is Definition~\ref{defn.distrib}.
\item
\emph{Complements} are Definition~\ref{defn.complement}.
\item
\emph{$\sigma$-algebra structure} is Definition~\ref{defn.sub.algebra} and 
\item
\emph{Compatibility} is Definition~\ref{defn.fresh.continuous}.
\item
\emph{Having equality} is Definition~\ref{defn.eq}.
\end{itemize*}
As the name suggests, the notion of FOLeq algebra is an abstract nominal specification of what it is to be a model of first-order logic with equality. 
The connection between this definition and first-order logic is surely clear: 
\begin{itemize*}
\item
\emph{finitely fresh-(co)complete} gives us $\land$, $\lor$, $\freshwedge{a}$ and $\freshvee{a}$, which are nominal algebraic versions of conjunction, disjunction, and universal and existential quantification; 
\item
\emph{distributivity} ensures that fresh-limits and fresh-colimits interact sensibly---generalising the distributivity we expect of Boolean algebras; 
\item
being complemented gives us negation; 
\item
the compatible $\sigma$-algebra structure gives us a `substitution action'; and 
\item
equality gives us equality.
\end{itemize*}
We do not actually need to insist that a FOLeq algebra be finitely fresh-cocomplete, because we have complements.
However, we still need to insist on distributivity, and this axiom is easier to write down if we assume $\lor$ (no complicated nested negations).
We try to optimise for readability, even where this implies a little redundancy.

However, note that Definition~\ref{defn.FOLeq} is \emph{not} just a direct restatement of the axioms of first-order logic.
This may be hard to see because the reader will see freshness like `$a\#x$' and substitution like `$x[a\sm u]$' and these are properties that, usually, only are applicable to syntax.
But now, the things we are operating on $x$, $a$, and $u$ are abstract and need not be syntax of terms and predicates.
Soon, we shall see models built out of $\amgis$-algebras of Definition~\ref{defn.FOLeq} that are definitely not syntax-based.
It may be worth noting that in previous work we build some more $\sigma$-algebra models, quite differently, out of models of set theory \cite{gabbay:stusun}.

In short, Definition~\ref{defn.FOLeq} is an abstract, not a concrete, definition. 
\end{rmrk}

\begin{rmrk}
Calling a FOLeq algebra an \emph{algebra} might seem misleading.
To this author `algebra' suggests an algebraic (equational) treatment, rather than the poset style of Definition~\ref{defn.FOLeq}.
However, a nominal algebraic rendering of Definition~\ref{defn.FOLeq} is possible and is given in Subsection~\ref{subsect.foleq.alg}.
The interesting axioms are the ones that look like the left- and right-introduction rules for the universal quantifier; see \rulefont{\hnu{\lor}} and \rulefont{\hnu{\leq}} from Figure~\ref{fig.genhnu}. 

Here, we express properties of limits, instead of using nominal algebra axioms as we do in Subsection~\ref{subsect.foleq.alg}.
The content is the same.
\end{rmrk}

\section{Interpretation of first-order logic in a FOLeq algebra}
\label{sect.interp}

\subsection{Syntax and derivability of first-order logic}

\begin{defn}
\label{defn.terms.and.predicates}
A \deffont{signature} is a tuple $(\Sigma,\Pi,\ar)$ of disjoint sets of 
\begin{itemize*}
\item
\deffont{function symbols} $\tf f\in\Sigma$ and 
\item
\deffont{predicate symbols} $\tf P\in\Pi$ 
\end{itemize*}
to each of which is associated an \deffont{arity} $\ar(\tf f),\ar(\tf P)\in\{0,1,2,\dots\}$.

Given a signature, \deffont{terms} and \deffont{predicates} are defined inductively by: 
$$
\begin{array}{r@{\ }l}
r::=& a \mid \tf f(r_1,\dots,r_{\ar(\tf f)})
\\
\phi::=& \tbot \mid r\teq r \mid \tf P(r_1,\dots,r_{\ar(\tf P)}) \mid \phi\tand\phi\mid \tneg\phi \mid \tall a.\phi
\end{array}
$$
Above, $a$ ranges over atoms---atoms play the role of variable symbols in this syntax---and $\tf f$ ranges over elements of $\Sigma$, and $\tf P$ ranges over elements of $\Pi$.
\end{defn}

We take predicates up to $\alpha$-equivalence as standard, and define \deffont{free atoms} $\fa(r),\fa(\phi)$ and \deffont{capture-avoiding substitution} $r[a\sm s]$ and $\phi[a\sm s]$ also as standard.\footnote{The author is using nominal abstract syntax to do this \cite{gabbay:newaas-jv,gabbay:fountl}. A treatment of this with all definitions and a full proof of equivalence with `ordinary' syntax is in \cite[Section~5]{gabbay:fountl}.}

\begin{nttn}
\label{nttn.fol.sugar}
Since our logic is classical, we use the following standard abbreviations:
\begin{itemize*}
\item
Write $\phi\tor\psi$ for $\tneg((\tneg\phi)\tand(\tneg\psi))$.
\item
Write $\phi\timp\psi$ for $(\tneg\phi)\tor\psi$.
\item
Write $\phi\tiff\psi$ for $(\phi\timp\psi)\tand(\psi\timp\phi)$.
\end{itemize*}
\end{nttn}

\begin{nttn}
\label{nttn.phi.psi}
$\Phi$ and $\Psi$ will range over finite sets of predicates. 

We may write $\phi,\Phi$ for $\{\phi\}\cup\Phi$ and we may write $\fa(\Phi)$ for $\bigcup_{\phi{\in}\Phi}\fa(\phi)$.
\end{nttn}

The derivation rules of first-order logic are as standard:
\begin{defn}
\label{defn.sfss}
A \deffont{sequent} is a pair of finite sets of predicates $\Phi\cent\Psi$.
The \deffont{derivable sequents} are defined by the rules in Figure~\ref{fig.FOL}.
\end{defn}

\begin{figure}
$$
\begin{array}{c@{\qquad}c@{\qquad}c}
\begin{prooftree}
\phantom{h}
\justifies
\Phi,\phi\cent\phi,\Psi
\using\rulefont{Hyp}
\end{prooftree}
&
\begin{prooftree}
\phantom{h}
\justifies
\Phi,\tbot\cent\Psi
\using\rulefont{\tbot L}
\end{prooftree}
&
\begin{prooftree}
\Phi,r{\teq}r\cent\Psi
\justifies
\Phi\cent \Psi
\using\rulefont{{\teq}R}
\end{prooftree}
\\[4ex]
\begin{prooftree}
\Phi,\phi_1,\phi_2\cent\Psi
\justifies
\Phi,\phi_1\tand\phi_2\cent\Psi
\using\rulefont{{\tand}L}
\end{prooftree}
&
\begin{prooftree}
\Phi\cent\psi_1,\Psi
\quad
\Phi\cent\psi_2,\Psi
\justifies
\Phi\cent\psi_1\tand\psi_2,\Psi
\using\rulefont{{\tand}R}
\end{prooftree}
&
\begin{prooftree}
\Phi,r'{\teq}r,\phi[a\sm r']\cent \Phi
\justifies
\Phi,r'{\teq}r,\phi[a\sm r]\cent \Phi
\using\rulefont{{\teq}L}
\end{prooftree}
\\[4ex]
\begin{prooftree}
\Phi\cent\psi,\Psi
\justifies
\Phi,\tneg\psi\cent\Psi
\using\rulefont{\tneg L}
\end{prooftree}
&
\begin{prooftree}
\Phi,\phi\cent\Psi
\justifies
\Phi\cent\tneg\phi,\Psi
\using\rulefont{\tneg R}
\end{prooftree}
\\[4ex]
\begin{prooftree}
\Phi,\phi[a\sm r]\cent\Psi
\justifies
\Phi,\tall a.\phi\cent\Psi
\using\rulefont{\tall L}
\end{prooftree}
&
\begin{prooftree}
\Phi\cent\psi,\Psi\ \ (a\not\in\fa(\Phi\cup\Psi))
\justifies
\Phi\cent\tall a.\psi,\Psi
\using\rulefont{\tall R}
\end{prooftree}
\end{array}
$$
\caption{Derivation rules of first-order logic with equality (FOLeq)}
\label{fig.FOL}
\end{figure}

\subsection{Sound interpretation in a FOLeq algebra}

Definition~\ref{defn.multiple.otimes} elaborates and specialises Subsection~\ref{subsect.otimes} (tensor product):
\begin{defn}
\label{defn.multiple.otimes}
Define $\otimes^n\mathbb A$ to be the set of $n$-tuples of distinct atoms with the pointwise action $\pi\act(a_1,\dots,a_n)=(\pi(a_1),\dots,\pi(a_n))$.
\end{defn}
 
\begin{defn}
\label{defn.interp}
Suppose $\mathcal L$ is a FOLeq algebra over $\ns U$ (from Definition~\ref{defn.FOLeq}; so $\mathcal L$ is a finitely fresh-complete complemented nominal poset, with a compatible $\sigma$-algebra structure over $\ns U$, and with equality $(a{=^\lmathcal}b)$). 
We may write $\mathcal L$ also for $(|\mathcal L|,\act)$ the underlying nominal set of the FOLeq algebra.

An \deffont{interpretation} $\interp I$ of a signature $(\Sigma,\Pi,\ar)$ over $\mathcal L$ is an assignment 
\begin{itemize*}
\item
to each $\tf f\in\Sigma$ with arity $n$, an equivariant function $\tf f^{\iden}:\otimes^n\mathbb A\Func \ns U$,\item
to each $\tf P\in\Pi$ with arity $n$, an equivariant function $\tf P^{\iden}:\otimes^n\mathbb A\Func\mathcal L$.
\end{itemize*}
\end{defn}
(It is not hard to check from Subsection~\ref{subsect.eq} that the equality $a{=^\lmathcal}b$ of $\mathcal L$ also gives us an equivariant function $\otimes^2\mathbb A\Func\mathcal L$, so equality could have been presented as one of the $\tf P$ in the signature, but we took it to be primitive to $\mathcal L$ instead.
The treatments of equality and the $\tf P$ flow together in Definition~\ref{defn.extend.f.P}.) 

\begin{rmrk}
By Definition~\ref{defn.equivariant},\ $\tf f^{\iden}$ equivariant means $\tf f^{\iden}(\pi(a_1),\dots,\pi(a_n))=\pi\act\tf f^\iden(a_1,\dots,a_n)$.
Because of this, the information we need to calculate the output of $\tf f^\iden$ for any input, is contained in its output for any one list of $n$ distinct atoms.
Similarly for $\tf P^\iden$.
This can be viewed as a generalisation of the $\new$ quantifier and has a surprisingly attractive general theory; see the \emph{abstractive functions} of \cite{gabbay:genmn}. 

In \cite{gabbay:genmn} we considered generalisations of nominal sets to other permutation groups, whereas in this paper we are specifically interested in the monoidal case of a $\sigma$-action (the simultaneous $\sigma$-actions form a monoid).
Or, to put this more simply: in this paper we can substitute for atoms and not just permute them.
This leads us to Definition~\ref{defn.extend.f.P}, for which we first need a little notation: 
\end{rmrk}

\begin{nttn}
If $\ns U$ is a nominal set write $\ns U^n$ for $\overbrace{\ns U\times\dots\times\ns U}^{\text{$n$ times}}$.
\end{nttn}

\begin{defn}
\label{defn.extend.f.P}
Extend $\tf f^\iden$ and $\tf P^\iden$ and $=^\lmathcal$ to functions $\tf f^\iden:\ns U^n\Func\ns U$ and $\tf P^\iden:\ns U^n\Func\mathcal L$ and $=^\iden:\ns U^2\Func\mathcal L$ as follows
$$
\begin{array}{r@{\ }l}
\tf f^\iden(u_1,\dots,u_n)=&\tf f^\iden(a_1,\dots,a_n)[a_1\sm u_1,\dots,a_n\sm u_n]
\\
\tf P^\iden(u_1,\dots,u_n)=&\tf P^\iden(a_1,\dots,a_n)[a_1\sm u_1,\dots,a_n\sm u_n]
\\
{=^\iden}(u_1,u_2)=&(a{=^\lmathcal}b)[a\sm u_1,b\sm u_2]
\end{array}
$$ 
where we choose $a_1,\dots,a_n$ to be $n$ distinct atoms fresh for $\bigcup_1^n\supp(u_i)$.
We may write ${=^\iden}(u_1,u_2)$ infix as $u_1{=^\iden}u_2$.
\end{defn}

\begin{lemm}
\label{lemm.sim.sigma}
Definition~\ref{defn.extend.f.P} is independent of the choice of fresh atoms $a_1,\dots,a_n$ and $a,b$.
\end{lemm}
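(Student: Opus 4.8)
The plan is to show that any two admissible choices of fresh atoms yield the same value, by exhibiting a single permutation that carries one choice to the other while fixing all the relevant data. I would treat $\tf f^\iden$ in full; the cases of $\tf P^\iden$ (same argument with target $\mathcal L$ instead of $\ns U$) and of $=^\iden$ are then essentially identical, using the support bound $\supp(a{=^\lmathcal}b)\subseteq\{a,b\}$ from Definition~\ref{defn.eq} in the role played below by the support bound on $\tf f^\iden(a_1,\dots,a_n)$, together with Proposition~\ref{prop.unique} to identify the equality element for a second pair $(a',b')$ with the appropriate permutation-image of $(a{=^\lmathcal}b)$.

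First I would pin down a support bound on $E:=\tf f^\iden(a_1,\dots,a_n)[a_1\sm u_1,\dots,a_n\sm u_n]$. Since $\tf f^\iden$ is equivariant it has empty support, so by conservation of support (part~3 of Theorem~\ref{thrm.no.increase.of.supp}) we have $\supp(\tf f^\iden(a_1,\dots,a_n))\subseteq\{a_1,\dots,a_n\}$. Reading the simultaneous action as the iterated single action (Definition~\ref{defn.sim.sub}) and applying the corollary of Lemma~\ref{lemm.fresh.sub} along each single substitution, and using that each $a_i$ is fresh for every $u_j$ (so no $a_i$ is ever reintroduced), I obtain $\supp(E)\subseteq\bigcup_j\supp(u_j)$. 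This is the one point that needs care, and it is exactly what makes the permutation argument close.

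Next, given a second admissible choice $a_1',\dots,a_n'$ of distinct atoms all fresh for $\bigcup_j\supp(u_j)$, I would choose a finite permutation $\pi$ with $\pi(a_i)=a_i'$ for all $i$ and with $\pi$ the identity outside $\{a_1,\dots,a_n\}\cup\{a_1',\dots,a_n'\}$ (such a $\pi$ exists since both are lists of distinct atoms: extend the injection $a_i\mapsto a_i'$ to a bijection of that finite atom-set and take identity elsewhere). Every atom moved by $\pi$ is fresh for all the $u_j$, so $\pi$ fixes $\bigcup_j\supp(u_j)$ pointwise; by part~1 of Corollary~\ref{corr.stuff} this gives $\pi\act u_j=u_j$ for each $j$, and — since $\supp(E)\subseteq\bigcup_j\supp(u_j)$ — also $\pi\act E=E$.

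Finally I would compute $\pi\act E$ directly. Equivariance of the simultaneous $\sigma$-action (it is a composite of the equivariant single action, so part~2 of Theorem~\ref{thrm.equivar} applies) gives $\pi\act E=(\pi\act\tf f^\iden(a_1,\dots,a_n))[\pi(a_1)\sm\pi\act u_1,\dots,\pi(a_n)\sm\pi\act u_n]$. Equivariance of $\tf f^\iden$ rewrites $\pi\act\tf f^\iden(a_1,\dots,a_n)$ as $\tf f^\iden(a_1',\dots,a_n')$, while $\pi(a_i)=a_i'$ and $\pi\act u_j=u_j$ turn the substitution into $[a_1'\sm u_1,\dots,a_n'\sm u_n]$. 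Hence $\pi\act E$ equals the value $E'$ computed from the primed choice, and combining with $\pi\act E=E$ yields $E=E'$. Independence of the order of the $a_i$ is already guaranteed because the simultaneous action of Definition~\ref{defn.sim.sub} is order-independent; so the only genuine obstacle is the support bound of the second paragraph, after which everything is routine equivariance bookkeeping.
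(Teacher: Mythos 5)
Your proof is correct and is essentially the paper's own argument written out in full: the paper's proof of Lemma~\ref{lemm.sim.sigma} is a one-liner offering either ``direct calculations'' or an appeal to the abstractive-functions theorem (Theorem~27 of \cite{gabbay:genmn}), and in both routes the crux is exactly what you isolate via Lemma~\ref{lemm.fresh.sub} --- the support bound $\supp(E)\subseteq\bigcup_j\supp(u_j)$, which lets a permutation carrying one fresh choice to the other fix $E$ while equivariance rewrites it as the value for the other choice. Your bookkeeping, including handling $=^\iden$ via the support bound of Definition~\ref{defn.eq} and Proposition~\ref{prop.unique}, is sound and fills in precisely what the paper leaves implicit.
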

\begin{proof}
A proof by direct calculations is not hard, or we can see this directly from \cite[Theorem~27]{gabbay:genmn} since by Lemma~\ref{lemm.fresh.sub} $\tf f^\iden$ and $\tf P^\iden$ are \emph{purely abstractive} (terminology of Theorem~27).
\end{proof}

\begin{defn}
\label{defn.interp.I}
Extend the interpretation $\interp I$ of Definition~\ref{defn.extend.f.P} to terms and predicates as shown in Figure~\ref{fig.extend.interp}.
\end{defn}
Much as for $\tf f^\iden$ and $\tf P^\iden$ above, we technically need to check that Definition~\ref{defn.interp.I} does not depend on the choice of bound atom $a$ in $\tall a.\phi$.
This follows by easy calculations using the fact that by assumption, $a\not\in\supp(\freshwedge{a}\model{\phi})$. 

\begin{figure}
$$
\begin{array}{r@{\ }l@{\qquad}r@{\ }l}
\model{a}=&\tf{atm}_{\ns U}(a)
&
\model{\tbot}=&\lbot
\\
\model{\tf f(r_1,\dots,r_n)}=&\tf f^\iden(\model{r_1},\dots,\model{r_n})
&
\model{r{\teq}s}=&\model{r}{=^\iden}\model{s}
\\
&&
\model{\tf P(r_1,\dots,r_n)}=&\tf P^\iden(\model{r_1},\dots,\model{r_n})
\\
&&
\model{\phi\tand\psi}=&\model{\phi}\land\model{\psi}
\\
&&
\model{\tneg\phi}=&\lneg\model{\phi}
\\
&&
\model{\tall a.\phi}=&\freshwedge{a}\model{\phi}
\end{array}
$$
\caption{The interpretation of terms and predicates}
\label{fig.extend.interp}
\end{figure}

\begin{lemm}
\label{lemm.supp.lemma}
$\supp(\model{r})\subseteq\fa(r)$ and $\supp(\model{\phi})\subseteq\fa(\phi)$.
\end{lemm}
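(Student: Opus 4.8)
The plan is to prove both inclusions simultaneously by a single structural induction, following the mutual recursion by which $\fa(\minus)$ and $\model{\minus}$ are defined. The two workhorses will be part~3 of Theorem~\ref{thrm.no.increase.of.supp} (conservation of support for equivariant functions) to handle the function and predicate symbols, and Corollary~\ref{corr.supp.freshwedge} to handle the lattice operations $\land$, $\lneg$, and $\freshwedge{a}$. No genuinely hard step is involved; this is a bookkeeping induction in which the support-bound for each semantic operation is matched against the corresponding clause of $\fa$.

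For terms, the base case $r=a$ uses that $\tf{atm}_{\ns U}$ is equivariant, so $\supp(\model a)=\supp(\tf{atm}_{\ns U}(a))\subseteq\{a\}=\fa(a)$. The step case $r=\tf f(r_1,\dots,r_n)$ uses that the extended $\tf f^\iden$ of Definition~\ref{defn.extend.f.P} is equivariant---it is assembled from the equivariant interpretation on atoms and the equivariant $\sigma$-action, and is well defined by Lemma~\ref{lemm.sim.sigma}---so conservation of support gives $\supp(\model{\tf f(r_1,\dots,r_n)})\subseteq\bigcup_i\supp(\model{r_i})$, which by the inductive hypothesis is contained in $\bigcup_i\fa(r_i)=\fa(\tf f(r_1,\dots,r_n))$.

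For predicates, the cases $\tbot$, $r\teq s$, and $\tf P(r_1,\dots,r_n)$ go exactly as for terms: $\supp(\lbot)=\varnothing=\fa(\tbot)$, while $=^\iden$ and $\tf P^\iden$ are equivariant so conservation of support plus the already-established term inclusion suffices. For the connectives, $\model{\phi\tand\psi}=\model\phi\land\model\psi$ and $\model{\tneg\phi}=\lneg\model\phi$: Corollary~\ref{corr.supp.freshwedge} (part~1 with $A=\varnothing$) bounds $\supp(\model\phi\land\model\psi)$ by $\supp(\model\phi)\cup\supp(\model\psi)$, and part~2 gives $\supp(\lneg\model\phi)=\supp(\model\phi)$, so the inductive hypothesis closes both. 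The only clause that really exercises the fresh machinery is the quantifier $\model{\tall a.\phi}=\freshwedge{a}\model\phi$: here part~1 of Corollary~\ref{corr.supp.freshwedge}, instantiated with $A=\{a\}$ and $X=\{\model\phi\}$, yields $\supp(\freshwedge{a}\model\phi)\subseteq\supp(\model\phi)\setminus\{a\}$, and since $\fa(\tall a.\phi)=\fa(\phi)\setminus\{a\}$ the inductive hypothesis again suffices.

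The main point needing a moment's care is simply to confirm that the extended operators $\tf f^\iden$, $\tf P^\iden$, and $=^\iden$ are equivariant, so that conservation of support is applicable; but this is immediate from their construction out of equivariant data. The pleasant structural feature worth highlighting is that the bound-variable case matches perfectly: the $A\#$limit's support bound $\setminus\{a\}$ is precisely the syntactic $\fa(\phi)\setminus\{a\}$, so the semantics of $\tall$ tracks the free-variable discipline of the syntax automatically.
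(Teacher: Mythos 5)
Your proof is correct, but it takes a genuinely different route from the paper's, which dispatches the lemma in one stroke: the interpretation map $\model{\minus}$ is specified equivariantly in the language of ZFA/FM set theory, so a single application of part~3 of Theorem~\ref{thrm.no.increase.of.supp} gives $\supp(\model{r})\subseteq\supp(r)$ and $\supp(\model{\phi})\subseteq\supp(\phi)$, and the lemma then follows from the fact---delegated to the nominal abstract syntax machinery of \cite[Section~5]{gabbay:fountl}---that the support of the (nominal, because of the binder $\tall$) abstract syntax of $r$ and $\phi$ equals $\fa(r)$ and $\fa(\phi)$. Your structural induction unfolds that one-liner into per-clause bounds, and in particular it replaces the paper's appeal to atom-abstraction in syntax by the semantic bound of Corollary~\ref{corr.supp.freshwedge}: the $\setminus\{a\}$ in $\supp(\freshwedge{a}\model{\phi})\subseteq\supp(\model{\phi})\setminus\{a\}$ does the work that ``support of $\tall a.\phi$ is $\fa(\phi)\setminus\{a\}$'' does in the paper. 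What your route buys is self-containedness---you never need to know that support of syntax-with-binding equals $\fa$, which is precisely the nontrivial ingredient the paper outsources---at the cost of length and of one small obligation you should state explicitly: since predicates are identified up to $\alpha$-equivalence, the induction in the $\tall a.\phi$ case must be independent of the choice of bound atom (the remark following Definition~\ref{defn.interp.I}, together with Lemma~\ref{lemm.freshwedge.alpha}, covers this). One further point of rigour: a raw application of conservation of support to the defining expression of the extended $\tf f^\iden$ (and $\tf P^\iden$, $=^\iden$) in Definition~\ref{defn.extend.f.P} yields the bound $\{a_1,\dots,a_n\}\cup\bigcup_i\supp(\model{r_i})$, since the fresh atoms appear as parameters; to discard them you should either intersect over all fresh choices using well-definedness (Lemma~\ref{lemm.sim.sigma}) or invoke Lemma~\ref{lemm.fresh.sub} to see that each $a_i$, being fresh for every $u_j$, is fresh for the result.
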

\begin{proof}
From Theorem~\ref{thrm.no.increase.of.supp} using the fact that the support of the abstract syntax of $r$ is equal to $\fa(r)$ and the support of the (nominal, since we have a binder $\tall$) abstract syntax of $\phi$ is equal to $\fa(\phi)$ (see \cite[Section~5]{gabbay:fountl} for a detailed treatment of the nominal abstract syntax of the untyped $\lambda$-calculus). 
\end{proof}

\begin{lemm}
$\model{\ttop}=\ltop{\in}|\mathcal L|$.
\end{lemm}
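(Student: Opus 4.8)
The plan is to unfold $\ttop$ as syntactic sugar and then compute its interpretation directly from the clauses of Figure~\ref{fig.extend.interp}. Since the primitive connectives of the syntax are $\tbot$, $\tand$, $\tneg$, and $\tall a$, the constant $\ttop$ (\emph{true}) is read as $\tneg\tbot$, the negation of false. Everything then reduces to a short calculation in the underlying nominal poset $\mathcal L$.

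First I would compute
$$
\model{\ttop}=\model{\tneg\tbot}=\lneg\model{\tbot}=\lneg\lbot,
$$
using the clause $\model{\tneg\phi}=\lneg\model{\phi}$ followed by $\model{\tbot}=\lbot$ from Figure~\ref{fig.extend.interp}. This leaves only the purely order-theoretic fact that the complement of the least element is the greatest element, i.e.\ $\lneg\lbot=\ltop$.

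To establish $\lneg\lbot=\ltop$ I would appeal to Definition~\ref{defn.complement}, which gives $\lbot\lor\lneg\lbot=\ltop$ for the complement $\lneg\lbot$. Since $\lbot$ is the least element of $\mathcal L$ (the least upper bound of $\varnothing$, by Notation~\ref{nttn.lall}), we have $\lbot\leq\lneg\lbot$, and hence $\lbot\lor\lneg\lbot=\lneg\lbot$. Combining the two equalities yields $\lneg\lbot=\ltop$, as required. Note that I do not even need uniqueness of complements (Lemma~\ref{lemm.comp.unique}), only their defining property.

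There is no real obstacle here. The only two points to watch are the convention that $\ttop$ abbreviates $\tneg\tbot$, and the small lattice identity $\lbot\lor x=x$ that makes the complement equation collapse to $\lneg\lbot=\ltop$. The statement is essentially a warm-up, confirming that the interpretation sends the constant \emph{true} to the top of the FOLeq algebra.
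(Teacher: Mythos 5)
Your proof is correct and takes essentially the same route as the paper's one-line proof, which simply unpacks Definition~\ref{defn.interp.I} and recalls that $\ttop$ is sugar for $\tneg\tbot$. Your additional detail---deriving $\lneg\lbot=\ltop$ from the complement equation $\lbot\lor\lneg\lbot=\ltop$ of Definition~\ref{defn.complement} together with $\lbot\leq\lneg\lbot$---merely makes explicit the small lattice fact the paper leaves implicit.
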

\begin{proof}
Unpacking Definition~\ref{defn.interp.I} and recalling that $\ttop$ is sugar for $\tneg\tbot$.
\end{proof}

\begin{lemm}
\label{lemm.f.P.sub}
\begin{enumerate*}
\item
$\tf f^\iden(u_1,\dots,u_n)[a\sm u]=\tf f^\iden(u_1[a\sm u],\dots,u_n[a\sm u])$.
\item
$(u_1{=^\iden}u_2)[a\sm u]=(u_1[a\sm u]{=^\iden}u_2[a\sm u])$.
\item
$\tf P^\iden(u_1,\dots,u_n)[a\sm u]=\tf P^\iden(u_1[a\sm u],\dots,u_n[a\sm u])$.
\end{enumerate*}
\end{lemm}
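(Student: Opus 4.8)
The plan is to prove all three parts the same way, giving part~1 in full; parts~2 and~3 are identical, with $\tf P^\iden(a_1,\dots,a_n)$ in place of $\tf f^\iden(a_1,\dots,a_n)$, and with the equality element $(a{=^\lmathcal}b)$ playing the role of the representing element in the case $n{=}2$. The whole computation is an iterated use of $\rulefont{\sigma\sigma}$ followed by a single use of $\rulefont{\sigma\#}$, so the real work is setting up the freshness conditions correctly.

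First I would fix the data of Definition~\ref{defn.extend.f.P}, choosing the representing atoms $a_1,\dots,a_n$ not merely fresh for $\bigcup_i\supp(u_i)$ but also fresh for $a$ and for $u$; by Lemma~\ref{lemm.sim.sigma} this stronger choice does not change the value being defined. Writing $t=\tf f^\iden(a_1,\dots,a_n)$, equivariance of $\tf f^\iden$ together with part~3 of Theorem~\ref{thrm.no.increase.of.supp} gives $\supp(t)\subseteq\{a_1,\dots,a_n\}$, so in particular $a\#t$. Since the $a_i$ avoid $\bigcup_i\supp(u_i)$, Definition~\ref{defn.sim.sub} lets me unfold the simultaneous action into a sequential one, so the left-hand side becomes $t[a_1\sm u_1]\cdots[a_n\sm u_n][a\sm u]$.

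The core step is to commute the trailing $[a\sm u]$ leftward past each $[a_i\sm u_i]$. One application of $\rulefont{\sigma\sigma}$, whose side-condition $a_i\#u$ holds by the freshness choice, rewrites a subterm $\cdots[a_i\sm u_i][a\sm u]$ into $\cdots[a\sm u][a_i\sm u_i[a\sm u]]$; iterating from $i{=}n$ down to $i{=}1$ carries $[a\sm u]$ to the front and replaces each $u_i$ by $u_i[a\sm u]$, yielding $t[a\sm u][a_1\sm u_1[a\sm u]]\cdots[a_n\sm u_n[a\sm u]]$. Now $\rulefont{\sigma\#}$ applies, since $a\#t$, to delete the innermost $[a\sm u]$, leaving $t[a_1\sm u_1[a\sm u]]\cdots[a_n\sm u_n[a\sm u]]$.

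Finally I would refold this into the right-hand side. By Lemma~\ref{lemm.fresh.sub} each $a_i\#u_j[a\sm u]$ (using $a_i\#u_j$ and $a_i\#u$), so the $a_i$ are legitimate fresh witnesses for the simultaneous action defining $\tf f^\iden(u_1[a\sm u],\dots,u_n[a\sm u])$; hence via Definition~\ref{defn.sim.sub} the sequential word above is exactly $t[a_1\sm u_1[a\sm u],\dots,a_n\sm u_n[a\sm u]]=\tf f^\iden(u_1[a\sm u],\dots,u_n[a\sm u])$ by Definition~\ref{defn.extend.f.P}. For parts~2 and~3 the identical argument runs in the $\sigma$-algebra $\mathcal L$ rather than in $\ns U$, invoking $\supp(a{=^\lmathcal}b)\subseteq\{a,b\}$ from Definition~\ref{defn.eq} in place of the support bound on $t$. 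The one thing to watch is the freshness bookkeeping: the whole argument hinges on picking the representing atoms fresh for $a$ and $u$ at the outset, which is precisely what makes both $\rulefont{\sigma\sigma}$ and the final $\rulefont{\sigma\#}$ applicable; once that is arranged the manipulation is routine.
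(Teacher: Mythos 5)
Your proof is correct and is essentially the paper's own argument: unfold the simultaneous action at representing atoms chosen fresh, commute the trailing $[a\sm u]$ leftward, garbage-collect it via \rulefont{\sigma\#} using the support bound $a\#\tf f^\iden(a_1,\dots,a_n)$ from equivariance and conservation of support, and refold. The only (cosmetic) difference is that you invoke \rulefont{\sigma\sigma} directly where the paper cites Lemma~\ref{lemm.sub.sub}, and you spell out the refolding freshness check $a_i\#u_j[a\sm u]$ via Lemma~\ref{lemm.fresh.sub}, which the paper leaves implicit.
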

\begin{proof}
By definition $\tf f^\iden(u_1,\dots,u_n)[a\sm u]$ is equal to $\tf f^\iden(a_1,\dots,a_n)[a_1\sm u_1]\dots[a_n\sm u_n][a\sm u]$ where we are free to choose $a_1,\dots,a_n$ fresh.
We use Lemma~\ref{lemm.sub.sub} to rewrite this as
$$
\tf f^\iden(a_1,\dots,a_n)[a\sm u][a_1\sm u_1[a\sm u]]\dots[a_n\sm u_n[a\sm u]] .
$$ 
By Theorem~\ref{thrm.no.increase.of.supp} $a\#\tf f^\iden(a_1,\dots,a_n)$ so by \rulefont{\sigma\#} we can garbage--collect the innermost $[a\sm u]$.
The result follows.

The cases of $=^\iden$ and $\tf P^\iden$ are similar.
\end{proof}

\begin{lemm}
\label{lemm.sub.commute}
\begin{itemize*}
\item
$\model{t[a\sm r]}=\model{t}[a\sm\model{r}]$.
\item
$\model{\phi[a\sm r]}=\model{\phi}[a\sm \model{r}]$.
\end{itemize*}
\end{lemm}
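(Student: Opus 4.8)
The plan is to prove both statements by structural induction, the first on the term $t$ and the second on the predicate $\phi$, with the predicate induction invoking the already-established term case for the atomic predicates. In every case the pattern is the same: I would unfold $\model{-}$ using Figure~\ref{fig.extend.interp} on the left, push the capture-avoiding substitution inwards using its standard defining clauses, apply the inductive hypothesis, and then reassemble on the right using the ``substitution commutes with the connective'' facts proved just above.

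First, the term cases. For $t=a$ both sides reduce to $\model{r}$: the left is immediate, and the right is $\tf{atm}_{\ns U}(a)[a\sm\model{r}]=\model{r}$ by \rulefont{\sigma a}. For $t=b$ with $b\neq a$ both sides equal $\model{b}=b$, the right using \rulefont{\sigma\#} since $a\#b$. For $t=\tf f(t_1,\dots,t_n)$ I would compute $\model{\tf f(t_1,\dots,t_n)[a\sm r]}=\tf f^\iden(\model{t_1[a\sm r]},\dots,\model{t_n[a\sm r]})$, apply the inductive hypothesis componentwise, and close with part~1 of Lemma~\ref{lemm.f.P.sub} to pull the outer $[a\sm\model{r}]$ back out.

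Next, the predicate cases, which are all routine reassemblies once the term case is in hand. The case $\tbot$ uses $\lbot[a\sm\model{r}]=\lbot$ (part~4 of Lemma~\ref{lemm.easy}); $r_1\teq r_2$ and $\tf P(r_1,\dots,r_n)$ use the term case together with parts~2 and~3 of Lemma~\ref{lemm.f.P.sub}; $\phi_1\tand\phi_2$ uses part~1 of Lemma~\ref{lemm.easy}; and $\tneg\phi_1$ uses the compatibility clause $(\lneg x)[a\sm u]=\lneg(x[a\sm u])$ of Definition~\ref{defn.fresh.continuous}.

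The one case needing care, and the main obstacle, is the universal quantifier $\phi=\tall b.\phi_1$. Here I would first use $\alpha$-equivalence to rename the bound atom so that $b\#a,r$; this lets the syntactic substitution commute with the binder, giving $(\tall b.\phi_1)[a\sm r]=\tall b.(\phi_1[a\sm r])$, so that $\model{(\tall b.\phi_1)[a\sm r]}=\freshwedge{b}\model{\phi_1[a\sm r]}$. Applying the inductive hypothesis turns this into $\freshwedge{b}(\model{\phi_1}[a\sm\model{r}])$, and to finish I must commute the fresh-finite limit past the $\sigma$-action using part~3 of Lemma~\ref{lemm.easy}, whose side condition is $b\#a,\model{r}$. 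That freshness holds: $b\neq a$ by the renaming, and $b\#\model{r}$ follows from the choice $b\#r$ together with Lemma~\ref{lemm.supp.lemma} ($\supp(\model{r})\subseteq\fa(r)$). This yields $(\freshwedge{b}\model{\phi_1})[a\sm\model{r}]=\model{\tall b.\phi_1}[a\sm\model{r}]$, as required. The only real subtlety is the bookkeeping that ensures the bound atom is fresh enough for the side condition of Lemma~\ref{lemm.easy}(3); everything else is mechanical.
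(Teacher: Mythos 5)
Your proof is correct and takes essentially the same route as the paper: the same structural induction, closed by \rulefont{\sigma a} for atoms, Lemma~\ref{lemm.f.P.sub} for $\tf f$, $\tf P$ and $\teq$, parts~1 and~3 of Lemma~\ref{lemm.easy} for $\tand$ and $\tall$, and compatibility (Definition~\ref{defn.fresh.continuous}) for $\tneg$ --- your expanded quantifier case, renaming the bound atom and checking $b\#\model{r}$ via Lemma~\ref{lemm.supp.lemma}, is precisely the detail the paper compresses into its citation of Lemma~\ref{lemm.easy}(3). The only cosmetic divergence is at $\tbot$, where you cite Lemma~\ref{lemm.easy}(4) while the paper uses the equivalent one-line argument from Theorem~\ref{thrm.no.increase.of.supp} and \rulefont{\sigma\#}, an alternative the paper itself flags in a footnote.
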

\begin{proof}
By a routine induction on $t$ and $\phi$.
\begin{itemize*}
\item
\emph{The case of $a$.}\quad
$a[a\sm r]=r$ and by \rulefont{\sigma a} $\tf{atm}_{\ns U}(a)[a\sm \model{r}]=\model{r}$.
\item
\emph{The cases of $\tf f(r_1,\dots,r_n)$, $\tf P(r_1,\dots,r_n)$, and $r_1\teq r_2$.}\quad
This is Lemma~\ref{lemm.f.P.sub}.
\item
\emph{The case of $\tbot$.}\quad
By Theorem~\ref{thrm.no.increase.of.supp} $\supp(\tbot)=\varnothing$.
By \rulefont{\sigma\#} $\tbot[a\sm x]=\tbot$.
\item
\emph{The cases of $\phi\tand\psi$ and $\tall a.\phi$.}\quad
From parts~1 and~3 of Lemma~\ref{lemm.easy}.
\item
\emph{The case of $\tneg\phi$.}\quad
Direct from our assumption that the $\sigma$-algebra structure is compatible (Definition~\ref{defn.fresh.continuous}).
\end{itemize*}
\end{proof}

\begin{defn}
Extend the interpretation $\interp I$ of Definition~\ref{defn.interp.I} to sequents by defining
$$
\model{\Phi\cent\Psi} \quad\text{is the assertion}\quad \bigwedge \{\model{\phi} \mid \phi{\in}\Phi\} \leq \bigvee \{\model{\psi} \mid \psi\in\Psi\} .
$$
\end{defn}

\begin{frametxt}
\begin{thrm}[Soundness]
\label{thrm.fol.sound}
If $\Phi\cent\Psi$ is derivable then $\model{\Phi\cent\Psi}$ is true.
\end{thrm}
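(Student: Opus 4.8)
The plan is to proceed by induction on the derivation of $\Phi\cent\Psi$ using the rules of Figure~\ref{fig.FOL}, verifying in each case that truth of the premises' interpretations forces truth of the conclusion's. Throughout I abbreviate $\bigwedge\model{\Phi}$ for $\bigwedge\{\model{\phi}\mid\phi{\in}\Phi\}$ and $\bigvee\model{\Psi}$ for $\bigvee\{\model{\psi}\mid\psi{\in}\Psi\}$, so that $\model{\Phi\cent\Psi}$ is the assertion $\bigwedge\model{\Phi}\leq\bigvee\model{\Psi}$.

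The structural and propositional rules reduce to lattice calculations in $\mathcal L$. \rulefont{Hyp} and \rulefont{\tbot L} are immediate, the latter using $\model{\tbot}=\lbot$. \rulefont{{\tand}L} is immediate from $\model{\phi_1\tand\phi_2}=\model{\phi_1}\land\model{\phi_2}$, which leaves the two left-hand meets literally equal. \rulefont{{\tand}R}, \rulefont{\tneg L}, and \rulefont{\tneg R} follow from distributivity (Definition~\ref{defn.distrib}) together with the complement laws $x\land\lneg x=\lbot$ and $x\lor\lneg x=\ltop$ (Definition~\ref{defn.complement}); for instance \rulefont{{\tand}R} uses the dual of part~1 of Definition~\ref{defn.distrib}, namely $(\model{\psi_1}\lor\bigvee\model{\Psi})\land(\model{\psi_2}\lor\bigvee\model{\Psi})=(\model{\psi_1}\land\model{\psi_2})\lor\bigvee\model{\Psi}$, so that a lower bound for both joins is a lower bound for the conclusion.

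The equality rules use Definition~\ref{defn.eq}. For \rulefont{{\teq}R}, condition~1 gives $\model{r\teq r}=\ltop$, so adjoining $r\teq r$ to the left context leaves $\bigwedge\model{\Phi}$ unchanged and premise and conclusion have the same interpretation. For \rulefont{{\teq}L}, Lemma~\ref{lemm.sub.commute} gives $\model{\phi[a\sm r]}=\model{\phi}[a\sm\model{r}]$, and condition~2 of Definition~\ref{defn.eq} instantiated at $u=\model{r}$, $v=\model{r'}$, $z=\model{\phi}$ yields $\model{r'\teq r}\land\model{\phi}[a\sm\model{r}]=\model{r'\teq r}\land\model{\phi}[a\sm\model{r'}]$, so the left-hand meets of premise and conclusion coincide once $\model{r'\teq r}$ is present.

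For the quantifier rules I would invoke the fresh-finite machinery. In \rulefont{\tall L}, Proposition~\ref{prop.char.freshwedge} (applicable since a compatible $\sigma$-action is monotone, by part~2 of Lemma~\ref{lemm.easy}) identifies $\model{\tall a.\phi}=\freshwedge{a}\model{\phi}$ with $\bigwedge_{u}\model{\phi}[a\sm u]$, which lies below $\model{\phi}[a\sm\model{r}]=\model{\phi[a\sm r]}$ by Lemma~\ref{lemm.sub.commute}; monotonicity of $\land$ then reduces the conclusion to the premise. I expect the main obstacle to be \rulefont{\tall R}, precisely where the novelty of fresh-finite limits enters. Here the side condition $a\notin\fa(\Phi\cup\Psi)$ is translated, via Lemma~\ref{lemm.supp.lemma}, into $a\#\bigwedge\model{\Phi}$ and $a\#\bigvee\model{\Psi}$. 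Writing $v=\bigwedge\model{\Phi}$ and $w=\bigvee\model{\Psi}$, the premise reads $v\leq\model{\psi}\lor w$ with $a\#v$, so part~1 of Lemma~\ref{lemm.fresh.glb.sub} gives $v\leq\freshwedge{a}(\model{\psi}\lor w)$, and part~2 of Definition~\ref{defn.distrib} (applicable since $a\#w$) rewrites the right-hand side as $(\freshwedge{a}\model{\psi})\lor w=\model{\tall a.\psi}\lor\bigvee\model{\Psi}$, which is the conclusion. The delicate points are confirming that the freshness side condition propagates correctly through the meet and join defining $v$ and $w$, and that distributivity is invoked only with its freshness hypothesis genuinely satisfied.
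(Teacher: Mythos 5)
Your proposal is correct and takes essentially the same route as the paper's proof: a rule-by-rule verification using $\model{\tbot}=\lbot$, conditions~1 and~2 of Definition~\ref{defn.eq} for the equality rules, standard facts about bounded distributive complemented lattices for the propositional rules, and Lemmas~\ref{lemm.fresh.glb.sub} and~\ref{lemm.supp.lemma} for the quantifier rules---your appeal to Proposition~\ref{prop.char.freshwedge} for \rulefont{\tall L} is interchangeable with part~2 of Lemma~\ref{lemm.fresh.glb.sub}, and you usefully make explicit the dependence on Lemma~\ref{lemm.sub.commute} and on part~2 of Definition~\ref{defn.distrib} that the paper's terse proof leaves implicit. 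One cosmetic slip: in \rulefont{{\teq}L} your instantiation $u=\model{r}$, $v=\model{r'}$ literally produces $\model{r\teq r'}$ rather than $\model{r'\teq r}$; swap $u$ and $v$, or invoke symmetry of equality (noted after Proposition~\ref{prop.unique}).
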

\end{frametxt}
\begin{proof}
It suffices to check validity of the rules in Figure~\ref{fig.FOL}.
\begin{itemize*}
\item
\rulefont{\tbot L} is valid because $\model{\tbot}=\lbot$ in Definition~\ref{defn.interp.I}.
\item
\rulefont{{\teq}R} is valid from condition~1 of Definition~\ref{defn.eq}.
\item
\rulefont{{\teq}L} is valid from condition~2 of Definition~\ref{defn.eq}.
\item
\rulefont{{\tand}L}, \rulefont{{\tand}R}, \rulefont{\tneg L}, and \rulefont{\tneg R} are by standard facts of Boolean algebras (bounded distributive complemented lattices). 
\item
\rulefont{\tall R} and \rulefont{\tall L} are from Lemma~\ref{lemm.fresh.glb.sub} and Lemma~\ref{lemm.supp.lemma}.
\qedhere\end{itemize*} 
\end{proof}

\section{The $\sigma$-powerset as a FOLeq algebra}
\label{sect.sigma.foleq}

Suppose $\ns P$ is an $\amgis$-algebra over a termlike $\sigma$-algebra $\ns U$.
We know by Proposition~\ref{prop.pow.sub.algebra} that $\powsigma(\ns P)$ from Definition~\ref{defn.powsigma} is a $\sigma$-algebra.
It is also very easy to see that $\powsigma(\ns P)$ with the pointwise action $\pi\act X=\{\pi\act p\mid p\in X\}$ (Definition~\ref{defn.sub.sets}), ordered by subset inclusion $X\subseteq Y$, is a nominal poset (Definition~\ref{defn.nom.poset}).

We want now to show that $\powsigma(\ns P)$ is a FOLeq algebra (Theorem~\ref{thrm.powsigma.FOLeq}).
That is, we want to show that $\powsigma(\ns P)$
\begin{itemize*}
\item
is finitely fresh-complete (has $\freshwedge{A}X$),
\item
is complemented ($|\ns P|\setminus X$), 
\item
has an equality, 
\item
and the $\sigma$-action is compatible (Definition~\ref{defn.fresh.continuous}).
\end{itemize*} 
It will be obvious that complement is given by sets complement.
Equality is non-evident but very natural and is discussed in Subsection~\ref{subsect.powsigma.equality}.
For $\freshwedge{A}X$ we use Proposition~\ref{prop.ffc.char} to split the problem into parts.
We consider this next.

\subsection{Intersection, complement, top element}

Suppose $\ns P=(|\ns P|,\act,\ns U,\tf{amgis}_{\ns P})$ is an $\amgis$-algebra over a termlike $\sigma$-algebra $\ns U$. 
Recall the definitions of $\nompow(\ns P)$ from Subsection~\ref{subsect.finsupp.pow} (finitely-supported powerset)
and of $\powsigma(\ns P)$ from Definition~\ref{defn.powsigma} (the $\sigma$-powerset).

Lemmas~\ref{lemm.sub.bigcap} and~\ref{lemm.pow.nu.closed} are perhaps not entirely obvious, but all the cases follow the same pattern of pointwise calculations on sets.
These cover the technically simplest cases; in Subsection~\ref{subsect.powsigma.quant} we move on to quantification.

\begin{lemm}
\label{lemm.sub.bigcap}
Suppose $\ns P$ is an $\amgis$-algebra over a termlike $\sigma$-algebra $\ns U$.
Suppose $u{\in}|\ns U|$ and $Y,Y'{\in}|\nompow(\ns P)|$ and suppose 
$\mathcal X\subseteq|\nompow(\ns P)|$ is strictly finitely supported (Definition~\ref{defn.strictpow}).
Then:
\begin{enumerate*}
\item
$(|\ns P|{\setminus} Y)[a\sm u]=|\ns P|\setminus(Y[a\sm u])$.

In words: $\sigma$ commutes with sets complement.
\item\label{sub.bigcap.sfs}
$(\bigcap_{X{\in}\mathcal X} X)[a\sm u]=\bigcap_{X{\in}\mathcal X}(X[a\sm u])$
and
$(\bigcup_{X{\in}\mathcal X} X)[a\sm u]=\bigcup_{X{\in}\mathcal X}(X[a\sm u])$.

In words: $\sigma$ commutes with strictly finitely supported sets intersections and unions.

In particular, $\sigma$ commutes with \emph{finite} intersection and union.
\item
If $Y\subseteq Y'$ then $Y[a\sm u]\subseteq Y'[a\sm u]$.

In words: $\sigma$ is monotone.
\end{enumerate*}
\end{lemm}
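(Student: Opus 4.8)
The plan is to verify each part extensionally, fixing an arbitrary $p\in|\ns P|$ and unfolding membership in the $\sigma$-action on subsets directly through its defining $\new$-quantified formula (Definition~\ref{defn.sub.sets}): by definition $p\in X[a\sm u]$ holds iff $\New{c}\,p[u\ms c]\in(c\ a)\act X$. Since cofinitely many $c$ are fresh for the finitely-supported data in play, all reasoning can be carried out ``for fresh $c$'', leaning on the some/any property of the $\new$-quantifier (Theorem~\ref{thrm.new.equiv}). The whole proof then reduces to showing that $\New{c}$ commutes with sets complement, with (strictly finitely supported) intersections and unions, and with $\subseteq$ --- all under the pointwise permutation action, which itself commutes with these Boolean operations because each swapping $(c\ a)$ acts as a bijection on $|\ns P|$.

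For part~1 I would compute $p\in(|\ns P|{\setminus}Y)[a\sm u]$ iff $\New{c}\,p[u\ms c]\in(c\ a)\act(|\ns P|{\setminus}Y)=|\ns P|{\setminus}(c\ a)\act Y$, i.e. $\New{c}\,p[u\ms c]\notin(c\ a)\act Y$; whereas $p\notin Y[a\sm u]$ iff $\neg\New{c}\,p[u\ms c]\in(c\ a)\act Y$. So the equality is exactly the assertion that $\New{c}\,\neg\Phi(c)\liff\neg\New{c}\,\Phi(c)$ for the predicate $\Phi(c)=(p[u\ms c]\in(c\ a)\act Y)$, which is precisely the some/any property (the predicate is eventually constant in $c$); Theorem~\ref{thrm.new.equiv} closes the case. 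Part~3 is then immediate and needs no some/any reasoning at all: from $Y\subseteq Y'$ we get $(c\ a)\act Y\subseteq(c\ a)\act Y'$, so $p[u\ms c]\in(c\ a)\act Y$ implies $p[u\ms c]\in(c\ a)\act Y'$, and applying $\New{c}$ to both sides yields $Y[a\sm u]\subseteq Y'[a\sm u]$. (Alternatively part~3 follows from the finite instance of part~2 via $Y\subseteq Y'\liff Y\cap Y'=Y$.)

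Part~2 is where the real work lies, and where the \emph{strict} finite support hypothesis is used. Writing $Z=\bigcap_{X\in\mathcal X}X$ and $\Phi_X(c)=(p[u\ms c]\in(c\ a)\act X)$, I would note $(c\ a)\act Z=\bigcap_{X\in\mathcal X}(c\ a)\act X$, so that $p\in Z[a\sm u]$ iff $\New{c}\,\Forall{X\in\mathcal X}\Phi_X(c)$, while $p\in\bigcap_{X\in\mathcal X}X[a\sm u]$ iff $\Forall{X\in\mathcal X}\New{c}\,\Phi_X(c)$. The equivalence to prove is thus $\New{c}\,\Forall{X}\Phi_X(c)\liff\Forall{X}\New{c}\,\Phi_X(c)$. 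The forward direction is free, but the converse is the crux and the main obstacle: an infinite intersection of individually cofinite conditions need not be cofinite, because an infinite union of finite exception-sets can be infinite. The assumption that $\mathcal X$ is strictly finitely supported is exactly what rescues this. By part~4 of Lemma~\ref{lemm.strict.support} every $X\in\mathcal X$ satisfies $\supp(X)\subseteq\supp(\mathcal X)$, so a single finite freshness threshold (built from $\supp(\mathcal X)$, $\supp(u)$, $\{a\}$, and the support of $p$) works simultaneously for all members $X$. Choosing one $c$ fresh beyond this uniform threshold and invoking some/any once per $X$ then gives $\Forall{X}\Phi_X(c)$ for every fresh $c$ at once, which is $\New{c}\,\Forall{X}\Phi_X(c)$. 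The union case is dual --- commuting $\New{c}$ with $\bigcup$, equivalently with the existential over $\mathcal X$, where now the nontrivial direction is the forward one --- handled by the identical uniform-threshold argument, and the finite intersection and union are the special case in which uniformity is automatic.
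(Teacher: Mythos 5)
Your overall strategy is the paper's: fix $p\in|\ns P|$, unfold $p\in X[a\sm u]$ through the $\new$-quantified formula of Definition~\ref{defn.sub.sets} (equivalently Proposition~\ref{prop.amgis.iff}), push the swapping $(c\ a)$ through complement, intersection and union by equivariance (Theorem~\ref{thrm.equivar}), and reduce everything to exchanging $\New{c}$ with $\neg$, $\bigcap$ and $\bigcup$, using Lemma~\ref{lemm.strict.support} to get a uniform freshness bound over $\mathcal X$ in part~2. Your direct argument for part~3 (monotonicity of $\New{c}$ under pointwise implication) is sound and needs no support assumptions; the paper instead routes part~3 through the finite-intersection instance of part~2, which you also note as an alternative.

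There is, however, one concrete flaw in your justifications. You invoke the some/any property (Theorem~\ref{thrm.new.equiv}) with $p$ among the parameters --- ``the predicate is eventually constant in $c$'' in part~1 --- and in part~2 you build your freshness threshold ``from $\supp(\mathcal X)$, $\supp(u)$, $\{a\}$, \emph{and the support of} $p$''. This is not available: elements of an $\amgis$-algebra live in a set with a permutation action and need \emph{not} have finite support, a point the paper stresses exactly here (Remark~\ref{rmrk.p.not.finite.support.1} and the footnote to Proposition~\ref{prop.amgis.iff}), and Remark~\ref{rmrk.p.not.finite.support} explicitly warns that for such $p$ the quantifier $\new$ does \emph{not} commute with negation in general. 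Theorem~\ref{thrm.new.equiv} requires its parameters to be finitely supported, so it cannot deliver your ``eventually constant'' claim, and $\supp(p)$ may be infinite, so your uniform threshold need not be finite. The paper's proof is deliberately worded so as never to touch the support of $p$: part~1 is a direct chain through Proposition~\ref{prop.amgis.iff} and equivariance; the $\forall/\new$ exchange in part~2 is justified solely by the fact (Lemma~\ref{lemm.strict.support}) that $c\#\mathcal X$ if and only if $c\#X$ for every $X\in\mathcal X$, i.e.\ by uniformity over $\mathcal X$ alone; and for \emph{finite} intersections no exchange principle is needed at all, since a finite intersection of cofinite sets is cofinite (dually for finite unions via the finite additivity of the exception sets). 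So your skeleton is right, but the appeals to some/any with $p$ as a parameter, and in particular the phrase ``the support of $p$'', must be excised and replaced by the support-free wording the paper uses.
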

\begin{proof}
We reason as follows:
$$
\begin{array}{r@{\ }l@{\qquad}l}
p\in (|\ns P|{\setminus}X)[a\sm u]
\liff&
\New{c}p[u\ms c]\in (c\ a)\act(|\ns P|{\setminus}X)
&\text{Proposition~\ref{prop.amgis.iff}}
\\
\liff&
\New{c}p[u\ms c]\in |\ns P|{\setminus}(c\ a)\act X
&\text{Theorem~\ref{thrm.equivar}}
\\
\liff&
\New{c}p[u\ms c]\not\in (c\ a)\act X
&\text{Fact}
\\
\liff&
p\not\in X[a\sm u]
&\text{Proposition~\ref{prop.amgis.iff}}
\\
\liff&
p\in |\ns P|{\setminus}(X[a\sm u])
&\text{Fact}
\\[1ex]
p\in (\bigcap_{X{\in}\mathcal X} X)[a\sm u]
\liff&
\New{c}p[u\ms c]\in \bigcap_{X{\in}\mathcal X}(c\ a)\act X
&\text{Prop~\ref{prop.amgis.iff}, Thm~\ref{thrm.equivar}}
\\
\liff&
\New{c}\Forall{X{\in}\mathcal X}p[u\ms c]\in (c\ a)\act X
&\text{Fact}
\\[1.5ex]
p\in \bigcap_{X{\in}\mathcal X}(X[a\sm u])
\liff&
\Forall{X{\in}\mathcal X}p\in X[a\sm u]
&\text{Fact}
\\
\liff&
\Forall{X{\in}\mathcal X}\New{c}p[u\ms c]\in (c\ a)\act X
&\text{Proposition~\ref{prop.amgis.iff}}
\end{array}
$$
We note that by Lemma~\ref{lemm.strict.support},\ $c\#\mathcal X$ if and only if $c\#X$ for every $X\in\mathcal X$.
This allows us to swap the $\forall$ and the $\new$ quantifier, and the result follows.

Part~3 follows using part~2 and the fact that $Y[a\sm u]\subseteq Y'[a\sm u]$ if and only if $Y[a\sm u]\cap Y'[a\sm u]=Y[a\sm u]$.
\end{proof}

\begin{lemm}
\label{lemm.pow.nu.closed}
\begin{itemize*}
\item
If ${X,Y{\in}|\nompow(\ns P)|}$ then
$X\cap Y{\in}|\nompow(\ns P)|$
and \\
if ${X,Y{\in}|\powsigma(\ns P)|}$ then
${X\cap Y{\in}|\powsigma(\ns P)|}$.
\item
If ${X{\in}|\nompow(\ns P)|}$ then $|\ns P|{\setminus}X{\in}|\nompow(\ns P)|$, and 
\\
if ${X{\in}|\powsigma(\ns P)|}$ then $|\ns P|{\setminus}X{\in}|\powsigma(\ns P)|$. 
\item
$|\ns P|{\in}|\nompow(\ns P)|$ and 
$|\ns P|{\in}|\powsigma(\ns P)|$.
\end{itemize*}
\end{lemm}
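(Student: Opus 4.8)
The plan is to treat the three closure claims (finite intersection, complement, and the top element $|\ns P|$) uniformly. For each of them I would first establish membership in $\nompow(\ns P)$, and then, for the $\powsigma(\ns P)$ half, additionally verify conditions~\ref{item.fresh.powsigma} and~\ref{item.alpha.powsigma} of Definition~\ref{defn.powsigma}.

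For the $\nompow(\ns P)$ half of each claim I would appeal directly to conservation of support, part~3 of Theorem~\ref{thrm.no.increase.of.supp}. The operations $X\cap Y$, $|\ns P|\setminus X$, and the full set $|\ns P|$ are all specified by equivariant functions in the language of ZFA set theory, so $\supp(X\cap Y)\subseteq\supp(X)\cup\supp(Y)$ and $\supp(|\ns P|\setminus X)\subseteq\supp(X)$ (using that $|\ns P|$ has empty support, being preserved by every permutation), while $|\ns P|$ is itself equivariant. Each of these is finite, so all three sets lie in $|\nompow(\ns P)|$.

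For the $\powsigma(\ns P)$ half I would verify the two conditions of Definition~\ref{defn.powsigma}. The nominal fact used repeatedly is that a finite conjunction of $\new$-quantified statements is again $\new$-quantified, since a finite intersection of cofinite sets of atoms is cofinite. For $X\cap Y$, membership unfolds as $p\in X\cap Y\liff(p\in X)\land(p\in Y)$, and likewise for $p[u\ms a]$ and for $(b\ a)\act p$; since $X$ and $Y$ each satisfy conditions~\ref{item.fresh.powsigma} and~\ref{item.alpha.powsigma}, I can conjoin their $\new$-quantified equivalences (over $a$, resp.\ over $b$) and read off the corresponding conditions for $X\cap Y$. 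For the complement I would use that $p[u\ms a]$ and $(b\ a)\act p$ always lie in $|\ns P|$, so that membership in $|\ns P|\setminus X$ is literally non-membership in $X$; negating the two equivalences for $X$ then yields the two conditions for $|\ns P|\setminus X$. For $|\ns P|$ both conditions hold trivially, since every displayed membership assertion is simply true.

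The only genuine work, and hence the main obstacle, is the bookkeeping of the $\new$-quantifier in the $X\cap Y$ case: one must check that the finitely many atoms excluded for $X$ together with those excluded for $Y$ still leave cofinitely many good atoms, so that the combined equivalence holds for cofinitely many $a$ (resp.\ $b$). This is exactly closure of $\new$ under finite conjunction, and everything else reduces to elementary facts about $\in$, $\cap$, and set complement.
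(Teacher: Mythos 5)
Your proposal is correct, but it verifies the $\powsigma$ half by a different route than the paper. The paper's proof does not touch the raw conditions of Definition~\ref{defn.powsigma} in their pointwise form; instead it factors through the $\sigma$-action on subsets: it invokes Lemma~\ref{lemm.sub.bigcap} (the $\sigma$-action commutes with finite intersections and with sets complement) and checks the conditions in the equivalent formulation of Lemma~\ref{lemm.X.sub.fresh.alpha}, namely $(X\cap Y)[a\sm u]=X\cap Y$ for fresh $a$ and $(X\cap Y)[a\sm b]=(b\ a)\act(X\cap Y)$ for fresh $b$, closing with Theorem~\ref{thrm.equivar} (and it does this explicitly only for $X\cap Y$, leaving complement and top implicit). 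You instead verify conditions~\ref{item.fresh.powsigma} and~\ref{item.alpha.powsigma} directly at the level of the $\amgis$-action on elements $p$, using only that the $\new$-quantifier is closed under finite conjunction (a finite union of finite sets of bad atoms is finite) and that the inner biconditionals may be negated without changing the set of good atoms; your observation that $p[u\ms a]$ and $(b\ a)\act p$ always lie in $|\ns P|$ is exactly what makes the complement case legitimate. Your argument is more elementary and self-contained: it needs neither Lemma~\ref{lemm.sub.bigcap} nor the $\New{c}$-unfolding of the $\sigma$-action from Proposition~\ref{prop.amgis.iff}, and it handles all three operations uniformly. What the paper's route buys in exchange is economy within the larger development: Lemma~\ref{lemm.sub.bigcap} is needed anyway to establish compatibility of the $\sigma$-action in Theorem~\ref{thrm.powsigma.FOLeq}, so reusing it here makes the closure properties one-line consequences of commutation facts already on the shelf. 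Your treatment of the $\nompow$ half via conservation of support (part~3 of Theorem~\ref{thrm.no.increase.of.supp}) coincides with the paper's.
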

\begin{proof}
Finite support follows from Theorem~\ref{thrm.no.increase.of.supp}.
All the results involving finitely-supported powerset $\nompow(\text{-})$ follow immediately.

It remains to check conditions~\ref{item.fresh.powsigma} and~\ref{item.alpha.powsigma} of Definition~\ref{defn.powsigma} for the $\sigma$-powerset $\powsigma(\text{-})$.
We do this just for $X\cap Y$:
\begin{enumerate*}
\item
\emph{If $a$ is fresh (so $a\#X,Y,u$) then $(X\cap Y)[a\sm u]=X\cap Y$.}
\\ We reason as follows:
\begin{tab3}
(X\cap Y)[a{\sm}u]=&(X[a{\sm}b])\cap(Y[a{\sm}u])
&\text{Lemma~\ref{lemm.sub.bigcap}}
\\
=&X\cap Y
&\text{Cond~\ref{item.fresh.powsigma} of Def~\ref{defn.powsigma}}
\end{tab3}
\item
\emph{If $b$ is fresh (so $b\#X,Y$) then $(X\cap Y)[a\sm b]=(b\ a)\act (X\cap Y)$.}
\\ We reason as follows:
\begin{tab3}
(X\cap Y)[a{\sm}b]=&(X[a{\sm}b])\cap(Y[a{\sm}b])
&\text{Lemma~\ref{lemm.sub.bigcap}}
\\
=&((b\ a)\act X)\cap((b\ a)\act Y)
&\text{Cond~\ref{item.alpha.powsigma} of Def~\ref{defn.powsigma}}
\\
=&(b\ a)\act(X\cap Y)
&\text{Theorem~\ref{thrm.equivar}}
\qedhere\end{tab3}
\end{enumerate*}
\end{proof}

\begin{prop}
\label{prop.powsigma.bool}
$\varnothing$ is the bottom element and $|\ns P|$ is the top element in $\nompow(\ns P)$ and $\powsigma(\ns P)$ ordered by subset inclusion.
$X\cap Y$ is the greatest lower bound for $\{X,Y\}$.
$|\ns P|{\setminus}X$ is the complement of $X$.
Thus $\nompow(\ns P)$ and $\powsigma(\ns P)$ ordered by subset inclusion form Boolean algebras, and by construction $\powsigma(\ns P)$ is a subalgebra of $\nompow(\ns P)$.
\end{prop}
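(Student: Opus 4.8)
The plan is to observe that all the structure is inherited from the full powerset of $|\ns P|$, which under subset inclusion is a Boolean algebra in the standard way: $\varnothing$ is the bottom, $|\ns P|$ is the top, $\cap$ computes greatest lower bounds, $\cup$ computes least upper bounds, and $|\ns P|{\setminus}X$ is the set-theoretic complement. The task is therefore twofold---to confirm that under the subset ordering these concrete set operations really do realise the abstract lattice operations of Notation~\ref{nttn.lall} and Definition~\ref{defn.complement}, and to check that $\nompow(\ns P)$ and $\powsigma(\ns P)$ are closed under them so that the Boolean structure restricts to each.

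First I would dispatch the lattice-theoretic claims, which are elementary facts of sets. For any $X\subseteq|\ns P|$ we have $\varnothing\subseteq X\subseteq|\ns P|$, so $\varnothing$ and $|\ns P|$ are respectively the least and greatest elements; and $X\cap Y$ is the $\subseteq$-greatest set contained in both $X$ and $Y$, so it is the greatest lower bound of $\{X,Y\}$ in the sense of Notation~\ref{nttn.lall}, identifying $\land$ with $\cap$ and, symmetrically, $\lor$ with $\cup$. Reading Definition~\ref{defn.complement} with $\lbot=\varnothing$ and $\ltop=|\ns P|$, the complement conditions $X\land X'=\lbot$ and $X\lor X'=\ltop$ become $X\cap X'=\varnothing$ and $X\cup X'=|\ns P|$, which are uniquely solved by $X'=|\ns P|{\setminus}X$.

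The closure conditions are exactly what Lemma~\ref{lemm.pow.nu.closed} supplies: both $\nompow(\ns P)$ and $\powsigma(\ns P)$ contain $|\ns P|$ and are closed under $\cap$ and under set complement $|\ns P|{\setminus}(\text{-})$. Closure under $\cup$ then follows by de Morgan, since $X\cup Y=|\ns P|{\setminus}((|\ns P|{\setminus}X)\cap(|\ns P|{\setminus}Y))$, and $\varnothing=|\ns P|{\setminus}|\ns P|$ lies in both. Because every Boolean operation stays inside these collections and agrees with the operation computed in the full powerset, both $\nompow(\ns P)$ and $\powsigma(\ns P)$ are Boolean algebras. Finally, every $X{\in}|\powsigma(\ns P)|$ lies in $|\nompow(\ns P)|$ (it satisfies the extra conditions of Definition~\ref{defn.powsigma} on top of being finitely supported), so $\powsigma(\ns P)\subseteq\nompow(\ns P)$ as a sub-collection on which the operations coincide, making $\powsigma(\ns P)$ a subalgebra.

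I do not expect a genuine obstacle here: the real work---the pointwise $\new$-quantifier calculations showing that intersection and complement preserve the defining conditions~\ref{item.fresh.powsigma} and~\ref{item.alpha.powsigma} of $\powsigma(\ns P)$---has already been carried out in Lemma~\ref{lemm.pow.nu.closed}. The only point requiring care is bookkeeping: confirming that the abstract symbols $\land,\lor,\ltop,\lbot,\lneg$ of the nominal-poset vocabulary are correctly matched with $\cap,\cup,|\ns P|,\varnothing,|\ns P|{\setminus}(\text{-})$, after which the Boolean-algebra axioms hold simply because they already hold in the ambient full powerset.
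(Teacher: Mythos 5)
Your proof is correct and follows the same route as the paper, whose entire proof is ``By standard sets calculations'': the genuine content (closure of $\nompow(\ns P)$ and $\powsigma(\ns P)$ under intersection and complement) was already established in Lemma~\ref{lemm.pow.nu.closed}, which you correctly identify and invoke, and the rest is the elementary bookkeeping you describe. Your de Morgan derivation of closure under $\cup$ and the check that meets computed in the ambient full powerset remain meets in the closed subcollections are exactly the ``standard sets calculations'' the paper leaves implicit.
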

\begin{proof}
By standard sets calculations.
\end{proof}

\subsection{Quantification}
\label{subsect.quantification}
\label{subsect.powsigma.quant}

In this subsection we explore what concrete sets operation corresponds to the fresh-finite limits $\freshwedge{a}$ of Notation~\ref{nttn.lall}; in other words we explore quantification.

Suppose $\ns P=(|\ns P|,\act,\ns U,\tf{amgis}_{\ns P})$ is an $\amgis$-algebra over a termlike $\sigma$-algebra $\ns U$. 
Recall the definitions of $\nompow(\ns P)$ from Subsection~\ref{subsect.finsupp.pow} (finitely-supported powerset)
and of $\powsigma(\ns P)$ from Definition~\ref{defn.powsigma} (the $\sigma$-powerset).

\begin{defn}
\label{defn.nu.U}
If $X\in|\nompow(\ns P)|$ then define
\begin{frameqn}
\freshcap{a}X=
\bigcap\{ X[a\sm u]\mid u{\in}|\ns U|\} .
\end{frameqn}
\end{defn}

We work towards proving $\freshcap{a}X\in|\nompow(\ns P)|$; this is Theorem~\ref{thrm.all.closed}.
We need the key technical Lemma~\ref{lemm.technical} for Proposition~\ref{prop.all.sub.commute}:

\begin{lemm}
\label{lemm.technical}
Suppose $p\in|\ns P|$ and $X{\in}\nompow(\ns P)$ and $v{\in}|\ns U|$ and suppose $a\#v$.
Then
$$
\begin{array}{l}
\New{b'}\Forall{u{\in}|\ns U|}\New{a'} p[v\ms b'][u\ms a']\in (b'\,b)\act(a'\,a)\act X
\quad\text{if and only if}
\\
\Forall{u{\in}|\ns U|}\New{b'}\New{a'} p[u\ms a'][v\ms b']\in (b'\,b)\act(a'\,a)\act X .
\end{array}
$$
\end{lemm}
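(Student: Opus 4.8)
The plan is to read both sides as membership assertions about the fixed $p$, and to reconcile them by separating out two distinct discrepancies: the order of the two $\amgis$-actions ($p[v\ms b'][u\ms a']$ versus $p[u\ms a'][v\ms b']$), and the order of the outer $\new$-quantifier over $b'$ versus the inner $\Forall{u}$. First I would note that both sides unfold, purely mechanically via Proposition~\ref{prop.amgis.iff} and Definition~\ref{defn.sub.sets} (reading each innermost $\New{a'}\, (\cdots)[u\ms a']\in(a'\,a)\act(\cdots)$ as a $\sigma$-action on subsets), into clean membership statements: the left-hand side becomes $p\in(\freshcap{a}X)[b\sm v]$ and the right-hand side becomes $p\in\freshcap{a}(X[b\sm v])$. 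Along the way the differing action orders are matched using Lemma~\ref{lemm.amgis.swap}, whose hypotheses $a'\#v$ and $b'\#u$ are supplied by freshness of the $\new$-bound atoms; and the hypothesis $a\#v$ is exactly what licenses commuting the permutation $(a'\,a)$ past $[b\sm v]$ in the right-hand unfolding. This conceptual reduction isolates the content as the single set identity $(\freshcap{a}X)[b\sm v]=\freshcap{a}(X[b\sm v])$ for $a\#v$ — which is essentially what gets packaged as Proposition~\ref{prop.all.sub.commute} downstream, so for a self-contained proof of the lemma I would instead perform the quantifier interchange directly.

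One direction is routine, namely the implication from $\New{b'}\Forall{u}(\cdots)$ to $\Forall{u}\New{b'}(\cdots)$: a single cofinite set of $b'$ witnesses the left-hand statement simultaneously for every $u$, and for each fixed $u$ one further restricts to the cofinitely many $b'$ with $b'\#u$, where Lemma~\ref{lemm.amgis.swap} makes the two bodies literally coincide. The converse is the crux. The difficulty is that the hypothesis only furnishes, for each $u$ \emph{separately}, a cofinite set of admissible $b'$, and a priori the union over all $u$ of the complementary finite ``bad'' sets could exhaust $\mathbb A$, so there is no automatic uniform cofinite family.

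To upgrade the per-$u$ families to a single uniform one I would invoke the some/any property (Theorem~\ref{thrm.new.equiv}): since $X$ and $v$ are finitely supported and $a,b$ are atoms, it suffices to verify the universally-quantified body for \emph{one} $b'$ chosen fresh for $X,v,a,b$ — noting that $p$ itself need not be finitely supported (Remark~\ref{rmrk.p.not.finite.support.1}), so the some/any reasoning is applied only to the finitely supported data and never to $p$. For that fixed fresh $b'$ and an arbitrary $u$, including the awkward case $b'\in\supp(u)$ where the action order genuinely cannot be swapped, I would re-index the universal quantifier over $u$ using Lemma~\ref{lemm.sub.sub} (valid precisely because $a\#v$) together with monotonicity of the $\sigma$-action (Lemma~\ref{lemm.sub.bigcap}), so that the contributions of those $u$ with $b'\in\supp(u)$ are matched against substitution instances that the hypothesis already controls. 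The main obstacle is exactly this backward step: turning the per-$u$ cofinite families of good $b'$ into one uniform cofinite family. Everything hinges on finite support of $X$ (bounding how a fresh $b'$ can interact with $X$) together with the hypothesis $a\#v$ (which licenses the re-indexing via Lemma~\ref{lemm.sub.sub}); the remaining work is bookkeeping with the $\new$-quantifier and equivariance (Theorem~\ref{thrm.equivar}).
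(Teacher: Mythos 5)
Your skeleton matches the paper's proof: two separate implications, with the left-to-right direction handled essentially as you describe (intersect cofinite sets, restrict to $b'\#u$ and $a'\#v$, and swap the two $\amgis$-actions---the paper does this with \rulefont{\amgis\sigma} plus \rulefont{\sigma\#}, of which Lemma~\ref{lemm.amgis.swap} is precisely the composite), and with the converse turning on re-indexing $u$ by a substitution instance, licensed by $a\#v$. However, two points in your hard direction do not survive scrutiny as written. First, the appeal to Theorem~\ref{thrm.new.equiv} is not licensed: the predicate of $b'$ being quantified has $p$ among its parameters, and $p$ need not be finitely supported, so the some/any property simply does not apply; your parenthetical that the some/any reasoning is ``applied only to the finitely supported data and never to $p$'' is not a meaningful restriction, since $\New{b'}\Phi(b')$ genuinely can fail at a particular fresh $b'$ when a parameter of $\Phi$ has infinite support. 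The repair---and the correct reading of the paper's ``choose fresh $b'$''---is to run the argument \emph{uniformly} for every $b'$ in the cofinite set $\{b'\mid b'\#X,v\}$, using nothing about $b'$ beyond those freshnesses, so that the conclusion's $\New{b'}$ follows from cofiniteness rather than from a single-witness some/any step. Your sketch is compatible with this but does not say it, and ``it suffices to verify for one fresh $b'$'' is false in this setting.

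Second, the decisive mechanism is never exhibited. The paper's steps are: given such a $b'$ and an \emph{arbitrary} $u$, pass to the witness $u[b'\sm v]$; Lemma~\ref{lemm.fresh.sub} gives $b'\#u[b'\sm v]$ (this is the linchpin, and exactly where $b'\#v$ is used); instantiate the hypothesis at $u[b'\sm v]$; then convert back with \rulefont{\amgis\sigma} (using $a'\#v$), since $p[v\ms b'][u\ms a']=p[u[b'\sm v]\ms a'][v\ms b']$. Your phrase ``matched against substitution instances that the hypothesis already controls'' gestures at this, but you never name the instance $u[b'\sm v]$ or the freshness fact that makes it usable, and your citations are off-target: Lemma~\ref{lemm.sub.sub} is the set-level analogue (its proof runs the same trick, but it concerns $X[a\sm u][b\sm v]$, not the element-level bodies needed here), and monotonicity (Lemma~\ref{lemm.sub.bigcap}) plays no role at all---the statement is an iff between membership assertions, with no inclusions anywhere. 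So: right strategy and a correctly identified crux, but the two steps that constitute the proof of that crux---uniformity over a cofinite set of $b'$ in place of some/any, and the explicit witness $u[b'\sm v]$ with $b'\#u[b'\sm v]$---are respectively mis-justified and missing.
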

\begin{proof}
We prove two implications:
\begin{itemize}
\item
\emph{The up-down implication.}
Assume $\New{b'}\Forall{u{\in}|\ns U|}\New{a'} p[v\ms b'][u\ms a']\in (b'\,b)\act(a'\,a)\act X$.

Choose $u{\in}|\ns U|$.
Choose fresh $b'$ and $a'$ (so $b',a'\#X,v,u$).
Then by assumption (since $b'\#X,v$ and $a'\#X,v,u$) $p[v\ms b'][u\ms a']\in (b'\,b)\act(a'\,a)\act X$
so that by \rulefont{\amgis\sigma} of Figure~\ref{fig.amgis} (since $a'\#v$) $p[u[b'\sm v]\ms a'][v\ms b']\in (b'\,b)\act(a'\,a)\act X$.
Now by \rulefont{\sigma\#} $u[b'\sm v]=u$ (since $b'\#u$).
Therefore $p[u\ms a'][v\ms b']\in (b'\,b)\act(a'\,a)\act X$. 
\item
\emph{The down-up implication.}
Assume $\Forall{u{\in}|\ns U|}\New{b'}\New{a'} p[u\ms a'][v\ms b']\in (b'\,b)\act(a'\,a)\act X$.

Choose fresh $b'$ (so $b'\#X,v$), choose $u{\in}|\ns U|$ (for which $b$ need not necessarily be fresh), and choose fresh $a'$ (so $a'\#X,v,u$).
By Lemma~\ref{lemm.fresh.sub} $b'\#u[b'\sm v]$ (since $b'\#v$).
Therefore $p[u[b'\sm v]\ms a'][v\ms b']\in (b'\,b)\act(a'\,a)\act X$ and by \rulefont{\amgis\sigma} of Figure~\ref{fig.amgis} (since $a'\#v$) $p[v\ms b'][u\ms a']\in (b'\,b)\act(a'\,a)\act X$.
\qedhere
\end{itemize}
\end{proof}

We cannot use Lemma~\ref{lemm.sub.bigcap}(\ref{sub.bigcap.sfs}) to derive Proposition~\ref{prop.all.sub.commute} because $\{X[a\sm u]\mid u{\in}|\ns U|\}$ is not necessarily strictly finitely supported.
The result still holds, by a proof using Lemma~\ref{lemm.technical}:
\begin{prop}
\label{prop.all.sub.commute}
Suppose $X{\in}|\nompow(\ns P)|$ and $v{\in}|\ns U|$ and $a\#v$.
Then
$$
(\freshcap{a}X)[b\sm v]=\freshcap{a}(X[b\sm v]) .
$$
\end{prop}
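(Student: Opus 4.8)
The plan is to prove the stated set equality pointwise: I will show that for every $p\in|\ns P|$ we have $p\in(\freshcap{a}X)[b\sm v]$ if and only if $p\in\freshcap{a}(X[b\sm v])$. The entire argument is engineered to funnel into Lemma~\ref{lemm.technical}, whose two equivalent sides are exactly the ``normal forms'' of the left- and right-hand sides of the proposition; everything else is unfolding of Definition~\ref{defn.sub.sets}, equivariance, and reindexing.

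First I would expand the left-hand side. Unfolding the outer $[b\sm v]$ by Definition~\ref{defn.sub.sets} gives $p\in(\freshcap{a}X)[b\sm v]$ iff $\New{b'}\,p[v\ms b']\in(b'\,b)\act(\freshcap{a}X)$; since $\freshcap{a}X=\bigcap_{u}X[a\sm u]$ and permutations commute with intersection (Theorem~\ref{thrm.equivar}), this becomes $\New{b'}\Forall{u}\,p[v\ms b']\in(b'\,b)\act(X[a\sm u])$. Choosing $b'$ fresh (so $b'\neq a,b$) and pushing the swap inside by equivariance of the $\sigma$-action-on-subsets, $(b'\,b)\act(X[a\sm u])=((b'\,b)\act X)[a\sm(b'\,b)\act u]$; unfolding this once more by Definition~\ref{defn.sub.sets}, and commuting the disjoint swaps $(a'\,a)$ and $(b'\,b)$, turns the condition into $\New{b'}\Forall{u}\New{a'}\,p[v\ms b'][(b'\,b)\act u\ms a']\in(b'\,b)\act(a'\,a)\act X$. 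The last move is to reindex the universally quantified $u$: for fixed $b'$ the map $u\mapsto(b'\,b)\act u$ is a bijection on $|\ns U|$, so I may replace $(b'\,b)\act u$ by $u$. This yields exactly the ``up--down'' statement of Lemma~\ref{lemm.technical}.

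Symmetrically I would expand the right-hand side $\freshcap{a}(X[b\sm v])=\bigcap_u(X[b\sm v])[a\sm u]$, so membership is $\Forall{u}\,p\in(X[b\sm v])[a\sm u]$. For each $u$, Definition~\ref{defn.sub.sets} gives $p\in(X[b\sm v])[a\sm u]$ iff $\New{a'}\,p[u\ms a']\in(a'\,a)\act(X[b\sm v])$; here I use the hypothesis $a\#v$ together with the freshness of $a'$ to compute $(a'\,a)\act v=v$ (part~1 of Corollary~\ref{corr.stuff}), whence $(a'\,a)\act(X[b\sm v])=((a'\,a)\act X)[b\sm v]$. A second application of Definition~\ref{defn.sub.sets} and the same commuting-swaps bookkeeping produces $\Forall{u}\New{a'}\New{b'}\,p[u\ms a'][v\ms b']\in(b'\,b)\act(a'\,a)\act X$, and swapping the two independent $\new$-quantifiers gives precisely the ``down--up'' statement of Lemma~\ref{lemm.technical}.

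Having rewritten both sides into the two equivalent forms, the proposition follows immediately from Lemma~\ref{lemm.technical}. The genuinely delicate content is isolated there: it is the commutation of the outer $\New{b'}$ past the $\Forall{u}$ (equivalently, reversing the order of the two $\amgis$-actions $[v\ms b']$ and $[u\ms a']$), which relies on \rulefont{\amgis\sigma} and \rulefont{\sigma\#} and crucially on $a\#v$ so that the detour through $u[b'\sm v]=u$ is available. Outside of invoking Lemma~\ref{lemm.technical}, the only points needing care are keeping the chosen atoms $a',b'$ fresh and distinct from $a,b$ so that the swaps $(a'\,a)$ and $(b'\,b)$ genuinely commute, and remembering to reindex the $\Forall{u}$ on the left so that the stray $(b'\,b)\act u$ disappears.
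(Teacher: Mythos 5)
Your proof is correct and follows essentially the same route as the paper's: both argue pointwise, unfold each side via Definition~\ref{defn.sub.sets}/Proposition~\ref{prop.amgis.iff} into exactly the two sides of Lemma~\ref{lemm.technical}, and use the hypothesis $a\#v$ at the same spot to compute $(a'\,a)\act(X[b\sm v])=((a'\,a)\act X)[b\sm v]$. The only cosmetic differences are that you make the reindexing $u\mapsto(b'\,b)\act u$ explicit where the paper absorbs it by applying equivariance to $\freshcap{a}X$ before expanding the intersection, and that your final swap of the two $\new$-quantifiers is a step the paper performs silently between its last two lines.
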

\begin{proof}
Consider $p{\in}|\ns P|$.
We reason as follows:
$$
\begin{array}[b]{@{\hspace{-0em}}r@{\ }l@{\quad}l}
p\in (\freshcap{a}&X)[b\sm v]
\\
\liff&
\New{b'}p[v\ms b']\in \freshcap{a}(b'\,b)\act X
&\text{Prop~\ref{prop.amgis.iff},\ Thrm~\ref{thrm.equivar}}
\\
\liff&
\New{b'}p[v\ms b']\in \bigcap_{u{\in}|\ns U|} ((b'\ b)\act X)[a\sm u]
&\text{Definition~\ref{defn.nu.U}}
\\
\liff&
\New{b'}\Forall{u{\in}|\ns U|}\New{a'} p[v\ms b'][u\ms a']\in (b'\,b)\act(a'\,a)\act X
&\text{Proposition~\ref{prop.amgis.iff}}
\\
\liff&
\Forall{u{\in}|\ns U|}\New{b'}\New{a'} p[u\ms a'][v\ms b']\in (b'\,b)\act(a'\,a)\act X
&\text{Lemma~\ref{lemm.technical}}
\\
\liff&
\Forall{u{\in}|\ns U|}\New{a'} p[u\ms a']\in ((a'\,a)\act X)[b\sm v]
&\text{Proposition~\ref{prop.amgis.iff}}
\\
\liff&
\Forall{u{\in}|\ns U|}\New{a'} p[u\ms a']\in (a'\,a)\act (X[b\sm v])
&\text{T\ref{thrm.equivar},\ C\ref{corr.stuff}},\,a',a\#v
\\
\liff&
\Forall{u{\in}|\ns U|} p\in X[b\sm v][a\sm u]
&\text{Proposition~\ref{prop.amgis.iff}}
\\
\liff&
p\in \freshcap{a}(X[b\sm v])
&\text{Definition~\ref{defn.nu.U}}
\end{array}
\qedhere$$
\end{proof}

\begin{lemm}
\label{lemm.all.alpha}
Suppose $X\in|\nompow(\ns P)|$. 
Then 
$$
b\#X
\quad\text{implies}\quad
\freshcap{a}X=\freshcap{b}(b\ a)\act X .
$$
As a corollary, 
$a\#\freshcap{a}X$ and 
$\supp(\freshcap{a}X)\subseteq\supp(X){\setminus}\{a\}$.
\end{lemm}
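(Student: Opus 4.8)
The plan is to establish the displayed equation first, directly from the $\alpha$-equivalence law for the pointwise $\sigma$-action, and then to read off both support facts as corollaries using conservation of support. The point to appreciate is that, unlike the abstract nominal-poset version Lemma~\ref{lemm.freshwedge.alpha} (which first needs $a,b\#\freshwedge{a}x$ and only then applies equivariance), here the main equation can be proved with \emph{no} support bookkeeping at all, because Lemma~\ref{lemm.sigma.alpha} was engineered into the definition of the action in Definition~\ref{defn.sub.sets}.

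First I would unfold $\freshcap{a}X=\bigcap\{X[a\sm u]\mid u{\in}|\ns U|\}$ by Definition~\ref{defn.nu.U}. Since $b\#X$ by hypothesis, Lemma~\ref{lemm.sigma.alpha} gives $X[a\sm u]=((b\ a)\act X)[b\sm u]$ for \emph{every} $u{\in}|\ns U|$, so the two indexed families $\{X[a\sm u]\mid u\}$ and $\{((b\ a)\act X)[b\sm u]\mid u\}$ are literally equal termwise, and hence so are their intersections:
$$\freshcap{a}X=\bigcap\{X[a\sm u]\mid u{\in}|\ns U|\}=\bigcap\{((b\ a)\act X)[b\sm u]\mid u{\in}|\ns U|\}=\freshcap{b}((b\ a)\act X).$$
Here $(b\ a)\act X{\in}|\nompow(\ns P)|$ so the right-hand side is well-defined, and the rewrite is valid for all $u$ with the \emph{same} witness $b$ precisely because the freshness assumption $b\#X$ does not mention $u$.

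For the corollary $a\#\freshcap{a}X$ I would use the some/any property: the operation $\freshcap{-}(-)$ is equivariant (it is assembled from the $\sigma$-action and $\bigcap$, both specified in set theory), so part~2 of Theorem~\ref{thrm.equivar} gives $(b\ a)\act\freshcap{a}X=\freshcap{b}((b\ a)\act X)$, and for fresh $b$ the main equation turns this into $(b\ a)\act\freshcap{a}X=\freshcap{a}X$, whence part~3 of Corollary~\ref{corr.stuff} yields $a\#\freshcap{a}X$. (Alternatively one can cite Lemma~\ref{lemm.a.fresh.bigset}, since the family $\{X[a\sm u]\mid u{\in}|\ns U|\}$ is $a$-fresh and conservation of support passes this to its intersection.) Finally, conservation of support (part~3 of Theorem~\ref{thrm.no.increase.of.supp}) gives $\supp(\freshcap{a}X)\subseteq\supp(X)\cup\{a\}$, and combining this with $a\#\freshcap{a}X$ deletes $a$ from the bound, giving $\supp(\freshcap{a}X)\subseteq\supp(X)\setminus\{a\}$. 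There is no genuine obstacle: the only care needed is the uniform choice of $b$ in the first step and the routine support arithmetic in the last, both of which are immediate once Lemma~\ref{lemm.sigma.alpha} and Theorem~\ref{thrm.no.increase.of.supp} are invoked.
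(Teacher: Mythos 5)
Your proof is correct and follows the paper's own argument essentially verbatim: the main equation is obtained exactly as in the paper by unfolding Definition~\ref{defn.nu.U}, rewriting each $X[a\sm u]$ as $((b\ a)\act X)[b\sm u]$ via Lemma~\ref{lemm.sigma.alpha} (valid uniformly in $u$ since $b\#X$), and refolding, while the corollary is deduced from part~3 of Corollary~\ref{corr.stuff} together with Theorem~\ref{thrm.no.increase.of.supp} --- precisely the paper's citations. The only difference is cosmetic: you spell out the equivariance step and the support arithmetic (and note the alternative route via Lemma~\ref{lemm.a.fresh.bigset}) that the paper leaves implicit.
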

\begin{proof}
The corollary follows by part~3 of Corollary~\ref{corr.stuff} and by Theorem~\ref{thrm.no.increase.of.supp}.
For the first part, we reason as follows:
\begin{tab2}
\freshcap{a}X=&\bigcap\{X[a{\sm}u]\mid u{\in}|\ns U|\} 
&\text{Definition~\ref{defn.nu.U}}
\\
=&\bigcap\{((b\ a)\act X)[b{\sm}u]\mid u{\in}|\ns U|\}
&\text{Lemma~\ref{lemm.sigma.alpha}}
\\
=&\freshcap{b}(b\ a)\act X
&\text{Definition~\ref{defn.nu.U}}
\qedhere\end{tab2}
\end{proof}

Recall $\powsigma(\ns P)$ from Definition~\ref{defn.powsigma}:
\begin{thrm}
\label{thrm.all.closed}
If $X{\in}|\powsigma(\ns P)|$ then $\freshcap{a}X{\in}|\powsigma(\ns P)|$.
\end{thrm}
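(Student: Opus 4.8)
The plan is to verify directly the three things that membership in $\powsigma(\ns P)$ requires of $Y:=\freshcap{a}X$: that it is a finitely-supported subset of $|\ns P|$, and that it satisfies conditions~\ref{item.fresh.powsigma} and~\ref{item.alpha.powsigma} of Definition~\ref{defn.powsigma}. That $\freshcap{a}X\subseteq|\ns P|$ is immediate from Definition~\ref{defn.nu.U}, and finite support is precisely the corollary to Lemma~\ref{lemm.all.alpha} ($\supp(\freshcap{a}X)\subseteq\supp(X){\setminus}\{a\}$). For the two conditions I would check them in the equivalent `$\sigma$-on-sets' form used already in Lemma~\ref{lemm.pow.closed} and isolated in Lemma~\ref{lemm.X.sub.fresh.alpha}, namely: if $c\#Y$ then $Y[c\sm v]=Y$, and if $d\#Y$ then $Y[c\sm d]=(d\ c)\act Y$. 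The workhorse throughout is Proposition~\ref{prop.all.sub.commute}, which pushes a $\sigma$-action through $\freshcap{a}$, combined with Lemma~\ref{lemm.all.alpha} to rename the bound atom $a$ so that it is fresh for whichever term is being substituted.

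For the first condition, suppose $c\#Y$ and use Lemma~\ref{lemm.all.alpha} to take the bound atom fresh, so $a\#v$. The delicate point (see below) is that $c\#Y$ does not give $c\#X$, so I cannot garbage-collect inside $X$ yet. I would therefore first replace $c$ by a genuinely fresh $c'\#X$ (hence $c'\#Y$): by Lemma~\ref{lemm.sigma.alpha} and part~1 of Corollary~\ref{corr.stuff}, $Y[c\sm v]=((c'\ c)\act Y)[c'\sm v]=Y[c'\sm v]$. Now Proposition~\ref{prop.all.sub.commute} gives $Y[c'\sm v]=\freshcap{a}(X[c'\sm v])$, and since $c'\#X$ part~1 of Lemma~\ref{lemm.X.sub.fresh.alpha} yields $X[c'\sm v]=X$, whence $Y[c'\sm v]=\freshcap{a}X=Y$. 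Combining, $Y[c\sm v]=Y$.

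For the second condition, suppose $d\#Y$; I may assume $c\neq d$ (if $c=d$ both sides are $Y$, by Corollary~\ref{corr.amgis.id.sub} and Corollary~\ref{corr.stuff}). First I dispatch a fresh $d'\#X$: here Lemma~\ref{lemm.sub.alpha} gives $X[c\sm d']=(d'\ c)\act X$, so by Proposition~\ref{prop.all.sub.commute} and equivariance of $\freshcap{a}$ (part~2 of Theorem~\ref{thrm.equivar}, with the bound atom chosen off $c,d'$) I get $Y[c\sm d']=\freshcap{a}((d'\ c)\act X)=(d'\ c)\act Y$. To pass from $d'$ to the given $d\#Y$ I would apply the swap $(d\ d')$ to this equation: equivariance of the $\sigma$-action turns the left side into $Y[c\sm d]$ (using $(d\ d')\act Y=Y$ as $d,d'\#Y$), while the right side becomes $((d\ d')\circ(d'\ c))\act Y$; since this composite and $(d\ c)$ agree on $\supp(Y)$ (neither $d$ nor $d'$ lies there), part~2 of Corollary~\ref{corr.stuff} collapses it to $(d\ c)\act Y$. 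Hence $Y[c\sm d]=(d\ c)\act Y$.

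The single genuine obstacle is the mismatch between $\supp(\freshcap{a}X)$ and $\supp(X)$: Lemma~\ref{lemm.all.alpha} only gives $\supp(\freshcap{a}X)\subseteq\supp(X){\setminus}\{a\}$, and this inclusion can be strict (e.g.\ applying $\freshcap{a}$ to the denotation of $a{\teq}c$ collapses it to $\lbot$, deleting $c$ from the support). Consequently an atom fresh for $Y$ may still occur in $X$, so the naive step `$c\#X$ implies $X[c\sm v]=X$' is not directly available. Both conditions are therefore proved by first freshening the relevant atom against $X$ itself—harmlessly, since the atoms involved are interchangeable on $Y$—and only then invoking Proposition~\ref{prop.all.sub.commute} together with the garbage/$\alpha$ facts \rulefont{\sigma\#} and Lemma~\ref{lemm.sub.alpha}. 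Everything else is routine bookkeeping with the $\new$-quantifier.
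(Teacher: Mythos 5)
Your proposal is correct and is essentially the paper's own proof: finite support comes from conservation of support (Theorem~\ref{thrm.no.increase.of.supp}, packaged as the corollary to Lemma~\ref{lemm.all.alpha}), and the two conditions of Definition~\ref{defn.powsigma} are checked by pushing the $\sigma$-action through $\freshcap{a}$ via Proposition~\ref{prop.all.sub.commute} (after renaming the bound atom with Lemma~\ref{lemm.all.alpha}) and then appealing to the corresponding conditions for $X$. The one divergence is your extra freshening: since the conditions are $\new$-quantified, the paper verifies them only at atoms $b\#X$---a cofinite set, so sufficient by the some/any property (Theorem~\ref{thrm.new.equiv})---whereas you insist on atoms fresh merely for $\freshcap{a}X$ and bridge the support gap with the $c'\#X$ replacement and the $(d\ d')$-swap; this is sound and directly yields the stronger Lemma~\ref{lemm.X.sub.fresh.alpha}-style statement for $\freshcap{a}X$, but the paper sidesteps your ``single genuine obstacle'' entirely and recovers that stronger form a posteriori once membership in $\powsigma(\ns P)$ is established.
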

\begin{proof}
Finite support of $\freshcap{a}X$ is by Theorem~\ref{thrm.no.increase.of.supp}. 
It remains to check conditions~\ref{item.fresh.powsigma} and~\ref{item.alpha.powsigma} of Definition~\ref{defn.powsigma}:
\begin{enumerate*}
\item
Suppose $b$ is fresh (so $b\#X$) and suppose $v{\in}|\ns U|$.
Using Lemma~\ref{lemm.all.alpha} suppose without loss of generality that $a\#v$.
Then we reason as follows:
\begin{tab3}
(\freshcap{a}X)[b{\sm}v]=&\freshcap{a}(X[b{\sm}v])
&\text{Proposition~\ref{prop.all.sub.commute}}
\\
=&\freshcap{a}X
&\text{Cond~\ref{item.fresh.powsigma} of Def~\ref{defn.powsigma}},\ b\#X
\end{tab3}
\item
Suppose $b'$ is fresh (so $b'\#X$).
Then we reason as follows:
\begin{tab3}
(\freshcap{a}X)[b{\sm}b']=&\freshcap{a}(X[b{\sm}b'])
&\text{Proposition~\ref{prop.all.sub.commute}}
\\
=&\freshcap{a}((b'\ b)\act X)
&\text{Cond~\ref{item.alpha.powsigma} of Def~\ref{defn.powsigma}},\ b'\#X
\\
=&(b'\ b)\act (\freshcap{a}X)
&\text{Theorem~\ref{thrm.equivar}}
\qedhere\end{tab3}
\end{enumerate*}
\end{proof}

\begin{corr}
\label{corr.freshcap.freshwedge}
If $X{\in}|\powsigma(\ns P)|$ then $\freshcap{a}X$ from Definition~\ref{defn.nu.U} is equal to $\freshwedge{a}X$ the fresh-finite limit in the lattice $\powsigma(\ns P)$ ordered by subset inclusion.
\end{corr}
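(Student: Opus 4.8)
The plan is to reduce everything to Proposition~\ref{prop.char.freshwedge}, which already relates the infinite greatest lower bound $\bigwedge_{u}X[a\sm u]$ to the fresh-finite limit $\freshwedge{a}X$ in any nominal poset with a monotone $\sigma$-action. The only real work, then, is to recognise $\freshcap{a}X$ as precisely that infinite greatest lower bound computed \emph{inside} $\powsigma(\ns P)$.

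First I would record the two prerequisites for applying Proposition~\ref{prop.char.freshwedge}. The carrier $\powsigma(\ns P)$ ordered by $\subseteq$ is a nominal poset (as observed at the start of Section~\ref{sect.sigma.foleq}), and its $\sigma$-action is monotone by part~3 of Lemma~\ref{lemm.sub.bigcap}. These are exactly the hypotheses that Proposition~\ref{prop.char.freshwedge} requires.

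Next I would argue that $\freshcap{a}X$ is the greatest lower bound of the family $\{X[a\sm u]\mid u{\in}|\ns U|\}$ in the lattice $\powsigma(\ns P)$. By Definition~\ref{defn.nu.U} we have $\freshcap{a}X=\bigcap\{X[a\sm u]\mid u{\in}|\ns U|\}$, and in a poset ordered by subset inclusion the meet of a family is its set-theoretic intersection, \emph{provided} that intersection lies in the carrier. That it does is exactly the content of Theorem~\ref{thrm.all.closed}, namely $\freshcap{a}X{\in}|\powsigma(\ns P)|$ whenever $X{\in}|\powsigma(\ns P)|$. Hence the intersection is a lower bound in $\powsigma(\ns P)$, and any competing lower bound $Y{\in}|\powsigma(\ns P)|$ satisfies $Y\subseteq X[a\sm u]$ for every $u$, so $Y\subseteq\freshcap{a}X$. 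Thus $\freshcap{a}X$ is the limit $\bigwedge_{u{\in}|\ns U|}X[a\sm u]$ in the sense of Proposition~\ref{prop.char.freshwedge}.

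Finally I would invoke part~2 of Proposition~\ref{prop.char.freshwedge}: since $\bigwedge_{u}X[a\sm u]$ exists and equals $\freshcap{a}X$, the fresh-finite limit $\freshwedge{a}X$ also exists and the two coincide, giving $\freshwedge{a}X=\freshcap{a}X$ as required. The one point that needs care, and the only genuine obstacle, is the carrier-membership step: the set-theoretic intersection is the lattice meet only because Theorem~\ref{thrm.all.closed} guarantees it remains inside $\powsigma(\ns P)$, rather than merely inside $\nompow(\ns P)$ or the full powerset. Without that closure result one could not identify $\freshcap{a}X$ with the limit $\freshwedge{a}X$ of the \emph{sub}-lattice $\powsigma(\ns P)$, and the corollary would fail to be well-posed.
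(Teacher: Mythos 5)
Your proposal is correct and follows essentially the same route as the paper, which likewise proves the corollary by citing Theorem~\ref{thrm.all.closed} for closure of $\freshcap{a}X$ in $|\powsigma(\ns P)|$ and then invoking Proposition~\ref{prop.char.freshwedge}. Your extra care in checking the hypotheses (monotonicity via part~3 of Lemma~\ref{lemm.sub.bigcap}) and in noting that the set-theoretic intersection is the lattice meet only because it stays inside the carrier is exactly the content the paper leaves implicit.
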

\begin{proof}
By Theorem~\ref{thrm.all.closed} $\freshcap{a}X\in|\powsigma(\ns P)|$.
We use Proposition~\ref{prop.char.freshwedge}. 
\end{proof}

\begin{rmrk}
The proofs leading up to Corollary~\ref{corr.freshcap.freshwedge} are somewhat subtle, and it is useful to illustrate how they work by tracing through how a similar `proof' fails: 

By Proposition~\ref{prop.char.freshwedge.names} $\freshwedge{a}X=\bigwedge_{n{\in}\mathbb A} X[a\sm n]$.
However it does \emph{not} follow from this that $\freshcap{a}X=\bigcap_{n{\in}\mathbb A}X[a\sm n]$.
This is because Theorem~\ref{thrm.all.closed} does not hold of $\bigcap_{n{\in}\mathbb A}X[a\sm n]$---it is not in $\powsigma(\ns P)$---and this is because Proposition~\ref{prop.all.sub.commute} fails, which is because Lemma~\ref{lemm.technical} fails.

That is, if we replace every $\forall u{\in}|\ns U|$ in Lemma~\ref{lemm.technical} with $\forall n{\in}\mathbb A$, then Lemma~\ref{lemm.technical} no longer works (and this is because $\mathbb A$ is not closed under applying substitutions $[a\sm u]$, whereas $\ns U$ is).
\end{rmrk}

\subsection{Equality in the $\sigma$-powerset}
\label{subsect.powsigma.equality}

We have seen how to interpret conjunction, union, negation, truth, false, and quantification in the $\sigma$-powerset of an $\amgis$-algebra.
Now we show that we can also interpret equality.

Suppose $\ns P=(|\ns P|,\act,\ns U,\tf{amgis}_{\ns P})$ is an exact $\amgis$-algebra (Definition~\ref{defn.exact.amgis.algebra}) over a termlike $\sigma$-algebra $\ns U$. 

\begin{defn}
\label{defn.eq.powamgis}
If $u,v{\in}|\ns U|$ then define $(u{=^{\ns P}}v)\subseteq|\ns P|$ by
\begin{frameqn}
(u{=^{\ns P}}v)\ =\ \{p{\in}|\ns P| \mid \New{c}(p[u\ms c]=p[v\ms c])\} .
\end{frameqn}
\end{defn}

\begin{rmrk}
\label{rmrk.leibnitz}
Some motivation for Definition~\ref{defn.eq.powamgis}:
Suppose $\ns X$ is a $\sigma$-algebra over $\ns U$ (Definition~\ref{defn.sub.algebra}) and suppose just for this remark that $\ns P=\powamgis(\ns X)$ (Definition~\ref{defn.powamgis}).
Recall from Proposition~\ref{prop.powamgis.exact} that $\ns P$ is indeed exact.
 
Suppose $p\in(u{=^{\ns P}}v)$, meaning by Definition~\ref{defn.eq.powamgis} that $\New{c}(p[u\ms c]=p[v\ms c])$.
Using Proposition~\ref{prop.sigma.iff} $p[u\ms c]=p[v\ms c]$ when $x[c\sm u]\in p\liff x[c\sm v]\in p$ for every $x{\in}|\ns X|$.
Thus
$$
p\in(u{=^{\ns P}}v)\quad \liff\quad \New{c}\Forall{x{\in}\ns X}(x[c\sm u]\in p\liff x[c\sm v]\in p).
$$
Intuitively this means $p$ cannot discern any difference between $u$ and $v$. 
This is in the spirit of \emph{Leibnitz equality}; two things that equal when they are indiscernible. 

More discussion is in Appendix~\ref{sect.more.equality}.
\end{rmrk}

\begin{rmrk}
It is not necessarily the case that $\supp(u){\cup}\supp(v)\subseteq\supp(u{=^{\ns P}}v)$.
For instance, if $\ns P$ has the \emph{trivial} $\amgis$-action that $p[u\ms a]=p$ for all $u$ and $a$ (for instance, take $|\ns P|=\{\ast\}$ the one-element set) then $(a{=^{\ns P}}b)=|\ns P|$, which has support $\varnothing$.
\end{rmrk}

Lemma~\ref{lemm.eq.sanity.check} is where we really use the exactness property of $\ns P$ (Definition~\ref{defn.exact.amgis.algebra}):
\begin{lemm}
\label{lemm.eq.sanity.check}
$(u_1{=^{\ns P}}u_2)[a\sm v] =((u_1[a\sm v]){=^{\ns P}}(u_2[a\sm v]))$.
\end{lemm}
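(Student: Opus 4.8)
The plan is to fix an arbitrary $p{\in}|\ns P|$ and show that $p$ lies in the left-hand set if and only if it lies in the right-hand set, unfolding both memberships down to the level of the $\amgis$-action on $\ns P$ and then collapsing the bookkeeping with \rulefont{\amgis\sigma}, \rulefont{\sigma\alpha}, and exactness.

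First I would expand the left-hand side. Writing $E=(u_1{=^{\ns P}}u_2)$, Definition~\ref{defn.sub.sets} gives $p\in E[a\sm v]\liff\New{c}\,p[v\ms c]\in(c\ a)\act E$. Since $=^{\ns P}$ is specified in the language of ZFA/FM set theory it is equivariant (Theorem~\ref{thrm.equivar}), so $(c\ a)\act E=((c\ a)\act u_1{=^{\ns P}}(c\ a)\act u_2)$, and by Lemma~\ref{lemm.sub.alpha} (as $c\#u_1,u_2$) this equals $(u_1[a\sm c]{=^{\ns P}}u_2[a\sm c])$. Thus $p\in E[a\sm v]\liff\New{c}\,p[v\ms c]\in(u_1[a\sm c]{=^{\ns P}}u_2[a\sm c])$.

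Next I would unfold the inner membership with Definition~\ref{defn.eq.powamgis} and simplify the compound $\amgis$-terms. For fresh $c,d$ (so in particular $d\#v$ and $c\#u_1,u_2,v$), applying \rulefont{\amgis\sigma} to $p[v\ms c][u_i[a\sm c]\ms d]$ rewrites it as $p[(u_i[a\sm c])[c\sm v]\ms d][v\ms c]$, and \rulefont{\sigma\alpha} (via Lemma~\ref{lemm.sub.alpha}) gives $(u_i[a\sm c])[c\sm v]=u_i[a\sm v]$; a further appeal to Lemma~\ref{lemm.amgis.swap} returns this to $p[v\ms c][u_i[a\sm v]\ms d]$. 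Hence the inner statement $\New{d}\,p[v\ms c][u_1[a\sm c]\ms d]=p[v\ms c][u_2[a\sm c]\ms d]$ is literally $p[v\ms c]\in W$, where $W=(u_1[a\sm v]{=^{\ns P}}u_2[a\sm v])$ is exactly the right-hand set. Therefore $p\in E[a\sm v]\liff\New{c}\,p[v\ms c]\in W$.

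It remains to discharge the outer quantifier, i.e.\ to show $\New{c}\,p[v\ms c]\in W\liff p\in W$; equivalently, that $W$ satisfies condition~\ref{item.fresh.powsigma} of Definition~\ref{defn.powsigma} at $v$. The direction $p\in W\Rightarrow\New{c}\,p[v\ms c]\in W$ is immediate: apply $[v\ms c]$ to the defining equation of $W$ and commute it inwards with Lemma~\ref{lemm.amgis.swap}. The converse is where exactness of $\ns P$ (Definition~\ref{defn.exact.amgis.algebra}) is essential: from $\New{c}(p[w_1\ms d][v\ms c]=p[w_2\ms d][v\ms c])$, with $w_i=u_i[a\sm v]$, one cancels the outer $[v\ms c]$ to obtain $p[w_1\ms d]=p[w_2\ms d]$. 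I expect this final step to be the main obstacle: the two $\new$-quantifiers (the outer one coming from $[a\sm v]$, the inner one from the definition of equality) must be interchanged before exactness can be applied at each fixed $d$, and since $p$ need not be finitely supported (Remark~\ref{rmrk.p.not.finite.support.1}) this interchange cannot be waved through by the some/any property but must be justified through the algebraic identities above, exactly as the analogous interchange is handled in Lemma~\ref{lemm.technical}. Equivalently, once $W$ is known to lie in $\powsigma(\ns P)$, one may simply note $E[a\sm v]=W[a'\sm v]$ for any fresh $a'\#W$ and conclude by part~1 of Lemma~\ref{lemm.X.sub.fresh.alpha}.
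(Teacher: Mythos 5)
Your proof is correct and takes essentially the same route as the paper's: fix $p$, unfold both memberships, commute the actions with \rulefont{\amgis\sigma} and collapse $(u_i[a\sm c])[c\sm v]$ to $u_i[a\sm v]$ via \rulefont{\sigma\alpha}/Lemma~\ref{lemm.sub.alpha}, then cancel the trailing $[v\ms \cdot]$ by exactness---and you explicitly flag the $\new$/$\new$ interchange preceding the exactness step, which the paper's one-word ``Exactness'' justification leaves implicit (be aware, though, that Lemma~\ref{lemm.technical} interchanges $\new$ with $\forall u$ over the substitution-closed carrier $|\ns U|$, instantiating at $u[b'\sm v]$, so its trick does not transfer verbatim to a $\new$/$\new$ interchange). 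One caution: your closing ``equivalently'' shortcut via part~1 of Lemma~\ref{lemm.X.sub.fresh.alpha} is circular in the paper's development, since the fact that $W{\in}|\powsigma(\ns P)|$ is Proposition~\ref{prop.eq.closed}, which the paper derives from this very lemma, so that remark cannot substitute for the main argument.
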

\begin{proof}
We reason as follows, where we write $u_1'=(a'\ a)\act u_1$ and $u_2'=(a'\ a)\act u_2$ for brevity:
$$
\begin{array}[b]{@{\hspace{-.0ex}}r@{\ }l@{\ \ }l}
p{\in} (u_1{=^{\ns P}}u_2)[a\sm v]
\liff&
\New{a'}\,p[v\ms a']\in (u_1'{=^{\ns P}}u_2')
&\text{Prop~\ref{prop.amgis.iff}}
\\
\liff&
\New{a',c}\, p[v\ms a'][u_1'\ms c]=p[v\ms a'][u_2'\ms c]
&\text{Def~\ref{defn.eq.powamgis}}
\\
\liff&
\New{a',c}\, p[u_1'[a'\sm v]\ms c][v\ms a']=p[u_2'[a'\sm v]\ms c][v\ms a']
&\rulefont{\amgis\sigma},\ c\#v
\\
\liff&
\New{a',c}\, p[u_1'[a'\sm v]\ms c]=p[u_2'[a'\sm v]\ms c]
&\text{Exactness}
\\
\liff&
\New{c}\, p[u_1[a\sm v]\ms c]=p[u_2[a\sm v]\ms c]
&\rulefont{\sigma\alpha},\ a'\#u_1,u_2
\\
\liff&
p\in ((u_1[a\sm v]){=^{\ns P}}u_2[a\sm v])
&\text{Def~\ref{defn.eq.powamgis}}
\end{array}
\qedhere$$
\end{proof}

\begin{prop}
\label{prop.eq.closed}
Suppose $u,v{\in}|\ns U|$.
Then:
\begin{enumerate*}
\item
If $a\#u,v$ then $(u{=^{\ns P}}v)[a\sm w]=(u{=^{\ns P}}v)$.
\item
If $b\#u,v$ then $(u{=^{\ns P}}v)[a\sm b]=(a\ b)\act (u{=^{\ns P}}v)$.
\end{enumerate*}
As a corollary, $(u{=^{\ns P}}v){\in}|\powsigma(\ns P)|$.
\end{prop}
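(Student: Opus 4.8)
The plan is to let Lemma~\ref{lemm.eq.sanity.check} do all the heavy lifting. That lemma already commutes the $\sigma$-action past the equality element, giving
$(u{=^{\ns P}}v)[a\sm w]=((u[a\sm w]){=^{\ns P}}(v[a\sm w]))$, and it is there that exactness — the one genuinely delicate ingredient — has been spent. Both parts~1 and~2 then reduce to simplifying the substituted arguments $u[a\sm w]$ and $v[a\sm w]$ \emph{inside} $\ns U$, which is purely a matter of the termlike $\sigma$-algebra axioms.

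For part~1, assume $a\#u,v$. I would apply Lemma~\ref{lemm.eq.sanity.check} and then \rulefont{\sigma\#} twice (once for $u$, once for $v$, both legitimate since $a\#u$ and $a\#v$) to collapse $u[a\sm w]$ to $u$ and $v[a\sm w]$ to $v$, so that $(u{=^{\ns P}}v)[a\sm w]=(u{=^{\ns P}}v)$. For part~2, assume $b\#u,v$. Again I start from Lemma~\ref{lemm.eq.sanity.check} to get $(u{=^{\ns P}}v)[a\sm b]=((u[a\sm b]){=^{\ns P}}(v[a\sm b]))$, and then use Lemma~\ref{lemm.sub.alpha} to rewrite each capture-avoiding substitution of an atom as a swap: since $b\#u$ and $b\#v$, we have $u[a\sm b]=(b\ a)\act u$ and $v[a\sm b]=(b\ a)\act v$. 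Finally, the map $(u,v)\mapsto(u{=^{\ns P}}v)$ is specified in the language of set theory (Definition~\ref{defn.eq.powamgis}), hence equivariant by part~2 of Theorem~\ref{thrm.equivar}; pushing the swap outside yields $(((b\ a)\act u){=^{\ns P}}((b\ a)\act v))=(b\ a)\act(u{=^{\ns P}}v)$, which is $(a\ b)\act(u{=^{\ns P}}v)$ as required.

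The corollary is then essentially bookkeeping, following the same route the paper takes in Lemma~\ref{lemm.pow.nu.closed} and Theorem~\ref{thrm.all.closed}. Finite support of $(u{=^{\ns P}}v)$ is immediate from conservation of support (part~3 of Theorem~\ref{thrm.no.increase.of.supp}), giving $\supp(u{=^{\ns P}}v)\subseteq\supp(u)\cup\supp(v)$ and hence membership in $\nompow(\ns P)$. Conditions~\ref{item.fresh.powsigma} and~\ref{item.alpha.powsigma} of Definition~\ref{defn.powsigma} are exactly the $\sigma$-action identities established in parts~1 and~2.

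I do not expect a serious obstacle. The only point demanding care is the alignment of side-conditions: parts~1 and~2 are proved under the hypotheses $a\#u,v$ and $b\#u,v$ rather than the weaker $a\#(u{=^{\ns P}}v)$. One should observe (via the support containment $\supp(u{=^{\ns P}}v)\subseteq\supp(u)\cup\supp(v)$ above) that the stronger freshness still holds for all but finitely many atoms, which is precisely the cofiniteness that the $\new$-quantifiers in Definition~\ref{defn.powsigma} ask for. Thus discharging the stronger condition suffices to verify membership in $\powsigma(\ns P)$.
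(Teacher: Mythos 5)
Your proposal is correct and follows essentially the same route as the paper's own proof: part~1 from Lemma~\ref{lemm.eq.sanity.check} and \rulefont{\sigma\#}, part~2 from Lemmas~\ref{lemm.eq.sanity.check} and~\ref{lemm.sub.alpha} together with equivariance (Theorem~\ref{thrm.equivar}), and the corollary from Theorem~\ref{thrm.no.increase.of.supp} with conditions~\ref{item.fresh.powsigma} and~\ref{item.alpha.powsigma} of Definition~\ref{defn.powsigma} being exactly parts~1 and~2. Your closing observation---that the hypotheses $a\#u,v$ hold for cofinitely many atoms because $\supp(u{=^{\ns P}}v)\subseteq\supp(u)\cup\supp(v)$, which is what the $\new$-quantifiers in Definition~\ref{defn.powsigma} require---merely makes explicit a bookkeeping point the paper leaves implicit.
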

\begin{proof}
The first part is from Lemma~\ref{lemm.eq.sanity.check} and \rulefont{\sigma\#}.
The second part is from Lemmas~\ref{lemm.eq.sanity.check} and~\ref{lemm.sub.alpha} and Theorem~\ref{thrm.equivar}.

For the corollary, finite support is from Theorem~\ref{thrm.no.increase.of.supp}
and conditions~\ref{item.fresh.powsigma} and~\ref{item.alpha.powsigma} of Definition~\ref{defn.powsigma} are just the first two parts of this result. 
\end{proof}

\begin{lemm}
\label{lemm.eq.aeq}
Suppose $u,v,w{\in}|\ns U|$ and suppose $b\#u,v$. 
Then 
$$
(u{=^{\ns P}}v)[a\sm w]=((b\ a)\act u{=^{\ns P}}(b\ a)\act v)[b\sm v].
$$
\end{lemm}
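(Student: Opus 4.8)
The plan is to recognise the claimed identity as a direct instance of the $\alpha$-equivalence law for the $\sigma$-action on subsets of an $\amgis$-algebra, namely Lemma~\ref{lemm.sigma.alpha}, applied to the particular subset $X=(u{=^{\ns P}}v)$; the swap it generates is then pushed inside the equality by equivariance. This is exactly the ``one line'' proof anticipated in the footnote to Lemma~\ref{lemm.sigma.alpha}: building the $\new$-quantifier into Definition~\ref{defn.sub.sets} was designed precisely so that renamings of this kind hold by construction.

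First I would secure the freshness side-condition that Lemma~\ref{lemm.sigma.alpha} requires. Since $(u{=^{\ns P}}v)$ is specified by a formula of FM set theory in the parameters $u$ and $v$ (Definition~\ref{defn.eq.powamgis}), conservation of support (part~3 of Theorem~\ref{thrm.no.increase.of.supp}) gives $\supp(u{=^{\ns P}}v)\subseteq\supp(u)\cup\supp(v)$. Hence the hypothesis $b\#u,v$ yields $b\#(u{=^{\ns P}}v)$, which is exactly the hypothesis needed to perform the $\alpha$-renaming.

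Then I would apply Lemma~\ref{lemm.sigma.alpha} to rewrite $(u{=^{\ns P}}v)[a\sm w]$ as $((b\ a)\act(u{=^{\ns P}}v))[b\sm w]$, and finally use equivariance of functions (part~2 of Theorem~\ref{thrm.equivar})---again because $=^{\ns P}$ is specified in the language of FM set theory---to compute $(b\ a)\act(u{=^{\ns P}}v)=((b\ a)\act u){=^{\ns P}}((b\ a)\act v)$. Substituting this identity back in yields the right-hand side $((b\ a)\act u{=^{\ns P}}(b\ a)\act v)[b\sm w]$, which completes the argument.

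I do not expect a genuine obstacle here: all the content sits in Lemma~\ref{lemm.sigma.alpha}, and the two auxiliary facts (the support bound and the equivariance rewrite) are both immediate from Theorem~\ref{thrm.no.increase.of.supp} and Theorem~\ref{thrm.equivar} respectively. The one thing worth watching is the bookkeeping of the substituted element: the $\alpha$-renaming replaces the bound name $a$ by $b$ but carries the \emph{same} substituted element through, so the element substituted for $b$ on the right is the same $w$ that was substituted for $a$ on the left.
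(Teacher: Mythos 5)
Your proof is correct, and it is in fact the alternative route the paper itself gestures at: the printed proof of this lemma reads ``From Lemma~\ref{lemm.eq.sanity.check} and \rulefont{\sigma\alpha} (a direct proof by Lemma~\ref{lemm.sigma.alpha} is also possible)'', and you have carried out precisely that parenthetical direct proof. The paper's primary route pushes the substitutions inside the equality on both sides using Lemma~\ref{lemm.eq.sanity.check}---so the left-hand side becomes $(u[a\sm w]{=^{\ns P}}v[a\sm w])$ and the right-hand side becomes $(((b\ a)\act u)[b\sm w]{=^{\ns P}}((b\ a)\act v)[b\sm w])$---and then concludes by the axiom \rulefont{\sigma\alpha} applied to $u$ and $v$ \emph{inside the termlike $\sigma$-algebra} $\ns U$, using $b\#u,v$. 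Note that Lemma~\ref{lemm.eq.sanity.check} is the result whose proof consumed exactness of $\ns P$. Your route instead stays entirely at the level of subsets of $|\ns P|$: the support bound $\supp(u{=^{\ns P}}v)\subseteq\supp(u)\cup\supp(v)$ from Theorem~\ref{thrm.no.increase.of.supp} converts $b\#u,v$ into $b\#(u{=^{\ns P}}v)$, and then Lemma~\ref{lemm.sigma.alpha} (which holds for \emph{arbitrary} subsets $X\subseteq|\ns P|$, with no exactness hypothesis) together with equivariance of $=^{\ns P}$ finishes the argument. This is the ``one line'' proof anticipated by the footnote to the remark after Definition~\ref{defn.sub.sets}, and it has the merit of exposing the lemma as nothing more than the set-level $\alpha$-equivalence built into the pointwise $\sigma$-action, specialised to the equality element; the paper's primary route has the merit of needing no support computation, trading it for the (already-proved, exactness-dependent) Lemma~\ref{lemm.eq.sanity.check}.

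One discrepancy is worth flagging, and it is in your favour. Your conclusion ends in $[b\sm w]$, whereas the statement as printed ends in $[b\sm v]$. The printed version is a typo: evaluating both sides by Lemma~\ref{lemm.eq.sanity.check} and \rulefont{\sigma\alpha}, the printed right-hand side equals $(u[a\sm v]{=^{\ns P}}v[a\sm v])$ while the left-hand side equals $(u[a\sm w]{=^{\ns P}}v[a\sm w])$, and these differ in general---for instance take $u=a$ and $v,w$ distinct atoms other than $a$, where the left-hand side is $(w{=^{\ns P}}v)$ but the right-hand side is $(v{=^{\ns P}}v)=|\ns P|$, which by Proposition~\ref{prop.xeqx.converse} are distinct in the model of points. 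Your closing remark about carrying the \emph{same} substituted element $w$ through the renaming is exactly the correct reading, and the $[b\sm w]$ version is the one actually needed in Corollary~\ref{corr.when.in.equality.P}.
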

\begin{proof}
From Lemma~\ref{lemm.eq.sanity.check} and \rulefont{\sigma\alpha} (a direct proof by Lemma~\ref{lemm.sigma.alpha} is also possible).
\end{proof}

\begin{corr}
\label{corr.when.in.equality.P}
$p\in(u{=^{\ns P}}v)$ if and only if $p\in(a{=^{\ns P}}b)[a\sm u,b\sm v]$.
\end{corr}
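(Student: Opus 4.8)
The plan is to prove the \emph{set} equality $(a{=^{\ns P}}b)[a\sm u,b\sm v] = (u{=^{\ns P}}v)$, from which the stated membership equivalence is immediate. I would first observe that by Proposition~\ref{prop.eq.closed} both $(a{=^{\ns P}}b)$ and $(u{=^{\ns P}}v)$ lie in $|\powsigma(\ns P)|$, so the $\sigma$-action of Definition~\ref{defn.sub.sets} applies to them and the simultaneous action $[a\sm u,b\sm v]$ is well defined. Choosing $a,b$ distinct and fresh for $u,v$ (so $a,b\#u,v$), Definition~\ref{defn.sim.sub} lets me unfold $(a{=^{\ns P}}b)[a\sm u,b\sm v]$ as the iterated action $(a{=^{\ns P}}b)[a\sm u][b\sm v]$; the general case of arbitrary distinct $a,b$ then follows from this one by the $\alpha$-renaming built into Definition~\ref{defn.sim.sub.alpha} together with equivariance of $=^{\ns P}$.

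The core is then two applications of Lemma~\ref{lemm.eq.sanity.check}. First I would compute
$$
(a{=^{\ns P}}b)[a\sm u] = \bigl((a[a\sm u]){=^{\ns P}}(b[a\sm u])\bigr) = (u{=^{\ns P}}b),
$$
using \rulefont{\sigma a} to get $a[a\sm u]=u$ and \rulefont{\sigma\#} to get $b[a\sm u]=b$ (the latter because $\supp(\tf{atm}_{\ns U}(b))=\{b\}$, whence $a\#b$ as an element of $\ns U$). Applying the lemma again,
$$
(u{=^{\ns P}}b)[b\sm v] = \bigl((u[b\sm v]){=^{\ns P}}(b[b\sm v])\bigr) = (u{=^{\ns P}}v),
$$
now using \rulefont{\sigma\#} for $u[b\sm v]=u$ (since $b\#u$) and \rulefont{\sigma a} for $b[b\sm v]=v$. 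Chaining these gives $(a{=^{\ns P}}b)[a\sm u,b\sm v]=(u{=^{\ns P}}v)$, and hence $p$ belongs to one side iff it belongs to the other.

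The work is short, so the only real care point—rather than a genuine obstacle—is the freshness bookkeeping for the \emph{embedded} atoms: I must track that $a\#b$, $b\#u$, and $a\#v$ hold as freshnesses in $\ns U$ so that each \rulefont{\sigma\#} step is licensed, and that the order of the two single substitutions is immaterial (checkable by running the computation in the opposite order, which yields the same $(u{=^{\ns P}}v)$). Exactness of $\ns P$ is not needed directly here; it has already been discharged inside Lemma~\ref{lemm.eq.sanity.check}, which is where the genuine content lives, so this corollary is essentially a clean unfolding of that lemma at atom arguments.
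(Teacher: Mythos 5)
Your proof is correct and follows essentially the same route as the paper's: reduce without loss of generality to the case $a,b\#u,v$, unfold the simultaneous action $[a\sm u,b\sm v]$ into the iterated substitutions $[a\sm u][b\sm v]$, and conclude by Lemma~\ref{lemm.eq.sanity.check} (you spell out the two applications with \rulefont{\sigma a} and \rulefont{\sigma\#}, which the paper leaves implicit). The only cosmetic difference is that the paper discharges the renaming step by citing Lemma~\ref{lemm.eq.aeq}, whereas you appeal directly to the $\alpha$-renaming built into Definition~\ref{defn.sim.sub.alpha} together with equivariance---the same machinery, since Lemma~\ref{lemm.eq.aeq} is itself proved from Lemma~\ref{lemm.eq.sanity.check} and \rulefont{\sigma\alpha}.
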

\begin{proof}
Using Lemma~\ref{lemm.eq.aeq} we may assume without loss of generality that $a,b\#u,v$ so that $(a{=^{\ns P}}b)[a\sm u,b\sm v]=(a{=^{\ns P}}b)[a\sm u][b\sm v]$.
It follows by Lemma~\ref{lemm.eq.sanity.check} that $(a{=^{\ns P}}b)[a\sm u,b\sm v]=(u{=^{\ns P}}v)$.
\end{proof}

A converse to Proposition~\ref{prop.xeqx} is valid for a specific model; see Proposition~\ref{prop.xeqx.converse}.
\begin{prop}
\label{prop.xeqx}
Suppose $X{\in}|\powsigma(\ns P)|$ and $u,v{\in}|\ns U|$.
Then:
\begin{enumerate*}
\item
$(u{=^{\ns P}}u)=|\ns P|$.
\item
$(u{=^{\ns P}}v)\cap (X[a\sm u])=(u{=^{\ns P}}v)\cap(X[a\sm v])$.
\end{enumerate*}
\end{prop}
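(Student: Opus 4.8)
The plan is to treat the two parts separately: part~1 is essentially definitional, and part~2 reduces to a single manipulation of nested $\new$-quantifiers.

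For part~1 I would simply unfold Definition~\ref{defn.eq.powamgis}: we have $(u{=^{\ns P}}u)=\{p{\in}|\ns P| \mid \New{c}(p[u\ms c]=p[u\ms c])\}$, and since $p[u\ms c]=p[u\ms c]$ holds for \emph{every} $c$, the $\new$-quantified condition is satisfied by every $p{\in}|\ns P|$. Hence $(u{=^{\ns P}}u)=|\ns P|$, with nothing further to check.

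For part~2, I would first observe that $(u{=^{\ns P}}v)=(v{=^{\ns P}}u)$, since the defining condition $\New{c}(p[u\ms c]=p[v\ms c])$ is symmetric in $u$ and $v$. Consequently it suffices to prove the one-directional statement that $p\in(u{=^{\ns P}}v)$ and $p\in X[a\sm u]$ together imply $p\in X[a\sm v]$; applying this with $u,v$ interchanged then yields the reverse inclusion, and intersecting each inclusion with $(u{=^{\ns P}}v)$ gives the desired equality. As signalled in the Remark preceding the statement, I would use $\alpha$-equivalence (Lemma~\ref{lemm.sigma.alpha}) to rename $a$ to a fresh atom and thereby assume without loss of generality that $a\#u,v$; this is exactly what is needed to invoke part~1 of Proposition~\ref{prop.amgis.iff} in its clean form for both $X[a\sm u]$ and $X[a\sm v]$.

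Now take $p\in(u{=^{\ns P}}v)\cap X[a\sm u]$. By Definition~\ref{defn.eq.powamgis} we have $\New{c}(p[u\ms c]=p[v\ms c])$, and by Proposition~\ref{prop.amgis.iff} (using $a\#u$) we have $\New{c}\,p[u\ms c]\in(c\ a)\act X$. The crux is to combine these two $\new$-assertions: since $\new$ means `for all but finitely many', the set of $c$ satisfying both is the intersection of two cofinite sets and hence cofinite, and for any such $c$ we get $p[v\ms c]=p[u\ms c]\in(c\ a)\act X$. Thus $\New{c}\,p[v\ms c]\in(c\ a)\act X$, which by Proposition~\ref{prop.amgis.iff} (using $a\#v$) says precisely $p\in X[a\sm v]$. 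This combination is where the some/any property of $\new$ (Theorem~\ref{thrm.new.equiv}) does the real work, letting the fresh witness $c$ for membership in $X[a\sm u]$ be taken to simultaneously witness the indiscernibility $p[u\ms c]=p[v\ms c]$.

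The only point I expect to require care is exactly this merging of the two independently $\new$-quantified fresh atoms; everything else is bookkeeping around Definitions~\ref{defn.sub.sets} and~\ref{defn.eq.powamgis}. I note that I do not anticipate needing exactness of $\ns P$ for this proposition—that hypothesis is spent in Lemma~\ref{lemm.eq.sanity.check}—so the argument should go through using only the $\amgis$-algebra structure together with the cofiniteness reading of $\new$.
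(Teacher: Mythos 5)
Your proposal is correct and takes essentially the same route as the paper's proof: rename $a$ to be fresh for $u,v$, unfold Definition~\ref{defn.eq.powamgis} and Proposition~\ref{prop.amgis.iff}, merge the two $\new$-quantifiers by intersecting cofinite sets, and replace $p[u\ms c]$ by $p[v\ms c]$ under the merged quantifier---the paper merely presents this as a single biconditional chain (and renames via condition~\ref{item.alpha.powsigma} of Definition~\ref{defn.powsigma}) rather than one inclusion plus symmetry of $=^{\ns P}$. One small correction of attribution: the merging step is carried by your cofinite-intersection argument, which is sound even though $p$ need not be finitely supported (cf.\ Remark~\ref{rmrk.p.not.finite.support}), and not by the some/any property of Theorem~\ref{thrm.new.equiv}, whose hypotheses require finitely supported parameters.
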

\begin{proof}
Part~1 follows from the fact that $p[u\ms c]=p[u\ms c]$ for every $p$ and every $c$.

For part~2, using condition~\ref{item.alpha.powsigma} of Definition~\ref{defn.powsigma} to rename if necessary, suppose $a\#u,v$.
By Definition~\ref{defn.eq.powamgis} $p\in(u{=^{\ns P}}v)$ means $\New{a}p[u\ms a]=p[v\ms a]$.
Using Proposition~\ref{prop.amgis.iff} and part~2 of Proposition~\ref{prop.eq.closed} and part~2 of Lemma~\ref{lemm.sub.bigcap} we have 
$$
\begin{array}[b]{r@{\ }l}
p\in (u{=^{\ns P}}v)\cap (X[a\sm u])
\liff& \New{a}\bigl(p[u\ms a]=p[v\ms a]\land p[u\ms a]\in X\bigr)
\\
\liff& \New{a}\bigl(p[u\ms a]=p[v\ms a]\land p[v\ms a]\in X\bigr)
\\
\liff&
p\in (u{=^{\ns P}}v)\cap (X[a\sm v]) .
\end{array}
\qedhere$$
\end{proof}

\begin{thrm}
\label{thrm.eq.correct}
If $\ns P$ is an exact $\amgis$-algebra then $(a{=^\ns P}b)$ from Definition~\ref{defn.eq.powamgis} is an equality in $\powsigma(\ns P)$ in the sense of Definition~\ref{defn.eq}.
\end{thrm}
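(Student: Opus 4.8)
The plan is to observe that the heavy lifting has already been done in Propositions~\ref{prop.eq.closed} and~\ref{prop.xeqx} and in Corollary~\ref{corr.when.in.equality.P}, so that the proof amounts to checking that the abstract conditions of Definition~\ref{defn.eq} unfold into these concrete facts about $\powsigma(\ns P)$. First I would dispatch the well-formedness requirements: that $(a{=^\ns P}b){\in}|\powsigma(\ns P)|$ and that $\supp(a{=^\ns P}b)\subseteq\{a,b\}$. The membership is precisely the corollary of Proposition~\ref{prop.eq.closed}, and the support bound follows from part~3 of Theorem~\ref{thrm.no.increase.of.supp} (conservation of support), since $(a{=^\ns P}b)$ is obtained by an equivariant construction from the two atoms $a$ and $b$.

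Next I would recall from Proposition~\ref{prop.powsigma.bool} that in $\powsigma(\ns P)$ ordered by subset inclusion the top element $\ltop$ is $|\ns P|$ and the meet $\land$ is sets intersection $\cap$. This is what lets me match the two defining conditions of Definition~\ref{defn.eq} against the set-theoretic statements already established. The key bridging fact is Corollary~\ref{corr.when.in.equality.P}, which gives $(a{=^\ns P}b)[a\sm u,b\sm v]=(u{=^\ns P}v)$; with this identification in hand, both conditions become assertions about $(u{=^\ns P}v)$.

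For condition~1 I set $v=u$: by Corollary~\ref{corr.when.in.equality.P} we have $(a{=^\ns P}b)[a\sm u,b\sm u]=(u{=^\ns P}u)$, which by part~1 of Proposition~\ref{prop.xeqx} equals $|\ns P|=\ltop$. For condition~2, given $z{\in}|\powsigma(\ns P)|$, Corollary~\ref{corr.when.in.equality.P} rewrites the left- and right-hand sides so that the required equality becomes $(u{=^\ns P}v)\cap(z[a\sm u])=(u{=^\ns P}v)\cap(z[a\sm v])$, which is exactly part~2 of Proposition~\ref{prop.xeqx} (with $X=z$), using that $\land=\cap$.

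I do not expect a serious obstacle here, precisely because the genuinely delicate work has been front-loaded into the supporting results: the careful management of the $\new$-quantifier and, crucially, the use of exactness of $\ns P$ in Lemma~\ref{lemm.eq.sanity.check}, which is what makes both Corollary~\ref{corr.when.in.equality.P} and Proposition~\ref{prop.xeqx} available in the first place. If anything, the only point deserving a line of care is confirming that the abstract $\ltop$ and $\land$ of Definition~\ref{defn.eq} really are $|\ns P|$ and $\cap$, so that the cited set equalities are literally the equalities demanded; this is immediate from Proposition~\ref{prop.powsigma.bool}.
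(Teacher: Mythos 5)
Your proposal is correct and matches the paper's own proof, which likewise cites Proposition~\ref{prop.eq.closed} for membership in $\powsigma(\ns P)$ and combines Corollary~\ref{corr.when.in.equality.P} with Proposition~\ref{prop.xeqx} for the two conditions of Definition~\ref{defn.eq}. Your extra care in checking the support bound via Theorem~\ref{thrm.no.increase.of.supp} and identifying $\ltop=|\ns P|$, $\land=\cap$ via Proposition~\ref{prop.powsigma.bool} just makes explicit what the paper leaves implicit.
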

\begin{proof} 
By Proposition~\ref{prop.eq.closed} $(a{=^{\ns P}}b)$ is in $\powsigma(\ns P)$, and $(a{=^{\ns P}}b)$ is an equality by Corollary~\ref{corr.when.in.equality.P} and  Proposition~\ref{prop.xeqx}.
\end{proof}

\subsection{Interpreting first-order logic in the $\sigma$-powerset}

We can now prove Theorem~\ref{thrm.powsigma.FOLeq}:
\begin{thrm}
\label{thrm.powsigma.FOLeq}
Suppose $\ns P$ is an $\amgis$-algebra over a termlike $\sigma$-algebra $\ns U$.
Then $\powsigma(\ns P)$ is a FOL algebra if we define the following:
$$
\begin{array}{r@{\ }l}
\model{\tbot}=&\varnothing
\\
\model{\phi\tand\psi}=&\model{\phi}\cap\model{\psi}
\\
\model{\tneg\phi}=&|\ns P|\setminus\model{\phi}
\\
\model{\tall a.\phi}=& 
\freshcap{a}\model{\phi} \qquad \text{(Def~\ref{defn.nu.U})}
\end{array}
$$
If in addition $\ns P$ is exact (Definition~\ref{defn.exact.amgis.algebra})
then $\powsigma(\ns P)$ is a FOLeq algebra if we define the following:
$$
\model{a{\teq}b}=
\{p\mid \New{c}(p[a\ms c]=p[b\ms c])\}
$$
\end{thrm}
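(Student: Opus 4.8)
The plan is to verify each clause of Definition~\ref{defn.FOLeq} in turn, assembling results already established in Sections~\ref{sect.fol-algebra} and~\ref{sect.sigma.foleq}. Since $\powsigma(\ns P)$ ordered by $\subseteq$ is a nominal poset that (by Proposition~\ref{prop.pow.sub.algebra}) carries a $\sigma$-algebra structure, and since by Proposition~\ref{prop.powsigma.bool} it is a Boolean algebra with meet $\cap$, join $\cup$, bottom $\varnothing$, top $|\ns P|$, and complement $|\ns P|{\setminus}(\text{-})$, the finite-limit and finite-colimit requirements, complementation, and ordinary distributivity are immediate. It remains to handle the three genuinely nominal ingredients: fresh-finite (co)limits, compatibility of the $\sigma$-action, and equality.

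First I would establish fresh-finite completeness. By Proposition~\ref{prop.ffc.char} it suffices to exhibit $\ltop$, $X\land Y$, and $\freshwedge{a}X$; the first two come from the Boolean structure, and for the last Corollary~\ref{corr.freshcap.freshwedge} identifies $\freshwedge{a}X$ with $\freshcap{a}X$ of Definition~\ref{defn.nu.U}, which by Theorem~\ref{thrm.all.closed} indeed lands in $|\powsigma(\ns P)|$. Fresh-finite cocompleteness then follows either dually or, more cheaply, from complementation via $\freshvee{A}X=\lneg\freshwedge{A}\{\lneg X'\mid X'\in X\}$.

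Next comes compatibility (Definition~\ref{defn.fresh.continuous}). The complement clause $(\lneg X)[a\sm u]=\lneg(X[a\sm u])$ is exactly part~1 of Lemma~\ref{lemm.sub.bigcap}. For the limit clause I would reduce a general $\freshwedge{A}X$, via Proposition~\ref{prop.ffc.char}, to an iterated $\freshwedge{a_i}$ applied to a finite meet, and then push the substitution $[a\sm u]$ inward one layer at a time: finite meets commute with $\sigma$ by part~2 of Lemma~\ref{lemm.sub.bigcap}, and each $\freshcap{a_i}$ commutes with $\sigma$ by Proposition~\ref{prop.all.sub.commute}. The side-condition $A\cap(\supp(u)\cup\{a\})=\varnothing$ is precisely what guarantees $a_i\#u$ and $a_i\neq a$ at every step, so that Proposition~\ref{prop.all.sub.commute} applies; the colimit clause then follows by complementation. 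I would also verify the fresh-finite half of distributivity, namely $X\cup\freshcap{a}Y=\freshcap{a}(X\cup Y)$ for $a\#X$: using $X[a\sm u]=X$ (part~1 of Lemma~\ref{lemm.X.sub.fresh.alpha}) together with Lemma~\ref{lemm.sub.bigcap} to rewrite $(X\cup Y)[a\sm u]=X\cup Y[a\sm u]$, this reduces to the set-theoretic identity $X\cup\bigcap_u Y[a\sm u]=\bigcap_u(X\cup Y[a\sm u])$.

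Finally, for the exact case, the two equality axioms of Definition~\ref{defn.eq} for $(a{=^{\ns P}}b)$ are exactly the content of Theorem~\ref{thrm.eq.correct}, so no further work is required there. The main obstacle is the compatibility verification for $\freshwedge{a}$, but its difficulty has already been absorbed into Proposition~\ref{prop.all.sub.commute} and its engine Lemma~\ref{lemm.technical}; within this theorem the real care lies in tracking the freshness side-conditions while factoring $\freshwedge{A}X$, and in confirming that exactness is invoked only where Definition~\ref{defn.eq.powamgis} demands it, i.e. solely through Theorem~\ref{thrm.eq.correct}.
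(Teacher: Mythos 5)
Your proposal is correct and follows essentially the same route as the paper's proof: Proposition~\ref{prop.powsigma.bool} for the Boolean structure, Lemma~\ref{lemm.sub.bigcap} together with Proposition~\ref{prop.all.sub.commute} for compatibility of the $\sigma$-action, Theorem~\ref{thrm.all.closed} combined with Proposition~\ref{prop.char.freshwedge} (packaged as Corollary~\ref{corr.freshcap.freshwedge}) for the fresh-finite limits, and Theorem~\ref{thrm.eq.correct} for equality under exactness. Your explicit check of the fresh-finite distributivity clause $X\cup\freshcap{a}Y=\freshcap{a}(X\cup Y)$ for $a\#X$---which the paper's proof leaves implicit---is sound, reducing via Lemmas~\ref{lemm.X.sub.fresh.alpha} and~\ref{lemm.sub.bigcap} to union distributing over arbitrary intersections.
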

\begin{proof}
\begin{itemize*}
\item
The Boolean structure ($\tbot$, $\tand$, and $\tneg$) is treated in Proposition~\ref{prop.powsigma.bool}.
\item
The $\sigma$-action ($[a\sm u]$) is compatible by Lemma~\ref{lemm.sub.bigcap} and Proposition~\ref{prop.all.sub.commute}.
\item
We observe by Lemma~\ref{lemm.easy}(\ref{easy.compatible.monotone}) that the $\sigma$-action is monotone, and by Theorem~\ref{thrm.all.closed} that $\freshcap{a}X\in|\powsigma(\ns P)|$.
We combine these observations with Definition~\ref{defn.nu.U} and Proposition~\ref{prop.char.freshwedge} to conclude that the fresh-finite limit $\freshwedge{a}X$ exists in $\powsigma(\ns P)$.
\item
Equality ($\teq$) is treated in Theorem~\ref{thrm.eq.correct}.
\qedhere\end{itemize*}
\end{proof}

\begin{rmrk}
\label{rmrk.sum.up}
For the reader's convenience we assemble the maths so far by unpacking what Definitions~\ref{defn.interp} and~\ref{defn.interp.I} mean:
\begin{itemize*}
\item \emph{Definition~\ref{defn.interp}.}\quad
Given a signature $(\Sigma,\Pi,\ar)$ an interpretation maps each $\tf f\in\Sigma$ with arity $n$ to an equivariant function $\tf f^{\iden}:\otimes^n\mathbb A\Func \ns U$ and each $\tf P\in\Pi$ with arity $n$ to an equivariant function $\tf P^{\iden}:\otimes^n\mathbb A\Func \powsigma(\ns P)$ ($\otimes^n$ is from Definition~\ref{defn.multiple.otimes}).
\item \emph{Definition~\ref{defn.interp.I}.}\quad
The interpretation extends as follows:
$$
\begin{array}{r@{\ }l@{\qquad}l}
\model{a}=&\tf{atm}_{\ns U}(a)
&
\text{Def~\ref{defn.term.sub.alg}}
\\
\model{\tf f(r_1,\dots,r_n)}=&\tf f^\iden(a_1,\dots,a_n)[a_1\sm\model{r_1},\dots,a_n\sm\model{r_n}]
&\text{Defs~\ref{defn.extend.f.P} \&~\ref{defn.interp.I}}
\\
\model{\tbot}=&\varnothing
&\text{Prop~\ref{prop.powsigma.bool}}
\\
\model{r{\teq}s}=&
\{p\mid \New{c}(p[\model{r}\ms c]=p[\model{s}\ms c])\}
&\text{Def~\ref{defn.eq.powamgis}}
\\
\model{\tf P(r_1,\dots,r_n)}=&\tf P^\iden(a_1,\dots,a_n)[a_1\sm\model{r_1},\dots,a_n\sm\model{r_n}]
&\text{Defs~\ref{defn.extend.f.P} \&~\ref{defn.interp.I}}
\\
\model{\phi\tand\psi}=&\model{\phi}\cap\model{\psi}
&\text{Prop~\ref{prop.powsigma.bool}}
\\
\model{\tneg\phi}=&|\ns P|\setminus\model{\phi}
&\text{Prop~\ref{prop.powsigma.bool}}
\\
\model{\tall a.\phi}=&\freshcap{a}\model{\phi}=\bigcap_{u{\in}|\ns U|}\model{\phi}[a\sm u]
&\text{Def~\ref{defn.nu.U}}
\end{array}
$$
Recall also that $\tf{atm}_{\ns U}(a)$ comes from Definition~\ref{defn.term.sub.alg}, and where $\ns U$ is assumed we will usually write it just as $a$.
\item
\emph{Theorem~\ref{thrm.fol.sound}.}\quad
From Theorem~\ref{thrm.powsigma.FOLeq} we conclude that if $\phi_1,\dots,\phi_n\cent\psi_1,\dots,\psi_m$ then with the definitions above, $\bigcap_i\model{\phi_i}\subseteq\bigcup_j\model{\psi_j}$.
\end{itemize*}
\end{rmrk}

\section{Completeness}
\label{sect.completeness}

We now set about proving Theorem~\ref{thrm.completeness}, which states that if $\model{\Phi\cent\Psi}$ is true in every interpretation in every FOLeq algebra, then indeed $\Phi\cent\Psi$ is derivable in first-order logic.
The proof follows the general outline that the reader familiar with such proofs might expect: we build a notion of \emph{point} out of maximal consistent sets of predicates (Definition~\ref{defn.points}), map $\phi$ to a set of points (Definition~\ref{defn.syntax.interp}), show that every consistent $\phi$ is contained in some point, note that $\phi$ maps precisely to the set of points containing $\phi$, and then use that to prove completeness.

The main results are Theorems~\ref{thrm.maxfilt.zorn}, \ref{thrm.model.iden} and~\ref{thrm.completeness}.

The constructions are subtle:
\begin{enumerate}
\item
We have more structure than usual: points form not just a set but an $\amgis$-algebra, and sets of points form not just a set but a $\sigma$-algebra, and not just a lattice but a lattice with fresh-finite limits and equality. 
This means more properties to verify, and so more proofs. 
\item
We are used to seeing valuations being used to build the final semantics; not so here.
The semantics, as discussed in the Introduction, is absolute.
This requires a certain change of perspective.
\item
We cannot directly use Zorn's lemma and need to use a carefully designed increasing chain of filter-ideal pairs, and indeed, the definitions of filter and ideal also require careful design.
\end{enumerate}
Detailed exposition follows below; see in particular Remarks~\ref{rmrk.p.not.finite.support},\ \ref{rmrk.ramification}, and~\ref{rmrk.asymmetric}, and also Remark~\ref{rmrk.why.remarkable}.

An argument is possible that the proofs that follow are the \emph{real} proof, and the usual development is a projection of that proof to a Zermelo-Fraenkel sets (ZF) universe.
This is reasonable, since ZF is consistent if and only if FM is, but the ZF construction is not optimal, since FOL syntax interrogates the models just for fresh-finite limits and equality, that is, for $\ttop$, $\tand$, $\tall a$, and $\teq$---which is precisely the structure that FOLeq algebras provide using nominal sets.

\subsection{Filters and points}

\subsubsection{Filters}

Recall \emph{predicates} $\phi$ from Definition~\ref{defn.terms.and.predicates} and the \emph{entailment} relation $\Phi\cent\phi$ from Figure~\ref{fig.FOL}:
\begin{frametxt}
\begin{defn}
\label{defn.filter}
A \deffont{filter} is a nonempty set $p$ of predicates (which need not have finite support) such that: 
\begin{enumerate*}
\item
\label{filter.proper}
$\tbot\not\cin p$ (we call $p$ \deffont{consistent}).
\item
\label{filter.up}
If $\phi\cin p$ and $\phi\cent\phi'$ then $\phi'\cin p$ (we call $p$ \deffont{deductively closed}).
\item
\label{filter.and}
If $\phi\cin p$ and $\phi'\cin p$ then $\phi\tand \phi'\cin p$.
\item
\label{filter.new}
If $\New{b}(b\ a)\act \phi\cin p$ then $\tall a.\phi\cin p$.
\end{enumerate*}
\end{defn}
\end{frametxt}

\begin{rmrk}
\label{rmrk.p.not.finite.support}
$p$ above need not have finite support.
This means that in Definition~\ref{defn.filter} it is \emph{not} the case that $\New{b}(b\ a)\act\phi\cin p$ if and only if $\neg\New{b}(b\ a)\act\phi\not\cin p$. 
The `standard' equivalent decompositions of $\new$ into `$\forall$+freshness' and `$\exists$+freshness' (Theorem~2.17 of \cite{gabbay:stodfo}, Theorem~6.5 of \cite{gabbay:fountl}, or Theorem~9.4.6 of \cite{gabbay:thesis}) might not work with respect to $p$.
However, $x$ in Definition~\ref{defn.filter} is still assumed to have finite support, and that gives us all we need. 

This is why $\amgis$-algebras (Definition~\ref{defn.bus.algebra}) are not assumed to be nominal sets, but are only assumed to be sets with a permutation action---and why $\sigma$-algebras (Definition~\ref{defn.sub.algebra}) are assumed to be nominal sets.

More discussion of the support of $p$ in Remark~\ref{rmrk.ramification}.
\end{rmrk}

Ideals are dual to filters; Definition~\ref{defn.ideal} is standard:
\begin{defn}
\label{defn.ideal}
An \deffont{ideal} is a nonempty set $Z$ of predicates (which need not have finite support) such that:
\begin{enumerate*}
\item\label{ideal.top}
$\ttop\not\in Z$.
\item\label{ideal.down}
If $\psi\in Z$ and $\psi'\leq \psi$ then $\psi'\in Z$ (we call $Z$ \deffont{down-closed}).
\item\label{ideal.or}
If $\psi\in Z$ and $\psi'\in Z$ then $\psi\tor \psi'\in Z$.
\end{enumerate*}
\end{defn}

\begin{rmrk}
Definition~\ref{defn.ideal} is not a perfect dual to Definition~\ref{defn.filter}: we do not have a fourth condition corresponding to condition~\ref{filter.new} of Definition~\ref{defn.filter}.

This is deliberate and will be important; see Remark~\ref{rmrk.asymmetric}.
\end{rmrk}

\begin{defn}
\label{defn.uparrow}
If $\phi$ is a predicate then define $\phi{\uparrow}$ and $\phi{\downarrow}$ by
$$
\phi{\uparrow} = \{\xi \mid \phi\cent\xi\} 
\qquad\text{and}\qquad
\phi{\downarrow} = \{\xi \mid \xi\cent\phi\}  .
$$
\end{defn}

\begin{lemm}
\label{lemm.uparrow.filter}
If $\phi\not\cent\tbot$ then $\phi{\uparrow}$ from Definition~\ref{defn.uparrow} is a filter and $\phi{\downarrow}$ is an ideal.
\end{lemm}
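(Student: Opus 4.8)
The plan is to verify the defining clauses of Definition~\ref{defn.filter} for $\phi{\uparrow}$ and of Definition~\ref{defn.ideal} for $\phi{\downarrow}$ one clause at a time; each reduces to an elementary structural property of the sequent calculus of Figure~\ref{fig.FOL}, the only genuinely nominal step being the quantifier clause. Both sets are nonempty since $\phi\cent\phi$ by \rulefont{Hyp}, so $\phi\in\phi{\uparrow}$ and $\phi\in\phi{\downarrow}$.

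For $\phi{\uparrow}=\{\xi\mid\phi\cent\xi\}$ I would argue as follows. Properness (clause~\ref{filter.proper}, $\tbot\not\cin\phi{\uparrow}$) is exactly the hypothesis $\phi\not\cent\tbot$. Deductive closure (clause~\ref{filter.up}) is transitivity of $\cent$: if $\phi\cent\xi$ and $\xi\cent\xi'$ then $\phi\cent\xi'$, which is cut-admissibility for the calculus. Closure under $\tand$ (clause~\ref{filter.and}) is a direct application of \rulefont{{\tand}R} with left context $\{\phi\}$ and empty right context, turning $\phi\cent\xi$ and $\phi\cent\xi'$ into $\phi\cent\xi\tand\xi'$.

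The crux is clause~\ref{filter.new}, and it is here that I expect the main obstacle to lie, since it is where the some/any property of $\new$ and $\alpha$-renaming of the bound atom must be combined with \rulefont{\tall R}. Suppose $\New{b}\,(b\ a)\act\psi\in\phi{\uparrow}$, that is, $\New{b}(\phi\cent(b\ a)\act\psi)$ (I rename the body to $\psi$ to avoid the clash with the generator $\phi$). Since $\phi,\psi,a$ are finitely supported syntax, Theorem~\ref{thrm.new.equiv} lets me extract a single witness $b$ with $b\#\phi,\psi,a$ and $\phi\cent(b\ a)\act\psi$. As $b\#\phi$ gives $b\notin\fa(\phi)$, the rule \rulefont{\tall R} (with empty $\Psi$) applies and yields $\phi\cent\tall b.(b\ a)\act\psi$. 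Finally, since $b\#\psi$ we have $\tall b.(b\ a)\act\psi\aeq\tall a.\psi$, and as predicates are taken up to $\alpha$-equivalence this is literally $\tall a.\psi$; hence $\tall a.\psi\in\phi{\uparrow}$.

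The ideal $\phi{\downarrow}=\{\xi\mid\xi\cent\phi\}$ is handled by the evident dual argument and is if anything simpler, as Definition~\ref{defn.ideal} has no analogue of clause~\ref{filter.new} (see the remark following it). Down-closure (clause~\ref{ideal.down}) is again cut: from $\psi\cent\phi$ and $\psi'\cent\psi$ conclude $\psi'\cent\phi$; and closure under $\tor$ (clause~\ref{ideal.or}) is the derived left-disjunction rule, obtained from the rules of Figure~\ref{fig.FOL} through the de~Morgan abbreviation $\phi\tor\psi\equiv\tneg((\tneg\phi)\tand(\tneg\psi))$. Properness (clause~\ref{ideal.top}, $\ttop\not\in\phi{\downarrow}$) is the condition $\ttop\not\cent\phi$, the exact dual of $\phi\not\cent\tbot$ recording that $\phi$ is not derivable, and I would note it as the natural dual hypothesis for the ideal half of the statement.
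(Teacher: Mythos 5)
Your proof is correct and follows essentially the same route as the paper's: the first three filter clauses are the routine structural facts you cite, and for clause~4 the paper likewise extracts one particular $b\#\phi,\xi$ from the $\new$-quantified hypothesis and then applies \rulefont{\tall R} followed by $\alpha$-conversion of the bound atom, with the ideal half needing no quantifier work at all. Your closing observation is also well taken: $\ttop\not\in\phi{\downarrow}$ amounts to $\not\cent\phi$, which does \emph{not} follow from $\phi\not\cent\tbot$ (take $\phi=\ttop$), so the ideal half genuinely needs the dual hypothesis; the paper's proof elides this with ``no harder'', though in the lemma's actual use (Theorem~\ref{thrm.maxfilt.zorn}) one forms $\psi{\downarrow}$ under the assumption $\phi\not\cent\psi$, which does guarantee $\not\cent\psi$ by weakening.
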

\begin{proof}
It is routine to verify conditions~\ref{filter.proper} to~\ref{filter.and} of Definition~\ref{defn.filter}.
We now consider condition~\ref{filter.new}.
Suppose $\phi\cent (b\ a)\act \xi$ for all but finitely many $b$.
We take one particular $b\#\phi,\xi$ and conclude using \rulefont{\tall R} that $\phi\cent\tall a.\xi$.
The case of $\phi{\downarrow}$ is no harder. 
\end{proof}

\subsubsection{On universal quantification in filters}

Lemma~\ref{lemm.freshness} is needed for Proposition~\ref{prop.these.are.equivalent}.
It is a fact of syntax related to Lemma~\ref{lemm.sub.alpha}.\footnote{It would be easy to hide this with a bit of hand-waving.  However, in the context of nominal techniques, we give it its own result and indicate the issues involved.}
\begin{lemm}
\label{lemm.freshness}
Suppose $\phi$ is a predicate and suppose $b$ is fresh (so $b\#\phi$).
Then $\phi[a\sm b]$ is $\alpha$-equivalent to $(b\ a)\act\phi$.
\end{lemm}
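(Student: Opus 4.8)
The plan is to recognise Lemma~\ref{lemm.freshness} as the concrete-syntax instance of the abstract Lemma~\ref{lemm.sub.alpha}. Recall from Example~\ref{xmpl.fot} that predicates taken modulo $\alpha$-equivalence form a $\sigma$-algebra over the termlike $\sigma$-algebra of terms, in which the $\sigma$-action is capture-avoiding substitution, equality is $\alpha$-equivalence, and support coincides with free atoms $\fa(\text{-})$. The atom $b$ is in particular a term, so $\phi[a\sm b]$ is a legitimate instance of this $\sigma$-action, and the hypothesis $b\#\phi$ is exactly $b\notin\fa(\phi)$. Lemma~\ref{lemm.sub.alpha} then yields $\phi[a\sm b]=(b\ a)\act\phi$ at once, where $=$ is equality of $\alpha$-equivalence classes, i.e. $\alpha$-equivalence of the underlying raw syntax (the swapping $(b\ a)\act\phi$ may rename bound occurrences, which is precisely why the statement asserts $\alpha$-equivalence rather than literal equality).

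For a reader who prefers the syntactic issues spelled out --- which I take to be the point of isolating this as its own lemma --- I would instead argue by structural induction on $\phi$. The cases $\tbot$, $\phi_1\tand\phi_2$, and $\tneg\phi_1$ are routine, since both substitution and the swapping action distribute over the connectives and the inductive hypothesis applies (using $b\#\phi_i$, which follows from $b\#\phi$). The cases $r\teq s$ and $\tf P(r_1,\dots,r_n)$ reduce to the analogous statement for terms, namely that $b\#r$ implies $r[a\sm b]=(b\ a)\act r$; this is a trivial induction on term structure, yielding genuine equality (not merely $\alpha$-equivalence) because terms contain no binders, so no capture arises.

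The one delicate case, and the main obstacle, is $\phi=\tall c.\psi$. Here I would use $\alpha$-equivalence at the outset to rename the bound atom $c$ so that $c\#a,b$ (indeed $c$ fresh for everything in sight); this is exactly the manoeuvre that the some/any property of Theorem~\ref{thrm.new.equiv} and the $\new$-quantifier license cleanly. With $c$ so chosen, capture cannot occur, so $\phi[a\sm b]\aeq\tall c.(\psi[a\sm b])$, while $(b\ a)\act\phi=\tall c.((b\ a)\act\psi)$ since the swapping fixes $c$. Because $b\#\phi$ together with $c\neq b$ gives $b\#\psi$, the inductive hypothesis supplies $\psi[a\sm b]\aeq(b\ a)\act\psi$, and applying $\tall c$ to both sides closes the case. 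The care needed is precisely this freshening of the bound name --- the step that is invisible in the abstract proof of Lemma~\ref{lemm.sub.alpha}, because there it has already been absorbed into the axioms \rulefont{\sigma id} and \rulefont{\sigma\alpha}.
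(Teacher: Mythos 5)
Your proof is correct, but it does not follow the paper's own route: the paper's argument is explicitly a \emph{sketch} whose main content is to flag what ``$b$ fresh'' even means for syntax --- fresh for the raw predicate, fresh for its $\alpha$-equivalence class, or $\new$-quantified --- to note that these readings coincide, and then to outsource the technical verification to the nominal abstract syntax machinery of \cite[Section~5]{gabbay:fountl}, notably Lemma~5.16 there. Your first route (predicates modulo $\alpha$ form a $\sigma$-algebra by Example~\ref{xmpl.fot}, so Lemma~\ref{lemm.sub.alpha} applies) is essentially that slick proof in disguise; but beware that within this paper Example~\ref{xmpl.fot} merely \emph{asserts}, without proof, that capture-avoiding substitution satisfies \rulefont{\sigma id} and \rulefont{\sigma\alpha}, and verifying \rulefont{\sigma\alpha} on syntax is the same kind of fact as the lemma itself --- so on its own this route displaces rather than discharges the obligation, a circularity you rightly concede in your closing sentence. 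What makes your proposal a genuine proof is the second route: the structural induction, with the term cases giving literal equality (no binders, so $\fa(r)$ is all atoms of $r$ and no capture can arise), and the $\tall c.\psi$ case handled by first freshening $c$ away from $a,b$ (licensed by the some/any property of Theorem~\ref{thrm.new.equiv}), so that substitution and swapping both commute with $\tall c$ and $b\#\psi$ follows from $b\#\phi$ together with $b\neq c$. That induction is sound and is precisely the elementary argument the paper chose not to write out: it buys self-containedness at the cost of engaging directly with raw syntax and $\alpha$-renaming, whereas the paper buys brevity by standing on previously established nominal abstract syntax, and in exchange makes explicit the fresh-for-what subtlety that your induction resolves silently by working with $\fa(\phi)$ throughout.
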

\begin{proof}[Sketch proof]
The substitution $[a\sm b]$ in $\phi[a\sm b]$ is the standard capture-avoiding substitution on the syntax of the predicate $\phi$, from Definition~\ref{defn.terms.and.predicates}.
The permutation $(b\ a)\act\phi$ is the permutation acting on the syntax of $\phi$.
But when we write `$b$ fresh', what exactly do we mean; fresh for what?  
For $\phi$? 
For the $\alpha$-equivalence class of $\phi$?  
Or are we using the $\new$-quantifier?
In fact, all of these are equivalent. 
The interested reader can find the key technical devices for a slick proof in \cite[Section~5]{gabbay:fountl}, notably in Lemma~5.16. 
\end{proof}

\begin{prop}
\label{prop.these.are.equivalent}
Suppose $p$ is a filter and $a$ is an atom.
Then:
\begin{enumerate}
\item
The following conditions are equivalent (below, $n$ ranges over all atoms, including $a$):
$$
\tall a.\phi\cin p
\quad\liff\quad
\Forall{u}\phi[a\sm u]\cin p
\quad\liff\quad
\Forall{n{\in}\mathbb A}\phi[a\sm n]\cin p
\quad\liff\quad
\New{b}(b\ a)\act \phi\cin p
$$
\item
If furthermore $p$ is finitely supported and $a\#p$ then the following conditions are equivalent:
$$
\tall a.\phi\cin p\quad\liff\quad \phi\cin p 
$$
\end{enumerate}
\end{prop}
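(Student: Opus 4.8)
The plan is to prove Part~1 as a cycle of four implications threaded through the four stated conditions, and then to derive Part~2 from Part~1 together with equivariance of $p$. Write (A) for $\tall a.\phi\cin p$, (B) for $\Forall{u}\phi[a\sm u]\cin p$, (C) for $\Forall{n{\in}\mathbb A}\phi[a\sm n]\cin p$, and (D) for $\New{b}(b\ a)\act\phi\cin p$. I would establish the cycle (A)$\Rightarrow$(B)$\Rightarrow$(C)$\Rightarrow$(D)$\Rightarrow$(A), which proves all four equivalent.

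For (A)$\Rightarrow$(B): for any term $u$ the sequent $\tall a.\phi\cent\phi[a\sm u]$ is derivable (instantiate \rulefont{Hyp} to $\phi[a\sm u]\cent\phi[a\sm u]$ and apply \rulefont{\tall L}), so deductive closure (condition~\ref{filter.up} of Definition~\ref{defn.filter}) gives $\phi[a\sm u]\cin p$. The implication (B)$\Rightarrow$(C) is immediate, since atoms are terms. For (C)$\Rightarrow$(D): choose $b$ fresh for $\phi$; by Lemma~\ref{lemm.freshness} $\phi[a\sm b]\aeq(b\ a)\act\phi$, and since predicates are taken up to $\alpha$-equivalence this is an equality of elements, so $(b\ a)\act\phi\cin p$ follows from (C) with $n=b$. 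As this holds for all but finitely many $b$, we obtain (D). Finally (D)$\Rightarrow$(A) is exactly condition~\ref{filter.new} of Definition~\ref{defn.filter}.

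For Part~2, the forward direction follows from Part~1 without using $a\#p$: (A) yields (C), and instantiating $n=a$ gives $\phi[a\sm a]=\phi\cin p$, using the syntactic fact that substituting $a$ for $a$ is the identity. For the backward direction, assume $\phi\cin p$ with $p$ finitely supported and $a\#p$. By Part~1 it suffices to verify (D). Pick $b$ fresh, so that $b\#p$; then $(b\ a)$ fixes $\supp(p)$ pointwise, whence $(b\ a)\act p=p$ by part~1 of Corollary~\ref{corr.stuff}. Applying $(b\ a)$ to the membership $\phi\cin p$ gives $(b\ a)\act\phi\cin(b\ a)\act p=p$; as this holds for cofinitely many $b$ we obtain (D), and hence (A).

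The routine parts are the derivability of $\tall a.\phi\cent\phi[a\sm u]$ and the bookkeeping of the $\new$-quantifier. The one genuinely delicate point is that $p$ need not be finitely supported (Remark~\ref{rmrk.p.not.finite.support}), so the usual translation of $\new$ into $\forall$-plus-freshness and $\exists$-plus-freshness from Theorem~\ref{thrm.new.equiv} cannot be applied to statements about membership in $p$. This is precisely why I would organise Part~1 as a cycle that never needs to split $\new$ over $p$, and why finite support of $p$ appears as an explicit extra hypothesis in Part~2, where it is exactly what licenses the equivariance step $(b\ a)\act p=p$.
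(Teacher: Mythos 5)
Your proof is correct and follows essentially the same route as the paper: Part~1 is established by the identical cycle $\tall a.\phi\cin p\Rightarrow\Forall{u}\phi[a\sm u]\cin p\Rightarrow\Forall{n}\phi[a\sm n]\cin p\Rightarrow\New{b}(b\ a)\act\phi\cin p\Rightarrow\tall a.\phi\cin p$, using \rulefont{\tall L} with deductive closure, Lemma~\ref{lemm.freshness}, and condition~\ref{filter.new} of Definition~\ref{defn.filter} exactly as the paper does. For Part~2 the paper simply cites Theorem~\ref{thrm.new.equiv}; your explicit equivariance computation (choosing $b\#p$ so that $(b\ a)\act p=p$ by Corollary~\ref{corr.stuff}) is precisely the content of that citation, and your observations that the forward direction needs no freshness hypothesis and that Part~1 deliberately avoids splitting $\new$ over the possibly non-finitely-supported $p$ are both accurate.
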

\begin{proof}
First we prove part~1.
Suppose $\tall a.\phi\cin p$.
By \rulefont{\tall L} from Figure~\ref{fig.FOL} and condition~\ref{filter.up} of Definition~\ref{defn.filter} also $\phi[a\sm u]\cin p$ for every term $u$.
It follows in particular that $\phi[a\sm n]\cin p$ for every atom (i.e. variable symbol) $n$.
It follows that $\phi[a\sm b]\in p$ for all $b\#\phi$ so by Lemma~\ref{lemm.freshness} also $\New{b}(b\ a)\act \phi\cin p$.
Finally if $\New{b}(b\ a)\act \phi\cin p$ then by condition~\ref{filter.new} of Definition~\ref{defn.filter} also $\tall a.\phi\cin p$.

Part~2 follows from part~1 using Theorem~\ref{thrm.new.equiv}.
\end{proof}

It will be useful to have Notation~\ref{nttn.ultrafilter}:
\begin{nttn}
\label{nttn.ultrafilter}
Call a filter $p$ an \deffont{ultrafilter} when for every $\phi$, precisely one of $\phi\cin p$ and $\tneg\phi\in p$ holds.
\end{nttn}

\begin{rmrk}
\label{rmrk.ramification}
Proposition~\ref{prop.these.are.equivalent} has some interesting ramifications which relate back to Remark~\ref{rmrk.p.not.finite.support}.

We will shortly see how to construct ultrafilters (which are also maximally consistent sets of predicates) in Subsections~\ref{subsect.prime.ultra} and~\ref{subsect.zorn}, and 
organise them into an $\amgis$-algebra of \emph{points} in Subsection~\ref{subsect.amgis.on.filters} (see in particular Proposition~\ref{prop.points.amgis}).
We will use this to prove completeness in Subsection~\ref{subsect.proof.completeness}.

Proposition~\ref{prop.these.are.equivalent} lets us make some interesting observations about what these ultrafilters look like.
It is not hard to prove that if an ultrafilter $p$ has finite support then by Proposition~\ref{prop.these.are.equivalent} for any $a\#p$ and $\phi$,  
$$
\tall a.\phi\in p\ \ \liff\ \ \phi\in p\ \ \liff\ \ \texi a.\phi\in p
$$
where $\texi a.\phi$ is shorthand for $\tneg(\tall a.\tneg\phi)$,
and so for any $a$ (even if it is not fresh for $p$) and $\phi$, 
$$
((\tall a.\phi)\ \tiff\ \texi a.\phi)\in p . 
$$
So a finitely supported ultrafilter `believes' that $\tall=\texi$, and so `believes' that quantification is trivial.  
Quantification motivates this paper, so a finitely supported ultrafilter is for us degenerate and uninteresting. 

In that sense Proposition~\ref{prop.these.are.equivalent} implies that ultrafilters must have infinite support---at least, the ones we care about must have this property.

Ultrafilters are dual to predicates in the sense of duality theory;
the dual notion to `must have finite support' (like predicates and open sets) is evidently `must \emph{not} have finite support' (like ultrafilters).\footnote{Note that as discussed above, the story is more subtle than that.  Ultrafilters can have finite support; it is just that the finitely supported ultrafilters represent worlds in which quantification is not interesting.   More work could be done to make this discussion fully formal.} 

Informally this seems reasonable---but to some extent, Proposition~\ref{prop.these.are.equivalent} makes it formal.
\end{rmrk}

\subsubsection{Growing filters}

It will be useful to make larger filters out of smaller filters:
\begin{defn}
\label{defn.tand.bar}
Suppose $p$ is a set of predicates and $\psi$ is a predicate.
Then define:
$$
\begin{array}{r@{\ }l}
p{+}\psi&=\{\xi \mid \phi\tand \psi\cent\xi,\  \phi\in p\}
\end{array}
$$
\end{defn}

\begin{lemm}
\label{lemm.technical.contradiction}
Suppose $p$ is a finitely supported filter and suppose $\psi$ is a predicate.
Then:
\begin{itemize*}
\item
$p\subseteq p{+}\psi$.
\item
$\psi\in p{+}\psi$.
\item
$p{+}\psi$ is closed under conditions~\ref{filter.up} to~\ref{filter.new} of Definition~\ref{defn.filter}.
\end{itemize*}
As a corollary, if $Z$ is an ideal and $p{+}y\cap Z=\varnothing$ then $p{+}y$ is a filter.
\end{lemm}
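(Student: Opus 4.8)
The plan is to dispatch the first two bullets and closure under conditions~\ref{filter.up} and~\ref{filter.and} by routine sequent manipulations, and then to spend the real effort on closure under condition~\ref{filter.new}. For $p\subseteq p{+}\psi$, given $\xi\in p$ take $\phi=\xi$ and use $\xi\tand\psi\cent\xi$; for $\psi\in p{+}\psi$, pick any $\phi\in p$ (filters are nonempty) and use $\phi\tand\psi\cent\psi$. Closure under~\ref{filter.up} is transitivity of $\cent$ (cut): from $\phi\tand\psi\cent\xi$ and $\xi\cent\xi'$ get $\phi\tand\psi\cent\xi'$. Closure under~\ref{filter.and} uses that $p$ itself is closed under $\tand$: from witnesses $\phi_1,\phi_2\in p$ for $\xi_1,\xi_2$ we obtain $\phi_1\tand\phi_2\in p$, and $(\phi_1\tand\phi_2)\tand\psi\cent\xi_1\tand\xi_2$.

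The substantive step is closure under condition~\ref{filter.new}, and this is where the main obstacle lies. The naive route — choose a fresh $b$ with $(b\ a)\act\xi\in p{+}\psi$, take a witness $\phi\in p$ with $\phi\tand\psi\cent(b\ a)\act\xi$, and apply $\rulefont{\tall R}$ — fails, because $\rulefont{\tall R}$ needs $b\notin\fa(\phi\tand\psi)$, and although $b\#p$ we have no guarantee that $b\#\phi$ for the witnessing $\phi\in p$. This is exactly the phenomenon of Lemma~\ref{lemm.pow.not.true}: $p$ need not be strictly supported, so freshness for $p$ does not pass to its elements. I would sidestep this by pushing the quantifier work into $p$, which is closed under~\ref{filter.new} by hypothesis. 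The device is the deduction theorem: since the logic is classical, $\xi\in p{+}\psi$ if and only if $(\psi\timp\xi)\in p$ (left-to-right by $\timp$-introduction followed by deductive closure of $p$; right-to-left by taking the witness $\psi\timp\xi$ and using modus ponens, i.e. $(\psi\timp\xi)\tand\psi\cent\xi$).

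Using this reformulation, the plan for~\ref{filter.new} is as follows. First reduce to the case $a\#\psi$ by $\alpha$-renaming the bound atom of $\tall a.\xi$; this is legitimate because $\tall a.\xi$ is an $\alpha$-equivalence class, and the hypothesis $\New{b}(b\ a)\act\xi\in p{+}\psi$ transfers to the renamed body (the relevant instances $(b\ a)\act\xi$ are unchanged for cofinitely many $b$). With $a\#\psi$ in force, for all but finitely many $b$ we have both $b\#\psi$ and $(b\ a)\act\xi\in p{+}\psi$; then $(b\ a)\act\psi=\psi$, so $\psi\timp(b\ a)\act\xi=(b\ a)\act(\psi\timp\xi)$, giving $\New{b}(b\ a)\act(\psi\timp\xi)\in p$. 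Condition~\ref{filter.new} for $p$ then yields $\tall a.(\psi\timp\xi)\in p$. Finally, $\tall a.(\psi\timp\xi)\cent(\psi\timp\tall a.\xi)$ is a standard first-order derivation when $a\#\psi$, so deductive closure of $p$ gives $(\psi\timp\tall a.\xi)\in p$, that is $\tall a.\xi\in p{+}\psi$. Finite support of $p$ enters only to underwrite the some/any reading of the $\new$-quantifier via Theorem~\ref{thrm.new.equiv} when choosing the fresh $b$.

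For the corollary, note that $p{+}\psi$ is nonempty (it contains the nonempty set $p$) and is closed under~\ref{filter.up}--\ref{filter.new} by the above, so only consistency (condition~\ref{filter.proper}) remains to be checked. If $\tbot\in p{+}\psi$, then since $p{+}\psi$ is closed under~\ref{filter.up} and $\tbot\cent\zeta$ for every predicate $\zeta$ (rule $\rulefont{\tbot L}$), $p{+}\psi$ would contain every predicate; in particular it would meet the ideal $Z$, which is nonempty by Definition~\ref{defn.ideal}, contradicting $(p{+}\psi)\cap Z=\varnothing$. Hence $\tbot\notin p{+}\psi$, and $p{+}\psi$ satisfies all four conditions of Definition~\ref{defn.filter}, so it is a filter.
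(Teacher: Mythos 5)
Your proof is correct, and for the crucial closure under condition~\ref{filter.new} it takes a genuinely different route from the paper. You correctly identify the obstacle (an atom $b$ fresh for $p$ need not be fresh for the witness $\phi\in p$, exactly the phenomenon of Lemma~\ref{lemm.pow.not.true}); the paper's fix is to keep the witness but quantify away the offending atom: it picks $b\#\psi,\xi,p$ among the cofinitely many $b$, invokes Proposition~\ref{prop.these.are.equivalent}(2) to upgrade $\phi_b\in p$ to $\tall b.\phi_b\in p$ (this is precisely where finite support of $p$ is used), and then derives $(\tall b.\phi_b)\tand\psi\cent\tall b.(b\ a)\act\xi=\tall a.\xi$. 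Your fix instead eliminates the witness altogether via the deduction-theorem reformulation $p{+}\psi=\{\xi\mid(\psi\timp\xi)\in p\}$, pushes the quantifier into $p$ using $p$'s own condition~\ref{filter.new} applied to $\psi\timp\xi$, and finishes with the standard entailment $\tall a.(\psi\timp\xi)\cent\psi\timp\tall a.\xi$ for $a\#\psi$; your preliminary $\alpha$-renaming to secure $a\#\psi$ also checks out, since for $a',b\#\xi$ the permutations $(b\ a')\circ(a'\ a)$ and $(b\ a)$ agree on $\supp(\xi)$, so by Corollary~\ref{corr.stuff} the hypothesis $\New{b}(b\ a)\act\xi\in p{+}\psi$ transfers to the renamed body. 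What each approach buys: the paper's argument works with the raw definition of $p{+}\psi$ and reuses the already-established Proposition~\ref{prop.these.are.equivalent}, which makes visible why finite support appears in the hypothesis; yours is arguably cleaner, since the reformulation makes every closure property a one-liner --- and, contrary to your closing remark, it never actually uses finite support of $p$ at all (you only intersect cofinite sets of $b$'s and apply condition~\ref{filter.new} of $p$ directly; no some/any transfer over $\supp(p)$ via Theorem~\ref{thrm.new.equiv} is ever performed), so you have in fact proved the slightly stronger statement for arbitrary, not necessarily finitely supported, filters. Your handling of the corollary (consistency from disjointness with the nonempty ideal $Z$, since $\tbot\in p{+}\psi$ together with closure under condition~\ref{filter.up} would force every predicate, hence some element of $Z$, into $p{+}\psi$) is a harmless variant of the paper's observation that $\tbot\in Z$ by down-closure.
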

\begin{proof}
The corollary follows from the body of this result because since from condition~\ref{ideal.down} of Definition~\ref{defn.ideal} $\tbot\in Z$, so $\ttop\not\in p$.

We now consider the body of this result.
It is clear from the construction that $p\subseteq p{+}\psi$ and $\psi\in p{+}\psi$.
We now check that $p{+}\psi$ satisfies conditions~\ref{filter.up} to~\ref{filter.new} of Definition~\ref{defn.filter}:
\begin{enumerate*}
\setcounter{enumi}{1}
\item
\emph{If $\xi\in p{+}\psi$ and $\xi\cent \xi'$ then $\psi'\in p{+}\psi$.}\quad
By construction.
\item
\emph{If $\xi\in p{+}\psi$ and $\xi'\in p{+}\psi$ then $\xi{\tand}\xi'\in p{+}\psi$.}\quad
Suppose $\phi\tand \psi\cent\xi$ and $\phi'\tand \psi\cent\xi'$ for $\phi,\phi'\in p$.
Then by condition~\ref{filter.and} of Definition~\ref{defn.filter} $\phi\tand \phi'\in p$, and it is a fact that $(\phi{\tand}\phi')\tand \psi\cent \xi{\tand}\xi'$.
\item
\emph{If $\New{b}((b\ a)\act \xi\cin p{+}\psi)$ then $\tall a.\xi\in p{+}\psi$.}\quad
Suppose for cofinitely many $b$ there exists a $\phi_b\cin p$ such that $\phi_b{\tand}\psi\cent (b\ a)\act \xi$.
Then there certainly exists some $b$ such that $b\#\psi,\xi,p$ and $\phi_b{\tand}\psi\cent (b\ a)\act \xi$.
Note by Proposition~\ref{prop.these.are.equivalent}(2) that also $\tall b.\phi_b\cin p$.

We apply $\tall b$ to both sides and we conclude that 
$$
(\tall b.\phi_b){\tand}\psi\cent \tall b.(b\ a)\act \xi = \tall a.\xi.
$$
By Lemma~\ref{lemm.freshwedge.alpha} and condition~\ref{filter.up} of Definition~\ref{defn.filter} we conclude that $\tall a.\xi\in p{+}\psi$ as required.
\end{enumerate*}
\end{proof}

\subsubsection{Growing ideals}

It will be useful to make larger ideals out of smaller ideals:
\begin{defn}
\label{defn.tor.bar}
Suppose $Z$ and $Y$ are sets of predicates.
Then define $Z{+}Y$ by
\begin{frameqn}
Z{+}Y=\{\xi'\mid \Exists{\xi{\in}Z,n{\in}\mathbb N,\psi_1,\dots,\psi_n{\in}Y}\xi'\leq \xi\tor \psi_1\tor\dots\tor \psi_n\} .
\end{frameqn}
\end{defn}

Lemma~\ref{lemm.ideal.plus} is a version of Lemma~\ref{lemm.technical.contradiction} for ideals.
It is the simpler result, because Definition~\ref{defn.ideal} has nothing corresponding to condition~\ref{filter.up} of Definition~\ref{defn.filter}:
\begin{lemm}
\label{lemm.ideal.plus}
Suppose $Z$ is an ideal and $Y$ is some set of predicates.
Then:
\begin{itemize*}
\item
$Z\subseteq Z{+}Y$ and $Y\subseteq Z{+}Y$.
\item
$Z{+}Y$ is closed under conditions~2 and~3 of Definition~\ref{defn.ideal} (so that if $\ttop\not\in Z{+}Y$ then it is an ideal).
\end{itemize*}
\end{lemm}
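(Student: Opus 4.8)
The plan is to verify the two inclusions first and then the two closure conditions, reading everything off the defining formula for $Z{+}Y$ in Definition~\ref{defn.tor.bar} together with the lattice properties of $\tor$ and the fact that $Z$ is already an ideal. For the inclusion $Z\subseteq Z{+}Y$ I would take any $\xi\in Z$ and witness membership in $Z{+}Y$ with $n=0$ (the empty list of $\psi_i$): the defining inequality then reads $\xi\leq\xi$, which holds by reflexivity of $\leq$. For the inclusion $Y\subseteq Z{+}Y$ I would use that an ideal is nonempty (Definition~\ref{defn.ideal}): fix any $\xi_0\in Z$, and for $\psi\in Y$ observe that $\psi\leq \xi_0\tor\psi$ (a disjunct is below the disjunction), which is exactly a witness with $n=1$ and $\psi_1=\psi$. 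So the only nontrivial input for the inclusions is that $Z$ is inhabited.

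Next I would check condition~2 (down-closure). Suppose $\xi''\leq\xi'$ and $\xi'\in Z{+}Y$, so there are $\xi\in Z$ and $\psi_1,\dots,\psi_n\in Y$ with $\xi'\leq \xi\tor\psi_1\tor\dots\tor\psi_n$. By transitivity of $\leq$ we get $\xi''\leq \xi\tor\psi_1\tor\dots\tor\psi_n$ with the \emph{same} witnesses, so $\xi''\in Z{+}Y$. This step is immediate and uses nothing beyond transitivity.

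The one step needing a little care is condition~3 (closure under $\tor$). Given $\xi_1',\xi_2'\in Z{+}Y$, take witnesses $\xi_1'\leq \xi_1\tor\psi_1\tor\dots\tor\psi_m$ and $\xi_2'\leq \xi_2\tor\chi_1\tor\dots\tor\chi_k$ with $\xi_1,\xi_2\in Z$ and the $\psi_i,\chi_j\in Y$. Monotonicity of $\tor$ gives $\xi_1'\tor\xi_2'\leq (\xi_1\tor\psi_1\tor\dots\tor\psi_m)\tor(\xi_2\tor\chi_1\tor\dots\tor\chi_k)$, and here I would invoke commutativity and associativity of $\tor$ to regroup the right-hand side as $(\xi_1\tor\xi_2)\tor\psi_1\tor\dots\tor\psi_m\tor\chi_1\tor\dots\tor\chi_k$. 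The crucial observation is that $\xi_1\tor\xi_2\in Z$ by condition~3 of Definition~\ref{defn.ideal} (since $Z$ is an ideal), so this regrouped bound has exactly the shape required by Definition~\ref{defn.tor.bar}, witnessing $\xi_1'\tor\xi_2'\in Z{+}Y$. Thus the potential obstacle---keeping the witness in the prescribed single-$Z$-element-plus-finitely-many-$Y$-elements form---is resolved precisely by the closure of $Z$ under $\tor$.

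Finally, the parenthetical corollary is then automatic: $Z{+}Y$ is nonempty (it contains $Z$, which is nonempty) and closed under conditions~2 and~3, so if additionally $\ttop\not\in Z{+}Y$ (condition~1) then $Z{+}Y$ satisfies all three clauses of Definition~\ref{defn.ideal} and is an ideal. I expect no serious difficulty anywhere; the proof is routine lattice bookkeeping, with the only genuinely load-bearing facts being the nonemptiness of $Z$ (for $Y\subseteq Z{+}Y$) and the closure of $Z$ under $\tor$ (for condition~3).
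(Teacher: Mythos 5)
Your proof is correct and fills in exactly the routine verification the paper leaves implicit (its proof is just ``By routine calculations''). You correctly identify the only load-bearing facts---nonemptiness of $Z$ for $Y\subseteq Z{+}Y$, and closure of $Z$ under $\tor$ for condition~3---and your use of the $n=0$ witness for $Z\subseteq Z{+}Y$ is the right reading of Definition~\ref{defn.tor.bar}.
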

\begin{proof}
By routine calculations.
\end{proof}

\subsubsection{Prime filters and ultrafilters}
\label{subsect.prime.ultra}

\begin{defn}
\label{defn.prime.filter}
\begin{itemize*}
\item
Call a filter $p$ \deffont{prime} when $\phi_1{\tor}\phi_2\in p$ implies either $\phi_1\in p$ or $\phi_2\in p$.
\item
Suppose $p$ is a filter and $Z$ is an ideal.
Call $p$ \deffont{maximal with respect to $Z$} when $p{\cap}Z=\varnothing$ and for every filter $p'$ with $p'{\cap}Z=\varnothing$, if $p\subseteq p'$ then $p=p'$.
\item
Call $p$ \deffont{maximal} when it is maximal with respect to the ideal $\{\tbot\}$.
\end{itemize*}
We will use the terms \emph{prime filter} and \deffont{point} synonymously henceforth (see also Definition~\ref{defn.points}).
\end{defn}

Lemma~\ref{lemm.prime.ultra} is standard, but we still check carefully that being `nominal' does not interfere with the classical propositional structure; it all works.
We need the result for Theorem~\ref{thrm.model.iden}:
\begin{lemm}
\label{lemm.prime.ultra}
A filter $p$ is prime if and only if it is an ultrafilter (Notation~\ref{nttn.ultrafilter}).
\end{lemm}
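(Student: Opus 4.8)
The plan is to prove the two implications separately, relying only on the propositional (Boolean) content of Definition~\ref{defn.filter}; the quantifier condition~\ref{filter.new} and the permutation action play no role, which is exactly the substance of the surrounding remark that `being nominal does not interfere with the classical propositional structure'. So no freshness, support, or $\new$-reasoning is needed, and every manipulation reduces to a derivable entailment transported into $p$ by deductive closure (condition~\ref{filter.up}).

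First I would show that an ultrafilter is prime. Suppose $p$ is an ultrafilter and $\phi_1\tor\phi_2\in p$, and suppose toward a contradiction that neither $\phi_1\in p$ nor $\phi_2\in p$. By the ultrafilter property (Notation~\ref{nttn.ultrafilter}) we get $\tneg\phi_1\in p$ and $\tneg\phi_2\in p$, so by condition~\ref{filter.and} also $(\tneg\phi_1)\tand(\tneg\phi_2)\in p$. Now $\phi_1\tor\phi_2$ is by Notation~\ref{nttn.fol.sugar} sugar for $\tneg((\tneg\phi_1)\tand(\tneg\phi_2))$, so $(\tneg\phi_1)\tand(\tneg\phi_2)\cent\tneg(\phi_1\tor\phi_2)$ is derivable (double-negation introduction), and condition~\ref{filter.up} gives $\tneg(\phi_1\tor\phi_2)\in p$. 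Thus both $\phi_1\tor\phi_2$ and its negation lie in $p$, contradicting the ultrafilter property that precisely one of the two holds.

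Next I would show that a prime filter is an ultrafilter, i.e.\ that for every $\phi$ exactly one of $\phi\in p$ and $\tneg\phi\in p$ holds. Not both can hold: if $\phi\in p$ and $\tneg\phi\in p$, then condition~\ref{filter.and} gives $\phi\tand\tneg\phi\in p$, and since $\phi\tand\tneg\phi\cent\tbot$ is derivable, condition~\ref{filter.up} forces $\tbot\in p$, contradicting consistency (condition~\ref{filter.proper}). At least one holds: $p$ is nonempty, so fix any $\chi\in p$; classical excluded middle gives the derivable entailment $\chi\cent\phi\tor\tneg\phi$, whence $\phi\tor\tneg\phi\in p$ by condition~\ref{filter.up}, and primeness yields $\phi\in p$ or $\tneg\phi\in p$.

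I do not expect any genuine obstacle here. The only thing to check is that each propositional step used---de Morgan/double-negation, $\phi\tand\tneg\phi\cent\tbot$, and excluded middle---is indeed derivable at the level of $\cent$ in the classical sequent system of Figure~\ref{fig.FOL}; these are all routine. The conceptual content of the lemma is precisely that the nominal apparatus is orthogonal to this argument, so the classical Boolean proof carries over verbatim.
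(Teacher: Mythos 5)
Your proof is correct and follows essentially the same route as the paper's: both directions rest purely on conditions~\ref{filter.proper}--\ref{filter.and} of Definition~\ref{defn.filter}, using nonemptiness plus deductive closure to place $\phi\tor\tneg\phi$ in $p$ and primeness/consistency to get ``exactly one,'' and for the converse placing $\tneg\phi_1\tand\tneg\phi_2$ in $p$ to force a contradiction. The only cosmetic difference is that the paper concludes the converse by deriving $\tbot\cin p$ against consistency, whereas you contradict the ``precisely one'' clause of Notation~\ref{nttn.ultrafilter}; these are interchangeable.
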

\begin{proof}
Suppose $p$ is prime.
We note that $(\xi\tor\tneg\xi)\liff\ttop$ so by condition~\ref{filter.up} of Definition~\ref{defn.filter} and non-emptiness of $p$, $\xi\tor\tneg\xi\cin p$ (we use the conditions of Definition~\ref{defn.filter} silently henceforth).
By primeness of $p$, at least one of $\xi\cin p$ and $\tneg\xi\cin p$ holds.
Similarly $(\xi\tand\tneg\xi)\liff\tbot$ so that $(\xi\tand\tneg\xi)\not\in p$.
It follows that precisely one of $\xi\cin p$ and $\tneg\xi\cin p$ holds, so that $p$ is an ultrafilter.

Conversely suppose $p$ is an ultrafilter and suppose $\xi\tor\xi'\in p$ and yet $\xi\not\cin p$ and $\xi'\not\cin p$.
Then $\tneg\xi\tand\tneg\xi'\cin p$, so that $\tbot\cin p$, a contradiction.
\end{proof}

\begin{prop}
\label{prop.max.2}
Suppose $p$ is a filter and $Z$ is an ideal. 
If $p$ is a maximal filter with respect to $Z$ then it is prime.
\end{prop}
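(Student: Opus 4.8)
The plan is to argue by contradiction in the standard prime/maximal-filter style, with all the nominal subtlety outsourced to the ``growing filters'' machinery of Definition~\ref{defn.tand.bar} and Lemma~\ref{lemm.technical.contradiction}. Suppose $p$ is maximal with respect to $Z$ but \emph{not} prime, so there are predicates $\phi_1,\phi_2$ with $\phi_1\tor\phi_2\cin p$ while $\phi_1\not\cin p$ and $\phi_2\not\cin p$. The goal is to manufacture an element of $p\cap Z$, contradicting the requirement $p\cap Z=\varnothing$ built into maximality.

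First I would form the two enlarged sets $p{+}\phi_1$ and $p{+}\phi_2$ of Definition~\ref{defn.tand.bar}. By Lemma~\ref{lemm.technical.contradiction} each contains $p$ and its respective $\phi_i$, and each is closed under conditions~\ref{filter.up}--\ref{filter.new} of Definition~\ref{defn.filter}; moreover, since $\phi_i\not\cin p$, the inclusion $p\subseteq p{+}\phi_i$ is strict. Now I invoke maximality: if $(p{+}\phi_i)\cap Z=\varnothing$ then by the corollary to Lemma~\ref{lemm.technical.contradiction} the set $p{+}\phi_i$ would be a genuine filter, disjoint from $Z$ and strictly larger than $p$, contradicting maximality of $p$. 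Hence for each $i$ we must have $(p{+}\phi_i)\cap Z\neq\varnothing$.

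Next I extract witnesses. From $(p{+}\phi_i)\cap Z\neq\varnothing$ and the definition of $p{+}\phi_i$ there are $\alpha_i\cin p$ and $\zeta_i\cin Z$ with $\alpha_i\tand\phi_i\cent\zeta_i$. Set $\alpha=\alpha_1\tand\alpha_2$, which lies in $p$ by condition~\ref{filter.and} of Definition~\ref{defn.filter} and satisfies $\alpha\tand\phi_i\cent\zeta_i$ for $i=1,2$. Using classical distributivity of $\tand$ over $\tor$,
$$
\alpha\tand(\phi_1\tor\phi_2)\ \cent\ (\alpha\tand\phi_1)\tor(\alpha\tand\phi_2)\ \cent\ \zeta_1\tor\zeta_2 .
$$
Since $\phi_1\tor\phi_2\cin p$ and $\alpha\cin p$, condition~\ref{filter.and} gives $\alpha\tand(\phi_1\tor\phi_2)\cin p$, and deductive closure (condition~\ref{filter.up}) then yields $\zeta_1\tor\zeta_2\cin p$. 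But $Z$ is an ideal, so by condition~\ref{ideal.or} of Definition~\ref{defn.ideal} also $\zeta_1\tor\zeta_2\cin Z$. Thus $\zeta_1\tor\zeta_2\cin p\cap Z$, the desired contradiction; hence $p$ is prime.

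The purely propositional steps here (the distributivity entailment and the combination of the two witnesses via $\tand$) are entirely classical and present no difficulty. The one genuinely nominal point—and the place where, absent the preceding development, the argument would stall—is ensuring that each $p{+}\phi_i$ is an honest \emph{filter} and not merely a $\tand$-closed, deductively closed set: a filter must additionally satisfy the universal-quantifier closure condition~\ref{filter.new}, and it is exactly this that Lemma~\ref{lemm.technical.contradiction} (through Proposition~\ref{prop.these.are.equivalent} and the finite support of $p$) secures. So the real obstacle has already been discharged upstream, and the work remaining in this proposition is to apply the corollary to Lemma~\ref{lemm.technical.contradiction} correctly and bookkeep the two witnesses.
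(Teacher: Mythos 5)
Your proof is correct and follows essentially the same route as the paper's: grow $p$ by each disjunct via Definition~\ref{defn.tand.bar}, use Lemma~\ref{lemm.technical.contradiction} together with maximality to force $(p{+}\phi_i)\cap Z\neq\varnothing$, and then combine the witnesses classically to exhibit an element of $p\cap Z$. The only difference is cosmetic bookkeeping in the propositional endgame --- you conjoin the two witnesses into a single $\alpha$ and land $\zeta_1\tor\zeta_2$ in both $p$ and $Z$, whereas the paper pushes $\phi_i\tand\psi_i$ down into $Z$ and expands $(\phi_1{\tand}\psi_1)\tor(\phi_2{\tand}\psi_2)$ into a four-fold conjunction of disjunctions lying in $p$; your variant is, if anything, slightly tidier.
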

\begin{proof}
Suppose $\psi_1\tor \psi_2\cin p$ and $\psi_1,\psi_2\not\cin p$.
By Lemma~\ref{lemm.technical.contradiction} and maximality we have that $p{+}\psi_1\cap Z\neq\varnothing$ and $p{+}\psi_2\cap Z\neq\varnothing$.
It follows that there exist $\phi_1,\phi_2\cin p$ with $\phi_1{\tand}\psi_1\cin Z$ and $\phi_2{\tand}\psi_2\cin Z$.
By condition~\ref{ideal.or} of Definition~\ref{defn.ideal}
$$
(\phi_1{\tand}\psi_1)\tor(\phi_2{\tand}\psi_2)\cin Z .
$$
Now we rearrange the left-hand side to deduce that
$$
\zeta=(\phi_1{\tor}\phi_2)\tand (\phi_1{\tor}\psi_2) \tand (\psi_1{\tor}\phi_2) \tand (\psi_1{\tor}\psi_2)\cin Z .
$$
We now note that $\phi_1{\tor}\phi_2\cin p$ (since $\phi_1\cin p$, and indeed also $\phi_2\cin p$) and $\phi_1{\tor}\psi_2\cin p$ (since $\phi_1\cin p$) and $\psi_1{\tor}\phi_2\cin p$ (since $\phi_2\cin p$) and $\psi_1{\tor}\psi_2\cin p$ by assumption.
But then $\zeta\cin p$, so that by condition~\ref{filter.up} of Definition~\ref{defn.filter} $\zeta\in p{\cap}Z$, a contradiction.
\end{proof}

Lemma~\ref{lemm.can.extend.Z} is a technical result which will be useful for proving Theorem~\ref{thrm.maxfilt.zorn}:
\begin{lemm}
\label{lemm.can.extend.Z}
Suppose
\begin{itemize*}
\item
$p$ is a finitely supported filter and $Z$ is an ideal, and suppose
\item
$\tbot\in p{+}\tall a.\psi$ and $p{\cap}Z=\varnothing$.
\end{itemize*}
Write $Y=\{(b\ a)\act \psi\mid b\in \mathbb A\setminus(\supp(p){\cup}\supp(\psi)\cup\{a\})\}$.
Then:
\begin{enumerate*}
\item
$p\cap (Z{+}Y)=\varnothing$.
\item
As a corollary, $Z{+}Y$ is an ideal (and by part~1 is disjoint from $p$).
\end{enumerate*}
\end{lemm}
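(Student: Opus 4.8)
The plan is to prove part~2 first, since it is immediate, and then attack part~1 by contradiction. For part~2: by Lemma~\ref{lemm.ideal.plus} the set $Z{+}Y$ contains $Z$ and $Y$ and is closed under conditions~\ref{ideal.down} and~\ref{ideal.or} of Definition~\ref{defn.ideal}, so the only thing to check to make it an ideal is $\ttop\not\in Z{+}Y$. But $p$ is a nonempty deductively closed filter, so $\ttop\cin p$, and hence $\ttop\not\in Z{+}Y$ follows from part~1. So everything reduces to part~1.

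For part~1, suppose toward a contradiction that some $\zeta\cin p\cap(Z{+}Y)$. Unpacking Definition~\ref{defn.tor.bar}, there are $\xi\cin Z$ and finitely many $b_1,\dots,b_n$ (distinct, each fresh for $p,\psi,a$) with $\zeta\cent\xi\tor\bigvee_{i=1}^n (b_i\;a)\act\psi$; since $\zeta\cin p$ and $p$ is deductively closed, $\xi\tor\bigvee_i (b_i\;a)\act\psi\cin p$. Fix also a witness $\phi\cin p$ for the hypothesis $\tbot\in p{+}\tall a.\psi$, i.e. $\phi\cin p$ with $\phi\tand\tall a.\psi\cent\tbot$ (Definition~\ref{defn.tand.bar}). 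Conjugating $\phi$ by swaps $(b_i\;c)$ with $c$ fresh keeps $\phi$ in $p$ (as $b_i,c\#p$, these swaps fix $\supp(p)$, cf.\ Corollary~\ref{corr.stuff}) and preserves the inconsistency (since $(b_i\;c)\act\tall a.\psi=\tall a.\psi$ because $b_i,c\#\psi$); so without loss of generality each $b_i$ is fresh for $\phi$.

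The engine of the proof is to universally quantify away the fresh atoms $b_i$ using condition~\ref{filter.new} of Definition~\ref{defn.filter}. Because $b_i\#p$, for the predicate $w:=\xi\tor\bigvee_i (b_i\;a)\act\psi$ we have $\New{b}\bigl((b\;b_i)\act w\cin p\bigr)$ for free (every such conjugate lies in $p$), so condition~\ref{filter.new} yields $\tall b_i. w\cin p$. The key point is that $\tall b_i$ acts on the disjunct $(b_i\;a)\act\psi$ by $\alpha$-conversion, turning it into $\tall a.\psi$ (Lemma~\ref{lemm.freshness}), while disjuncts not mentioning $b_i$ are pulled outside the quantifier. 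Iterating over all $b_i$ (each new quantifier commuting with the already-extracted $\tall a.\psi$ and merging the copies), we obtain $\xi\tor\tall a.\psi\cin p$. Finally, meeting with $\phi\cin p$ and using $\phi\tand\tall a.\psi\cent\tbot$,
$$
\phi\tand(\xi\tor\tall a.\psi)\ \cent\ (\phi\tand\xi)\tor(\phi\tand\tall a.\psi)\ \cent\ \phi\tand\xi\ \cent\ \xi,
$$
so deductive closure gives $\xi\cin p$; but $\xi\cin Z$, contradicting $p\cap Z=\varnothing$. This mirrors the universal-quantifier step in the proof of Lemma~\ref{lemm.technical.contradiction}, and throughout I would invoke the some/any property (Theorem~\ref{thrm.new.equiv}, via Proposition~\ref{prop.these.are.equivalent}) when choosing fresh atoms.

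The main obstacle is the interaction between the quantified atoms $b_i$ and the \emph{free} atoms of $\xi$: the clean extraction of $\tall a.\psi$ as a separate disjunct works only when each $b_i$ is fresh for $\xi$, whereas $\xi$ is supplied by the adversary and $Z$ is not assumed finitely supported, so $b_i\cin\fa(\xi)$ cannot be excluded by a symmetry argument as it was for $\phi$. The delicate work is therefore to reduce to the case $b_i\#\xi$ — replacing $\xi$ by a $b_i$-free element of $Z$ using down-closure (condition~\ref{ideal.down} of Definition~\ref{defn.ideal}) and peeling off one fresh variant at a time, so that at each stage the quantifier binds only occurrences coming from $\psi$ and the residual $\xi$-part stays inside $Z$. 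Getting this bookkeeping right, rather than the quantifier manipulation itself, is where the real care lies.
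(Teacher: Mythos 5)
You have correctly identified the crux of this lemma --- the interaction between the quantified atoms $b_i$ and $\fa(\xi)$ --- but the step you defer as ``delicate bookkeeping'' is a genuine gap, and your proposed repair (replace $\xi$ by a $b_i$-free element of $Z$ via down-closure, peeling one variant at a time) cannot succeed. Concretely: take a unary predicate symbol $\tf P$, let $\psi=\tf P(a)$, let $p=(\tneg\tall a.\tf P(a)){\uparrow}$ (a filter by Lemma~\ref{lemm.uparrow.filter}, with $\supp(p)=\varnothing$ since its generator is a closed sentence, and with $\tbot\cin p{+}\tall a.\psi$), and let $Z=(\tneg\tf P(b)){\downarrow}$ for some atom $b\neq a$. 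Then $p\cap Z=\varnothing$, since $\tneg\tall a.\tf P(a)\not\cent\tneg\tf P(b)$; so all the hypotheses hold. Yet $\tneg\tf P(b)\cin Z$ while $\tf P(b)=(b\ a)\act\psi\cin Y$ (as $b\not\in\supp(p)\cup\supp(\psi)\cup\{a\}$), so by Definition~\ref{defn.tor.bar} the tautology $\tneg\tf P(b)\tor\tf P(b)$ puts $\ttop$, and hence all of $p$, inside $Z{+}Y$: part~1 of the lemma fails outright in this instance. In the mechanism of your proof, this is exactly the point where extraction breaks: $\tall b.(\tneg\tf P(b)\tor\tf P(b))$ is just $\ttop$ and yields nothing; and no peeling is available, because down-closure of $Z$ moves to logically \emph{stronger} elements (condition~\ref{ideal.down} of Definition~\ref{defn.ideal} closes $Z$ under $\xi'\leq\xi$), which makes the required entailment $w\cent\xi'\tor\cdots$ harder, while the $b$-free elements of this $Z$ all entail $\tall c.\tneg\tf P(c)$ and cannot support it. One can check that \emph{every} decomposition of $\ttop\cin Z{+}Y$ must use the variant $\tf P(b)$ with $b\in\fa(\xi)$, so no choice of witnessing decomposition rescues the argument either.

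For comparison, the paper's own proof disposes of the troublesome disjunct differently: after distributing $\phi$ over $w$ it deletes $\phi\tand\xi$ by asserting that $p\cap Z=\varnothing$ implies $\phi'\tand\xi'=\tbot$ for all $\phi'\cin p$ and $\xi'\cin Z$. That assertion does not follow from mere disjointness (take $\phi'=\ttop$; it would force every element of $Z$ to be equivalent to $\tbot$), and it is precisely what fails in the instantiation above --- so the obstacle you flagged is real and is also the weak point of the paper's argument, not something your sketch merely failed to transcribe. The remainder of your proposal is sound and matches the paper where it can: the reduction of part~2 to part~1 via Lemma~\ref{lemm.ideal.plus} and $\ttop\cin p$ is the paper's argument; your WLOG $b_i\#\phi$ by swapping with fresh atoms (legitimate, since $b_i,c\#p$ gives $(b_i\ c)\act p=p$ by Corollary~\ref{corr.stuff} and $\cent$ is equivariant) is a valid alternative to the paper's route via Proposition~\ref{prop.these.are.equivalent}(2); and your quantification engine $\tall b_i.w\cin p$ from condition~\ref{filter.new} of Definition~\ref{defn.filter} is correct. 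But the pivotal claim ``we obtain $\xi\tor\tall a.\psi\cin p$'' is unproved, and in the stated generality unprovable: closing it requires an additional freshness hypothesis tying the atoms indexing $Y$ to $\xi$ --- and note that even assuming $Z$ finitely supported and demanding $b\#Z$ would not suffice as written, since $b\#Z$ does not imply $b\#\xi$ for $\xi\cin Z$ (cf.\ Lemma~\ref{lemm.pow.not.true}).
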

\begin{proof}
Suppose $\tbot\in p{+}\tall a.\psi$ and $p{\cap}Z=\varnothing$ and $p\cap (Z{+}Y)\neq \varnothing$.
We note the following:
\begin{itemize*}
\item
Since $p{\cap}(Z{+}Y)\neq\varnothing$, there exist $b_1,\dots,b_n\#p,\psi$ and $\xi{\in}Z$
with $\xi\tor(b_1\ a)\act \psi\tor\dots\tor(b_n\ a)\act \psi\in p$.
\item
Since $\tbot\in p{+}\tall a.\psi$ there exists $\phi{\in}p$ with $\cent\phi\tand\tall a.\psi\,\tiff\, \tbot$.

By Proposition~\ref{prop.these.are.equivalent}(2) (since $b_1,\dots,b_n\#p$) $\tall b_1\dots\tall b_n.\phi\in p$, and we see that we may assume without loss of generality that $b_1,\dots,b_n\#\phi$.
\item
Since $p{\cap}Z=\varnothing$ we have $\Forall{\phi'{\in}p,\xi'{\in}Z}\,(\phi'\tand \xi'{=}\tbot)$.
\end{itemize*}
By condition~\ref{filter.and} of Definition~\ref{defn.filter}
$$
\phi\tand(\xi\tor(b_1\ a)\act \psi\tor\dots\tor(b_n\ a)\act \psi)\in p
$$
and therefore by distributivity (Definition~\ref{defn.distrib})
$$
(\phi\tand \xi)\tor(\phi\tand (b_1\ a)\act \psi)\tor\dots\tor(\phi\tand (b_n\ a)\act \psi)
\stackrel{\phi\tand \xi=\tbot}=
(\phi\tand (b_1\ a)\act \psi)\tor\dots\tor(\phi\tand (b_n\ a)\act \psi)
\in p .
$$
By assumption $b_1,\dots,b_n\#p$ so by Proposition~\ref{prop.these.are.equivalent}(2) we deduce that
$$
\tall b_1\dots b_n.\bigl((\phi\tand (b_1\ a)\act \psi)\tor\dots\tor(\phi\tand (b_n\ a)\act \psi)\bigr)
\in p .
$$
Recall that by assumption $b_1,\dots,b_n\#\phi,\psi$; by properties of first-order logic (including $\alpha$-equivalence and distributivity) we conclude that
$$
\phi\tand \tall a.\psi\in p
\quad\text{and so that}\quad \tbot\in p .
$$
This contradicts condition~\ref{filter.proper} of Definition~\ref{defn.filter}.

For the corollary, it follows from condition~\ref{filter.up} of Definition~\ref{defn.filter} that $\ttop\in p$ so that $\ttop\not\in Z{+}Y$.
We use Lemma~\ref{lemm.ideal.plus}.
\end{proof}

\subsubsection{The Zorn argument}
\label{subsect.zorn}

\begin{lemm}
\label{lemm.ascending.chain.ideals}
If $Z_1\subseteq Z_2\subseteq Z_3\subseteq\dots$ is an ascending chain of ideals then $\bigcup_i Z_i$ is an ideal.
\end{lemm}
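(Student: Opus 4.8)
The plan is to verify directly the three defining conditions of an ideal (Definition~\ref{defn.ideal}) for the set $\bigcup_i Z_i$, together with nonemptiness. Since each $Z_i$ is nonempty, $\bigcup_i Z_i$ is nonempty. For condition~\ref{ideal.top}, I would observe that $\ttop \not\in Z_i$ for every $i$, and therefore $\ttop \not\in \bigcup_i Z_i$.

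For down-closure (condition~\ref{ideal.down}), suppose $\psi \in \bigcup_i Z_i$ and $\psi' \leq \psi$. Then $\psi \in Z_i$ for some $i$, and since $Z_i$ is down-closed we get $\psi' \in Z_i \subseteq \bigcup_i Z_i$. No interaction with the chain hypothesis is needed here.

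The only step where the \emph{chain} hypothesis (rather than merely having an arbitrary family of ideals) is used is closure under $\tor$ (condition~\ref{ideal.or}). Given $\psi, \psi' \in \bigcup_i Z_i$, I would choose $i$ and $j$ with $\psi \in Z_i$ and $\psi' \in Z_j$; since the $Z_k$ form an ascending chain, setting $k = \max(i,j)$ yields $Z_i \subseteq Z_k$ and $Z_j \subseteq Z_k$, so that both $\psi, \psi' \in Z_k$, whence $\psi \tor \psi' \in Z_k \subseteq \bigcup_i Z_i$ by closure of $Z_k$. This directedness observation is the only substantive point, and it is entirely routine. Notably, nothing here touches support or the nominal structure, since all three ideal conditions are purely propositional/order-theoretic, and ideals are not required to be finitely supported (Definition~\ref{defn.ideal}); so there is no real obstacle. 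The lemma is the standard ``union of a chain'' fact, recorded because it supplies exactly the ingredient the forthcoming Zorn's-Lemma argument will need.
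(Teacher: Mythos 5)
Your proof is correct and matches the paper's approach: the paper disposes of this lemma with ``By standard calculations on the three conditions of Definition of ideal,'' and your proposal simply spells out those calculations, correctly isolating the directedness of the chain as the one place the chain hypothesis is needed (for closure under $\tor$).
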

\begin{proof}
By standard calculations on the three conditions of Definition~\ref{defn.ideal}.
\end{proof}

A version of Lemma~\ref{lemm.ascending.chain.ideals} for filters does not hold, because condition~\ref{filter.new} of Definition~\ref{defn.filter} is not closed under ascending chains of filters.
We can still make a Zorn-like argument (for a carefully selected chain) to prove the existence of maximal filters: 
\begin{thrm}
\label{thrm.maxfilt.zorn}
Suppose $p$ is a finitely supported filter and $Z$ is a finitely supported ideal and suppose $p\cap Z=\varnothing$.
Then there exists a prime filter $q$ with $p\subseteq q$ and $q\cap Z=\varnothing$.

As corollaries:
\begin{itemize*}
\item
If $\phi\not\cent\psi$ then there exists a prime filter $q$ such that $\phi\cin q$ and $\psi\not\cin q$.
\item
If $\phi\not\cent\tbot$ then there exists a prime filter $q$ such that $\phi\cin q$.  
\end{itemize*}
\end{thrm}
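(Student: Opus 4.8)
The plan is to obtain $q$ as the union of a carefully built increasing $\omega$-chain of \emph{finitely supported} filters, deciding every predicate as we go. We cannot just appeal to Zorn's Lemma because, as noted before Lemma~\ref{lemm.ascending.chain.ideals}, condition~\ref{filter.new} of Definition~\ref{defn.filter} is not preserved under unions of chains; moreover the two enlargement tools we have, Lemmas~\ref{lemm.technical.contradiction} and~\ref{lemm.can.extend.Z}, require the filter to be finitely supported. So I would enumerate the predicates as $\psi_0,\psi_1,\psi_2,\dots$ in a list of order type $\omega$ (the language being countable; in general one well-orders them and takes extra care at limit stages, but the essential point is the same) and construct pairs $(p_0,Z_0)=(p,Z)$, $(p_0,Z_0)\subseteq(p_1,Z_1)\subseteq\cdots$ with each $p_i$ a finitely supported filter, each $Z_i$ an ideal, and $p_i\cap Z_i=\varnothing$. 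The decisive feature of the $\omega$-enumeration is that each $p_i$ differs from $p$ by only finitely many added predicates and therefore stays finitely supported — exactly what is needed to keep reapplying the enlargement lemmas. Finite support is lost only in the final union $q=\bigcup_i p_i$, where it is harmless and indeed expected (cf.\ Remark~\ref{rmrk.ramification}).

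At the successor step for $\psi_i$ I would use the standard dichotomy that at least one of $p_i{+}\psi_i$ and $p_i{+}\tneg\psi_i$ is disjoint from $Z_i$: if both met $Z_i$ we would have $\phi\tand\psi_i\cent\xi_1$ and $\phi'\tand\tneg\psi_i\cent\xi_2$ with $\phi,\phi'\cin p_i$ and $\xi_1,\xi_2\cin Z_i$, whence $\phi\tand\phi'\cent\xi_1\tor\xi_2\cin Z_i$ by condition~\ref{ideal.or}, so $\phi\tand\phi'\cin Z_i$ by condition~\ref{ideal.down}, contradicting $p_i\cap Z_i=\varnothing$. Taking the disjoint choice and invoking Lemma~\ref{lemm.technical.contradiction} yields a finitely supported filter $p_{i+1}$ in which $\psi_i$ is decided (either $\psi_i\cin p_{i+1}$ or $\tneg\psi_i\cin p_{i+1}$). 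Since $q$ is consistent ($\tbot\cin Z\subseteq Z_{\mathrm{final}}$ and $q\cap Z_{\mathrm{final}}=\varnothing$, by monotonicity of both chains), exactly one of $\psi,\tneg\psi$ lies in $q$ for every $\psi$; thus $q$ is an ultrafilter (Notation~\ref{nttn.ultrafilter}) and hence prime by Lemma~\ref{lemm.prime.ultra}.

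The one genuinely nominal point is making condition~\ref{filter.new} survive into $q$, and this is where I would treat universal predicates specially. When $\psi_i=\tall a.\chi$ falls into the refuting branch, so that $\tneg\tall a.\chi\cin p_{i+1}$ and therefore $\tbot\cin p_{i+1}{+}\tall a.\chi$, I would apply Lemma~\ref{lemm.can.extend.Z} to enlarge the ideal to $Z_{i+1}=Z_i{+}Y$ with $Y=\{(b\ a)\act\chi\mid b\ \text{fresh}\}$; the lemma guarantees $Z_{i+1}$ is still an ideal disjoint from $p_{i+1}$. The payoff is that for all but finitely many $b$ we get $(b\ a)\act\chi\cin Z_{\mathrm{final}}$, hence $(b\ a)\act\chi\not\cin q$, so the hypothesis $\New{b}(b\ a)\act\chi\cin q$ of condition~\ref{filter.new} fails and the condition holds vacuously for this $\chi$; in the complementary branch $\tall a.\chi\cin q$ outright, so it holds there too. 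Conditions~\ref{filter.proper}--\ref{filter.and} pass to the union routinely, so $q$ is a prime filter with $p\subseteq q$ and $q\cap Z\subseteq q\cap Z_{\mathrm{final}}=\varnothing$.

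For the corollaries I would specialise this. Given $\phi\not\cent\psi$, note first that $\phi\not\cent\tbot$ (else $\phi\cent\psi$), so by Lemma~\ref{lemm.uparrow.filter} the set $p=\phi{\uparrow}$ is a finitely supported filter and $Z=\psi{\downarrow}$ a finitely supported ideal; disjointness $\phi{\uparrow}\cap\psi{\downarrow}=\varnothing$ is precisely the assumption $\phi\not\cent\psi$, so the theorem produces a prime $q$ with $\phi\cin q$ and $\psi\not\cin q$. The second corollary is the instance $\psi=\tbot$. The main obstacle is exactly the tension highlighted above: the enlargement lemmas demand finite support, yet condition~\ref{filter.new} must be forced to hold in the necessarily infinitely supported limit; the $\omega$-enumeration (keeping every stage finitely supported) together with the deliberately asymmetric ideal-growth of Lemma~\ref{lemm.can.extend.Z} is what reconciles these two demands.
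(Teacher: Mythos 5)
Your proof is correct, and it shares the paper's essential architecture: an $\omega$-chain of \emph{finitely supported} filter--ideal pairs (rather than a literal Zorn argument), with the deliberately one-sided ideal growth of Lemma~\ref{lemm.can.extend.Z} used to block the hypothesis of condition~\ref{filter.new} of Definition~\ref{defn.filter} for refuted universals, and finite support surrendered only at the final union. The differences are in the bookkeeping, and they are worth noting. The paper enumerates \emph{atom--predicate pairs} $(a_i,\phi_i)$ and at each stage tries to add $\tall a_i.\phi_i$ to the filter, never adding negations; decidedness of every predicate then falls out of the trick that $\cent\tall a.\phi\tiff\phi$ for fresh $a\#\phi$ (in which case the refuting branch places $(b\ a)\act\phi=\phi$ itself into the ideal, by Corollary~\ref{corr.stuff}), and primeness is obtained via maximality and Proposition~\ref{prop.max.2}. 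You instead enumerate predicates, decide each by the $\psi$/$\tneg\psi$ dichotomy (your disjointness argument for at least one of $p_i{+}\psi_i$, $p_i{+}\tneg\psi_i$ is the same computation the paper performs inside Proposition~\ref{prop.max.2}), and conclude primeness more directly via the ultrafilter property and Lemma~\ref{lemm.prime.ultra}; note that adding the single predicate $\tneg\tall a.\chi$ to the filter is harmless and does not conflict with Remark~\ref{rmrk.asymmetric}, which only rules out adding the whole negated orbit. Your route buys a shorter endgame but incurs one obligation the paper's pair-enumeration sidesteps by construction: since predicates are taken up to $\alpha$-equivalence, when you process $\psi_i$ under one presentation $\tall a.\chi$ you must check that condition~\ref{filter.new} is secured for \emph{every} pair $(a',\chi')$ with $\tall a'.\chi'=\tall a.\chi$; this is routine---for $b$ fresh for $\chi$ one has $(b\ a')\act\chi'=(b\ a)\act\chi$ by Corollary~\ref{corr.stuff}, so the orbit sets agree cofinitely and your vacuity argument transfers---but you should say it, since it is the one place your enumeration scheme genuinely diverges from the paper's.
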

\begin{proof}
The corollaries follow by considering $\phi{\uparrow}$ and $\psi{\downarrow}$ (see Definition~\ref{defn.uparrow} and Lemma~\ref{lemm.uparrow.filter}).
We now consider the main part of this result.

Write $\f{Predicates}$ for the set of all predicates from Definition~\ref{defn.terms.and.predicates}.
Enumerate $\mathbb A\times\f{Predicates}$ as a list of pairs $(a_i,\phi_i)_{i{\in}\mathbb N}$; by assumption in Definition~\ref{defn.atoms} atoms are countable, and by construction so are predicates, so we can do this.

We define a sequence of (by Theorem~\ref{thrm.no.increase.of.supp}) finitely supported disjoint filter-ideal pairs $(p_i,Z_i)_{i{\in}\mathbb N}$ as follows:
\begin{enumerate*}
\item
$(p_0,Z_0)=(p,Z)$.
By assumption $p\cap Z=\varnothing$.
\item
Suppose $i{\geq}1$ and $p_{i\minus 1}{+}\tall a_i.\phi_i\cap Z_{i\minus 1}=\varnothing$.
Then take
$$
(p_i,Z_i)=(p_{i\minus 1}{+}\tall a_i.\phi_i,Z_{i\minus 1}).
$$
It follows from Lemma~\ref{lemm.technical.contradiction} that $p_i$ is a filter and from Theorem~\ref{thrm.no.increase.of.supp} that it is finitely supported.
\item
Suppose $i{\geq}1$ and $p_{i\minus 1}{+}\tall a_i.\phi_i\cap Z_{i\minus 1}\neq\varnothing$.
Then take
$$
\begin{array}{r@{\ }l}
Y=&\{(b\ a_i)\act \phi_i\mid b{\in}\mathbb A\setminus(\supp(p_{i\minus 1}){\cup}\supp(\phi_i){\cup}\{a_i\})\}
\quad\text{and} 
\\
(p_i,Z_i)=&(p_{i\minus 1},Z_{i\minus 1}{+}Y).
\end{array}
$$
It follows from Lemma~\ref{lemm.can.extend.Z} that $Z_i$ is an ideal and $p_i\cap Z_i=\varnothing$.
\end{enumerate*}
We can note the following:
\begin{itemize}
\item
\emph{We note that $p\subseteq\bigcup_i p_i$ and $Z\subseteq\bigcup_i Z_i$.}\quad
Since we took $(p_0,Z_0)=(p,Z)$ and assumed $p\cap Z=\varnothing$.
\item
\emph{We note that $\bigcup_i p_i$ is disjoint from $\bigcup_i Z_i$ and thus from $Z$.}\quad
By construction.
\item
\emph{We note that $\bigcup_i Z_i$ is an ideal.}\quad
This is Lemma~\ref{lemm.ascending.chain.ideals}.
\item
\emph{We note that $\bigcup_i p_i$ is a filter.}\quad
Conditions~\ref{filter.proper} to~\ref{filter.and} of Definition~\ref{defn.filter} are routine.
To check condition~\ref{filter.new}, suppose $\New{b}(b\ a)\act \phi\in\bigcup_i p_i$; we need to prove $\tall a.\phi\in\bigcup_i p_i$.
So let $j$ be that index in the enumeration above such that $(a,\phi)=(a_j,\phi_j)$.
At stage $j$, when we built $(p_j,Z_j)$, there were two possibilities:
\begin{itemize*}
\item
If $p_{j\minus 1}{+}\tall a.\phi\cap Z=\varnothing$ then we must have put $\tall a.\phi$ into $p_j$, so that $\tall a.\phi\in\bigcup_i p_i$ and we are done.
\item
If $p_j{+}\tall a.\phi\cap Z\neq\varnothing$ then we must have put $(b\ a)\act \phi$ into $Z_j$ for cofinitely many $b$, so that $\New{b}(b\ a)\act \phi\in\bigcup_i Z_i$.

But this is impossible because we assumed that cofinitely many $(b\ a)\act \phi$ were in $\bigcup_i p_i$, which is disjoint from $\bigcup_i Z_i$.
\end{itemize*}
\item
\emph{We note that $\bigcup_i p_i$ is a maximal filter disjoint from the ideal $\bigcup_i Z_i$.}\quad
Consider any $\phi$ and choose some fresh $a$ (so $a\#\phi$).
Note that $\cent\tall a.\phi\liff \phi$.

Let $j$ be that index in the enumeration above such that $(a,\phi)=(a_j,\phi_j)$.
It follows from the structure of the algorithm above that precisely one of $\phi\in p_j$ or $\phi\in Z_j$ will hold.
Maximality follows.
\end{itemize}
Therefore by Proposition~\ref{prop.max.2} $\bigcup_i p_i$ is a prime filter, and by construction $p\subseteq\bigcup_i p_i$ and $(\bigcup_i p_i)\cap Z=\varnothing$.
\end{proof}

\begin{rmrk}
\label{rmrk.asymmetric}
The proof of Theorem~\ref{thrm.maxfilt.zorn} is unusual in that we consider a chain of filter-ideal pairs instead of a chain of filters.

The reasons for this may not be obvious.
Why bother?  
When we add $Y$ to $Z_{i\minus 1}$, why not add $Y'=\{\tneg\psi\mid\psi\in Y\}$ to $p_{i\minus 1}$ instead?
After all, we have negation.

However if we do this then the set we might write as $p{+}Y'$ would fail to be a filter, because it would contain $(b\ a)\act\tneg\phi$ for cofinitely many $b$, but not $\tall a.\tneg\phi$.
Furthermore we cannot just add $\tall a.\tneg\phi$ because we may not \emph{want} this to be in $p$---what we want is subtly different: that $(b\ a)\act\phi$ not be added for cofinitely many $b$ `by accident' at later stages (possibly infinitely many later stages).

Thus the asymmetry between Definitions~\ref{defn.filter} and~\ref{defn.ideal}, that Definition~\ref{defn.filter} has a condition for $\tall$ (condition~\ref{filter.new}) whereas Definition~\ref{defn.ideal} does not, is a useful part of the design.

Perhaps we should write about \emph{strong} filters and ideals which satisfy a $\tall$/$\texi$-condition, and \emph{weak} ones which need not.
In that terminology, we can note the interesting fact that our proof of the existence of maximal strong filters / ideals requires us to consider also weak ideals / filters.
\end{rmrk}

\subsection{The amgis-action on (prime) filters}
\label{subsect.amgis.on.filters}

Recall from Definition~\ref{defn.p.action} the pointwise $\amgis$-action $p[u\ms a]=\{x \mid x[a\sm u]\in p\}$. 
In this subsection we check that this preserves the property of being a (prime) filter (Definitions~\ref{defn.filter} and~\ref{defn.prime.filter}) of predicates (so $p$ is a set of predicates and $u$ in the $\amgis$-action is a term from Definition~\ref{defn.terms.and.predicates}).

The work of this subsection happens in Lemma~\ref{lemm.bus.filter}; Proposition~\ref{prop.points.amgis} then puts the result in a some nice packaging.

\begin{lemm}
\label{lemm.bus.filter}
If $p$ is a filter then so is $p[u\ms a]$.
Furthermore, if $p$ is prime then so is $p[u\ms a]$.
\end{lemm}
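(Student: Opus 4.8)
The plan is to write $q=p[u\ms a]$ and verify the four conditions of Definition~\ref{defn.filter} one at a time, in each case using Proposition~\ref{prop.sigma.iff} to rephrase ``$\phi\in q$'' as ``$\phi[a\sm u]\in p$'' and then transferring the corresponding property of $p$ across the substitution. The three propositional conditions are routine once we recall the standard syntactic facts that capture-avoiding substitution commutes with the connectives (so $(\phi\tand\phi')[a\sm u]=\phi[a\sm u]\tand\phi'[a\sm u]$, and likewise for $\tneg$, $\tor$, $\tbot$, $\ttop$) and preserves derivability (so $\phi\cent\phi'$ implies $\phi[a\sm u]\cent\phi'[a\sm u]$). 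Concretely: nonemptiness and consistency follow since $\ttop[a\sm u]=\ttop\in p$ while $\tbot[a\sm u]=\tbot\notin p$; deductive closure (condition~\ref{filter.up}) follows because $\phi[a\sm u]\in p$ and $\phi\cent\phi'$ give $\phi[a\sm u]\cent\phi'[a\sm u]$ and hence $\phi'[a\sm u]\in p$; and closure under $\tand$ (condition~\ref{filter.and}) follows since $\phi[a\sm u],\phi'[a\sm u]\in p$ give $\phi[a\sm u]\tand\phi'[a\sm u]=(\phi\tand\phi')[a\sm u]\in p$.

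The main obstacle is condition~\ref{filter.new}, the $\tall$-closure, because it mixes the $\new$-quantifier with the substitution $[a\sm u]$. Suppose $\New{b}(b\ c)\act\phi\in q$, where I have $\alpha$-renamed the bound variable $c$ of the intended conclusion $\tall c.\phi$ to be fresh for $a$ and $u$ (so $c\neq a$ and $c\#u$), which also makes $(\tall c.\phi)[a\sm u]=\tall c.(\phi[a\sm u])$. I would then show $\New{b}(b\ c)\act(\phi[a\sm u])\in p$ and apply condition~\ref{filter.new} for $p$. For this, restrict to the cofinitely many $b$ that are also fresh for $a,u,c,\phi$; for such $b$, equivariance of the $\sigma$-action (Theorem~\ref{thrm.equivar}) gives $(b\ c)\act(\phi[a\sm u])=((b\ c)\act\phi)[(b\ c)(a)\sm(b\ c)\act u]=((b\ c)\act\phi)[a\sm u]$, using $(b\ c)(a)=a$ and $(b\ c)\act u=u$. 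By hypothesis $((b\ c)\act\phi)[a\sm u]\in p$ for cofinitely many $b$ (this is what $\New{b}(b\ c)\act\phi\in q$ unfolds to via Proposition~\ref{prop.sigma.iff}), so $\New{b}(b\ c)\act(\phi[a\sm u])\in p$. Condition~\ref{filter.new} for $p$ then yields $\tall c.(\phi[a\sm u])\in p$, i.e.\ $(\tall c.\phi)[a\sm u]\in p$, i.e.\ $\tall c.\phi\in q$.

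Finally, primeness transfers immediately: if $p$ is prime and $\phi_1\tor\phi_2\in q$, then $(\phi_1\tor\phi_2)[a\sm u]=\phi_1[a\sm u]\tor\phi_2[a\sm u]\in p$, so by primeness of $p$ either $\phi_1[a\sm u]\in p$ or $\phi_2[a\sm u]\in p$, that is, either $\phi_1\in q$ or $\phi_2\in q$. Throughout, the only nominal subtlety is the bookkeeping of freshness in the $\tall$ case; the some/any property behind the $\new$-quantifier (Theorem~\ref{thrm.new.equiv}) is what lets us pass to the cofinite set of suitably fresh $b$ without worrying about the particular choice.
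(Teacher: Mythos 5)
Your proof is correct and takes essentially the same route as the paper's: you unfold membership in $p[u\ms a]$ via Proposition~\ref{prop.sigma.iff}, transfer the propositional conditions through the standard facts that capture-avoiding substitution commutes with the connectives and preserves derivability, and handle condition~\ref{filter.new} by $\alpha$-renaming the bound atom fresh for $a$ and $u$ and then pushing the swapping past $[a\sm u]$ by equivariance (Theorem~\ref{thrm.equivar}, Corollary~\ref{corr.stuff}), which is exactly the paper's argument. The only difference is presentational: the paper spells out the without-loss-of-generality renaming explicitly via Corollary~\ref{corr.stuff} and $\alpha$-equivalence of $\tall$, where you assert it more briefly, but the justification you give (the some/any property of Theorem~\ref{thrm.new.equiv} plus restricting to the cofinitely many suitably fresh $b$) is sound.
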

\begin{proof}
We check the conditions in Definition~\ref{defn.filter}.
We use Proposition~\ref{prop.sigma.iff} without comment:
\begin{itemize*}
\item
\emph{$\tbot\not\in p[u\ms a]$.}\quad
Since it is a fact of syntax that $\tbot[a\sm u]=\tbot$.
We use condition~\ref{filter.proper} of Definition~\ref{defn.filter}.
\item
\emph{If $\phi{\cin}p[u\ms a]$ and $\phi\cent\phi'$ then $\phi'{\cin}p[u\ms a]$.}\quad
If $\phi\cent\phi'$ then it is a fact of first-order logic derivability that also $\phi[a\sm u]\cent\phi'[a\sm u]$.
We use condition~\ref{filter.up} of Definition~\ref{defn.filter}.
\item
\emph{If $\phi{\cin}p[u\ms a]$ and $\phi'{\cin}p[u\ms a]$ then $\phi{\tand}\phi'\in p[u\ms a]$.}\quad
It is a fact of syntax that $(\phi{\tand}\phi')[a\sm u]=\phi[a\sm u]{\tand}(\phi'[a\sm u])$.
We use condition~\ref{filter.and} of Definition~\ref{defn.filter}.
\item
\emph{If $\New{b'}((b'\ b)\act \phi\cin p[u\ms a])$ then $\tall b.\phi\cin p[u\ms a]$.}\quad
Choose some fresh $c$ (so $c\#x,u$).
By Corollary~\ref{corr.stuff} $(b'\ c)\act(c\ b)\act\phi=(b'\ b)\act \phi$ and by \rulefont{\tall\alpha} also $\tall b.\phi=\tall c.(c\ b)\act \phi$.
Thus, we may assume without loss of generality that $b\#u$. 

Now suppose $((b'\ b)\act \phi)[a\sm u]\cin p$ for all but finitely many $b'$; so suppose $b'\#u$.
By Corollary~\ref{corr.stuff} $(b'\ b)\act u=u$, so that $(b'\ b)\act(\phi[a\sm u])\in p$ for all but finitely many $b'$.
By condition~\ref{filter.new} of Definition~\ref{defn.filter} $\tall b.(\phi[a\sm u])\cin p$ and since $b\#u$ it follows that $(\tall b.\phi)[a\sm u]\cin p$.
\end{itemize*}
Now suppose $p$ is prime and suppose $(\psi_1{\tor}\psi_2)[a\sm u]\cin p$.
Then $\psi_1[a\sm u]{\tor}(\psi_2[a\sm u])\cin p$, so that either $\psi_1[a\sm u]\cin p$ or $\psi_2[a\sm u]\cin p$.
\end{proof}

Recall from Definition~\ref{defn.prime.filter} that we called prime filters \emph{points}:
\begin{defn}
\label{defn.points}
Write $\Points$ for the $\amgis$-algebra determined by prime filters and the pointwise actions from Definition~\ref{defn.p.action}.
That is:
\begin{frametxt}
\begin{itemize*}
\item
$|\Points|=\{p \mid p\text{ is a prime filter}\}$.
\item
$\pi\act p=\{\pi\act \phi\mid \phi\cin p\}$ and $p[u\ms a]=\{\phi \mid \phi[a\sm u]\cin p\}$. 
\end{itemize*}
\end{frametxt}
\end{defn}

\begin{prop}
\label{prop.points.amgis}
$\Points$ is indeed an $\amgis$-algebra.
Furthermore, $\Points$ is exact (Definition~\ref{defn.exact.amgis.algebra}).
\end{prop}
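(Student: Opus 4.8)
The plan is to realise $\Points$ as a sub-$\amgis$-algebra of $\powamgis(\f{Predicates})$ and then inherit all the required structure, rather than re-verifying the axioms by hand. Predicates up to $\alpha$-equivalence form a $\sigma$-algebra over the termlike $\sigma$-algebra $\ns U$ of terms, with the $\sigma$-action given by capture-avoiding substitution $\phi[a\sm u]$ (Example~\ref{xmpl.fot}, Definition~\ref{defn.sub.algebra}). Hence by Proposition~\ref{prop.amgis.2} the $\amgis$-powerset $\powamgis(\f{Predicates})$ is an $\amgis$-algebra over $\ns U$, and by Proposition~\ref{prop.powamgis.exact} it is exact. First I would observe that the two actions defining $\Points$ in Definition~\ref{defn.points}, namely $\pi\act p=\{\pi\act\phi\mid\phi\cin p\}$ and $p[u\ms a]=\{\phi\mid\phi[a\sm u]\cin p\}$, are literally the restrictions to prime filters of the pointwise permutation- and $\amgis$-actions of Definition~\ref{defn.p.action} on $|\powamgis(\f{Predicates})|$.

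The key closure facts are then exactly what is already available. Closure under the $\amgis$-action, that $p[u\ms a]$ is a prime filter whenever $p$ is, is Lemma~\ref{lemm.bus.filter}. Closure under the permutation action, that $\pi\act p$ is a prime filter whenever $p$ is, follows from the principle of equivariance (Theorem~\ref{thrm.equivar}): derivability $\cent$ and all the syntactic operations are specified in the language of FM set theory, so each of conditions~\ref{filter.proper}--\ref{filter.new} of Definition~\ref{defn.filter} and primeness is preserved when every atom is permuted uniformly. Given these, $\Points$ is a subset of $|\powamgis(\f{Predicates})|$ closed under both actions; so it is a set with a permutation action, the $\amgis$-action maps into $\Points$ and remains equivariant, and axiom $\rulefont{\amgis\sigma}$ holds on $\Points$ because it already holds throughout $\powamgis(\f{Predicates})$. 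This proves that $\Points$ is an $\amgis$-algebra over $\ns U$.

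For exactness I would argue that the property descends to any sub-$\amgis$-algebra. Suppose $p,q\in|\Points|$ and $\New{c}\,p[u\ms c]=q[u\ms c]$. Since the $\amgis$-action on $\Points$ is the restriction of that on $\powamgis(\f{Predicates})$, the sets $p[u\ms c]$ and $q[u\ms c]$ are the same whether computed in $\Points$ or in $\powamgis(\f{Predicates})$, so the very same hypothesis $\New{c}\,p[u\ms c]=q[u\ms c]$ holds with $p,q$ viewed as elements of $|\powamgis(\f{Predicates})|$. Exactness of $\powamgis(\f{Predicates})$ (Proposition~\ref{prop.powamgis.exact}) then yields $p=q$ as subsets of $\f{Predicates}$, which is exactly the conclusion demanded by Definition~\ref{defn.exact.amgis.algebra} in $\Points$.

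The only genuinely delicate points, and where I would expect to be most careful, are (i) confirming that substitution on predicates really does satisfy $\rulefont{\sigma id}$, $\rulefont{\sigma\#}$, $\rulefont{\sigma\alpha}$, and $\rulefont{\sigma\sigma}$ (but \emph{not} $\rulefont{\sigma a}$, which is not required of a $\sigma$-algebra \emph{over} $\ns U$), so that $\powamgis(\f{Predicates})$ is legitimately an exact $\amgis$-algebra; and (ii) the observation that exactness, although it is a quasi-equational ``if\dots then'' property rather than an equation, nonetheless transfers to a sub-$\amgis$-algebra precisely because both its hypothesis and its conclusion are insensitive to the ambient algebra once the actions agree on the subset. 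Everything else is bookkeeping already discharged by the cited results.
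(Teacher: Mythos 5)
Your proof is correct, and for the first part it coincides with the paper's: the paper likewise obtains the $\amgis$-algebra structure on $\Points$ by combining Lemma~\ref{lemm.bus.filter} (closure of prime filters under the $\amgis$-action) with Proposition~\ref{prop.amgis.2} (the ambient structure on the $\amgis$-powerset of predicates), though you are more explicit than the paper about closure under the permutation action, which you correctly discharge via equivariance (Theorem~\ref{thrm.equivar}) and which the paper leaves implicit. Where you genuinely diverge is exactness. The paper re-proves it directly for $\Points$: it fixes a predicate $\phi$ and intersects finitely many cofinite sets of atoms $c$ --- those with $p[u\ms c]=q[u\ms c]$, those with $c\#\phi$ so that $\phi=\phi[c\sm u]$, and those making Proposition~\ref{prop.sigma.iff} applicable --- to obtain a single witness $c$ with which it chases $\phi\cin p\liff\phi\cin q$. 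You instead inherit exactness from Proposition~\ref{prop.powamgis.exact} applied to $\powamgis(\f{Predicates})$, observing that since the actions of Definition~\ref{defn.points} are literally restrictions of the pointwise actions of Definition~\ref{defn.p.action}, both the hypothesis $\New{c}p[u\ms c]=q[u\ms c]$ (read literally as ``for cofinitely many $c$'', so no some/any property of $\new$ is needed) and the conclusion $p=q$ mean the same thing in the subalgebra as in the ambient algebra; that is, exactness, being a quasi-equational ``if\dots then'' property, descends to any subset closed under the actions. Your route is more economical and makes the structural reason visible; the paper's direct unwinding is self-contained and makes explicit that the cofinite-intersection reasoning survives the fact that points need not be finitely supported (Remark~\ref{rmrk.p.not.finite.support}), a point your argument quietly delegates to the already-proved Proposition~\ref{prop.powamgis.exact}, where it was handled in general. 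Your delicate point (i) is exactly what the paper records in Example~\ref{xmpl.fot} --- predicates up to $\alpha$-equivalence form a (non-termlike) $\sigma$-algebra over terms under capture-avoiding substitution --- which is all that Propositions~\ref{prop.amgis.2} and~\ref{prop.powamgis.exact} require, so no gap remains.
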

\begin{proof}
The first part is just Lemma~\ref{lemm.bus.filter} combined with Proposition~\ref{prop.amgis.2}.

We now prove exactness.
Suppose $\New{c}p[u\ms c]=q[u\ms c]$.

Consider some predicate $\phi$.
Note that cofinitely many (all but finitely many) atoms $c$ satisfy all of the following properties, since cofinitely many atoms satisfy each of them, and a finite intersection of cofinite sets is cofinite:
\begin{itemize*}
\item
$p[u\ms c]=q[u\ms c]$ (by assumption), 
\item
$c\#\phi$, so that also $\phi=\phi[c\sm u]$ (a fact of syntax),
\item
$\phi[c\sm u]\cin p\liff \phi\cin p[u\ms c]$ (Proposition~\ref{prop.sigma.iff}),
and 
\item
$\phi[c\sm u]\cin q \liff \phi\cin q[u\ms c]$ (Proposition~\ref{prop.sigma.iff}).
\end{itemize*}
So there exists at least one $c$ satisfying all the above.
Then 
$$
\phi\cin p
\liff
\phi[c\sm u]\cin p
\liff
\phi\cin p[u\ms c]
\liff
\phi\cin q[u\ms c]
\liff
\phi[c\sm u]\cin q
\liff
\phi\cin q .
\qedhere$$
\end{proof}

\subsection{Proof of completeness}
\label{subsect.proof.completeness}

Recall from Definition~\ref{defn.points} the set of \emph{points} $\Points$, which by Proposition~\ref{prop.points.amgis} is an exact $\amgis$-algebra.
Recall from Definition~\ref{defn.powsigma} the definition of $\sigma$-powerset of an $\amgis$-algebra, for example $\powsigma(\Points)$, and recall that by Theorem~\ref{thrm.powsigma.FOLeq} it is a FOLeq algebra.

Recall from Example~\ref{xmpl.fot} and Definition~\ref{defn.terms.and.predicates} the syntax of terms $r$ with real substitution $r[a\sm t]$.  Then:
\begin{defn}
\label{defn.syntax.interp}
Define an interpretation $\interp I$ (Definition~\ref{defn.interp}) as follows:
\begin{itemize*}
\item
Take $\ns U$ to be the syntax of terms with real substitution, and $\tf f^\iden(a_1,\dots,a_n)=\tf f(a_1,\dots,a_n)$.
\item
Take $\mathcal L$ to be $\powsigma(\Points)$ considered as a FOLeq algebra.
\item
Take $\tf P^\iden(a_1,\dots,a_n)=\{p\in\Points \mid \tf P(a_1,\dots,a_n)\cin p\}$.
\end{itemize*}
\end{defn}

As we shall see, Definition~\ref{defn.syntax.interp} is what we need to build a complete model of first-order logic with equality.
To check this, it only remains to connect Definition~\ref{defn.syntax.interp} to the sets machinery we have built so far, and verify what comes out the other end.

\begin{lemm}
\label{lemm.model.syntax.r}
If $r$ is a term (Definition~\ref{defn.terms.and.predicates}) then $\model{r}=r$.
\end{lemm}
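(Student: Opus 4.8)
The plan is to prove the lemma by a straightforward structural induction on the term $r$, unwinding the interpretation of terms from Figure~\ref{fig.extend.interp} together with the specific choices made for this model in Definition~\ref{defn.syntax.interp}. Everything here is definitional bookkeeping; there is no deep content, only the need to check that the abstract machinery collapses, in this particular interpretation, to syntactic substitution.

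First I would handle the base case $r=a$. By Figure~\ref{fig.extend.interp} we have $\model{a}=\tf{atm}_{\ns U}(a)$, and since Definition~\ref{defn.syntax.interp} takes $\ns U$ to be the syntax of terms with real substitution, the atoms-injection $\tf{atm}_{\ns U}$ sends the atom $a$ to the term $a$ itself. Hence $\model{a}=a$. For the inductive step, consider $r=\tf f(r_1,\dots,r_n)$ and assume the inductive hypothesis $\model{r_i}=r_i$ for each $i$. By Figure~\ref{fig.extend.interp}, $\model{\tf f(r_1,\dots,r_n)}=\tf f^\iden(\model{r_1},\dots,\model{r_n})$, which by the inductive hypothesis equals $\tf f^\iden(r_1,\dots,r_n)$. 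Unfolding the extension of $\tf f^\iden$ from tuples of distinct atoms to arbitrary tuples of $\ns U$-elements (Definition~\ref{defn.extend.f.P}), and choosing $a_1,\dots,a_n$ fresh, this is $\tf f^\iden(a_1,\dots,a_n)[a_1\sm r_1,\dots,a_n\sm r_n]$; by the clause $\tf f^\iden(a_1,\dots,a_n)=\tf f(a_1,\dots,a_n)$ of Definition~\ref{defn.syntax.interp} it becomes $\tf f(a_1,\dots,a_n)[a_1\sm r_1,\dots,a_n\sm r_n]$. Since the $\sigma$-action on $\ns U$ is real (capture-avoiding) substitution, this last term is exactly $\tf f(r_1,\dots,r_n)$, closing the induction.

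The only step needing any care—and the nearest thing to an obstacle—is the passage from $\tf f^\iden$ defined on tuples of distinct atoms to its value on genuine terms $r_1,\dots,r_n$, i.e. verifying that the extended operation of Definition~\ref{defn.extend.f.P} really agrees with term-formation. This is immediate once we recall that the simultaneous $\sigma$-action (Definition~\ref{defn.sim.sub}) on $\ns U$ is concrete substitution, and that by Lemma~\ref{lemm.sim.sigma} the result is independent of the choice of fresh $a_1,\dots,a_n$, so that no spurious dependence on the auxiliary atoms survives. With that observation in place the computation above is forced, and the lemma follows.
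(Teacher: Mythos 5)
Your proof is correct and follows essentially the same route as the paper, which simply compresses the argument to ``an easy consequence of taking $\ns U$ to be syntax with substitution, and of Definition~\ref{defn.extend.f.P}''. Your structural induction, with the base case $\model{a}=\tf{atm}_{\ns U}(a)=a$ and the unfolding of $\tf f^\iden$ via Definition~\ref{defn.extend.f.P} and Lemma~\ref{lemm.sim.sigma}, is exactly the expansion the paper leaves implicit.
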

\begin{proof}
An easy consequence of taking $\ns U$ to be syntax with substitution, and of Definition~\ref{defn.extend.f.P}.
\end{proof}

\begin{rmrk}
\label{rmrk.after}
We do not have to specify the interpretation of equality $\teq$ because it is fixed by Definition~\ref{defn.eq.powamgis}, 
just as the intepretations of $\tbot$, $\tneg$, $\tand$, and $\tall$ are fixed.
This was observed in Theorem~\ref{thrm.powsigma.FOLeq}.

We briefly revisit Remark~\ref{rmrk.sum.up} and simplify it using Lemma~\ref{lemm.model.syntax.r}, to sum up how the definitions instantiate to our case of $\powsigma(\Points)$:
$$
\begin{array}{r@{\ }l}
\model{\tbot}=&\varnothing
\\
\model{r{\teq}s}=&
\{p{\in}\Points\mid \New{c}(p[r\ms c]=p[s\ms c])\}
\\
\model{\phi\tand\psi}=&\model{\phi}\cap\model{\psi}
=\{p{\in}\Points\mid p{\in}\model{\phi}\land p{\in}\model{\psi}\}
\\
\model{\tneg\phi}=&\Points\setminus\model{\phi}
=\{p{\in}\Points\mid p{\not\in}\model{\phi}\}
\\
\model{\tall a.\phi}=&\freshcap{a}\model{\phi}=\bigcap_{r}\model{\phi}[a\sm r]
\end{array}
$$
\end{rmrk}

\maketab{tab5}{@{\hspace{-1em}}R{10em}@{\ }L{18em}L{12em}}

\begin{lemm}
\label{lemm.P.p.iff}
\begin{enumerate*}
\item
$r_1{=^\iden}r_2=\{p\in\Points \mid (r_1{\teq}r_2)\cin p\}$.
\item
$\tf P^\iden(r_1,\dots,r_n)=\{p\in\Points \mid \tf P(r_1,\dots,r_n)\cin p\}$.
\end{enumerate*}
\end{lemm}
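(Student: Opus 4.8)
The plan is to reduce both parts to a single \emph{commutation} fact: that the $\sigma$-action on sets of points (Definition~\ref{defn.sub.sets}) agrees with syntactic substitution performed \emph{inside} the point. For a predicate $\phi$ write $\widehat\phi=\{p\in\Points\mid\phi\cin p\}$. I would first establish the claim
$$\widehat\phi[a\sm u]=\widehat{\phi[a\sm u]}\qquad\text{for every predicate }\phi\text{ and term }u.$$
Unfolding the left-hand side by Definition~\ref{defn.sub.sets} (equivalently Proposition~\ref{prop.amgis.iff}), $p\in\widehat\phi[a\sm u]$ iff $\New{c}\,p[u\ms c]\in(c\ a)\act\widehat\phi$. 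Using part~2 of Proposition~\ref{prop.amgis.iff} and the pointwise permutation action on filters (Definition~\ref{defn.p.action}), this is equivalent to $\New{c}\,((c\ a)\act\phi)[c\sm u]\cin p$. For fresh $c$ (so $c\#\phi,u$), Lemma~\ref{lemm.freshness} gives $(c\ a)\act\phi\aeq\phi[a\sm c]$ and hence $((c\ a)\act\phi)[c\sm u]=\phi[a\sm u]$ as an elementary substitution identity; since this no longer mentions $c$, the $\new$-quantifier collapses and we obtain $\phi[a\sm u]\cin p$, i.e.\ $p\in\widehat{\phi[a\sm u]}$.

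Part~2 then follows immediately. By Definitions~\ref{defn.extend.f.P} and~\ref{defn.syntax.interp}, $\tf P^\iden(r_1,\dots,r_n)=\widehat{\tf P(a_1,\dots,a_n)}[a_1\sm r_1,\dots,a_n\sm r_n]$ where $a_1,\dots,a_n$ are chosen fresh for the $r_i$. Expanding the simultaneous action as iterated single substitutions (Definition~\ref{defn.sim.sub}, legitimate since each $a_i\#r_i$) and applying the commutation claim $n$ times, this equals $\widehat{\tf P(a_1,\dots,a_n)[a_1\sm r_1]\cdots[a_n\sm r_n]}=\widehat{\tf P(r_1,\dots,r_n)}$, which is exactly the asserted set.

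Part~1 is the substantive step and the main obstacle, because the equality element $(a{=^{\ns P}}b)$ is \emph{defined} semantically by indiscernibility under the $\amgis$-action (Definition~\ref{defn.eq.powamgis}) and must be reconciled with syntactic membership of $r_1\teq r_2$. By Definition~\ref{defn.extend.f.P} and Corollary~\ref{corr.when.in.equality.P} we have $r_1{=^\iden}r_2=(r_1{=^{\ns P}}r_2)=\{p\in\Points\mid\New{c}(p[r_1\ms c]=p[r_2\ms c])\}$, so it suffices to prove, for all terms $r_1,r_2$, that
$$\New{c}\,(p[r_1\ms c]=p[r_2\ms c])\quad\text{iff}\quad(r_1\teq r_2)\cin p.$$
Unfolding $p[r_i\ms c]=\{\xi\mid\xi[c\sm r_i]\cin p\}$, the left-hand side says that for cofinitely many $c$ we have $\xi[c\sm r_1]\cin p\liff\xi[c\sm r_2]\cin p$ for every $\xi$. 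For the right-to-left direction I would take any fresh $c$ and use that $r_1\teq r_2,\ \xi[c\sm r_1]\cent\xi[c\sm r_2]$ is derivable (rule \rulefont{{\teq}L} together with symmetry of $\teq$), so that deductive closure and $\tand$-closure of the filter $p$ (Definition~\ref{defn.filter}) carry membership across; the reverse implication is symmetric via $(r_2\teq r_1)\cin p$. For left-to-right I would instantiate $\xi=(c\teq r_1)$ at a witnessing fresh $c$: then $\xi[c\sm r_1]=(r_1\teq r_1)\cin p$ (reflexivity, rule \rulefont{{\teq}R}, and deductive closure), so the assumed equivalence forces $\xi[c\sm r_2]=(r_2\teq r_1)\cin p$, whence $(r_1\teq r_2)\cin p$ by symmetry.

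The delicate points I would watch are: the interaction of the $\new$-quantifier with filters that need \emph{not} be finitely supported, so that the some/any reasoning of Theorem~\ref{thrm.new.equiv} is applied only after choosing $c$ genuinely fresh for $r_1,r_2$ and the predicates in play; and the bookkeeping that $((c\ a)\act\phi)[c\sm u]=\phi[a\sm u]$ for $c\#\phi$ is the elementary substitution identity rather than an instance requiring an algebra axiom. Everything else is routine unfolding of the pointwise actions of Definition~\ref{defn.p.action}.
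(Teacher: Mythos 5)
Your proof is correct and takes essentially the same route as the paper's: part~1 is the paper's own two-implication argument (the paper's witness predicate is $(r_1\teq c)$ where you use $(c\teq r_1)$ plus derivable symmetry of $\teq$, a trivial variation), and for part~2 your commutation claim $\widehat\phi[a\sm u]=\widehat{\phi[a\sm u]}$ is just a factored-out version of the chain of Proposition~\ref{prop.sigma.iff}/Proposition~\ref{prop.amgis.iff} unfoldings that the paper performs inline. Your explicit caution about applying the $\new$-quantifier without assuming $p$ finitely supported, and about $((c\ a)\act\phi)[c\sm u]=\phi[a\sm u]$ being a fact of syntax, matches the paper's own handling (Remark~\ref{rmrk.p.not.finite.support}, Lemma~\ref{lemm.freshness}).
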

\begin{proof}
For part~1, we prove two implications:
\begin{itemize*}
\item
Suppose $(r_1{\teq}r_2)\cin p$.
We will show that $p\in (r_1{=^\iden}r_2)$, that is, that $\New{c}p[r_1\ms c]=p[r_2\ms c]$.

We note by Proposition~\ref{prop.sigma.iff} that for any atom $c$, 
$$
\phi\cin p[r_1\ms c] \liff \phi[c\sm r_1]\cin p
\quad\text{and}\quad
\phi\cin p[r_2\ms c] \liff \phi[c\sm r_2]\cin p.
$$
By assumption $(r_1{\teq}r_2)\cin p$ so by properties of first-order logic and condition~\ref{filter.up} of Definition~\ref{defn.filter}
$$
\phi[c\sm r_1]\cin p\quad\text{if and only if}\quad \phi[c\sm r_2]\cin p. 
$$
It follows by Proposition~\ref{prop.sigma.iff} that $\phi\cin p[r_1\ms c]\cent\phi\liff \phi\cin p[r_2\ms c]$.

Since $\phi$ and $c$ were arbitrary, we conclude in particular that $\New{c}p[r_1\ms c]=p[r_2\ms c]$.
\item
Suppose $p\in (r_1{=^\iden}r_2)$, so $\New{c}p[r_1\ms c]=p[r_2\ms c]$.
We will show that $(r_1{\teq}r_2)\cin p$.

Consider some $c$ such that $c\not\in\fa(r_1)\cup\fa(r_2)$ and such that $p[r_1\ms c]=p[r_2\ms c]$.
By condition~\ref{filter.up} of Definition~\ref{defn.filter},\ $(r_1{\teq}r_1)\cin p$.
Also since $c\not\in\fa(r_1)$ we have that $(r_1{\teq}r_1)=(r_1{\teq}c)[c\sm r_1]$, so that $(r_1{\teq}c)[c\sm r_1]\cin p$, and by Proposition~\ref{prop.sigma.iff} $(r_1{\teq}c)\cin p[r_1\ms c]$.
Therefore $(r_1{\teq}c)\cin p[r_2\ms c]$ and again by Proposition~\ref{prop.sigma.iff} and our assumption that $c\not\in\fa(r_2)$ we conclude that $(r_1{\teq}r_2)\cin p$ as required.
\end{itemize*}

For part~2 we reason as follows:
$$
\begin{array}[b]{r@{\ }l@{\qquad}l}
\tf P(r_1,\dots,r_n)\cin p
\liff&
\New{a_1,\dots,a_n}
\tf P(a_1,\dots,a_n)[a_1\sm r_1,\dots,a_n\sm r_n]\cin p
&\text{Fact of syntax}
\\
\liff&
\New{a_1,\dots,a_n}
\tf P(a_1,\dots,a_n)\cin p[r_1\ms a_1]\dots[r_n\ms a_n]
&\text{Prop~\ref{prop.sigma.iff}}
\\
\liff&
\New{a_1,\dots,a_n}
p[r_1\ms a_1]\dots[r_n\ms a_n]\in \tf P^\iden(a_1,\dots,a_n)
&\text{Def~\ref{defn.syntax.interp}} 
\\
\liff&
\New{a_1,\dots,a_n}
p\in \tf P^\iden(a_1,\dots,a_n)[a_1\sm r_1,\dots,a_n\sm r_n]
&\text{Prop~\ref{prop.amgis.iff}}
\\
\liff&
p\in \tf P^\iden(r_1,\dots,r_n)
&\text{Def~\ref{defn.extend.f.P}}
\end{array}
\qedhere$$
\end{proof}

We should briefly check that Definition~\ref{defn.syntax.interp} is well-defined, in the sense that $\tf P^\iden(a_1,\dots,a_n)$ from clause~3 is indeed an element of $\mathcal L$ from clause~2.
This is routine:
\begin{lemm}
Continuing the notation of Definition~\ref{defn.syntax.interp},\ $\tf P^\iden(a_1,\dots,a_n)$ which is equal to $\{p\in\Points\mid \tf P(a_1,\dots,a_n)\cin p\}$, is an element of $\powsigma(\Points)$.
\end{lemm}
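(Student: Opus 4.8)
Write $\psi=\tf P(a_1,\dots,a_n)$, so that the set in question is $X=\{p{\in}\Points\mid \psi\cin p\}$; recall (as in Lemma~\ref{lemm.supp.lemma}, via nominal abstract syntax) that $\fa(\psi)=\{a_1,\dots,a_n\}$ and hence $\supp(\psi)=\{a_1,\dots,a_n\}$. The plan is simply to verify the three conditions that Definition~\ref{defn.powsigma} requires of an element of $|\powsigma(\Points)|$: that $X$ is finitely supported (so $X\in|\nompow(\Points)|$), and that $X$ satisfies conditions~\ref{item.fresh.powsigma} and~\ref{item.alpha.powsigma}. Each reduces, via Proposition~\ref{prop.sigma.iff}, to a routine fact about the syntactic predicate $\psi$.

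Finite support is immediate from Theorem~\ref{thrm.equivar}: the set $X$ is specified from $\psi$ by a formula in the language of FM set theory (``the prime filters containing $\psi$''), so by conservation of support $\supp(X)\subseteq\supp(\psi)=\{a_1,\dots,a_n\}$, whence $X$ is finitely supported. For condition~\ref{item.fresh.powsigma} I would fix a term $u$ and choose $c$ fresh (so $c\#u,\psi$), and compute for any $p$ that $p[u\ms c]\in X$ iff $\psi\cin p[u\ms c]$ iff (Proposition~\ref{prop.sigma.iff}) $\psi[c\sm u]\cin p$; since $c\#\psi$, the syntactic fact corresponding to \rulefont{\sigma\#} gives $\psi[c\sm u]=\psi$, so this holds iff $\psi\cin p$ iff $p\in X$. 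This establishes $\New{c}\Forall{p}(p[u\ms c]\in X\liff p\in X)$ for every $u$, as required.

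For condition~\ref{item.alpha.powsigma} I would fix an atom $c$ and choose $b$ fresh (so $b\#c,\psi$), and compute for any $p$ that $p[b\ms c]\in X$ iff $\psi[c\sm b]\cin p$ (again Proposition~\ref{prop.sigma.iff}), while $(b\ c)\act p\in X$ iff $\psi\cin (b\ c)\act p$ iff $(b\ c)\act\psi\cin p$ (unwinding the pointwise action of Definition~\ref{defn.points}, using that $(b\ c)$ is self-inverse). It then suffices to observe that $\psi[c\sm b]$ and $(b\ c)\act\psi$ coincide; this is exactly Lemma~\ref{lemm.freshness}, since $b\#\psi$ and predicates are taken up to $\alpha$-equivalence. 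This yields $\New{b}\Forall{p}(p[b\ms c]\in X\liff (b\ c)\act p\in X)$ for every $c$.

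I do not expect any genuine obstacle here: the statement is a well-definedness check that the interpretation of Definition~\ref{defn.syntax.interp} really lands in $\mathcal L=\powsigma(\Points)$. The only points needing care are the bookkeeping of freshness (choosing $c$, resp.\ $b$, fresh for $\psi$ so that the syntactic facts $\psi[c\sm u]=\psi$ and $\psi[c\sm b]\aeq(b\ c)\act\psi$ apply) and the correct unwinding of membership in the $\amgis$-acted and permuted filters $p[b\ms c]$ and $(b\ c)\act p$ through Proposition~\ref{prop.sigma.iff}. Once those side-conditions are kept straight, all three verifications are one-line computations.
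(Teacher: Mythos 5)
Your proof is correct and takes essentially the same approach as the paper's: finite support by conservation of support (Theorem~\ref{thrm.no.increase.of.supp}), and both conditions of Definition~\ref{defn.powsigma} reduced via Proposition~\ref{prop.sigma.iff} to the syntactic facts $\tf P(a_1,\dots,a_n)[c\sm u]=\tf P(a_1,\dots,a_n)$ for $c$ fresh and $\tf P(a_1,\dots,a_n)[c\sm b]\aeq(b\ c)\act\tf P(a_1,\dots,a_n)$ (your Lemma~\ref{lemm.freshness}). The only cosmetic difference is that you verify the conditions directly in their $\amgis$-form, whereas the paper verifies the equivalent $\sigma$-action formulation and therefore also routes through Proposition~\ref{prop.amgis.iff}; the content is identical.
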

\begin{proof}
Finite support is from Theorem~\ref{thrm.no.increase.of.supp} since the support of $\tf P^\iden(a_1,\dots,a_n)$ is bounded by $\{a_1,\dots,a_n\}$, which is finite.
It remains to check conditions~\ref{item.fresh.powsigma} and~\ref{item.alpha.powsigma} of Definition~\ref{defn.powsigma}, namely:
\begin{itemize*}
\item
For fresh $a$ (so $a\not\in\{a_1,\dots,a_n\}$) and any term $r$, $\tf P^\iden(a_1,\dots,a_n)[a\sm r]=\tf P^\iden(a_1,\dots,a_n)$.
\item
For fresh $b$ (so $b\not\in\{a_1,\dots,a_n\}$) and any atom $m$ (so possibly $m\in\{a_1,\dots,a_n\}$), $\tf P^\iden(a_1,\dots,a_n)[m\sm r]=(b\ m)\act \tf P^\iden(a_1,\dots,a_n)$.
\end{itemize*}
For the first condition, by Proposition~\ref{prop.amgis.iff}
$p\in\tf P^\iden(a_1,\dots,a_n)[a\sm r]$ if and only if 
$\New{a'}p[r\ms a']\in\tf P^\iden(a_1,\dots,a_n)$.
By Definition~\ref{defn.syntax.interp} this is if and only if $\New{a'}\tf P(a_1,\dots,a_n)\cin p[r\ms a']$.
By Proposition~\ref{prop.sigma.iff} this is if and only if 
$\New{a'}\tf P(a_1,\dots,a_n)[a'\sm r]\cin p$.
It is now a fact of syntax that $\tf P(a_1,\dots,a_n)[a\sm r]= \tf P(a_1,\dots,a_n)$.

The second condition follows much as the first, using the fact of syntax that 
$\tf P(a_1,\dots,a_n)[m\sm b]=(b\ m)\act\tf P(a_1,\dots,a_n)$. 
\end{proof}

\begin{thrm}
\label{thrm.model.iden}
$\model{\phi}=\{p\in\Points\mid \phi\cin p\}$.
\end{thrm}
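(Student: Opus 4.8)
The plan is to prove this \emph{truth lemma} by induction on the structure of $\phi$, taking the inductive measure to be the number of logical connectives and quantifiers occurring in $\phi$. This measure is invariant under $\alpha$-equivalence and, crucially, under substitution of terms for atoms, so that in the quantifier case the instances $\psi[a\sm r]$ will count as strictly smaller than $\tall a.\psi$ and the induction hypothesis will apply to them.

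The atomic cases are essentially Lemma~\ref{lemm.P.p.iff}. Since $\ns U$ is taken to be syntax with real substitution, Lemma~\ref{lemm.model.syntax.r} gives $\model{r_i}=r_i$, so $\model{r_1{\teq}r_2}=\model{r_1}{=^\iden}\model{r_2}=r_1{=^\iden}r_2$ and $\model{\tf P(r_1,\dots,r_n)}=\tf P^\iden(r_1,\dots,r_n)$; parts~1 and~2 of Lemma~\ref{lemm.P.p.iff} then identify these with $\{p\mid (r_1{\teq}r_2)\cin p\}$ and $\{p\mid \tf P(r_1,\dots,r_n)\cin p\}$ respectively. For $\tbot$, note $\model{\tbot}=\varnothing$ and, by consistency (condition~\ref{filter.proper} of Definition~\ref{defn.filter}), $\{p\mid \tbot\cin p\}=\varnothing$ as well.

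For the propositional inductive steps I would use the filter/ultrafilter structure. The case $\phi_1\tand\phi_2$ follows from deductive closure and $\tand$-closure (conditions~\ref{filter.up} and~\ref{filter.and} of Definition~\ref{defn.filter}), using $\phi_1\tand\phi_2\cent\phi_i$ for the forward direction, so that $\{p\mid \phi_1\cin p\}\cap\{p\mid \phi_2\cin p\}=\{p\mid \phi_1\tand\phi_2\cin p\}$. The case $\tneg\psi$ uses that every point is an ultrafilter (Lemma~\ref{lemm.prime.ultra}): $p\notin\model{\psi}$, i.e.\ $\psi\notcin p$, holds precisely when $\tneg\psi\cin p$, so $\model{\tneg\psi}=\Points\setminus\model{\psi}=\{p\mid \tneg\psi\cin p\}$.

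The main case, and the step I expect to be the real obstacle, is $\tall a.\psi$. Here $\model{\tall a.\psi}=\freshcap{a}\model{\psi}=\bigcap_{r}\model{\psi}[a\sm r]$ by Definition~\ref{defn.nu.U}, the intersection ranging over all terms $r$ since $\ns U$ is the term algebra. By Lemma~\ref{lemm.sub.commute} together with $\model{r}=r$ we have $\model{\psi}[a\sm r]=\model{\psi[a\sm r]}$, and since each $\psi[a\sm r]$ is smaller in our measure the induction hypothesis yields $\model{\psi[a\sm r]}=\{p\mid \psi[a\sm r]\cin p\}$. Hence $p\in\model{\tall a.\psi}$ if and only if $\psi[a\sm r]\cin p$ for every term $r$, and by Proposition~\ref{prop.these.are.equivalent}(1) this is equivalent to $\tall a.\psi\cin p$, which closes the induction. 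The difficulty here is exactly that this step bridges the \emph{absolute} semantic object---an infinite intersection over all term-substitution instances---with the proof-theoretic membership condition $\tall a.\psi\cin p$; this is precisely where Proposition~\ref{prop.these.are.equivalent} does the essential work, and where the design of the filter condition~\ref{filter.new} and the Zorn construction of Theorem~\ref{thrm.maxfilt.zorn} pay off.
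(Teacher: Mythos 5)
Your proof is correct, and while its skeleton (induction on $\phi$, atomic cases via Lemma~\ref{lemm.P.p.iff}, propositional cases via the filter and ultrafilter conditions, and Proposition~\ref{prop.these.are.equivalent}(1) closing the quantifier case) matches the paper's, the engine of the $\tall a.\psi$ case is genuinely different. The paper uses plain structural induction: it unpacks $p\in\freshcap{a}\model{\psi}$ via Proposition~\ref{prop.amgis.iff} into $\New{c}\,p[r\ms c]\in(c\ a)\act\model{\psi}$ for every term $r$, and applies the inductive hypothesis to the subformula $\psi$ but at the \emph{moved} points $p[r\ms c]$---legitimate because points are closed under the $\amgis$-action (Lemma~\ref{lemm.bus.filter}) and the interpretation is equivariant---then converts back with Proposition~\ref{prop.sigma.iff} and $\alpha$-conversion. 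You instead strengthen the induction measure to the connective/quantifier count (correctly observing it is invariant under substitution, since terms contain no connectives) so that the hypothesis applies directly to the instances $\psi[a\sm r]$, and you bridge semantic and syntactic substitution with Lemma~\ref{lemm.sub.commute} together with $\model{r}=r$ from Lemma~\ref{lemm.model.syntax.r}. Your route keeps all reasoning at the single point $p$ and reuses the already-established compositionality lemma, at the price of a nonstandard measure; the paper's route keeps ordinary structural induction, at the price of threading the $\amgis$-action and equivariance through the inductive hypothesis. Both are sound, and both bottleneck through Proposition~\ref{prop.these.are.equivalent}(1) exactly where you identified the essential work being done.
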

\begin{proof}
By induction on $\phi$:
\begin{itemize*}
\item
\emph{The case of $\tbot$.}\quad
By assumption in Definition~\ref{defn.filter} $p\in\Points$ is consistent (meaning $\tbot\not\cin p$).
Furthermore, $\model{\tbot}=\varnothing$.
\item
\emph{The cases of $r_1{\teq}r_2$ and $\tf P(r_1,\dots,r_n)$.}\quad
This is Lemma~\ref{lemm.P.p.iff}.
\item
\emph{The case of $\phi\tand\psi$.}\quad
From the inductive hypothesis and conditions~\ref{filter.up} and~\ref{filter.and} of Definition~\ref{defn.filter}.
\item
\emph{The case of $\tneg\phi$.}\quad
By assumption $p$ is prime (Definition~\ref{defn.prime.filter}), and it follows by Lemma~\ref{lemm.prime.ultra} that $\tneg\phi\cin p$ if and only if $\phi\cin p$.
Also by assumption $\model{\tneg\phi}=\Points{\setminus}\model{\phi}$.
\item
\emph{The case of $\tall a.\phi$.}\quad
Combining Remark~\ref{rmrk.after} with Proposition~\ref{prop.amgis.iff} 
$p\in\model{\tall a.\phi}$ means that $\New{c}p[r\ms c]\in(c\ a)\act \model{\phi}$ for every $r$.
By inductive hypothesis this is if and only if $\New{c}(c\ a)\act\phi\cin p[r\ms c]$ for every $r$, which by Proposition~\ref{prop.sigma.iff} and $\alpha$-conversion is if and only if $\phi[a\sm r]\cin p$ for every $r$.
We use Proposition~\ref{prop.these.are.equivalent}(1).
\qedhere\end{itemize*}
\end{proof}

By Proposition~\ref{prop.xeqx} $(u{=^\iden}u)=\Points$.
In Proposition~\ref{prop.xeqx.converse} we show that specifically for the interpretation of Definition~\ref{defn.syntax.interp}, a converse holds:
\begin{prop}
\label{prop.xeqx.converse}
If $(r_1{=^\iden}r_2)=\Points$ then $r_1=r_2$ (that is, $r_1$ and $r_2$ are syntactically identical terms). 
\end{prop}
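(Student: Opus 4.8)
The plan is to reduce the hypothesis first to a statement about first-order derivability, and then to refute that derivability by soundness in a concrete term model.

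First I would unpack the hypothesis. By part~1 of Lemma~\ref{lemm.P.p.iff} we have $(r_1{=^\iden}r_2)=\{p\in\Points\mid (r_1{\teq}r_2)\cin p\}$, and the top element of the FOLeq algebra $\powsigma(\Points)$ is $\Points$ itself. Hence $(r_1{=^\iden}r_2)=\Points$ says precisely that $(r_1{\teq}r_2)\cin p$ for \emph{every} prime filter $p$. I claim this forces $\cent(r_1{\teq}r_2)$. Arguing contrapositively, if $r_1{\teq}r_2$ were not derivable then, since the logic is classical, $\tneg(r_1{\teq}r_2)$ would be consistent (as $\tneg\phi\cent\tbot$ is interderivable with $\cent\phi$ via \rulefont{\tneg L}, \rulefont{\tneg R}, \rulefont{\tbot L}); the corollary of Theorem~\ref{thrm.maxfilt.zorn} giving a prime filter through any consistent predicate then yields $q$ with $\tneg(r_1{\teq}r_2)\cin q$, and by Lemma~\ref{lemm.prime.ultra} this $q$ is an ultrafilter, so $(r_1{\teq}r_2)\not\cin q$ --- contradicting that every prime filter contains $r_1{\teq}r_2$.

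Second, I would discharge $\cent(r_1{\teq}r_2)$ against a term model. Take $\ns{TRM}$, the terms as a termlike $\sigma$-algebra over themselves, and consider the FOLeq algebra $\powsigma(\powamgis(\ns{TRM}))$ of Example~\ref{xmpl.approx} (it is exact by Proposition~\ref{prop.powamgis.exact} and a FOLeq algebra by Theorem~\ref{thrm.powsigma.FOLeq}), under the interpretation $\tf f^\iden(a_1,\dots,a_n)=\tf f(a_1,\dots,a_n)$. Here $\model{r_i}=r_i$ and $\model{r_1{\teq}r_2}=(r_1{=^{\ns P}}r_2)$. Soundness (Theorem~\ref{thrm.fol.sound}) applied to the derivable sequent $\cent r_1{\teq}r_2$ gives $\ltop\leq\model{r_1{\teq}r_2}$, i.e.\ $(r_1{=^{\ns P}}r_2)=\ltop=|\powamgis(\ns{TRM})|$. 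The final, purely computational step is to see that this is impossible unless $r_1=r_2$: assuming $r_1\neq r_2$, I would test the subset $p=\{r_1\}\subseteq|\ns{TRM}|$, which is one element of $|\powamgis(\ns{TRM})|$ (arbitrary subsets are admitted, so no finite-support concern arises). For any atom $c$ fresh for $r_1,r_2$, the atom $c$ itself witnesses $c[c\sm r_1]=r_1\in p$ (by \rulefont{\sigma a}) while $c[c\sm r_2]=r_2\notin p$; by Proposition~\ref{prop.sigma.iff} this gives $c\in p[r_1\ms c]$ but $c\notin p[r_2\ms c]$, so $p[r_1\ms c]\neq p[r_2\ms c]$. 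As this holds for cofinitely many $c$, we have $\neg\New{c}(p[r_1\ms c]=p[r_2\ms c])$, so by Definition~\ref{defn.eq.powamgis} $p\notin(r_1{=^{\ns P}}r_2)$ and hence $(r_1{=^{\ns P}}r_2)\neq|\powamgis(\ns{TRM})|$ --- a contradiction. Therefore $r_1=r_2$.

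I expect the genuine content --- and the step most likely to need care --- to be the passage $\cent(r_1{\teq}r_2)\Rightarrow r_1=r_2$: it is the ``no junk'' fact that empty-context equational derivation cannot identify distinct terms, and the cleanest justification routes through soundness in the syntactic model $\powsigma(\powamgis(\ns{TRM}))$ rather than through any direct induction on derivations. The two points worth stating explicitly are that $\powamgis$ admits the witness subset $\{r_1\}$ without any support hypothesis, and that \rulefont{\sigma a} is exactly what recovers $r_i$ from $c[c\sm r_i]$, so that a single fresh atom already separates $r_1$ from $r_2$.
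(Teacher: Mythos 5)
Your proof is correct, but it discharges the crucial step by a genuinely different route from the paper. The paper's own argument is a short contradiction: assuming $r_1\neq r_2$, it asserts as ``a fact of first-order logic'' (justified only by an appeal to cut-elimination and the syntax-directedness of the rules of Figure~\ref{fig.FOL}) that $\tneg(r_1\teq r_2)\not\cent\tbot$, uses Theorem~\ref{thrm.maxfilt.zorn} to obtain a point $p$ with $\tneg(r_1\teq r_2)\cin p$, and then notes that the hypothesis together with Theorem~\ref{thrm.model.iden} forces $(r_1\teq r_2)\cin p$ as well, so $\tbot\cin p$. You use the same Zorn/ultrafilter machinery (Theorem~\ref{thrm.maxfilt.zorn} and Lemma~\ref{lemm.prime.ultra}) to convert the hypothesis into the derivability judgement $\cent r_1\teq r_2$, but you then refute that judgement \emph{semantically}: soundness (Theorem~\ref{thrm.fol.sound}) in the second model $\powsigma(\powamgis(\ns{TRM}))$, plus the explicit witness $p=\{r_1\}$, which is separated from $(r_1{=^{\ns P}}r_2)$ by \rulefont{\sigma a} and Proposition~\ref{prop.sigma.iff}. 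In effect you replace the paper's unproved proof-theoretic ``no junk'' assertion with a countermodel built entirely from results the paper has already established (Propositions~\ref{prop.amgis.2} and~\ref{prop.powamgis.exact}, Theorem~\ref{thrm.powsigma.FOLeq}); this buys self-containedness---no appeal to cut-elimination for the key fact---at the cost of a second model and a longer route. (Your detour through $\tneg(r_1\teq r_2)$ does still silently use a standard admissibility fact to pass between $\tneg\phi\cent\tbot$ and $\cent\phi$, but that is the same level of rigour as the paper's own ``properties of first-order logic'' appeals.) Your remark that $|\powamgis(\ns{TRM})|$ admits the unrestricted subset $\{r_1\}$ is exactly the right point to make explicit---it is where the argument would fail if points were required to be finitely supported---and in fact freshness of $c$ is not even needed for your witness, since $c[c\sm r_i]=r_i$ for \emph{every} atom $c$.

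Two small gaps to patch in a final write-up, neither affecting correctness: the interpretation over $\powsigma(\powamgis(\ns{TRM}))$ must also assign some equivariant $\tf P^\iden$ to each predicate symbol (any choice works, e.g.\ the constant $\varnothing$, which lies in $|\powsigma(\powamgis(\ns{TRM}))|$); and $\model{r_i}=r_i$ in that model needs the same one-line argument as Lemma~\ref{lemm.model.syntax.r}, which as stated concerns the $\Points$ model rather than the $\ns{TRM}$ model.
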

\begin{proof}
Suppose $(r_1{=^\iden}r_2)=\Points$ and $r_1\neq r_2$.
It is a fact of first-order logic that $\tneg(r_1\teq r_2)\not\cent\tbot$ (easily proved using cut-elimination and the syntax-directed nature of the derivation rules).
By Theorem~\ref{thrm.maxfilt.zorn} there exists a point $p$ such that $\tneg(r_1\teq r_2)\cin p$.
Now by assumption $p\in(r_1{=^\iden}r_2)$, so by Theorem~\ref{thrm.model.iden} $(r_1{=^\iden}r_2)\cin p$.
Thus $\tbot\cin p$, a contradiction.
\end{proof}

\begin{rmrk}
\label{rmrk.why.remarkable}
It is worth pausing to note, informally but mathematically, why Propositions~\ref{prop.xeqx} and~\ref{prop.xeqx.converse} are remarkable.

We constructed $\powsigma(\Points)$ by taking the $\sigma$-powerset of points.
The `nominal' aspects of our construction in this paper allows us to interpret $\ttop$, $\tand$, and \emph{also} $\tall$ as three aspects of a single unifying notion of \emph{fresh-finite limits}---but let us set that aside for now.

In addition to the above, Propositions~\ref{prop.xeqx} and~\ref{prop.xeqx.converse} make formal an idea that even if the underlying notion of predicate did not contain an equality---in the sense of going back to Definition~\ref{defn.terms.and.predicates} and erasing equality $=$ from predicates, and going to the first-order logic sequent rules in Figure~\ref{fig.FOL} and erasing \rulefont{{\teq}R} and \rulefont{{\teq}L}---then even so, $\powsigma(\Points)$ would \emph{still} have an equality.

In summary: $\powsigma(\Points)$ 
`generates' a sets-based notion of equality, just as it `generates' sets-based notions of conjunction and universal quantification as fresh-finite limits, and negation as sets complement. 
\end{rmrk}

\begin{corr}
\label{corr.completeness}
If $\phi\not\cent\psi$ then $\model{\phi}\not\subseteq\model{\psi}$. 
\end{corr}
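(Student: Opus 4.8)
The plan is to assemble the two heavy results that have already been established: the existence of separating prime filters (the first corollary of Theorem~\ref{thrm.maxfilt.zorn}) and the identification of the denotation of a predicate with the set of points containing it (Theorem~\ref{thrm.model.iden}). Everything substantive has been done; what remains is bookkeeping, so I expect no genuine obstacle here.

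Concretely, I would argue as follows. Suppose $\phi\not\cent\psi$. First I would invoke the corollary to Theorem~\ref{thrm.maxfilt.zorn}, which gives a prime filter (point) $q\in\Points$ with $\phi\cin q$ and $\psi\not\cin q$. Next, working in the interpretation $\interp I$ of Definition~\ref{defn.syntax.interp} with $\mathcal L=\powsigma(\Points)$, I would apply Theorem~\ref{thrm.model.iden} to rewrite the two denotations as
$$
\model{\phi}=\{p\in\Points\mid \phi\cin p\}
\qquad\text{and}\qquad
\model{\psi}=\{p\in\Points\mid \psi\cin p\}.
$$
From $\phi\cin q$ I read off $q\in\model{\phi}$, and from $\psi\not\cin q$ I read off $q\not\in\model{\psi}$.

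Finally I would conclude: the point $q$ witnesses an element of $\model{\phi}$ that is not in $\model{\psi}$, hence $\model{\phi}\not\subseteq\model{\psi}$, as required. It is worth remarking that this corollary is exactly the contrapositive of the completeness statement flagged in the section opening (Theorem~\ref{thrm.completeness}): soundness (Theorem~\ref{thrm.fol.sound}) gives that derivable sequents are semantically valid, while this argument shows that semantic validity of $\model{\phi}\subseteq\model{\psi}$ in this particular FOLeq algebra $\powsigma(\Points)$ forces $\phi\cent\psi$. The only point requiring care is that $q$ must genuinely lie in $\Points$, i.e.\ be a \emph{prime} filter and not merely a filter, but this is precisely what Theorem~\ref{thrm.maxfilt.zorn} delivers, so the argument goes through directly.
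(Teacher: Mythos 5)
Your proposal is correct and follows essentially the same route as the paper: the paper's own proof also applies Theorem~\ref{thrm.maxfilt.zorn} (via the filter $\phi{\uparrow}$, which is exactly what its first corollary packages) to obtain a point $q$ with $\phi\cin q$ and $\psi\not\cin q$, and then invokes Theorem~\ref{thrm.model.iden} to conclude $q\in\model{\phi}$ and $q\not\in\model{\psi}$. Your additional remarks on primeness and on the relation to Theorem~\ref{thrm.completeness} are accurate but not needed beyond what the two cited results already deliver.
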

\begin{proof}
Suppose $\phi\not\cent\psi$.
By Theorem~\ref{thrm.maxfilt.zorn}
there exists a point $q$ with $\psi\not\cin q$ and $\phi{\uparrow}{\subseteq}q$ so that $\phi\cin q$.
By Theorem~\ref{thrm.model.iden} it follows that $q\in\model{\phi}$ and $q\not\in\model{\psi}$, and the result follows.
\end{proof}

Theorem~\ref{thrm.completeness} is a converse to Theorem~\ref{thrm.fol.sound}:
\begin{thrm}[Completeness]
\label{thrm.completeness}
If $\model{\Phi\cent\Psi}$ is true for every interpretation in every FOLeq algebra, then $\Phi\cent\Psi$. 
\end{thrm}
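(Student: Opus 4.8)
The plan is to argue by contraposition, reusing the concrete model $\powsigma(\Points)$ constructed throughout this section rather than building anything new. So suppose $\Phi\not\cent\Psi$; it is enough to exhibit a \emph{single} FOLeq algebra together with an interpretation in which $\model{\Phi\cent\Psi}$ is false, since that refutes the hypothesis that $\model{\Phi\cent\Psi}$ holds in every interpretation in every FOLeq algebra.

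First I would collapse the two finite sets to single predicates so that I can apply the single-formula result already in hand. Writing $\Phi=\{\phi_1,\dots,\phi_k\}$ and $\Psi=\{\psi_1,\dots,\psi_l\}$, set $\phi := \phi_1\tand\dots\tand\phi_k$ (and $\phi:=\ttop$ if $k=0$) and $\psi:=\psi_1\tor\dots\tor\psi_l$ (and $\psi:=\tbot$ if $l=0$), where $\tor$ is the classical abbreviation of Notation~\ref{nttn.fol.sugar}. Using \rulefont{{\tand}L}, \rulefont{{\tand}R}, and the derived left/right rules for $\tor$, the sequent $\phi\cent\psi$ is interderivable with $\Phi\cent\Psi$, so from $\Phi\not\cent\Psi$ I obtain $\phi\not\cent\psi$. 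Then I would invoke the assembled machinery: by Proposition~\ref{prop.points.amgis} the points $\Points$ form an exact $\amgis$-algebra, hence by Theorem~\ref{thrm.powsigma.FOLeq} the $\sigma$-powerset $\powsigma(\Points)$ is a FOLeq algebra, and Definition~\ref{defn.syntax.interp} supplies an interpretation $\interp I$ into it. Corollary~\ref{corr.completeness} now applies directly and gives $\model{\phi}\not\subseteq\model{\psi}$ in $\powsigma(\Points)$.

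It remains to translate this single-formula failure back into failure of the sequent. By the clauses for $\tand$, $\tneg$, and $\tbot$ in Figure~\ref{fig.extend.interp} (unpacked in Remark~\ref{rmrk.sum.up}), a routine induction gives $\model{\phi}=\bigcap\{\model{\phi'}\mid\phi'\in\Phi\}$ and $\model{\psi}=\bigcup\{\model{\psi'}\mid\psi'\in\Psi\}$, where the empty intersection is $|\Points|$ and the empty union is $\varnothing$, matching the boundary choices above. Hence $\bigcap\{\model{\phi'}\mid\phi'\in\Phi\}\not\subseteq\bigcup\{\model{\psi'}\mid\psi'\in\Psi\}$, which is exactly the assertion that $\model{\Phi\cent\Psi}$ is \emph{false} in $\powsigma(\Points)$, contradicting the hypothesis; so $\Phi\cent\Psi$ is derivable. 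I do not expect any genuine obstacle here: all the difficulty was already absorbed into the asymmetric Zorn construction of prime filters (Theorem~\ref{thrm.maxfilt.zorn}), the exactness of $\Points$ (Proposition~\ref{prop.points.amgis}), and the identification $\model{\phi}=\{p\mid\phi\in p\}$ (Theorem~\ref{thrm.model.iden}). The only care needed in this final step is the many-formula-to-single-formula reduction and the empty-$\Phi$/empty-$\Psi$ boundary cases, together with lining up $\model{-}$ of the compound formulas against the set operations $\bigcap$ and $\bigcup$.
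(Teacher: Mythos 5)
Your proposal is correct and is essentially the paper's own proof: the paper derives Theorem~\ref{thrm.completeness} in one line ``From Corollary~\ref{corr.completeness}'', i.e.\ by contraposition in the same concrete model $\powsigma(\Points)$, exactly as you do. The only difference is that you spell out the routine finite-set-to-single-formula reduction ($\phi=\bigwedge\Phi$, $\psi=\bigvee\Psi$, with the $\ttop$/$\tbot$ boundary cases) and the matching of $\model{-}$ with $\bigcap$/$\bigcup$, details the paper leaves implicit.
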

\begin{proof}
From Corollary~\ref{corr.completeness}. 
\end{proof}

\begin{corr}
$\model{\phi}=\Points$ if and only if $\cent\phi$ in FOLeq. 
\end{corr}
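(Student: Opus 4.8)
The plan is to read off both implications from facts already established for the specific interpretation $\interp I$ of Definition~\ref{defn.syntax.interp} in $\powsigma(\Points)$, the crucial one being the identification $\model{\phi}=\{p\in\Points\mid \phi\cin p\}$ from Theorem~\ref{thrm.model.iden}, together with the observation that in this model $\Points=\ltop$ is the top element and $\leq$ is $\subseteq$ (Proposition~\ref{prop.powsigma.bool}). So $\model{\phi}=\Points$ just says that \emph{every} point contains $\phi$.

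For the left-to-right direction I would argue by soundness. If $\cent\phi$ is derivable, then by Theorem~\ref{thrm.fol.sound} the assertion $\model{\cent\phi}$ is true; unpacking the interpretation of a sequent with empty left-hand side, this says $\ltop=\bigwedge\varnothing\leq\model{\phi}$, i.e. $\Points\subseteq\model{\phi}$, and since trivially $\model{\phi}\subseteq\Points$ we get $\model{\phi}=\Points$. (Equivalently and more concretely: each point $p$ is nonempty and deductively closed, so from $\cent\phi$ and left-weakening $\xi\cent\phi$ for any $\xi\cin p$, condition~\ref{filter.up} of Definition~\ref{defn.filter} forces $\phi\cin p$; hence every point contains $\phi$ and $\model{\phi}=\Points$ by Theorem~\ref{thrm.model.iden}.)

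For the right-to-left direction I would prove the contrapositive. Suppose $\not\cent\phi$. By a routine fact of classical first-order derivability, $\tneg\phi$ is then consistent, i.e. $\tneg\phi\not\cent\tbot$ (for otherwise $\tneg\phi\cent\tbot$ would, via \rulefont{\tneg R} and double-negation, yield $\cent\phi$). Applying the second corollary of Theorem~\ref{thrm.maxfilt.zorn} to the consistent predicate $\tneg\phi$ produces a prime filter $q$ with $\tneg\phi\cin q$. Since $q$ is prime it is an ultrafilter (Lemma~\ref{lemm.prime.ultra}), so exactly one of $\phi$ and $\tneg\phi$ lies in $q$; as $\tneg\phi\cin q$ we conclude $\phi\not\cin q$. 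By Theorem~\ref{thrm.model.iden} this means $q\not\in\model{\phi}$, so $\model{\phi}\neq\Points$, as required.

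There is no genuine obstacle here: the statement is a packaging corollary of soundness (Theorem~\ref{thrm.fol.sound}) and the completeness machinery (Theorems~\ref{thrm.maxfilt.zorn} and~\ref{thrm.model.iden}). The only point demanding a little care is the purely syntactic bookkeeping relating the theoremhood judgement $\cent\phi$ to the consistency of $\tneg\phi$, and relating $\model{\phi}=\Points$ to the top element $\ltop$ of the lattice; both are standard classical sequent-calculus facts, which I would dispatch by appeal to the cut-free, syntax-directed nature of the rules in Figure~\ref{fig.FOL} rather than by writing out the derivations.
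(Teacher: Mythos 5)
Your proof is correct and is essentially the argument the paper intends: the corollary is stated without proof as an immediate consequence of soundness (Theorem~\ref{thrm.fol.sound}) and the completeness machinery (Theorem~\ref{thrm.maxfilt.zorn}, Lemma~\ref{lemm.prime.ultra}, Theorem~\ref{thrm.model.iden}), which is exactly what you assemble. Your only variation is cosmetic: you pass through the consistency of $\tneg\phi$ and the ultrafilter property, where one could equally instantiate Corollary~\ref{corr.completeness} with $\ttop\not\cent\phi$; both routes reduce to the same appeal to Theorem~\ref{thrm.maxfilt.zorn} and Theorem~\ref{thrm.model.iden}.
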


\section{Tarski models}
\label{sect.complete}

We recall the usual definition of model for first-oder logic, which goes back to \cite{tarski:semct} and is based on valuations.
We will show how to lift this to a nominal model.

The main definition is Definition~\ref{defn.idenot.beta} and the main proofs are in Corollary~\ref{corr.termlike.sub.alg} and Proposition~\ref{prop.standard.nom.bool}.

The reader will probably be familiar with the valuation-based models and may ask: why bother taking something familiar and translating it to something \emph{less} familiar?
Usually, we expect to see it the other way around---the unfamiliar translated to the familiar.  

However, we will argue in Remark~\ref{rmrk.argue} that the nominal semantics is more natural.

\subsection{Tarski-style model of first-order logic}
\label{subsect.standard}

We briefly sketch the standard model of first-order classical logic, with valuations and without atoms.
This model is not intended to be sophisticated; we will just need that one exists.

\begin{nttn}
To avoid the confusion between sets and nominal sets, we may write \emph{ordinary set} for the former.
\end{nttn}

\begin{defn}
\label{defn.varsigma}
Suppose $X$ is an ordinary set.
Write $\atoms\Func X$ for the set of functions from atoms to $X$.
Let $\varsigma$ range over elements of $\atoms\Func X$ and call these \deffont{valuations} (to $X$).

If $x\in X$ then define $\varsigma[a\ssm x]$ to be the function on atoms given by:
$$
\begin{array}{r@{\ }l}
(\varsigma[a\ssm x])(a)=&x
\\
(\varsigma[a\ssm x])(b)=&\varsigma(b)\quad\text{ for all other }b
\end{array}
$$
\end{defn}

\begin{defn}
\label{defn.varsigma.action}
Define a permutation action on $\varsigma\in\atoms\Func X$ by
$$
(\pi\act\varsigma)(a)=\varsigma(\pi^\mone(a)) .
$$
\end{defn}

\begin{rmrk}
\label{rmrk.not.nominal}
$\atoms\Func X$ is a set with a permutation action (Definition~\ref{defn.fin.supp}).
It is not in general a nominal set because it has elements without finite support.\footnote{The finitely supported $\varsigma$ are such that there exists a finite $A\subseteq\mathbb A$ such that for all $a,b\not\in A$, $\varsigma(a)=\varsigma(b)$; it is not worth our while to impose this restriction, though it would do no harm to do so.} 
For our purposes that will not be a problem.

The action from Definition~\ref{defn.varsigma.action} is a special case of the \emph{conjugation action} $(\pi\act\varsigma)(a)=\pi\act\varsigma(\pi^\mone(a))$ (already mentioned in Subsection~\ref{subsect.full.function.space}), where we use the trivial action $\pi\act x=x$ for every $x\in X$.
This is standard; for a specifically `nominal' discussion see \cite[Definition~2.4.2]{gabbay:nomtnl}.
\end{rmrk}

Recall from Definition~\ref{defn.terms.and.predicates} the notion of \emph{signature} $(\Sigma,\Pi,\ar)$ of \emph{function symbols} $\tf f\in\Sigma$ and \emph{predicate symbols} $\tf P\in\Pi$ with their associated \emph{arities} $\ar(\tf f),\ar(\tf P)$.
For the rest of this subsection fix some signature.
\begin{defn}
\label{defn.model.standard}
An \deffont{ordinary model} $\mathcal N=(|\mathcal N|,\text{-}^\nden)$ is a tuple such that:
\begin{itemize*}
\item
$|\mathcal N|$ is some non-empty (ordinary) \deffont{underlying set}.
\item
$\text{-}^\nden$ assigns to each term-former $\tf f$ a function $\tf f^\nden:|\mathcal N|^{\ar(\tf f)}\Func |\mathcal N|$ and to each predicate-former $\tf P$ a function $\tf P^\nden:|\mathcal N|^{\ar(\tf P)}\Func \{\lbot,\ltop\}$. 
\end{itemize*}
\end{defn}

\begin{defn}
$\{\lbot,\ltop\}$ is a complete Boolean algebra.
We use standard definitions, such as $\land$, $\lneg$, $\bigwedge$, and $\bigvee$ without comment
\end{defn}

Recall the syntax of first-order logic from Definition~\ref{defn.terms.and.predicates}.
\begin{defn}
\label{defn.interp.tarski}
Define a (standard) \deffont{interpretation} function $\ndenot{\varsigma}{\text{-}}$ mapping terms and predicates to elements of $|\mathcal N|$ and $\{\lbot,\ltop\}$ (considered as a Boolean algebra) respectively, as follows:
$$
\begin{array}{r@{\ }l@{\qquad}r@{\ }l}
\ndenot{\varsigma}{a}=&\varsigma(a)
&
\ndenot{\varsigma}{\tf f(r_1,\dots,r_n)}=&\tf f^\nden(\ndenot{\varsigma}{r_1},\dots,\ndenot{\varsigma}{r_n})
\\
\ndenot{\varsigma}{\tbot}=&\lbot
&
\ndenot{\varsigma}{\tf P(r_1,\dots,r_n)}=&\tf P^\nden(\ndenot{\varsigma}{r_1},\dots,\ndenot{\varsigma}{r_n})
\\
\ndenot{\varsigma}{\phi'\tand\phi}=&\ndenot{\varsigma}{\phi'}\land\ndenot{\varsigma}{\phi}
&
\ndenot{\varsigma}{\tall a.\phi}=&\bigwedge_{x{\in}|\mathcal N|}\ndenot{\varsigma[a\ssm x]}{\phi}
\\
\ndenot{\varsigma}{\tneg \phi}=&\lneg\ndenot{\varsigma}{\phi}
\end{array}
$$
\end{defn}

Theorem~\ref{thrm.fol.standard} expresses the usual soundness and completeness result for first-order logic; for details and proofs see e.g. \cite[Subsection~1.5]{dalen:logs}:
\begin{thrm}
\label{thrm.fol.standard}
$\Phi\cent\Psi$ is derivable if and only if for every ordinary model $\mathcal N$ and every valuation $\varsigma$ to $|\mathcal N|$ it is the case that $\bigwedge_{\phi\in\Phi}\ndenot{\varsigma}{\phi}=\ltop$ implies $\bigvee_{\hspace{-.6ex}\psi\in\Psi}\ndenot{\varsigma}{\psi}=\ltop$.
\end{thrm}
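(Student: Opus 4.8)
The plan is to prove the two directions separately: soundness (derivability implies the semantic condition) is a routine induction, while completeness (the converse) is the substantive part, for which I would exploit the points $\Points$ already constructed in Definition~\ref{defn.points} rather than redo a maximal-consistent-set argument from scratch.

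For soundness I would argue by induction on the derivation of $\Phi\cent\Psi$ using the rules of Figure~\ref{fig.FOL}, first establishing the standard substitution lemma $\ndenot{\varsigma}{\phi[a\sm r]}=\ndenot{\varsigma[a\ssm\ndenot{\varsigma}{r}]}{\phi}$ (an easy induction on $\phi$, with $\alpha$-renaming to avoid capture). The propositional and structural rules are immediate from the Boolean algebra structure of $\{\lbot,\ltop\}$; the rule $\rulefont{\tall L}$ follows because the Tarski clause $\ndenot{\varsigma}{\tall a.\phi}=\bigwedge_{x}\ndenot{\varsigma[a\ssm x]}{\phi}$ lies below each instance $\ndenot{\varsigma}{\phi[a\sm r]}$ (via the substitution lemma), and $\rulefont{\tall R}$ uses the side-condition $a\notin\fa(\Phi\cup\Psi)$ together with the fact that under it $\varsigma$ and $\varsigma[a\ssm x]$ agree on the free atoms of the context. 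The equality rules $\rulefont{{\teq}R}$ and $\rulefont{{\teq}L}$ are validated by the interpretation of $\teq$ as genuine equality in $|\mathcal N|$.

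For completeness I would argue contrapositively. Suppose $\Phi\not\cent\Psi$; since $\Phi,\Psi$ are finite this gives $\bigwedge\Phi\not\cent\bigvee\Psi$, so by the corollary to Theorem~\ref{thrm.maxfilt.zorn} there is a point (prime filter) $q$ with $\bigwedge\Phi\cin q$ and $\bigvee\Psi\not\cin q$; by deductive closure each $\phi\in\Phi$ lies in $q$ and each $\psi\in\Psi$ does not. I then build an ordinary model $\mathcal N_q$ whose carrier $|\mathcal N_q|$ is the set of terms quotiented by $r\sim_q s \Leftrightarrow (r\teq s)\cin q$, interpreting each $\tf f$ syntactically and each $\tf P$ by membership, $\tf P^{\nden}([r_1],\dots,[r_n])=\ltop \Leftrightarrow \tf P(r_1,\dots,r_n)\cin q$. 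The equality rules make $\sim_q$ a congruence, so this is well defined. Taking the canonical valuation $\varsigma_0(a)=[a]$, the heart of the argument is the truth lemma $\ndenot{\varsigma_0}{\phi}=\ltop \Leftrightarrow \phi\cin q$, proved by induction on $\phi$: the $\tneg$ case uses Lemma~\ref{lemm.prime.ultra} (primeness $=$ ultrafilter), and the $\tall$ case reduces, via the substitution lemma, to whether $\phi[a\sm r]\cin q$ for every term $r$, which by Proposition~\ref{prop.these.are.equivalent}(1) is exactly the condition $\tall a.\phi\cin q$. Applying the truth lemma to $\Phi$ and $\Psi$ yields $\bigwedge_{\phi\in\Phi}\ndenot{\varsigma_0}{\phi}=\ltop$ while $\bigvee_{\psi\in\Psi}\ndenot{\varsigma_0}{\psi}=\lbot$, contradicting the assumed semantic validity.

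I expect the main obstacle to be the $\tall$ case of the truth lemma, where Tarski's clause quantifies over \emph{all} elements of $|\mathcal N_q|$ whereas the syntactic $\tall$ in $q$ is characterised by all \emph{term}-substitution instances. In a term model every element is named by a term, so the two coincide and the usual Henkin witnessing becomes automatic: Proposition~\ref{prop.these.are.equivalent}(1) is precisely what packages this coincidence, and, together with Lemma~\ref{lemm.prime.ultra}, it also supplies the witnessing terms for existential statements (here $\texi a.\phi$ abbreviates $\tneg\tall a.\tneg\phi$, handled through the $\tneg$ and $\tall$ cases). Alternatively one could run the classical Henkin construction---extend the signature with witness constants, take a maximally consistent Henkin theory, and form its term model, as in \cite{dalen:logs}---but reusing $\Points$ avoids re-proving the existence and closure properties of such sets.
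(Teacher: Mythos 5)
Your proof is correct, but it takes a genuinely different route from the paper: the paper offers no proof of Theorem~\ref{thrm.fol.standard} at all, treating it as classical background and citing \cite[Subsection~1.5]{dalen:logs}, whereas you give a self-contained argument that reuses the paper's own nominal completeness machinery. Your reuse is legitimate and non-circular, since the apparatus of Section~\ref{sect.completeness} (filters, Theorem~\ref{thrm.maxfilt.zorn}, Proposition~\ref{prop.these.are.equivalent}, Lemma~\ref{lemm.prime.ultra}) is developed independently of the Tarski models of Section~\ref{sect.complete}. What your route buys is twofold: it makes the development self-contained (so that Corollary~\ref{corr.nom.complete}, which invokes Theorem~\ref{thrm.fol.standard}, would no longer rest on an external citation), and it makes explicit something the paper leaves implicit, namely that condition~\ref{filter.new} of Definition~\ref{defn.filter} is precisely a Henkin witnessing property: in your term model, Proposition~\ref{prop.these.are.equivalent}(1) collapses the Tarski clause $\bigwedge_{x}\ndenot{\varsigma_0[a\ssm x]}{\phi}$ onto the syntactic characterisation of $\tall a.\phi\cin q$, and together with Lemma~\ref{lemm.prime.ultra} it yields existential witnesses without any signature extension. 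What the paper's citation buys is brevity and emphasis: Section~\ref{sect.complete} is about exhibiting Tarski models as a \emph{special case} of FOLeq algebras, and the author evidently did not want to re-derive a textbook result there. Two small points you should tidy if writing this out in full: Definition~\ref{defn.interp.tarski} as printed has no clause for $r\teq s$, so you should state explicitly (as you implicitly do) that $\ndenot{\varsigma}{r\teq s}=\ltop$ iff $\ndenot{\varsigma}{r}=\ndenot{\varsigma}{s}$; and when passing from $\Phi\not\cent\Psi$ to $\bigwedge\Phi\not\cent\bigvee\Psi$ you should handle the empty cases via $\ttop$ and $\tbot$, which is trivial but needed to invoke the corollary of Theorem~\ref{thrm.maxfilt.zorn} as stated.
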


\subsection{Lifting to a FOLeq algebra}
\label{subsect.nom.model.from.standard}

We now show how to `lift' a model over ordinary sets to a nominal model (Proposition~\ref{prop.standard.nom.bool}).
We then deduce completeness for nominal Boolean algebras (Corollary~\ref{corr.nom.complete}).

\subsubsection{Lifting to a sigma-algebra}

\begin{defn}
\label{defn.idenot.beta}
Given ordinary sets $X$ and $Y$ define 
a termlike $\sigma$-algebra $\Tarski{X,X}$, 
and define a $\sigma$-algebra $\Tarski{X,Y}$ over $\Tarski{X,X}$ by:
$$
\begin{array}{r@{\ }l}
\Tarski{X,X}=&(|\Tarski{X,X}|,\act,\tf{sub}_{\Tarski{X,Y}},\tf{atm}_{\Tarski{X,X}})
\\
\Tarski{X,Y}=&(|\Tarski{X,Y}|,\act,\tf{sub}_{\Tarski{X,Y}})
\end{array}
$$
as follows:
\begin{itemize*}
\item
$|\Tarski{X,Y}|$ is the set of functions $f$ from $\atoms\Func X$ to $Y$ such that there exists a finite set $A_f\subseteq\mathbb A$ such that 
\begin{equation}
\label{Af}
\Forall{a{\in}A_f}\varsigma(a)=\varsigma'(a)
\quad\text{implies}\quad
f(\varsigma)=f(\varsigma')
\end{equation} 
(if $X=Y$ then we obtain $|\Tarski{X,X}|$). 
\item
The permutation action $\act$ is defined by 
$$
(\pi\act f)(\varsigma)=f(\pi^\mone\act\varsigma).
$$
\item
If $f\in\atoms\Func Y$ and $u\in\atoms\Func X$ define 
$$
(f[a\sm u])(\varsigma)=f(\varsigma[a\ssm u(\varsigma)]).
$$
\item
If $X=Y$ then define $\tf{atm}_{\Tarski{X,X}}(a)(\varsigma)=\varsigma(a)$.
\end{itemize*}
\end{defn}

\begin{nttn}
$\tf{atm}_{\Tarski{X,X}}(a)$ is a bit of a mouthful so we may just write it as $a$.
It will always be clear whether we mean `$a\in\atoms$' or `$a\in\Tarski{X,X}$'.
\end{nttn}

\begin{lemm}
\label{lemm.idenot.beta.pnom}
$(|\Tarski{X,Y}|,\act)$ from Definition~\ref{defn.idenot.beta} is indeed a nominal set (Definition~\ref{defn.nominal.set}).
\end{lemm}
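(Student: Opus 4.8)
The plan is to verify the three things that make $(|\Tarski{X,Y}|,\act)$ a nominal set in the sense of Definitions~\ref{defn.fin.supp} and~\ref{defn.nominal.set}: that $\act$ is well-defined as a map from $|\Tarski{X,Y}|$ to itself, that it is a group action, and that every element has finite support. The organising observation is that the conjugation-style action on valuations from Definition~\ref{defn.varsigma.action} is itself a genuine group action, with $\pi\act(\pi'\act\varsigma)=(\pi\circ\pi')\act\varsigma$, which I would record first by an easy pointwise calculation (noting in particular that $(\pi^\mone\act\varsigma)(a)=\varsigma(\pi(a))$).

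First I would check that $\pi\act f\in|\Tarski{X,Y}|$ whenever $f\in|\Tarski{X,Y}|$, so that $\act$ really maps the underlying set to itself. Concretely, if the finite set $A_f$ witnesses condition~\eqref{Af} for $f$, then $\pi\act A_f$ witnesses it for $\pi\act f$: if $\varsigma$ and $\varsigma'$ agree on $\pi\act A_f$, then $\pi^\mone\act\varsigma$ and $\pi^\mone\act\varsigma'$ agree on $A_f$, since for $a\in A_f$ we have $(\pi^\mone\act\varsigma)(a)=\varsigma(\pi(a))=\varsigma'(\pi(a))=(\pi^\mone\act\varsigma')(a)$ by the assumption at $\pi(a)\in\pi\act A_f$; hence $(\pi\act f)(\varsigma)=f(\pi^\mone\act\varsigma)=f(\pi^\mone\act\varsigma')=(\pi\act f)(\varsigma')$. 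The group-action laws $\id\act f=f$ and $\pi\act(\pi'\act f)=(\pi\circ\pi')\act f$ then follow immediately by unfolding $(\pi\act f)(\varsigma)=f(\pi^\mone\act\varsigma)$ and applying the corresponding laws for the action on valuations.

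The substance of the lemma is the finite-support claim, and here I would show directly that $A_f$ supports $f$. Given $\pi\in\fix(A_f)$, I must show $\pi\act f=f$, that is $f(\pi^\mone\act\varsigma)=f(\varsigma)$ for every $\varsigma$. By condition~\eqref{Af} for $f$ it suffices to check that $\pi^\mone\act\varsigma$ and $\varsigma$ agree on $A_f$; but for $a\in A_f$ we have $(\pi^\mone\act\varsigma)(a)=\varsigma(\pi(a))=\varsigma(a)$ because $\pi$ fixes $A_f$. Since $A_f$ is finite this shows $f$ is finitely supported, and as $f$ was arbitrary, $(|\Tarski{X,Y}|,\act)$ is a nominal set.

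There is no deep obstacle here; the only thing to watch is the bookkeeping of inverses induced by the conjugation action on valuations, so that the permutation lands on the right argument in each step. The one genuinely load-bearing point is that finite support is guaranteed \emph{precisely} by the defining condition~\eqref{Af}—this is exactly the restriction that cuts $|\Tarski{X,Y}|$ down from the full, non-nominal function space of Remark~\ref{rmrk.not.nominal}, and it is what makes each $f$ depend on only finitely many coordinates of its valuation argument.
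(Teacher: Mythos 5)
Your proof is correct and follows essentially the same route as the paper: the paper likewise dismisses the group-action laws as routine and then shows finite support by taking $\pi\in\fix(A_f)$, computing $(\pi^\mone\act\varsigma)(a)=\varsigma(\pi(a))=\varsigma(a)$ for $a\in A_f$, and concluding via condition~\eqref{Af} that $A_f$ supports $f$. Your explicit check that $\pi\act f$ again satisfies condition~\eqref{Af} (witnessed by $\pi\act A_f$) is a point the paper leaves implicit under ``routine,'' so it is a harmless and slightly more careful addition rather than a different argument.
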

\begin{proof}
It is routine to verify that the permutation action is a group action.
It remains to check finite support.

Suppose $\pi(a)=a$ for every $a\in A_f$ where $A_f$ is the finite set of atoms whose existence for each $f$ is assumed in Definition~\ref{defn.idenot.beta}.
By Definition~\ref{defn.idenot.beta} $(\pi\act f)(\varsigma)=f(\pi^\mone\act\varsigma)$.
By Definition~\ref{defn.varsigma.action} $(\pi^\mone\act\varsigma)(a)=\varsigma(\pi(a))$.
Now by assumption $\varsigma(a)=\varsigma(\pi(a))$ for every $a\in A_f$.
Therefore, $(\pi^\mone\act\varsigma)(a)=\varsigma(a)$ for every $a\in A_f$, and so $f(\varsigma)=f(\pi^\mone\act\varsigma)$, and so $(\pi\act f)(\varsigma)=f(\varsigma)$. 
Thus, $f$ has finite support (and is supported by $A_f$).
\end{proof}

\begin{lemm}
\label{lemm.supp'.supp}
Suppose $f\in |\Tarski{X,Y}|$.
Then $\Forall{a{\in}\supp(f)}\varsigma(a){=}\varsigma'(a)$ implies $f(\varsigma)=f(\varsigma')$. 
\end{lemm}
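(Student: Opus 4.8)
The plan is to bootstrap from the syntactic supporting set $A_f$ supplied by Definition~\ref{defn.idenot.beta} to the genuine (minimal) support $\supp(f)$. First I would record that $\supp(f)\subseteq A_f$, which is immediate from Theorem~\ref{thrm.supp} since $A_f$ is a finite set supporting $f$ (this is exactly what the proof of Lemma~\ref{lemm.idenot.beta.pnom} establishes). The content of the lemma is then that the \emph{smaller} set $\supp(f)$ already enjoys the agreement property~\eqref{Af}, even though a priori~\eqref{Af} is only assumed for $A_f$.

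The key step, which I would isolate as a claim, is that $f$ ignores every coordinate outside its support: if $a\#f$ then $f(\varsigma)=f(\varsigma[a\ssm x])$ for all $\varsigma{\in}(\atoms\Func X)$ and $x\in X$. To prove this I would choose a fresh atom $b\notin A_f\cup\{a\}$. Two facts combine. (i) Since $b\notin A_f$, the valuations $\rho$ and $\rho[b\ssm y]$ agree on $A_f$ for every $\rho,y$, so~\eqref{Af} gives $f(\rho)=f(\rho[b\ssm y])$; i.e.\ $f$ ignores coordinate $b$. (ii) Since $a,b\#f$, the swap $(b\ a)$ lies in $\fix(\supp(f))$, so by Corollary~\ref{corr.stuff}(1) $(b\ a)\act f=f$, which unfolds (Definition~\ref{defn.idenot.beta}) to $f((b\ a)\act\rho)=f(\rho)$ for all $\rho$. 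Now taking $\rho=\varsigma[b\ssm x]$ and computing the action on valuations (Definition~\ref{defn.varsigma.action}) one gets $(b\ a)\act\rho=\varsigma[a\ssm x][b\ssm\varsigma(a)]$; chaining $f(\varsigma)\overset{(i)}{=}f(\varsigma[b\ssm x])\overset{(ii)}{=}f(\varsigma[a\ssm x][b\ssm\varsigma(a)])\overset{(i)}{=}f(\varsigma[a\ssm x])$ yields the claim.

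With the claim in hand the lemma is routine: given $\varsigma,\varsigma'$ agreeing on $\supp(f)$, I would transform $\varsigma'$ into a valuation $\varsigma''$ that agrees with $\varsigma$ on all of $A_f$, by overwriting the values of $\varsigma'$ at the finitely many atoms of $A_f\setminus\supp(f)$ with those of $\varsigma$. Each overwrite is at an atom outside $\supp(f)$, so by the claim it leaves the value of $f$ unchanged; hence $f(\varsigma')=f(\varsigma'')$. Since $\varsigma''$ and $\varsigma$ agree on $A_f$, property~\eqref{Af} gives $f(\varsigma'')=f(\varsigma)$, and therefore $f(\varsigma)=f(\varsigma')$.

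I would expect the only delicate point to be the bookkeeping in the claim: getting the direction of the permutation action right (recall $(\pi\act f)(\varsigma)=f(\pi^\mone\act\varsigma)$ and $(\pi\act\varsigma)(a)=\varsigma(\pi^\mone(a))$) and verifying that $(b\ a)\act(\varsigma[b\ssm x])$ is indeed $\varsigma[a\ssm x][b\ssm\varsigma(a)]$. Everything else is a direct appeal to~\eqref{Af}, Theorem~\ref{thrm.supp}, and Corollary~\ref{corr.stuff}.
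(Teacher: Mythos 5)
Your proof is correct and follows essentially the same route as the paper: the paper's proof also reduces to showing $f(\varsigma[a\ssm x])=f(\varsigma)$ for $a\in A_f\setminus\supp(f)$, picks a fresh $b\not\in A_f$, and chains $f(\varsigma[a\ssm x])=f(\varsigma[a\ssm x][b\ssm\varsigma(a)])=f(\varsigma[b\ssm x])=f(\varsigma)$ using property~\eqref{Af} together with $(b\ a)\act f=f$ from Corollary~\ref{corr.stuff}(1) --- exactly your claim read in the other direction. The only difference is that you spell out the final overwriting bookkeeping and the computation of $(b\ a)\act(\varsigma[b\ssm x])$, which the paper leaves implicit.
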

\begin{proof}
It suffices to show that if $a\in A_f{\setminus}\supp(f)$ and $x\in X$ then $f(\varsigma[a\ssm x])=f(\varsigma)$.
Choose fresh $b$ (so $b\not\in A_f$).
By part~1 of Corollary~\ref{corr.stuff} $(b\ a)\act f=f$, since $a,b\not\in\supp(f)$.
We reason as follows:
$$
f(\varsigma[a\ssm x])\stackrel{b{\not\in}A_f}{=} f(\varsigma[a\ssm x][b\ssm\varsigma(a)])\stackrel{(b\,a)\act f{=}f}{=}f(\varsigma[b\ssm x])\stackrel{b{\not\in}A_f}{=} f(\varsigma)
\qedhere$$ 
\end{proof}

\begin{rmrk}
Lemma~\ref{lemm.supp'.supp} does not follow from Lemma~\ref{lemm.idenot.beta.pnom}: to see this,
take $X{=}Y{=}\{0,1\}$ and $f(\varsigma)=\f{min}\{\varsigma(a)\mid a{\in}\mathbb A\}$.
Then by Theorem~\ref{thrm.no.increase.of.supp} $\supp(f){=}\varnothing$ so that $\Forall{a{\in}\supp(f)}\varsigma(a){=}\varsigma'(a)$ for any $\varsigma$ and $\varsigma'$, yet $f(\varsigma)\neq f(\varsigma')$ where $\varsigma=\lam{a{\in}\mathbb A}0$ and $\varsigma'=\lam{a{\in}\mathbb A}1$.

What makes Lemma~\ref{lemm.supp'.supp} work is the interaction of support with condition~\eqref{Af} of Definition~\ref{defn.idenot.beta}.
\end{rmrk}

\begin{corr}
\label{corr.termlike.sub.alg}
$\Tarski{X,X}$ is indeed a termlike $\sigma$-algebra, and $\Tarski{X,Y}$ is a $\sigma$-algebra over $\Tarski{X,X}$.
\end{corr}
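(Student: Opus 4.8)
The plan is to verify directly the defining conditions of Definition~\ref{defn.term.sub.alg} for $\Tarski{X,X}$ and of Definition~\ref{defn.sub.algebra} for $\Tarski{X,Y}$ over $\Tarski{X,X}$. The underlying nominal-set structure is already supplied by Lemma~\ref{lemm.idenot.beta.pnom}. Because the permutation action, the $\sigma$-action, and $\tf{atm}$ are all specified by formulas in the language of ZFA/FM set theory, equivariance of these operations and the fact that $f[a\sm u]$ again has finite support (so that the $\sigma$-action indeed lands in $|\Tarski{X,Y}|$) follow immediately from the principle of equivariance and conservation of support, Theorem~\ref{thrm.equivar}. Injectivity of $\tf{atm}_{\Tarski{X,X}}$ is a one-line check: if $\tf{atm}(a)=\tf{atm}(b)$ then $\varsigma(a)=\varsigma(b)$ for every $\varsigma$, which forces $a=b$ as soon as $X$ has at least two elements (the nondegenerate case); I would record this proviso.

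The substance of the proof is checking the equalities \rulefont{\sigma a}, \rulefont{\sigma id}, \rulefont{\sigma\#}, \rulefont{\sigma\alpha}, and \rulefont{\sigma\sigma} of Figure~\ref{fig.nom.sigma} by unfolding Definition~\ref{defn.idenot.beta} at an arbitrary valuation $\varsigma$. Axioms \rulefont{\sigma a} and \rulefont{\sigma id} are immediate: $(\tf{atm}(a)[a\sm u])(\varsigma)=(\varsigma[a\ssm u(\varsigma)])(a)=u(\varsigma)$, and $(f[a\sm a])(\varsigma)=f(\varsigma[a\ssm\varsigma(a)])=f(\varsigma)$ since updating $\varsigma$ at $a$ by its own value changes nothing. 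For \rulefont{\sigma\#}, if $a\#f$ then $\varsigma$ and $\varsigma[a\ssm u(\varsigma)]$ agree off $a$, hence on $\supp(f)$, so Lemma~\ref{lemm.supp'.supp} gives $f(\varsigma[a\ssm u(\varsigma)])=f(\varsigma)$.

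The two remaining axioms are where the care lies, and I expect \rulefont{\sigma\alpha} to be the main obstacle because it mixes the permutation action on valuations with the $\sigma$-action. Unfolding the right-hand side of \rulefont{\sigma\alpha}, $(((b\;a)\act f)[b\sm u])(\varsigma)=f((b\;a)\act(\varsigma[b\ssm u(\varsigma)]))$ by Definition~\ref{defn.varsigma.action}; a short computation shows $(b\;a)\act(\varsigma[b\ssm u(\varsigma)])$ and $\varsigma[a\ssm u(\varsigma)]$ agree at every atom except possibly $b$, and since $b\#f$ Lemma~\ref{lemm.supp'.supp} again collapses the difference, yielding $f(\varsigma[a\ssm u(\varsigma)])=(f[a\sm u])(\varsigma)$. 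For \rulefont{\sigma\sigma}, expanding both sides reduces the left-hand side to $f(\varsigma[b\ssm v(\varsigma)][a\ssm w])$ and the right-hand side to $f(\varsigma[a\ssm w][b\ssm v(\varsigma)])$, where $w=u(\varsigma[b\ssm v(\varsigma)])$; here the hypothesis $a\#v$ is exactly what lets Lemma~\ref{lemm.supp'.supp} erase the inner $a$-update inside $v(\varsigma[a\ssm w])=v(\varsigma)$, after which the two updates at the distinct atoms $a$ and $b$ commute.

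The crucial non-routine ingredient throughout is Lemma~\ref{lemm.supp'.supp} rather than bare finite support: the remark following that lemma shows exactly why mere finite support of $f$ would not suffice to justify the value-collapses used in \rulefont{\sigma\#}, \rulefont{\sigma\alpha}, and \rulefont{\sigma\sigma}. Once these five equalities are in hand, $\Tarski{X,X}$ satisfies all of Definition~\ref{defn.term.sub.alg} and $\Tarski{X,Y}$ satisfies the four required axioms (all but \rulefont{\sigma a}) of Definition~\ref{defn.sub.algebra}, completing the corollary.
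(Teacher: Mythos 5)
Your verification of the five axioms is correct and follows essentially the same route as the paper's own proof: both unfold Definition~\ref{defn.idenot.beta} pointwise at an arbitrary valuation $\varsigma$ and both use Lemma~\ref{lemm.supp'.supp} as the workhorse for \rulefont{\sigma\#}, \rulefont{\sigma\alpha}, and \rulefont{\sigma\sigma}. Your intermediate forms match the paper's exactly---for \rulefont{\sigma\alpha} the paper likewise reduces the right-hand side to $f(\varsigma[a\ssm u(\varsigma)][b\ssm\varsigma(a)])$ and invokes Lemma~\ref{lemm.supp'.supp} with $b\#f$, and for \rulefont{\sigma\sigma} it likewise uses $a\#v$ to replace $v(\varsigma[a\ssm w])$ by $v(\varsigma)$ (with the commutation of updates at distinct atoms used silently where you make it explicit). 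Your proviso that $\tf{atm}_{\Tarski{X,X}}$ is injective only when $X$ has at least two elements is a fair catch which the paper passes over without comment.

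There is, however, one step whose stated justification would fail: closure of $|\Tarski{X,Y}|$ under the $\sigma$-action. You assert that $f[a\sm u]$ lands in $|\Tarski{X,Y}|$ because conservation of support (Theorem~\ref{thrm.equivar}) gives it finite support. But membership in $|\Tarski{X,Y}|$ is not defined by finite support under the conjugation action; it requires the strictly stronger condition~\eqref{Af}, and---as your own closing paragraph observes, citing the remark after Lemma~\ref{lemm.supp'.supp}---finite support does not imply that condition: the function $f(\varsigma)=\f{min}\{\varsigma(a)\mid a{\in}\mathbb A\}$ has empty support yet satisfies \eqref{Af} for no finite $A_f$. So the very distinction you correctly flag as crucial elsewhere undermines this appeal. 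The repair is one line: take $A_{f[a\sm u]}=A_f\cup A_u$, for if $\varsigma$ and $\varsigma'$ agree on $A_f\cup A_u$ then $u(\varsigma)=u(\varsigma')$, whence $\varsigma[a\ssm u(\varsigma)]$ and $\varsigma'[a\ssm u(\varsigma')]$ agree on $A_f\cup\{a\}$ and condition~\eqref{Af} for $f$ applies. (The paper declares all of this ``routine'' and checks neither closure nor equivariance explicitly, so you are not contradicting it---but since you offered a justification, it must be the right one.) With that patch your proof is complete.
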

\begin{proof}
By Definitions~\ref{defn.term.sub.alg} and~\ref{defn.idenot.beta} and see that we need to check equivariance and the \rulefont{\sigma\ast} axioms 
from Figure~\ref{fig.nom.sigma}.
This is routine:
\begin{itemize*}
\item
\emph{Rule \rulefont{\sigma a} (only for $\Tarski{X,X}$).}\quad
Unpacking definitions, $a[a\sm u](\varsigma)=\varsigma[a\ssm u(\varsigma)](a)=u(\varsigma)$.
\item
\emph{Rule \rulefont{\sigma id}.}\quad
Again we unpack definitions: $f[a\sm a](\varsigma)=f(\varsigma[a\ssm\varsigma(a)])=f(\varsigma)$.
\item
\emph{Rule \rulefont{\sigma\#}.}\quad
Using Lemma~\ref{lemm.supp'.supp}.
\item 
\emph{Rule \rulefont{\sigma\alpha}.}\quad
Suppose $f\in \Tarski{X,Y}$ and $b\#f$, and $u\in \Tarski{X,X}$.
Then $f[a\sm u](\varsigma)=f(\varsigma[a\ssm u(\varsigma)])$.
Also, unpacking Definitions~\ref{defn.varsigma.action} and~\ref{defn.idenot.beta} $((b\ a)\act f)[b\sm u]=f(\varsigma[a\ssm u(\varsigma)][b\ssm \varsigma(a)])$.
We use Lemma~\ref{lemm.supp'.supp}. 
\item
\emph{Rule \rulefont{\sigma\sigma}.}\quad
Suppose $f\in \Tarski{X,Y}$ and $u,v\in\Tarski{X,X}$ and $a\#v$.
We just unpack definitions:
$$
\begin{array}{r@{\ }l@{\qquad}l}
f[a\sm u][b\sm v](\varsigma)=&f[a\sm u](\varsigma[b\ssm v(\varsigma)])
\\
=&f(\varsigma[a\ssm u(\varsigma[b\ssm v(\varsigma)])][b\ssm v(\varsigma)])
\\[2ex]
f[b\sm v][a\sm u[b\sm v]](\varsigma)=&
f[b\sm v](\varsigma[a\ssm u(\varsigma[b\ssm v(\varsigma)])])
\\
=&
f(\varsigma[a\ssm u(\varsigma[b\ssm v(\varsigma)])][b\ssm v(\varsigma[a\ssm u(\varsigma[b\ssm v(\varsigma)])])])
\\
=&
f(\varsigma[a\ssm u(\varsigma[b\ssm v(\varsigma)])][b\ssm v(\varsigma)])
\end{array}
$$
The final step is valid using Lemma~\ref{lemm.supp'.supp} since $a\#v$.
(There are two symbols here: $\sm$ and $\ssm$.  $f[a\sm u]$ is `$f$ with $u$ substituted for $a$' from Definition~\ref{defn.idenot.beta}, and $\varsigma[a\ssm u(\varsigma)]$ is `$\varsigma$ with $a$ maps to $u(\varsigma)$' from Definition~\ref{defn.varsigma}.)
\qedhere\end{itemize*}
\end{proof}

\subsubsection{Interlude: more on limits in nominal posets}
\label{subsect.more.on.limits}

For this subsection, fix a nominal poset $\mathcal L$ (Definition~\ref{defn.nom.poset}).

It is useful to continue and extend the maths from Subsection~\ref{subsect.fresh-finite.limit}.
Recall from Notation~\ref{nttn.fix} the definition of $\fix$: 
\begin{defn}
\label{defn.fresh.orbit}
Following \cite{gabbay:stusun,gabbay:frenrs} define $x\ii{a}$ by
$$
\begin{array}{r@{\ }l}
x\ii{a}=&\{\pi\act x \mid \pi\in\fix(\supp(x){\setminus}\{a\})\}  .
\\
=&\{x\}\cup\{(b\ a)\act x\mid b\#x\}
\end{array}
$$
Write $\bigwedge x\ii{a}$ for the $\leq$-greatest lower bound of $x\ii{a}$, if this exists.
\end{defn}

Lemma~\ref{lemm.fresh.ii} is related to Lemma~\ref{lemm.a.fresh.bigset} and Proposition~\ref{prop.freshwedgeo} is related to Propositions~\ref{prop.char.freshwedge} and~\ref{prop.char.freshwedge.names}:
\begin{lemm}
\label{lemm.fresh.ii}
$a\#x\ii{a}$.
\end{lemm}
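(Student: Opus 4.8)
The plan is to derive $a\#x\ii{a}$ directly from the ``swapping'' characterisation of freshness in part~3 of Corollary~\ref{corr.stuff}, which reduces the goal to exhibiting a single fresh $b$ with $(b\ a)\act x\ii{a}=x\ii{a}$. Before invoking it I would record that $x\ii{a}$ is finitely supported: since $x\ii{a}$ is a function of $x$ and the atom $a$, conservation of support (part~3 of Theorem~\ref{thrm.no.increase.of.supp}) gives $\supp(x\ii{a})\subseteq\supp(x)\cup\{a\}$, so $x\ii{a}$ is a legitimate element of the nominal powerset and Corollary~\ref{corr.stuff} applies to it.

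Next I would choose $b$ fresh, i.e.\ $b\#x$ (whence $b\neq a$ by the permutative convention). From the support bound above, $b\notin\supp(x)\cup\{a\}$, so $b\#x\ii{a}$, supplying the fresh witness required by Corollary~\ref{corr.stuff}(3).

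The heart of the argument is then to verify $(b\ a)\act x\ii{a}=x\ii{a}$, and here the cleanest route is through the orbit presentation $x\ii{a}=\{\pi\act x\mid\pi\in\fix(\supp(x){\setminus}\{a\})\}$. The point is that for fresh $b$ the swap $(b\ a)$ \emph{itself} lies in $\fix(\supp(x){\setminus}\{a\})$: it moves only $a$ and $b$, and neither of these lies in $\supp(x){\setminus}\{a\}$ (the atom $a$ is excluded explicitly, and $b\notin\supp(x)$ since $b\#x$). Because $\fix(\supp(x){\setminus}\{a\})$ is a group, left-multiplication by $(b\ a)$ permutes it bijectively, so applying $(b\ a)$ pointwise sends $\{\pi\act x\}$ to $\{((b\ a)\circ\pi)\act x\}=\{\rho\act x\mid\rho\in\fix(\supp(x){\setminus}\{a\})\}=x\ii{a}$. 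With the equality $(b\ a)\act x\ii{a}=x\ii{a}$ established, Corollary~\ref{corr.stuff}(3) yields $a\#x\ii{a}$.

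I expect the only genuinely delicate step to be recognising that the witnessing swap $(b\ a)$ already belongs to the stabilising group $\fix(\supp(x){\setminus}\{a\})$; once that is seen, the fact that $(b\ a)$ fixes the orbit $x\ii{a}$, and hence the freshness, is automatic. One could instead argue from the second presentation $x\ii{a}=\{x\}\cup\{(c\ a)\act x\mid c\#x\}$ and chase the two inclusions directly, but the group-theoretic observation avoids that bookkeeping.
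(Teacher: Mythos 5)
Your proof is correct and follows essentially the same route as the paper's: the paper also applies a permutation pointwise to the orbit presentation of $x\ii{a}$, chooses a fresh $b$, and invokes Corollary~\ref{corr.stuff}(3), leaving the remaining verification as ``routine calculations''. Your observation that $(b\ a)$ itself lies in $\fix(\supp(x){\setminus}\{a\})$, so left-multiplication permutes the stabilising group, is precisely the clean way to discharge those calculations, and your preliminary support bound via Theorem~\ref{thrm.no.increase.of.supp} correctly justifies both that $x\ii{a}$ is finitely supported and that the chosen $b$ is fresh for it.
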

\begin{proof}
We note by the pointwise action that $\pi'\act x\ii{a}=\{(\pi'\circ\pi)\act x\mid \pi\in\fix(\supp(x){\setminus}\{a\})\}$.
We choose a fresh $b$ (so $b\#x$, that is, $b{\not\in}\supp(x)$) and use Corollary~\ref{corr.stuff}(3) and routine calculations. 
\end{proof}

\begin{prop}
\label{prop.freshwedgeo}
Suppose $a{\in}\mathbb A$ and $x{\in}|\mathcal L|$.
Then:
\begin{enumerate}
\item
If $\freshwedge{a}x$ exists then so does $\bigwedge x\ii{a}$, and they are equal. 
\item 
Suppose $\mathcal L$ has a monotone $\sigma$-action (Definition~\ref{defn.fresh.continuous}).
Then if $\bigwedge x\ii{a}$ exists, then so does $\freshwedge{a}x$, and they are equal. 
\end{enumerate}
\end{prop}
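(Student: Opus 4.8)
The plan is to mirror the two implications of Proposition~\ref{prop.char.freshwedge} exactly, replacing the infinite diagram $\{x[a\sm u]\mid u{\in}|\ns U|\}$ by the orbit $x\ii{a}=\{x\}\cup\{(b\ a)\act x\mid b\#x\}$, and using Lemma~\ref{lemm.sub.alpha} to identify each relevant substitution instance $x[a\sm b]$ (for $b\#x$) with the swap $(b\ a)\act x$. The support facts I will lean on are $a\#x\ii{a}$ (Lemma~\ref{lemm.fresh.ii}), conservation of support (Theorem~\ref{thrm.no.increase.of.supp}), and $\supp(\freshwedge{a}x)\subseteq\supp(x){\setminus}\{a\}$ (Corollary~\ref{corr.supp.freshwedge}).

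For part~1, assume $\freshwedge{a}x$ exists. First I would verify that $\freshwedge{a}x$ is a lower bound of $x\ii{a}$: it satisfies $\freshwedge{a}x\leq x$ by definition, and for each $b\#x$ we have $a,b\#\freshwedge{a}x$ by Corollary~\ref{corr.supp.freshwedge}, so applying the equivariant order to $\freshwedge{a}x\leq x$ and using Corollary~\ref{corr.stuff}(1) (the swap $(b\ a)$ fixes $\freshwedge{a}x$) gives $\freshwedge{a}x\leq (b\ a)\act x$. Then I would show it is the \emph{greatest} lower bound: given any lower bound $z$ of $x\ii{a}$, choose $b$ fresh (so $b\#x,z$); then $(b\ a)\act x\in x\ii{a}$ yields $z\leq (b\ a)\act x$, and since $b\#z$ the defining property of the $b\#$limit gives $z\leq\freshwedge{b}(b\ a)\act x$, which equals $\freshwedge{a}x$ by Lemma~\ref{lemm.freshwedge.alpha}. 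Hence $\bigwedge x\ii{a}$ exists and equals $\freshwedge{a}x$.

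For part~2, assume the $\sigma$-action is monotone and that $\bigwedge x\ii{a}$ exists. By Lemma~\ref{lemm.fresh.ii} $a\#x\ii{a}$, so $a\#\bigwedge x\ii{a}$ by Theorem~\ref{thrm.no.increase.of.supp}; together with $\bigwedge x\ii{a}\leq x$ (as $x\in x\ii{a}$) this exhibits $\bigwedge x\ii{a}$ as an $a\#$lower bound of $x$. It remains to check it is the greatest such: if $z\leq x$ and $a\#z$, then Lemma~\ref{lemm.fresh.glb.sub} (this is where monotonicity is used) gives $z\leq x[a\sm b]$ for every $b$, and for $b\#x$ we have $x[a\sm b]=(b\ a)\act x$ by Lemma~\ref{lemm.sub.alpha}; thus $z$ is a lower bound of $x\ii{a}$ and so $z\leq\bigwedge x\ii{a}$. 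This identifies $\bigwedge x\ii{a}$ with $\freshwedge{a}x$.

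The main obstacle is the familiar asymmetry already met in Proposition~\ref{prop.char.freshwedge}: an arbitrary lower bound $z$ of the orbit need not satisfy $a\#z$, so in part~1 one must freshen through a swap before invoking the $\freshwedge{b}$ characterisation, and in part~2 the step from $z\leq x$ to $z\leq(b\ a)\act x$ is exactly what forces the monotone $\sigma$-action via Lemma~\ref{lemm.fresh.glb.sub}. Everything else is routine nominal support bookkeeping.
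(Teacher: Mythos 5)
Your proof is correct and takes essentially the same route as the paper's own: in part~1 you establish the lower-bound claim via equivariance of $\leq$ together with Corollary~\ref{corr.stuff} and Corollary~\ref{corr.supp.freshwedge}, and handle an arbitrary lower bound $z$ by freshening with a swap and appealing to Lemma~\ref{lemm.freshwedge.alpha}, exactly as the paper does; in part~2 you use Lemma~\ref{lemm.fresh.ii} with Theorem~\ref{thrm.no.increase.of.supp} to get $a\#\bigwedge x\ii{a}$ and then atom-for-atom substitution with Lemma~\ref{lemm.sub.alpha}, again matching the paper. The only cosmetic difference is that you route the monotonicity step through Lemma~\ref{lemm.fresh.glb.sub}, whereas the paper invokes \rulefont{\sigma\#} and monotonicity directly---the same calculation in packaged form.
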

\begin{proof}
\begin{enumerate}
\item
By Definition~\ref{defn.nom.poset} $\freshwedge{a}x\leq x$ and $a\#\freshwedge{a}x$.
It follows by equivariance of $\leq$ and Corollary~\ref{corr.stuff} that $\freshwedge{a}x\leq \pi\act x$ for every $\pi\in\fix(\supp(x){\setminus}\{a\})$.
Therefore $\freshwedge{a}x$ is a lower bound for $x\ii{a}$.

Now consider $z$ some other lower bound for $x\ii{a}$, so that $\Forall{\pi{\in}\fix(\supp(x){\setminus}\{a\})}z\leq\pi\act x$.
Choose fresh $b$ (so $b\#x,z$); by Theorem~\ref{thrm.equivar} $\freshwedge{b}(b\ a)\act x$ exists (and by Lemma~\ref{lemm.freshwedge.alpha} it is equal to $\freshwedge{a}x$).
It follows by equivariance of $\leq$ and Corollary~\ref{corr.stuff} (since $a,b\#z$) that $z\leq (b\ a)\act x$ so $z\leq\freshwedge{b}(b\ a)\act x\stackrel{\text{L\ref{lemm.freshwedge.alpha}}}=\freshwedge{a}x$.
\item
By Lemma~\ref{lemm.fresh.ii} and Theorem~\ref{thrm.no.increase.of.supp} $a\#\bigwedge x\ii{a}$ so $\bigwedge x\ii{a}$ is an $a\#$lower bound for $x$.

Consider any other $z$ such that $z\leq x$ and $a\#z$.
By \rulefont{\sigma\#} $z[a\sm n]=z$ for every $n{\in}\mathbb A$.
By monotonicity $z=z[a\sm n]\leq x[a\sm n]$ for every $n{\in}\mathbb A$.
It follows using Lemma~\ref{lemm.sub.alpha} that $z\leq \bigwedge x\ii{a}$. 
\qedhere\end{enumerate}
\end{proof}

\begin{rmrk}
We have seen multiple characterisations of quantification: 
\begin{itemize*}
\item
Fresh-finite limits from Notation~\ref{nttn.lall}.
\item
Limits of permutation orbits, above in Proposition~\ref{prop.freshwedgeo}.
\item
Limits of substitution instances (if there is a monotone $\sigma$-action), from Propositions~\ref{prop.char.freshwedge} and~\ref{prop.char.freshwedge.names}.
\end{itemize*}
Note of Proposition~\ref{prop.freshwedgeo} that we only need atom-for-atom substitution (what this author calls a \emph{renaming action}) in the proof.
\end{rmrk}

\begin{defn}
\label{defn.nom.complete}
Call a nominal poset $\mathcal L=(|\mathcal L|,\act,\leq)$ \deffont{nominally complete} when every finitely supported subset $\mathcal X\subseteq|\mathcal L|$ has a greatest lower bound.

In another terminology: $\mathcal L$ is nominally complete when it has limits of finitely supported diagrams. 
\end{defn}

\begin{prop}
\label{prop.nom.complete}
If $\mathcal L$ is nominally complete and has a monotone $\sigma$-action, then it is finitely fresh-complete (Definition~\ref{defn.nom.poset}).

In other words: having finitely supported limits and substitution implies having fresh-finite limits.\footnote{The reverse implication does not hold; having finitely supported limits and substitution is a far stronger condition.}
\end{prop}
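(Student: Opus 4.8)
The plan is to reduce to the three-part criterion of Proposition~\ref{prop.ffc.char}, which says that $\mathcal L$ is finitely fresh-complete precisely when it has a greatest element $\ltop$, binary meets $x\land y$, and the fresh meets $\freshwedge{a}x$. Each of these is the greatest lower bound of a suitable subset of $|\mathcal L|$, so the strategy is to exhibit each relevant subset as \emph{finitely supported} and then feed it to nominal completeness (Definition~\ref{defn.nom.complete}). Two of the three cases are immediate; the third, $\freshwedge{a}x$, is where the monotone $\sigma$-action must be used and is the real content of the proof.

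First I would dispatch $\ltop$ and $x\land y$. The empty set has empty support, so by nominal completeness it has a greatest lower bound, which is by definition the greatest element $\ltop$. Similarly $\{x,y\}$ is finitely supported---its support is contained in $\supp(x)\cup\supp(y)$ by Theorem~\ref{thrm.no.increase.of.supp}---so nominal completeness supplies $\bigwedge\{x,y\}=x\land y$. No use of the $\sigma$-action is needed here.

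For $\freshwedge{a}x$ I would \emph{not} work with the singleton $\{x\}$ but instead with the set of substitution instances $\{x[a\sm u]\mid u{\in}|\ns U|\}$. The key point is that, although this set is in general infinite, it is finitely supported: by Theorem~\ref{thrm.no.increase.of.supp} its support is contained in $\supp(x)\cup\{a\}$, and moreover $a$ is fresh for it by Lemma~\ref{lemm.a.fresh.bigset}. Hence nominal completeness produces the limit $\bigwedge_{u{\in}|\ns U|}x[a\sm u]$. Since the $\sigma$-action is monotone by hypothesis, part~2 of Proposition~\ref{prop.char.freshwedge} then converts the existence of this ordinary limit into the existence (and equality) of $\freshwedge{a}x$. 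This last step is the main obstacle and the crux of the argument: it is exactly the identification of the fresh-finite limit of a singleton with a finitely supported limit of an infinite family of substitution instances, which is legitimate only because we may substitute along $\ns U$ and the action respects the order. With all three forms of limit in hand, Proposition~\ref{prop.ffc.char} delivers finite fresh-completeness.
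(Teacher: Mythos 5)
Your proof is correct, but it takes a genuinely different route from the paper's. The paper also reduces to Proposition~\ref{prop.ffc.char} and handles $\ltop$ and $x\land y$ exactly as you do, but for the fresh meet it feeds nominal completeness the \emph{permutation orbit} $x\ii{a}=\{\pi\act x\mid\pi\in\fix(\supp(x){\setminus}\{a\})\}$ of Definition~\ref{defn.fresh.orbit} (finitely supported by Theorem~\ref{thrm.no.increase.of.supp}) and then invokes Proposition~\ref{prop.freshwedgeo}(2) to convert $\bigwedge x\ii{a}$ into $\freshwedge{a}x$, whereas you feed it the set of substitution instances $\{x[a\sm u]\mid u{\in}|\ns U|\}$ and invoke Proposition~\ref{prop.char.freshwedge}(2). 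Both conversion steps use monotonicity of the $\sigma$-action, and both witnessing sets are finitely supported (in your case by Theorem~\ref{thrm.no.increase.of.supp} and Lemma~\ref{lemm.a.fresh.bigset}, though note that freshness of $a$ is not actually needed to invoke Definition~\ref{defn.nom.complete}---finite support suffices), so your argument goes through. The trade-off: your route is closer in spirit to the sets semantics, directly paralleling Definition~\ref{defn.nu.U} and Corollary~\ref{corr.freshcap.freshwedge}, where $\tall$ \emph{is} an intersection of substitution instances; the paper's orbit set, by contrast, is defined purely from the permutation action without reference to $\ns U$, and the paper remarks after Proposition~\ref{prop.freshwedgeo} that its proof needs only atom-for-atom renaming rather than substitution of arbitrary elements of $|\ns U|$---a slightly lighter dependency that would survive in settings with only a renaming action. (A third variant, intermediate between the two, would use $\{x[a\sm n]\mid n{\in}\mathbb A\}$ with Proposition~\ref{prop.char.freshwedge.names}.)
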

\begin{proof}
We use Proposition~\ref{prop.ffc.char}.
By Theorem~\ref{thrm.no.increase.of.supp} $\varnothing{\subseteq}|\mathcal L|$ and $\{x,y\}{\subseteq}|\mathcal L|$ and $x\ii{a}$ are finitely supported.
By Proposition~\ref{prop.freshwedgeo} the $a\#$limit $\freshwedge{a}x$ exists in $\mathcal L$ if and only if the limit of $x\ii{a}$ does. 
The result follows.
\end{proof}

\subsubsection{Lifting the logical structure}

\begin{defn}
\label{defn.leq.nsN}
Suppose $f',f\in |\Tarski{X,\{\lbot,\ltop\}}|$. 
Write $f'\leq f$ when $\Forall{\varsigma}f'(\varsigma)\leq f(\varsigma)$.
\end{defn}

\begin{lemm}
\label{lemm.monotone}
With the partial order $f'\leq f$ of Definition~\ref{defn.leq.nsN}, the $\sigma$-action from Definition~\ref{defn.idenot.beta} is monotone. 
\end{lemm}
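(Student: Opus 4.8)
The plan is to unfold the two relevant definitions and observe that monotonicity of the $\sigma$-action reduces, pointwise in the valuation, to the pointwise definition of the order in Definition~\ref{defn.leq.nsN}. There is essentially no content beyond a definition chase.

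First I would fix $f',f\in|\Tarski{X,\{\lbot,\ltop\}}|$ with $f'\leq f$ and an element $u\in|\Tarski{X,X}|$, with the goal of showing $f'[a\sm u]\leq f[a\sm u]$. By Definition~\ref{defn.leq.nsN} this goal is the assertion that $(f'[a\sm u])(\varsigma)\leq (f[a\sm u])(\varsigma)$ for every valuation $\varsigma$, so I would fix an arbitrary $\varsigma$ and compute.

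The key (and only) step is to apply the formula $(f[a\sm u])(\varsigma)=f(\varsigma[a\ssm u(\varsigma)])$ from Definition~\ref{defn.idenot.beta} to both $f'$ and $f$: this turns the goal into $f'(\varsigma[a\ssm u(\varsigma)])\leq f(\varsigma[a\ssm u(\varsigma)])$. But this is exactly the hypothesis $f'\leq f$ instantiated at the single modified valuation $\varsigma'=\varsigma[a\ssm u(\varsigma)]$, since $f'\leq f$ asserts $f'(\varsigma'')\leq f(\varsigma'')$ for \emph{all} valuations $\varsigma''$, and in particular for this one. As $\varsigma$ was arbitrary, we conclude $f'[a\sm u]\leq f[a\sm u]$.

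I do not expect a genuine obstacle: the order on $\Tarski{X,\{\lbot,\ltop\}}$ is defined pointwise and the $\sigma$-action acts by pre-composing with a valuation-update, so monotonicity is simply inherited from the pointwise order on the two-element Boolean algebra $\{\lbot,\ltop\}$, with no appeal to finite support or to condition~\eqref{Af}. The only housekeeping worth noting is that $f'[a\sm u]$ and $f[a\sm u]$ do again lie in $|\Tarski{X,\{\lbot,\ltop\}}|$, so that the order even applies to them; but this is already guaranteed, since $\Tarski{X,\{\lbot,\ltop\}}$ is a $\sigma$-algebra over $\Tarski{X,X}$ by Corollary~\ref{corr.termlike.sub.alg}.
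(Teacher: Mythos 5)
Your proposal is correct and matches the paper's proof essentially verbatim: both unfold $(f[a\sm u])(\varsigma)=f(\varsigma[a\ssm u(\varsigma)])$ from Definition~\ref{defn.idenot.beta} and apply the pointwise order of Definition~\ref{defn.leq.nsN} at the updated valuation $\varsigma[a\ssm u(\varsigma)]$. Your extra remark that closure under the $\sigma$-action is guaranteed by Corollary~\ref{corr.termlike.sub.alg} is accurate housekeeping that the paper leaves implicit.
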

\begin{proof}
Suppose $f'\leq f$ and $u\in|\Tarski{X,X}|$ and $\varsigma\in\atoms\Func X$.
We reason as follows:
$$
\begin{array}[b]{r@{\ }l@{\quad}l}
(f'[a\sm u])(\varsigma)=&f'(\varsigma[a\ssm u(\varsigma)])
&\text{Definition~\ref{defn.idenot.beta}}
\\
\leq&f(\varsigma[a\ssm u(\varsigma)]
&\text{Definition~\ref{defn.leq.nsN}}\ f'\leq f
\\
=&(f[a\sm u])(\varsigma)
&\text{Definition~\ref{defn.idenot.beta}}
\end{array}
\qedhere$$ 
\end{proof}

\begin{prop}
\label{prop.standard.nom.bool}
$(|\Tarski{X,\{\lbot,\ltop\}}|,\act,\leq)$ is nominally complete, complemented, and has a compatible $\sigma$-action and an equality.

As a corollary, $(|\Tarski{X,\{\lbot,\ltop\}}|,\act,\leq)$ is also a FOLeq algebra (Definition~\ref{defn.FOLeq}).
\end{prop}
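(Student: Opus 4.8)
The plan is to work entirely \emph{pointwise}, exploiting that $\{\lbot,\ltop\}$ is a complete Boolean algebra and that by Lemma~\ref{lemm.supp'.supp} each $f\in|\Tarski{X,\{\lbot,\ltop\}}|$ depends only on the restriction of $\varsigma$ to $\supp(f)$. First I would record that the pointwise operations $(x\land y)(\varsigma)=x(\varsigma)\land y(\varsigma)$, $(x\lor y)(\varsigma)=x(\varsigma)\lor y(\varsigma)$, $(\lneg x)(\varsigma)=\lneg(x(\varsigma))$, together with the constant functions $\lbot$ and $\ltop$, all land in the carrier (their set of relevant coordinates is bounded by $\supp(x)\cup\supp(y)$) and make it a complemented distributive lattice in the ordinary finite sense; here finite meets are genuinely pointwise, so no subtlety arises. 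The key computation is the pointwise description of the fresh limit: I claim $(\freshwedge{a}x)(\varsigma)=\bigwedge_{v\in X}x(\varsigma[a\ssm v])$. This depends only on $\supp(x){\setminus}\{a\}$ by Lemma~\ref{lemm.supp'.supp}, so it lies in the carrier and has $a$ fresh; it is a lower bound for $x$ (instantiate $v=\varsigma(a)$); and it is the greatest $a$-fresh lower bound, because any $a$-fresh $z\leq x$ ignores the $a$-coordinate, so $z(\varsigma)=z(\varsigma[a\ssm v])\leq x(\varsigma[a\ssm v])$ for all $v$. With $\ltop$, $\land$ and this $\freshwedge{a}$ in hand, Proposition~\ref{prop.ffc.char} gives finitely fresh-completeness (and dually, via complements, fresh-cocompleteness).

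Next I would verify the remaining FOLeq ingredients from these pointwise formulas. The $\sigma$-action is monotone by Lemma~\ref{lemm.monotone}; compatibility (Definition~\ref{defn.fresh.continuous}) I would check on the generators of Proposition~\ref{prop.ffc.char}: the complement and finite-meet cases are immediate from $(f[a\sm u])(\varsigma)=f(\varsigma[a\ssm u(\varsigma)])$, and for $(\freshwedge{b}x)[a\sm u]=\freshwedge{b}(x[a\sm u])$ with $b\#a,u$ one unwinds both sides to $\bigwedge_{v}x(\varsigma[a\ssm u(\varsigma)][b\ssm v])$, using $b\#u$ (hence $u(\varsigma[b\ssm v])=u(\varsigma)$ by Lemma~\ref{lemm.supp'.supp}) and $a\neq b$. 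Distributivity (Definition~\ref{defn.distrib}) splits into the finite case, which is pointwise, and the fresh case $a\#x\limp x\lor\freshwedge{a}y=\freshwedge{a}(x\lor y)$, which reduces pointwise to $x(\varsigma)\lor\bigwedge_{v}y(\varsigma[a\ssm v])=\bigwedge_{v}\bigl(x(\varsigma)\lor y(\varsigma[a\ssm v])\bigr)$ (using $a\#x$ to pull $x(\varsigma[a\ssm v])=x(\varsigma)$ out), an instance of infinite distributivity in the complete Boolean algebra $\{\lbot,\ltop\}$. For equality I would take $e_{a,b}$ with $e_{a,b}(\varsigma)=\ltop$ iff $\varsigma(a)=\varsigma(b)$; this has support $\subseteq\{a,b\}$, and both clauses of Definition~\ref{defn.eq} are immediate pointwise facts (reflexivity because $u(\varsigma)=u(\varsigma)$ always, and the Leibniz clause because on those $\varsigma$ with $u(\varsigma)=v(\varsigma)$ the substitutions $[a\sm u]$ and $[a\sm v]$ agree, while elsewhere $e_{a,b}$ evaluates to $\lbot$).

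The one genuinely delicate point, and the main obstacle, is \emph{nominal completeness}: a finitely supported $\mathcal X\subseteq|\Tarski{X,\{\lbot,\ltop\}}|$ need not have a pointwise greatest lower bound \emph{in the carrier}, since $\bigwedge_{f\in\mathcal X}f$ may depend on infinitely many coordinates (the elements of $\mathcal X$ can have unbounded support, cf.\ Lemma~\ref{lemm.pow.not.true}). The fix is to set $A=\supp(\mathcal X)$ and define
$$
g(\varsigma)=\bigwedge_{f\in\mathcal X}\ \bigwedge\{\,f(\varsigma')\mid \varsigma'\text{ agrees with }\varsigma\text{ on }A\,\}.
$$
By construction $g$ depends only on $A$, so $g\in|\Tarski{X,\{\lbot,\ltop\}}|$, and $g$ is a lower bound (take $\varsigma'=\varsigma$). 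The heart of the argument is that $g$ is the \emph{greatest} lower bound: given $h\leq f$ for all $f\in\mathcal X$ with $h(\varsigma)=\ltop$, and given $f_0\in\mathcal X$ and $\varsigma'$ agreeing with $\varsigma$ on $A$, I would choose $\pi\in\fix(A)$ moving $\supp(f_0){\setminus}A$ to atoms fresh for $\supp(h)$ (so $\pi\act f_0\in\mathcal X$ since $A$ supports $\mathcal X$), then build a single valuation $\tau$ that agrees with $\varsigma$ on $\supp(h)$ and with $\varsigma'$ along the moved coordinates; freshness makes these constraints consistent, and $h(\tau)\leq(\pi\act f_0)(\tau)$ then forces $f_0(\varsigma')=\ltop$. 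Thus $h\leq g$. This permutation-freshening step is where the finite support of $\mathcal X$ is essential and is the crux of the proof.

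Finally, for the corollary I would assemble the pieces against Definition~\ref{defn.FOLeq}: nominal completeness together with the monotone $\sigma$-action yields finitely fresh-completeness by Proposition~\ref{prop.nom.complete}, complementation supplies fresh-cocompleteness, and the distributivity, compatible $\sigma$-structure, and equality element established above complete the verification that $(|\Tarski{X,\{\lbot,\ltop\}}|,\act,\leq)$ is a FOLeq algebra.
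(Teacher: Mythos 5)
Your proposal is correct, and it follows the paper's overall strategy for most components --- the pointwise Boolean operations, the pointwise equality element (your $e_{a,b}$ is literally the paper's definition), monotonicity via Lemma~\ref{lemm.monotone}, and the corollary via Proposition~\ref{prop.nom.complete} are all exactly what the paper does. Where you genuinely diverge is at nominal completeness, and your instinct there is not just a stylistic variant: the paper's own proof simply declares $(\bigwedge\mathcal X)(\varsigma)=\bigwedge_{f\in\mathcal X}f(\varsigma)$ and calls the verification routine, but as you observe, this pointwise infimum can fail condition~\eqref{Af} of Definition~\ref{defn.idenot.beta} when $\mathcal X$ is finitely but not strictly finitely supported. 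A concrete witness: with $X=\{\lbot,\ltop\}$, the equivariant family $\mathcal X=\{f_a\mid a\in\mathbb A\}$ with $f_a(\varsigma)=\varsigma(a)$ has pointwise infimum $\varsigma\mapsto\f{min}\{\varsigma(a)\mid a\in\mathbb A\}$, which is precisely the function the paper itself exhibits (in the remark after Lemma~\ref{lemm.supp'.supp}) as having no finite $A_f$; its true greatest lower bound in the carrier is the constant $\lbot$, not the pointwise one. This is not a corner case one can ignore, because the sets $x\ii{a}$ fed into nominal completeness by Propositions~\ref{prop.freshwedgeo} and~\ref{prop.nom.complete} are exactly of this unbounded-support kind. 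Your repaired glb $g(\varsigma)=\bigwedge_{f\in\mathcal X}\bigwedge\{f(\varsigma')\mid\varsigma'|_{\supp(\mathcal X)}=\varsigma|_{\supp(\mathcal X)}\}$ is sound: it lies in the carrier with $A_g=\supp(\mathcal X)$, is a lower bound, and your $\fix(\supp(\mathcal X))$-freshening argument for greatest-ness checks out (the constraints defining $\tau$ are consistent because $\pi$ fixes $\supp(\mathcal X)$ pointwise, hence moves $\supp(f_0)\setminus\supp(\mathcal X)$ outside both $\supp(\mathcal X)$ and $\supp(h)$, and two applications of Lemma~\ref{lemm.supp'.supp} give $h(\tau)=h(\varsigma)$ and $f_0(\pi^{\mone}\act\tau)=f_0(\varsigma')$). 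So your proof buys a correct justification of a claim the paper leaves unjustified --- and, on the example above, states incorrectly as a formula.

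One further observation in your favour: your direct pointwise formula $(\freshwedge{a}x)(\varsigma)=\bigwedge_{v\in X}x(\varsigma[a\ssm v])$ (which is correct, and stays in the carrier since it depends only on coordinates in $\supp(x)\setminus\{a\}$) already yields finitely fresh-completeness via Proposition~\ref{prop.ffc.char} without any appeal to nominal completeness, so the corollary could be reached even more directly; but since the proposition explicitly asserts nominal completeness, your separate proof of it is needed and is the right thing to include.
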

\begin{proof}
The greatest lower bound of a finitely supported set $\mathcal X\subseteq|\Tarski{X,\{\lbot,\ltop\}}|$, and the complement of $f{\in}|\Tarski{X,\{\lbot,\ltop\}}|$, are defined pointwise by
$$
(\bigwedge \mathcal X)(\varsigma)=
\bigwedge_{f{\in}\mathcal X} f(\varsigma) 
\qquad\text{and}\qquad
(\lneg f)(\varsigma)=\lneg(f(\varsigma)) .
$$ 
Equality is defined by
$$
(u_1{=^{\Tarski{X,\{\lbot,\ltop\}}}}u_2)(\varsigma)=\begin{cases}\ltop & u_1(\varsigma)=u_2(\varsigma) \\ \lbot & u_1(\varsigma)\neq u_2(\varsigma) . \end{cases}
$$
Checking that the $\sigma$-action is compatible and that the definitions above are correct, is routine.

By Lemma~\ref{lemm.easy}(\ref{easy.compatible.monotone}) the $\sigma$-action is monotone.
The corollary follows by Proposition~\ref{prop.nom.complete}. 
\end{proof}

\begin{rmrk}
\label{rmrk.argue}
By Proposition~\ref{prop.standard.nom.bool} every Tarski-style model is a FOLeq algebra.
However, not every FOLeq algebra is a Tarski-style model, because by Proposition~\ref{prop.standard.nom.bool} Tarski-style models have limits for all, possibly infinite, finitely supported subsets.

In FOLeq algebras 
the notion of completeness used is \emph{fresh-finite} completeness (Subsection~\ref{subsect.fresh-finite.limit}); a concept which is natural to express in a nominal universe. 
This makes FOLeq algebras complete enough to interpret $\tand$ and $\tall$, and no more.

Thus, Tarski-style models have more limits than first-order logic requires.
This is not detectable from inside first-order logic, since first-order logic is sound and complete for the Tarski-style models and this is why Tarski-style models suffice to model first-order logic in a ZF universe.

However, completeness is not the only issue: if it were, we would only ever need the Herbrand model of syntax quotiented by derivable equivalence (Section~\ref{sect.herbrand}).
The issue is to capture in abstract semantic terms exactly that structure necessary to interpret first-order logic.
Neither the Tarski models nor indeed the Herbrand models quite do this; but FOLeq algebras do.
In this sense, Tarski-style models are less natural than the nominal semantics proposed in this paper.
\end{rmrk}

Before we are done, we need to build an interpretation of term-formers and predicate-formers in the sense of Definition~\ref{defn.interp}, using the interpretation in `ordinary' sets from Definition~\ref{defn.model.standard}.
This is not hard: 
\begin{defn}
\label{defn.functional.model}
Suppose $\mathcal N$ is an ordinary model (Definition~\ref{defn.model.standard}), so that by Proposition~\ref{prop.standard.nom.bool} we have that $\Tarski{|\mathcal N|,\{\lbot,\ltop\}}$ is a FOLeq algebra.

Define an interpretation $\ns I$ over  $\Tarski{|\mathcal N|,\{\lbot,\ltop\}}$ in the sense of Definition~\ref{defn.interp} as follows:
$$
\begin{array}{r@{\ }l}
\tf f^\iden(a_1,\dots,a_{\ar(\tf f)})(\varsigma)=&\tf f^\nden(\varsigma(a_1),\dots,\varsigma(a_{\ar(\tf f)}))
\\
\tf P^\iden(a_1,\dots,a_{\ar(\tf P)})(\varsigma)=&\tf P^\nden(\varsigma(a_1),\dots,\varsigma(a_{\ar(\tf P)}))
\end{array}
$$
where the $a_i$ are distinct atoms (the $a_i$ above left are short for $\tf{atm}_{\Tarski{X,X}}(a_i)$ from Definition~\ref{defn.idenot.beta}).
\end{defn}

We obtain an interpretation of first-order logic immediately from Definition~\ref{defn.interp.I} and soundness immediately from Theorem~\ref{thrm.fol.sound}.
If we like, we can also leverage the completeness theorem with respect to ordinary models:
\begin{corr}
\label{corr.nom.complete}
If $\Phi\not\cent\Psi$, then there exists a termlike $\sigma$-algebra $\ns U$ and a FOLeq algebra $\ns M$ over $\ns U$ such that $\bigwedge_{\phi\in\Phi}\idenot{\phi}\not\leq\bigvee_{\psi\in\Psi}\idenot{\psi}$.
\end{corr}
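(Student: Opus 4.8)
The plan is to reduce to the classical completeness theorem for ordinary Tarski models (Theorem~\ref{thrm.fol.standard}) and then transport the resulting counter-model into the nominal world using the lifting construction of Subsection~\ref{subsect.nom.model.from.standard}. Concretely, since $\Phi\not\cent\Psi$, the contrapositive of Theorem~\ref{thrm.fol.standard} supplies an ordinary model $\mathcal N$ (Definition~\ref{defn.model.standard}) together with a valuation $\varsigma\in\atoms\Func|\mathcal N|$ such that $\bigwedge_{\phi\in\Phi}\ndenot{\varsigma}{\phi}=\ltop$ while $\bigvee_{\psi\in\Psi}\ndenot{\varsigma}{\psi}=\lbot$. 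I would then take $\ns U=\Tarski{|\mathcal N|,|\mathcal N|}$ and $\ns M=\Tarski{|\mathcal N|,\{\lbot,\ltop\}}$, which by Corollary~\ref{corr.termlike.sub.alg} and Proposition~\ref{prop.standard.nom.bool} are respectively a termlike $\sigma$-algebra and a FOLeq algebra over it, and equip them with the interpretation $\ns I$ of Definition~\ref{defn.functional.model}.

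The heart of the argument is a single bridging lemma, proved by induction on terms $r$ and predicates $\phi$: for every valuation $\varsigma$ we have $\idenot{r}(\varsigma)=\ndenot{\varsigma}{r}$ and $\idenot{\phi}(\varsigma)=\ndenot{\varsigma}{\phi}$. That is, the nominal denotation, which lives as a \emph{function} in $|\Tarski{|\mathcal N|,\{\lbot,\ltop\}}|$, is precisely the map $\varsigma\mapsto\ndenot{\varsigma}{\phi}$. The atom, function-symbol, and predicate-symbol cases unwind Definitions~\ref{defn.extend.f.P} and~\ref{defn.functional.model} against the $\sigma$-action $(f[a\sm u])(\varsigma)=f(\varsigma[a\ssm u(\varsigma)])$ of Definition~\ref{defn.idenot.beta}: writing $\tf f^\iden(u_1,\dots,u_n)=\tf f^\iden(a_1,\dots,a_n)[a_1\sm u_1,\dots,a_n\sm u_n]$ with the $a_i$ chosen fresh and evaluating at $\varsigma$ (where the enriched valuation sends each $a_i$ to $u_i(\varsigma)$) reaches $\tf f^\nden(u_1(\varsigma),\dots,u_n(\varsigma))$, and symmetrically for $\tf P^\iden$ and for $\teq$, whose pointwise value in $\ns M$ is $\ltop$ exactly when its two arguments agree at $\varsigma$. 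The Boolean cases $\tbot$, $\tand$, $\tneg$ are immediate, since meets, joins and complements in $\ns M$ are computed pointwise by Proposition~\ref{prop.standard.nom.bool}.

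The only delicate case is $\tall a.\phi$, and this is where I expect the main obstacle. I would first invoke Proposition~\ref{prop.char.freshwedge} to rewrite $\idenot{\tall a.\phi}=\freshwedge{a}\idenot{\phi}=\bigwedge_{u\in|\ns U|}\idenot{\phi}[a\sm u]$, then compute this finitely supported greatest lower bound pointwise via Proposition~\ref{prop.standard.nom.bool} and the $\sigma$-action, obtaining $\bigwedge_{u\in|\ns U|}\ndenot{\varsigma[a\ssm u(\varsigma)]}{\phi}$ after applying the inductive hypothesis. The subtle point is that this must equal the Tarski value $\bigwedge_{x\in|\mathcal N|}\ndenot{\varsigma[a\ssm x]}{\phi}$, which requires that $\{u(\varsigma)\mid u\in|\ns U|\}$ exhausts $|\mathcal N|$. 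This holds because $|\ns U|=\Tarski{|\mathcal N|,|\mathcal N|}$ contains all constant functions (they are supported by $\varnothing$ and trivially satisfy condition~\eqref{Af} of Definition~\ref{defn.idenot.beta}), so for each $x\in|\mathcal N|$ some $u$ has $u(\varsigma)=x$; hence the two meets range over the same set of truth values and coincide.

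With the bridging lemma established the conclusion is immediate. Since $\Phi$ and $\Psi$ are finite, $\bigwedge_{\phi\in\Phi}$ and $\bigvee_{\psi\in\Psi}$ are the finite lattice meet and join, computed pointwise in $\ns M$, so evaluating at the chosen $\varsigma$ gives $(\bigwedge_{\phi\in\Phi}\idenot{\phi})(\varsigma)=\bigwedge_{\phi\in\Phi}\ndenot{\varsigma}{\phi}=\ltop$ and $(\bigvee_{\psi\in\Psi}\idenot{\psi})(\varsigma)=\bigvee_{\psi\in\Psi}\ndenot{\varsigma}{\psi}=\lbot$. As $\ltop\not\leq\lbot$ in the two-element Boolean algebra and the order on $\ns M$ is the pointwise order of Definition~\ref{defn.leq.nsN}, we conclude $\bigwedge_{\phi\in\Phi}\idenot{\phi}\not\leq\bigvee_{\psi\in\Psi}\idenot{\psi}$, as required.
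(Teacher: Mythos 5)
Your proof is correct and follows essentially the same route as the paper's: reduce to classical completeness (Theorem~\ref{thrm.fol.standard}), pass to $\Tarski{|\mathcal N|,\{\lbot,\ltop\}}$ with the interpretation of Definition~\ref{defn.functional.model}, and read off the failure of the entailment pointwise via the order of Definition~\ref{defn.leq.nsN}. The only difference is one of detail: you make explicit, by induction (including the key observation that constant functions in $\Tarski{|\mathcal N|,|\mathcal N|}$ ensure the meet over $u{\in}|\ns U|$ coincides with the Tarski meet over $x{\in}|\mathcal N|$ in the $\tall$ case), the bridging fact $\idenot{\phi}(\varsigma)=\ndenot{\varsigma}{\phi}$ that the paper compresses into ``It follows from Definition~\ref{defn.leq.nsN}''.
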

\begin{proof}
By completeness of first-order logic there exists some ordinary model $\mathcal N$ (Definition~\ref{defn.model.standard}) and valuation $\varsigma$ to $\mathcal N$ such that 
$\bigwedge_{\phi\in\Phi}\ndenot{\varsigma}{\phi}=\ltop$ and $\bigvee_{\hspace{-.6ex}\psi\in\Psi}\ndenot{\varsigma}{\psi}=\lbot$.

So consider $\Tarski{|\mathcal N|,\{\lbot,\ltop\}}$ and consider the interpretation $\ns I$ from Definition~\ref{defn.functional.model}.
It follows from Definition~\ref{defn.leq.nsN} that 
$\bigwedge_{\phi\in\Phi}\idenot{\phi}\not\leq\bigvee_{\hspace{-.6ex}\psi\in\Psi}\idenot{\psi}$.
\end{proof}

\section{Herbrand models (Lindenbaum-Tarski algebras)}
\label{sect.herbrand}

We conclude, briefly, by observing that predicate syntax quotiented by derivable equality is a FOLeq algebra.
This construction is variously called a \emph{Herbrand}, \emph{Lindenbaum}, or \emph{Lindenbaum-Tarski} construction, algebra, or model---or just `syntax quotiented by derivable equivalence'.
 
There is not intended to be much new in this observation; just to show how syntax quotiented by derivable equivalence fits with the idea of a FOLeq algebra.

If there is any subtlety it is that we build our model from possibly open syntax, and the familiar syntactic notion of variable will be handled in our nominal abstract machinery as a special case of the background Fraenkel-Mostowski notion of atom.
It all works perfectly, which is part of the point. 

Recall the syntax of terms and predicates from Definition~\ref{defn.terms.and.predicates}, and the notion of logical entailment from Figure~\ref{fig.FOL}.

\begin{defn}
Define a \deffont{logical equivalence} relation $\sim$ on predicates by 
$$
\phi\sim\phi'
\quad\text{when}\quad
\phi\cent\phi'\ \land \phi'\cent\phi .
$$
Write $[\phi]_\sim$ for the $\sim$-equivalence class of $\phi$. 

If $x$ and $x'$ are $\sim$-equivalence classes then impose a partial order by $x\leq x'$ when $\phi\in x$ and $\phi'\in x'$ and $\phi\cent\phi'$.
It is a fact that this is well-defined (does not depend on the choice of $\phi$ and $\phi'$).
\end{defn}

Give predicates the natural permutation action where $\pi$ acts on $\phi$ by acting on the atoms in $\phi$.
Theorem~\ref{thrm.equivar} tells us the following:
\begin{lemm}
If $\phi\sim\phi'$ then $\pi\act\phi\sim\pi\act\phi'$.
\end{lemm}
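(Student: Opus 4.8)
The plan is to recognise this as a direct application of the principle of equivariance, Theorem~\ref{thrm.equivar}, exactly as the sentence preceding the statement advertises. The relation $\Phi\cent\Psi$ is defined inductively by the derivation rules of Figure~\ref{fig.FOL}, and these rules are specified entirely in the language of ZFA/FM set theory: they mention predicates, finite sets of predicates, substitution, and free atoms, but no \emph{particular} atom is privileged by the specification. Hence derivability is an equivariant predicate $\Phi\cent\Psi$ of its arguments, and part~1 of Theorem~\ref{thrm.equivar} applies to give
$$
\phi\cent\phi' \quad\liff\quad \pi\act\phi\cent\pi\act\phi'
$$
for every permutation $\pi$ and all predicates $\phi,\phi'$.

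Given this, the argument is immediate. First I would unpack the hypothesis $\phi\sim\phi'$ to its definition, namely $\phi\cent\phi'$ and $\phi'\cent\phi$. Applying the equivariance of $\cent$ displayed above to each conjunct yields $\pi\act\phi\cent\pi\act\phi'$ and $\pi\act\phi'\cent\pi\act\phi$. These two facts are precisely the definition of $\pi\act\phi\sim\pi\act\phi'$, which is what we want.

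The only point requiring a moment's care — and the only place the proof could go wrong — is the claim that the derivability predicate really is symmetric under permuting atoms, so that Theorem~\ref{thrm.equivar} genuinely applies. The rules \rulefont{Hyp}, \rulefont{\tbot L}, \rulefont{{\teq}R}, \rulefont{{\teq}L}, \rulefont{{\tand}L}, \rulefont{{\tand}R}, \rulefont{\tneg L}, \rulefont{\tneg R}, and \rulefont{\tall L} are manifestly equivariant, being built from $\tand$, $\tneg$, $\teq$, substitution, and set operations, all of which are equivariant. The one rule carrying a side condition is \rulefont{\tall R}, with its proviso $a\notin\fa(\Phi\cup\Psi)$; but freshness is itself equivariant — $\supp$ commutes with the permutation action by Proposition~\ref{prop.pi.supp}, and $\fa$ coincides with $\supp$ on syntax — so the proviso is preserved when a derivation is permuted uniformly. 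Thus the whole inductive definition of $\cent$ is symmetric, the hypotheses of Theorem~\ref{thrm.equivar} are met, and no separate induction on derivations is needed. Essentially the entire content of the lemma is the observation that $\cent$ is equivariant, after which the result is a one-line consequence.
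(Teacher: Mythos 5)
Your proof is correct and matches the paper's own argument exactly: the paper states the lemma with no further proof beyond the preceding remark that ``Theorem~\ref{thrm.equivar} tells us the following'', i.e.\ derivability $\cent$ is an equivariant predicate specified in the language of ZFA/FM set theory, so equivariance of $\sim$ is immediate. Your additional check that the \rulefont{\tall R} side condition is itself equivariant is a sensible (if implicit in the paper) sanity check, not a deviation.
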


\begin{defn}
Give $\sim$-equivalence classes the \deffont{pointwise} permutation action by $\pi\act[\phi]_\sim=[\pi\act\phi]_\sim$.
\end{defn}

\begin{defn}
Give $\sim$-equivalence classes a \deffont{pointwise} $\sigma$-action by $[\phi]_\sim[a\sm u]=[\phi[a\sm u]]_\sim$.
\end{defn}

We need to check that $[\phi]_\sim[a\sm u]$ is well-defined, that is, if $\phi\sim\phi'$ then $\phi[a\sm u]\sim\phi'[a\sm u]$.
This is a fact of first-order logic.
 
\begin{defn}
Define $\ns U$ to be the termlike $\sigma$-algebra of terms where substitution is real substitution on syntax, and take $\mathcal L=(|\mathcal L|,\act,\leq,\ns U,\tf{sub}_{\ns U})$ to be $\sim$-equivalence classes of predicates, with the pointwise permutation and $\sigma$-actions.
\end{defn}

\begin{thrm}
\label{thrm.L.FOLeq}
$\mathcal L$ is a FOLeq algebra if we take $\ltop$ to be $[\ttop]_\sim$, $[\phi]_\sim\land[\phi']_\sim$ to be $[\phi\tand\phi']_\sim$, $\freshwedge{a}[\phi]_\sim$ to be $[\tall a.\phi]_\sim$, and $r{=}s$ to be $[r\teq s]_\sim$.
\end{thrm}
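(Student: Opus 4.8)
The plan is to verify, one clause at a time, each defining property of a FOLeq algebra from Definition~\ref{defn.FOLeq}, reducing every nominal-algebraic obligation to a corresponding fact of first-order logic syntax (modulo $\sim$) established in Figure~\ref{fig.FOL}. The strategy throughout is the same: since $[\phi]_\sim\leq[\psi]_\sim$ holds exactly when $\phi\cent\psi$, every lattice inequality becomes a derivability claim, and every equality of elements becomes a pair of mutual derivations. So the proof is essentially a dictionary translating the axioms of Definition~\ref{defn.FOLeq} into standard sequent derivations, which we may assume are available.

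Concretely, I would proceed as follows. First, check that $\mathcal L$ is a well-defined nominal poset: the permutation action is pointwise, $\leq$ is equivariant (by Theorem~\ref{thrm.equivar} applied to $\cent$), and each $[\phi]_\sim$ has finite support bounded by $\fa(\phi)$. Next, establish that $\mathcal L$ is finitely fresh-complete and finitely fresh-cocomplete using Proposition~\ref{prop.ffc.char}: I verify that $[\ttop]_\sim$ is greatest, that $[\phi\tand\phi']_\sim$ is the greatest lower bound of $\{[\phi]_\sim,[\phi']_\sim\}$ (using \rulefont{{\tand}L} and \rulefont{{\tand}R}), and---the quantifier case---that $[\tall a.\phi]_\sim$ is the $a\#$limit $\freshwedge{a}[\phi]_\sim$. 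This last point is the crux: I must show $[\tall a.\phi]_\sim$ is the $\leq$-greatest among $a$-fresh lower bounds of $[\phi]_\sim$. That it \emph{is} a lower bound with $a\#[\tall a.\phi]_\sim$ follows from \rulefont{\tall L} (taking $r=a$) together with $\fa(\tall a.\phi)=\fa(\phi){\setminus}\{a\}$; that it is \emph{greatest} uses \rulefont{\tall R}, whose side-condition $a\notin\fa(\Phi)$ is exactly met because any competing lower bound $[\psi]_\sim$ satisfies $a\#[\psi]_\sim$, i.e. $a\notin\fa(\psi)$. Colimits and $\freshvee{a}$ follow dually via the classical de Morgan sugar of Notation~\ref{nttn.fol.sugar}.

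I would then dispatch the remaining clauses. \emph{Complementation} is $[\tneg\phi]_\sim$, with $[\phi]_\sim\land[\tneg\phi]_\sim=\lbot$ and $[\phi]_\sim\lor[\tneg\phi]_\sim=\ltop$ by classical propositional reasoning. \emph{Distributivity} (Definition~\ref{defn.distrib}) splits into the finite part $x\lor(y\land z)=(x\lor y)\land(x\lor z)$, a classical tautology, and the quantifier part: if $a\#[\phi]_\sim$ then $[\phi]_\sim\lor\freshwedge{a}[\psi]_\sim=\freshwedge{a}([\phi]_\sim\lor[\psi]_\sim)$, i.e. $\phi\tor\tall a.\psi\peq\tall a.(\phi\tor\psi)$, which is derivable precisely because $a\notin\fa(\phi)$. \emph{Compatibility} of the $\sigma$-action (Definition~\ref{defn.fresh.continuous}) reduces to the syntactic commutations $(\phi\tand\psi)[a\sm u]=\phi[a\sm u]\tand\psi[a\sm u]$, $(\tneg\phi)[a\sm u]=\tneg(\phi[a\sm u])$, and, under the freshness side-condition $A\cap(\supp(u)\cup\{a\})=\varnothing$, $(\tall b.\phi)[a\sm u]=\tall b.(\phi[a\sm u])$---all standard facts of capture-avoiding substitution, here requiring only that the quantifier case invokes $\alpha$-equivalence (Lemma~\ref{lemm.freshness}) to keep the bound $b$ out of $u$ and $a$. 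Finally, for \emph{equality} I take $(a{=^\lmathcal}b)=[a\teq b]_\sim$ and verify the two conditions of Definition~\ref{defn.eq}: condition~1, $(u{=^\lmathcal}u)=\ltop$, is \rulefont{{\teq}R}, and condition~2 is \rulefont{{\teq}L}.

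The main obstacle I expect is the $\freshwedge{a}$ clause, and specifically the bookkeeping around $\alpha$-equivalence and the side-condition of \rulefont{\tall R}. One must be careful that $[\tall a.\phi]_\sim$ does not depend on the choice of bound atom $a$ (handled by $\alpha$-equivalence on syntax, which the excerpt takes as standard), and that the "greatest lower bound among $a$-fresh elements" really matches the rule's freshness proviso rather than an off-by-one mismatch between "$a$ fresh for the competitor" and "$a$ fresh for the whole context." Everything else is routine once the translation $[\phi]_\sim\leq[\psi]_\sim\iff\phi\cent\psi$ is in hand, so I would write the quantifier and the distributivity-for-$\forall$ cases in full and merely indicate the propositional and substitution cases as facts of first-order logic.
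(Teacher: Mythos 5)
Your proposal takes the same route as the paper, whose entire proof of this theorem is the remark that ``well-definedness and limit properties are all just properties of first-order logic'' with finite support coming from Theorem~\ref{thrm.no.increase.of.supp}; you have written out in full the dictionary the paper leaves implicit, and your reductions to the rules of Figure~\ref{fig.FOL} (\rulefont{{\tand}L}/\rulefont{{\tand}R} for $\land$, \rulefont{\tall L} with $r=a$ for the lower-bound half of $\freshwedge{a}$, \rulefont{{\teq}R}/\rulefont{{\teq}L} for Definition~\ref{defn.eq}, and the syntactic substitution commutations for compatibility) are the right ones.

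One step needs repair as written, and it is exactly at the crux you yourself flagged. You justify the side-condition of \rulefont{\tall R} by asserting that a competing $a$-fresh lower bound satisfies $a\#[\psi]_\sim$, ``i.e.\ $a\notin\fa(\psi)$''. These are not the same: $\supp([\psi]_\sim)$ can be strictly smaller than $\fa(\psi)$ for the representative you happen to hold --- e.g.\ $\psi=\tf P(a)\tor\tneg\tf P(a)$ has $\fa(\psi)=\{a\}$ while its class is $\ltop$, with empty support --- so $a\#[\psi]_\sim$ does not directly license applying \rulefont{\tall R} to $\psi\cent\phi$. The repair is routine: since $a\#[\psi]_\sim$, choose fresh $b$; by Corollary~\ref{corr.stuff} $(b\ a)\act[\psi]_\sim=[\psi]_\sim$, so $\psi\cent(b\ a)\act\psi$, and since $b\notin\fa(\psi)$ rule \rulefont{\tall R} gives $\psi\cent\tall b.(b\ a)\act\psi$, which is $\alpha$-equivalent to $\tall a.\psi$; thus $\psi\sim\tall a.\psi$, and $\tall a.\psi$ is a representative whose free atoms genuinely avoid $a$. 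From $\psi\cent\phi$ one then gets $\tall a.\psi\cent\tall a.\phi$ (by \rulefont{\tall L} followed by \rulefont{\tall R}, legitimate since $a\notin\fa(\tall a.\psi)$), hence $\psi\cent\tall a.\phi$ as required. The same substitution of ``class support'' for ``representative free atoms'' occurs in your quantifier-distributivity clause and needs the identical patch there. With that fixed, the rest of your argument is correct and coincides with what the paper asserts.
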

\begin{proof}
Well-definedness and limit properties are all just properties of first-order logic.
Finite support is from Theorem~\ref{thrm.no.increase.of.supp}, since syntax is finite.
\end{proof}

\section{Conclusions}

\subsection{Semantics out of context}

Traditionally we admit numbers as a primitive datatype, and we emulate names using numbers (since numbers are countably infinite) or functional arguments.
This is neatly packaged up by Tarski-style semantics, which broadly speaking has the following shape:
$$
(\text{Natural numbers}\Rightarrow\text{Semantics of terms})\Rightarrow\text{Semantics of predicates}
$$
Tarski-style semantics have exerted a powerful, almost subliminal, influence on how semantics have been designed, and how the relationship between syntax and semantics has been understood.
Proposing a good alternative to this view is one of our main goals. 

In Tarski-style semantics the valuation (mapping numbers to semantics of terms as illustrated above) is a \emph{context} of variable-to-denotation assignments.
The semantics of a predicate $\phi$ exists \emph{in the context} of a valuation $\varsigma$, and we write $\model{\phi}_\varsigma$.
$\phi$ provides the syntax and $\varsigma$ provides the context, and what Tarski really taught us was that semantics exists \emph{in context}.

Yet, from our point of view context is only compensating for a mathematical foundation that is too poor to directly represent variables.
We would consider the slogan \emph{``there is no such thing as a free variable''} from \cite{perlis:epiop} to be succinct, eloquent---and incorrect.

In this paper we do things differently.
We interpret names more-or-less as themselves, using urelemente in Fraenkel-Mostowski foundations.
Denotationally speaking there \emph{is} such a thing as a free variable, and it is an \emph{urelement}---an \emph{atom}. 

In this paper we apply this idea to give semantics to first-order logic.
Given a set, its powerset is naturally a Boolean algebra: in a nutshell, this paper is notes that given a set of terms, the $\sigma$-powerset of its $\amgis$-powerset is naturally a model of first-order logic with equality (see Example~\ref{xmpl.approx})---and furthermore, this concrete powersets model is susceptible to simple and attractive nominal algebraic and lattice characterisations (and a topological treatment is possible too, though not in this paper). 
More on this in Subsection~\ref{subsect.trio}.

The Tarski notion of variable context is transmuted into the nominal notion of support (Definition~\ref{defn.supp}).\footnote{%
In \cite{gabbay:nomhss} are nominal Henkin-style semantics for higher-order logic in a typing context, but not a valuation context.  We calculated denotation in a context of static (typing) information, but we were not forced to \emph{also} use an explicit context of valuations---and we did not.} 
In the terminology of the title of this paper, the semantics we obtain is \emph{absolute}; we talk about \emph{the} semantics of $\phi$, and write $\model{\phi}$.
Once the model is fixed, so is $\model{\phi}$ the meaning of $\phi$.

We can recover the minimal relevant context of $\model{\phi}$ by calculating its support $\supp(\model{\phi})$.
However, if we do not care about support we can ignore it.
This simplifies reasoning; it is easier to talk about `an element' than `an element in context'.

But support is more general than `the free variables of'.
It can be read as expressing what a nominal element $x$ depends on---without needing to know what logic, program, or computation $x$ represents (if any).
If we use our foundations correctly then the support of $x$ will relate sensibly to what $x$ is being used to represent, and furthermore, the general nominal constructs will give us a useful structure within which to work.
For instance in this paper:
\begin{itemize*}
\item
Lemma~\ref{lemm.supp.lemma} notes that $\supp(\model{\phi})\subseteq\fa(\phi)$---the nominal element $\model{\phi}$ is supported by (at most) the atoms free in $\phi$.
This connects support of syntax with the corresponding notion of support of semantics.
\item
Similarly \rulefont{\sigma\#} from Figure~\ref{fig.nom.sigma} expresses that if $a{\not\in}\supp(x)$ then substitution $[a\sm u]$ does not change $x$.
This connects nominal support---being affected by a permutation like $(b\ a)$---with the notion of semantic dependency that comes from being in a $\sigma$-algebra and being affected by $[a\sm u]$.
\end{itemize*}
The two points above are visible in syntax: if $a$ is not free in $\phi$ then $\phi$ with $a$ substituted for a term $r$ is equal to $\phi$.
By the two points above, if $a$ is not free in $\phi$ then also $\model{\phi}[a\sm u]=\model{\phi}$.
In fact, the semantics is `compositional' with respect to substitution; $\model{\phi[a\sm r]}=\model{\phi}[a\sm\model{r}]$ (Lemma~\ref{lemm.sub.commute}).
The reader used to Tarski semantics might recognise the corresponding result $\model{\phi[a\sm r]}_\varsigma=\model{\phi}_{\varsigma[a\sm\model{r}_\varsigma]}$.
From the point of view of this paper, that result is a special case of a more general result for a more general semantics.

Because an element $x$ stands on its own, we do not need to label everything with a context, or think or prove lemmas about weakening or strengthening contexts, or about making the contexts of $x$ and $y$ equal so that we can combine them, or even about $\alpha$-renaming contexts.
We do not need valuations, or higher types (though we can have them if we like, as noted in Section~\ref{sect.complete}).
This is taken care of in the background.

Name-management is simpler in nominal techniques, and name-management is \emph{the} thing that separates first-order logic from Boolean algebra, and it should be clear that things other than first-order logic would be susceptible to the mathematical style and nominal semantics we have introduced in this paper; some are suggested in the \emph{Future work} below.

\subsection{Recalling the trio of semantics}
\label{subsect.trio}

\noindent In this paper we have considered a trio of semantics: lattices, sets, and algebra.\footnote{A pure nominal treatment of syntax-with-binding, rather than its semantics, is also possible.  Indeed, that problem is what motivated the initial developments behind nominal techniques \cite{gabbay:thesis,gabbay:newaas-jv,gabbay:fountl}.}
Let us recall what they look like:
\begin{itemize*}
\item
A lattice-flavoured semantics specifies the meaning of a logic in terms of a partially-ordered set (a set with a transitive reflexive antisymmetric relation).
The order relation models logical entailment; that is, the semantics is designed such that $\phi\cent\psi$ should imply $\model{\phi}\leq\model{\psi}$.

We model logical connectives as limits; $\ttop$ is modelled as a top element (a greatest lower bound for nothing), $\tand$ is modelled as a greatest lower bound for two elements.
This is standard, and in categorical language we say that we assume \emph{finite limits} of the lattice.

The new idea here is to model $\tall a$ using fresh-finite limits $\freshwedge{a}$ (Definition~\ref{defn.nom.poset}; this was proposed in \cite{gabbay:nomspl}).

So by this semantics, $\model{\phi}$ is an element of a poset with certain \emph{fresh-}finite limits.
Instead of finite limits, we assume \emph{fresh-finite limits}.
\item
The nominal algebraic semantics is an algebra not over sets, but over \emph{nominal} sets. 

Nominal algebra enriches the syntax, judgement form, and semantics of universal algebra, while retaining (somewhat surprisingly) a purely equational flavour \cite{gabbay:nomuae,gabbay:nomahs,gabbay:nomtnl,gabbay:uniner}. 

Nominal algebra, the nominal algebra axiomatisation of substitution (called $\sigma$-algebras here), and that of first-order logic, were proposed in a sequence of papers \cite{gabbay:capasn,gabbay:oneaah,gabbay:capasn-jv,gabbay:oneaah-jv,gabbay:stodfo,gabbay:pernl-jv}.
$\amgis$-algebras are more recent and were created specifically to prove completeness/duality results.

By this semantics, $\model{\phi}$ is an element of a nominal set which happens to be equipped with operations satisfying certain nominal equalities (see Appendix~\ref{subsect.foleq.alg}).
\item
The sets semantics builds on the semantics of \cite{gabbay:stodnb,gabbay:stodfo}.\footnote{\emph{Note:} The logic of \cite{gabbay:stodfo} did not include equality and considers duality instead of completeness.}

Conjunction is interpreted as sets intersection and negation as sets complement. 
This is again standard.

We see universal quantification interpreted as an infinite intersection in Definition~\ref{defn.nu.U}, and Proposition~\ref{prop.char.freshwedge} notes that this is equal to a fresh-finite limit.
Proposition~\ref{prop.char.freshwedge} is one of a family of characterisation results\footnote{E.g. $\{x'{\in}|\mathcal L| \mid \supp(x'){\subseteq}\supp(x){\setminus}\{a\}\wedge x'{\leq} x\}$ and $\bigwedge_{n{\in}\mathbb A} x[a\sm n]$.} 
and an algebraic one is in Appendix~\ref{subsect.foleq.alg}.

So by this semantics, $\model{\phi}$ is a set which thanks to its $\sigma$ and $\amgis$ structure turns out to be able to interpret universal quantification, and as we saw in Subsection~\ref{subsect.powsigma.equality} also equality.
We can write: $\sigma$ + $\amgis$ + powersets = First-Order Logic with equality. 
\end{itemize*}
The Stone duality theorem for Boolean algebras adds a fourth framework: topology.
A topological duality result for FOL (but not FOLeq) algebras exists and uses technology similar to some of the material in this paper; see a sister paper \cite{gabbay:stodfo}.

\subsection{Related work}

\subsubsection{Nominal precedents}

The idea of nominal axiomatisations of mathematics dates back to \emph{nominal algebra} which was used to axiomatise in the first instance substitution and first-order logic \cite{gabbay:capasn,gabbay:oneaah,gabbay:capasn-jv,gabbay:oneaah-jv,gabbay:nomuae}.

The most developed such logic is at the time of writing \emph{permissive-nominal logic} \cite{gabbay:pernl,gabbay:pernl-jv} which extends nominal algebra with quantification (of unknowns: $\forall X$).
We use it to finitely axiomatise first-order logic and arithmetic in a first-order setting.

Concerning the sets semantics we see in this paper, precursors are \cite{gabbay:subfmf,gabbay:stusun}, which technically have little to do with this paper but which do build families of sets representations of what in this paper we call a $\sigma$-action, and \cite{gabbay:stodfo} which as mentioned above is a sister paper considering topology and with a slightly different technical treatment (different version of $\amgis$-algebras, no equality, no lattices, and so on; at the time of writing, it has not yet been published).

More distantly related are the term equational systems of \cite{fiore:teresl,fiore:secoel}.
These have no concrete sets representations and the method of axiomatisation is quite different---so technically, this work is not very relevant to what we do in this paper---\emph{but} it is based on presheaf semantics, and nominal sets admit a presheaf presentation (as noted in Subsection~\ref{subsubsect.context.back.in}), so there is some overlap.

\subsubsection{Tarski-style valuations and Herbrand-style quotiented syntax}

We consider Tarski-style valuation-based semantics in Section~\ref{sect.complete}, and predicates up to derivable equivalence in Section~\ref{sect.herbrand}.
We showed how both can be considered subclasses of FOLeq algebras.

As usual the Herbrand semantics is easy to construct but uninformative; building it is more a `sanity check'.

Concerning the valuation semantics, we noted in Remark~\ref{rmrk.argue} that we get more limits than needed.
We only really need fresh-finite limits to interpret first-order logic, but valuation semantics will not let us express that.
In contrast, FOLeq algebras have only exactly those limits necessary to interpret first-order logic, which seems a significant point in their favour.

\subsubsection{Combinators}

Variables are not indispensable.
This is the idea behind combinatory logic \cite{schonfinkel:ubebml-book,curry:cli}, and it can be pushed a long way \cite{barendregt:comtsi,tarski:forstw} (the latter paper formalises set theory without using variables).

We see combinatory techniques as orthogonal to this paper.
Yes, languages can be constructed without variables; but that does not make variables obsolete.
Variables are useful because humans like to use them; combinators do not change that.

Sometimes people talk about combinators `reducing' variables to a simpler problem.
Not so.
With names (nominal or otherwise) we can talk about a location independently of its site of binding.
With combinators we cannot talk about a location except by exhibiting it as being at the other end of a combinator.  
These may be able to express the same computable functions, but that does not make them equivalent.

\subsubsection{Cylindric and polyadic algebras}

Cylindric and polyadic algebras treat quantification as a modality satisfying certain axioms.
To model languages with infinitely many variables, we must admit infinitely many modalities.

The interested reader is referred to \cite[Chapter~8, page~243]{halmos:algl} (whose exposition is incredibly clear) and to \cite{monk:intcsa} and \cite{henkin:cyla}.

The resemblance to parts of this paper are interesting; for instance, the infinite schemata of axioms \rulefont{P1} to \rulefont{P7} of a polyadic algebra in \cite[Chapter~8, page~244]{halmos:algl} are similar to projections of the axioms of Subsection~\ref{subsect.foleq.alg} into a first-order setting (where there is a symbol for every $\pi$ and $\hnu a$)---except, that polyadic and cylindric algebras are based on the monoid of finitely supported atoms-substitutions (so: not necessarily bijective functions like permutations), and also, these systems exist in ZF so do not always insist on finite support, or in their terminology on \emph{local finite dimensionality}.

(Tangentially we can also mention Fine's notion of \emph{arbitrary objects} \cite{fine:reaao}, not because it has anything to do with algebra or first-order logic as such, but just because it too is based on a monoid of substitutions.)

Cylindric and polyadic algebras, and Fine's arbitrary objects, are distinctive because they do \emph{not} belong to the Tarski family of variables-with-valuations and denotations-in-context.
In that sense they are closer to the spirit of this paper.

However, we are not aware of anything like the mathematics of this paper being executed in the frameworks above.
This is partly because we take permutations as primitive and so can capture finitely what it takes infinite axiom schemes to capture otherwise.\footnote{Polyadic and cylindric algebras are $\epsilon$ away from us here, but by using \emph{monoids} instead of \emph{groups} they cannot guarantee that atoms that were distinct, remain distinct.

In this light, our \emph{permutative convention} from Definition~\ref{defn.atoms}, which goes back to \cite{gabbay:capasn,gabbay:oneaah,gabbay:oneaah-jv}, that in informal mathematical discourse in nominal techniques $a$, $b$, and $c$ range over \emph{distinct} atoms, seems very important.} 
It is also because the nominal framework enforces symmetry (as discussed in the opening to Subsection~\ref{subsect.pre-equivar}) and finite support.
These are all things that somehow fit together to make possible what we have done in this paper. 
Perhaps the research above \emph{could not} do what we have done in this paper, because of the influence of the implicit Zermelo-Fraenkel foundation. 

We should also mention \emph{hyperdoctrines}.
These are a categorical framework within which to consider semantics for logics and are very general, so we briefly sketch how hyperdoctrines work for the specific case of a one-sorted first-order classical logic.\footnote{\dots which is what we axiomatise in this paper. We hope it is quite clear by now how a multi-sorted logic, or an intuitionistic logic, would work in our nominal semantics.}

Consider a category $\theory T$ whose objects are natural numbers $0$, $1$, \dots and for each term-former $\tf f$ of arity $n$ an arrow $\tf f^{\theory T}$ from $n$ to $0$.
A hyperdoctrine is a functor $F$ from $\theory T$ to the category of Boolean algebras, along with an assignment to each $\tf P$ of arity $n$ of an element $\tf P^F$ in $F(n)$.

In spite of the categorical generality in which this idea is phrased, it really just re-states Tarski-style semantics.
Yes, there are no valuations as such, but this is replaced by a finite context and the categorical framework enforces well-formedness (we only give a predicate semantics if its free variables are in the context).
Useful as hyperdoctrines are for the reader wanting to work entirely within the vocabulary of categories, all they do is translate---and update---the ideas of Tarski into a new categorical language.
The basic nature of the thing has not changed.
Nominal techniques as applied in this paper are doing something different.

\subsubsection{Boolean algebras with operators}

A \emph{Boolean algebra with operators} is a Boolean algebra $\ns B$ equipped with functions $|\ns B|^n\to|\ns B|$ which preserve intersections on each component.
These have been studied extensively; first by Jonnson and Tarski \cite{jonnson:booao}, and then by Goldblatt \cite{goldblatt:varca}.
A concise and very readable survey is in the Introduction of \cite{haim:dualwo}.

It is oversimplistic but reasonable to characterise FOLeq algebras as just being fancy Boolean algebras with operators.
The operator concerned is not $\hnu$ but the $\sigma$-algebra structure $[a\sm u]$---which since we are oversimplifying we might as well call \emph{substitution}, though it is not necessarily syntactic. 

Viewed as an operator, the substitution $[a\sm u]$ is a modality (a unary operator) with special properties: it commutes with meets $\land$, joins $\lor$, and negation $\lneg$.
Thus, $[a\sm u]$ can be viewed as both a modal Box and a modal Diamond operator, which is also a homomorphism of Boolean algebras, and the corresponding Kripke semantics is functional in the sense that each world has precisely one future world. 

The flavour of our representation theorem is in keeping with this.
Notably, the pointwise actions of Definitions~\ref{defn.p.action} and~\ref{defn.sub.sets}
$$
X[a\sm u]=\{p\mid \New{c}p[u\ms c]\in (c\ a)\act X\}
\qquad\text{and}\qquad
p[u\ms a]=\{x\mid x[a\sm u]\in X\}
$$
are like the functional preimage under the relevant accessibility relation---though not quite, as we see in the use of $\new$ on the left above, which hard-wires capture-avoidance and $\alpha$-equivalence (Lemma~\ref{lemm.sigma.alpha}).

Still, from the point of view of \cite{jonnson:booao,goldblatt:varca} it is not unreasonable to view this paper as being `just' about a particular kind of Boolean algebra with a particular family of operators satisfying certain axioms described in Figure~\ref{fig.nom.sigma}.
We mention this because it is not a bad way of viewing how the material of this paper fits in to the broader mathematical context.
 
Of course there is more going on than \emph{just} that. 
Notably: we are working in FM sets so we have to make sure that our constructions preserve finite support; also we do not just have $[a\sm u]$ but also the universal quantifier $\hnu$; but perhaps most interestingly, the axioms for $\sigma$ are determined by properties of substitution---but the axioms of $\amgis$ are \emph{not} so determined.
They emerge from the specific requirements of the specific proofs we need to carry out.
So even if a FOLeq algebra is just a Boolean algebra with operators, the operators are organised and axiomatised into a specific and non-trivial structure, and how that structure gets `inverted' under a modified capture-avoiding functional preimage operation, to form $\amgis$-algebras, is subtle and certainly not obvious.

\subsubsection{Varieties of FOLeq algebras}

In Appendix~\ref{sect.alg} we give a purely nominal equational axiomatisation of FOLeq algebras, using \emph{nominal algebra} \cite{gabbay:nomuae}.
The syntax of nominal algebra is based on equalities between nominal terms subject to syntactic freshness side-conditions which behave a bit like typing conditions.
These do not impact on the algebraic flavour of the logic and one way of making that formal is with a version of the HSP theorem, or Birkhoff's theorem.
We get a notion of \emph{variety} suitable for nominal sets, and a theorem that a nominal algebra theory is precisely characterised by the variety of its models, for a suitable nominal notion of variety.
For more details see \cite{gabbay:nomuae,gabbay:nomahs}.

\subsubsection{The Schanuel topos: putting context back in, if we want it}
\label{subsubsect.context.back.in}

The reader may know that nominal sets are equivalent to the Schanuel Topos, which can be presented as pullback-preserving presheaves (for a proof with calculations see \cite[Theorem~9.14]{gabbay:fountl}).

What this means in plain English is that every nominal $x{\in}|\ns X|$ can be thought of as a family of $A\cent x$ for $A\subseteq\mathbb A$ such that $\supp(x)\subseteq A$.
In other words, $x$ `exists' in a context if and only if the context contains the support of $x$.

So in fact, everything we have done in this paper, admits a `contextual' presentation.
Furthermore, it may admit generalisations to presheaves not necessarily preserving pullbacks (to be more precise, not necessarily preserving pullbacks of \emph{monos} \cite{gabbay:nomrs}).

And yet, even if for the sake of argument we imagine such a generalisation is found and written, the proofs in this paper would be harder to discover, harder to present, and harder to disseminate, if we did not first have the simpler `context-free' sets-based reasoning of nominal sets.

\subsection{Future work}

\begin{enumerate}
\item
\emph{Apply the sets semantics to the $\lambda$-calculus.}\quad
A nominal algebra axiomatisation of the $\lambda$-calculus is in \cite{gabbay:lamcna,gabbay:nomalc}, building on the nominal rewrite systems for the $\lambda$-calculus from \cite{gabbay:nomr,gabbay:nomr-jv}.
A sequent logic style presentation of the $\lambda$-calculus is in \cite{gabbay:simcks}, along with a (non-nominal) Kripke style semantics.

In a sequel to the current paper, we build on this work to give for the first time a topological duality result for the $\lambda$-calculus \cite{gabbay:repdul}.   
\item
\emph{Generalise the language beyond first-order logic.}\quad
This paper has only scratched the surface of what might be possible with our semantics.
First-order logic has binding, but it does not have binding term-formers nor does it have the ability to reason directly on variable names.
We do this all the time in informal mathematical practice---e.g. when we specify first-order logic itself.

It is a tricky question what kind of logic is being used here, though it seems to be a nominal one. 
Permissive-nominal logic is one such \cite{gabbay:pernl,gabbay:pernl-jv}; it enriches first-order logic with binding term-formers and we have used it to axiomatise first-order logic and arithmetic \cite{gabbay:pernl-jv}. 

However, there may be much more to say here.
We hope that the sets-based semantics of this paper, which is new, will be a good guide to us here.
\item
\emph{Category theory in nominal sets.}\quad
The fresh-finite limit from Subsection~\ref{subsect.fresh-finite.limit} is natural if we build category theory internally to nominal sets; because a category is a set of objects and a set of arrows, it is natural in a nominal context to talk about `the limit object satisfying a freshness side-condition'.
Our treatment of posets in this paper begs generalisation to categories, and we look forward to finding out what story might be told here.\footnote{There is a general theory of categories in universes other than ZF sets \cite{kelly:bascec}.
But we are not just enriching the notion of category; we are also enriching the notion of \emph{limit}.
So whatever mathematics emerges from that seems unlikely to be just a special case of the general machinery.}
A hint of this is already in the paper; freshness conditions and adjoints (especially the adjoint explanation of quantifiers) are clearly related.
This must be the topic of a later paper. 

Another interesting generalisation is to take the notion of $\sigma$-algebra seriously as a categorical framework.
The clue to this is to view a category as a kind of $\sigma$-algebra in which each arrow has one `free variable' (its source).
In a similar spirit, we might consider a (termlike) $\sigma$-algebra as a kind of semigroup in which we consider $x[a\sm y]$ as composition of $x$ and $y$ `at $a$'.
If we do this, it will not just be because we can: in computing, composition often happens at a location, in some sense.
\item
\emph{Stone duality.}\quad
The representation theorem of this paper uses maximally consistent filters (which we call \emph{points}), for a suitable notion of filter (Definitions~\ref{defn.filter} and~\ref{defn.points}).

Our logic is classical, so maximally consistent filters do indeed coincide with ultrafilters; see Lemma~\ref{lemm.prime.ultra}.

We can develop this further by characterising the right notion of topology on a (nominal) topological space and obtain a Stone duality result for FOLeq algebras.

This has been studied for the simpler case of first-order logic without equality.
See \cite{gabbay:stodfo} (a sister paper to this one) and in particular see Definition~6.18 of \cite{gabbay:stodfo} where the topology on points is described.
See also \cite{forssell:firold}, which uses different techniques but has similar goals.

Adapting \cite{gabbay:stodfo} to the definitions and results here would probably require a distinct paper. 
\item
\emph{The many-sorted case.}\quad
Our definition of termlike $\sigma$-algebra in Definition~\ref{defn.term.sub.alg} is single-sorted; $\ns U$ is a single set with a $\sigma$-action over itself.
For applications in computer science it would be useful to have a many-sorted version of FOLeq algebras.

This should be quite easy.
It would suffice to generalise Definition~\ref{defn.term.sub.alg}: instead of $\ns U$ we have a type-indexed family $\ns U_\tau$ along with atoms for each type and a $\sigma$-action of type $\ns U_\tau\times\mathbb A_{\tau'}\times\ns U_{\tau'}\to\ns U_\tau$.
The rest of the mathematics in this paper would be orthogonal and should remain unaffected.
\end{enumerate}

\subsection{Summary}

We hope that the mathematics of this paper supports the following two arguments:
\begin{itemize*}
\item
The correct notion of model of first-order logic is a FOLeq algebra; this contains just what is necessary to model first-order logic and furthermore it supports good sets-based and poset-based characterisations.
Underlying these are $\sigma$- and $\amgis$-powersets, which are remarkable structures, and nominal algebra (itself a logic of independent interest).
To express these we need nominal techniques---ZF sets and algebras are not expressive enough.
\item
Nominal foundations give other new, and not necessarily obvious, opportunities in logic and semantics.
We study first-order logic in this paper and the untyped $\lambda$-calculus in \cite{gabbay:repdul}.
We hope and suspect that the ideas behind these examples, and many of the tools we had to develop to carry out these examples, will have further uses (cf. the \emph{Future work} above).
\end{itemize*}
So this paper gives non-trivial answers to some specific technical questions, but stepping back to look at the larger picture, we see this work as one example of an exciting new way in which to apply nominal techniques in logic and semantics.


\begin{thebibliography}{MVF{\etalchar{+}}02}

\bibitem[CF58]{curry:cli}
Haskell~B. Curry and R.~Feys, \emph{Combinatory logic}, vol.~I, North Holland,
  1958.

\bibitem[DBB98]{barendregt:comtsi}
Wil Dekkers, Martin Bunder, and Henk Barendregt, \emph{Completeness of two
  systems of illative combinatory logic for first-order propositional and
  predicate calculus}, Archive f\"{u}r Mathematische Logik \textbf{37} (1998),
  327--341.

\bibitem[DG10]{gabbay:pernl}
Gilles Dowek and Murdoch~J. Gabbay,
  \emph{\href{http://www.gabbay.org.uk/papers.html\#pernl-cv}{Permissive
  Nominal Logic}}, Proceedings of the 12th International {ACM} {SIGPLAN}
  Symposium on Principles and Practice of Declarative Programming ({PPDP} 2010)
  (New York), ACM Press, 2010, pp.~165--176.

\bibitem[DG12a]{gabbay:nomspl}
\bysame, \emph{\href{http://www.gabbay.org.uk/papers.html\#nomspl}{Nominal
  Semantics for Predicate Logic: Algebras, Substitution, Quantifiers, and
  Limits}}, Proceedings of the 9th Italian Convention on Computational Logic
  ({CILC} 2012), {CEUR} workshop proceedings, vol. 857, 2012.

\bibitem[DG12b]{gabbay:pernl-jv}
\bysame, \emph{\href{http://www.gabbay.org.uk/papers.html\#pernl-jv}{Permissive
  Nominal Logic (journal version)}}, Transactions on Computational Logic
  \textbf{13} (2012), no.~3.

\bibitem[DP02]{priestley:intlo}
B.~A. Davey and Hilary~A. Priestley, \emph{Introduction to lattices and order},
  2 ed., Cambridge University Press, 2002.

\bibitem[FG07]{gabbay:nomr-jv}
Maribel Fern{\'a}ndez and Murdoch~J. Gabbay,
  \emph{\href{http://www.gabbay.org.uk/papers.html\#nomr-jv}{Nominal rewriting
  (journal version)}}, Information and Computation \textbf{205} (2007), no.~6,
  917--965.

\bibitem[FGM04]{gabbay:nomr}
Maribel Fern{\'a}ndez, Murdoch~J. Gabbay, and Ian Mackie,
  \emph{\href{http://www.gabbay.org.uk/papers.html\#nomr}{Nominal Rewriting
  Systems}}, Proceedings of the 6th {ACM} {SIGPLAN} symposium on Principles and
  Practice of Declarative Programming ({PPDP} 2004), ACM Press, August 2004,
  pp.~108--119.

\bibitem[FH08]{fiore:teresl}
Marcelo Fiore and Chung-Kil Hur, \emph{Term equational systems and logics},
  Electronic Notes in Theoretical Computer Science \textbf{218} (2008),
  171--192.

\bibitem[FH10]{fiore:secoel}
\bysame, \emph{Second-order equational logic}, Proceedings of the 19th EACSL
  Annual Conference on Computer Science Logic ({CSL} 2010) (Berlin), Lecture
  Notes in Computer Science, Springer, 2010.

\bibitem[Fin85]{fine:reaao}
Kit Fine, \emph{Reasoning with arbitrary objects}, Blackwell, 1985.

\bibitem[For07]{forssell:firold}
Henrik Forssell, \emph{First-order logical duality}, Ph.D. thesis, Carnegie
  Mellon University, December 2007.

\bibitem[Gab01]{gabbay:thesis}
Murdoch~J. Gabbay, \emph{\href{http://www.gabbay.org.uk/papers.html\#thesis}{A
  Theory of Inductive Definitions with alpha-Equivalence}}, Ph.D. thesis,
  University of Cambridge, UK, March 2001.

\bibitem[Gab07]{gabbay:genmn}
\bysame, \emph{\href{http://www.gabbay.org.uk/papers.html\#genmn}{A General
  Mathematics of Names}}, Information and Computation \textbf{205} (2007),
  no.~7, 982--1011.

\bibitem[Gab09a]{gabbay:nomahs}
\bysame, \emph{\href{http://www.gabbay.org.uk/papers.html\#nomahs}{Nominal
  Algebra and the {HSP} Theorem}}, Journal of Logic and Computation \textbf{19}
  (2009), no.~2, 341--367.

\bibitem[Gab09b]{gabbay:stusun}
\bysame, \emph{\href{http://www.gabbay.org.uk/papers.html\#stusun}{A study of
  substitution, using nominal techniques and {F}raenkel-{M}ostowski sets}},
  Theoretical Computer Science \textbf{410} (2009), no.~12-13, 1159--1189.

\bibitem[Gab11]{gabbay:fountl}
\bysame, \emph{\href{http://www.gabbay.org.uk/papers.html\#fountl}{Foundations
  of nominal techniques: logic and semantics of variables in abstract syntax}},
  Bulletin of Symbolic Logic \textbf{17} (2011), no.~2, 161--229.

\bibitem[Gab12]{gabbay:uniner}
\bysame, \emph{\href{http://www.gabbay.org.uk/papers.html\#uniner}{Unity in
  nominal equational reasoning: The algebra of equality on nominal sets}},
  Journal of Applied Logic \textbf{10} (2012), 199--217.

\bibitem[Gab13]{gabbay:nomtnl}
\bysame, \emph{\href{http://www.gabbay.org.uk/papers.html\#nomtnl}{Nominal
  terms and nominal logics: from foundations to meta-mathematics}}, Handbook of
  Philosophical Logic, vol.~17, Kluwer, 2013, (author's/publisher's numbering).

\bibitem[Gab14]{gabbay:stodfo}
\bysame, \emph{\href{http://www.gabbay.org.uk/papers.html\#stodfo}{Stone
  duality for First-Order Logic: a nominal approach}}, HOWARD-60. A Festschrift
  on the Occasion of Howard Barringer's 60th Birthday, Easychair books, 2014.

\bibitem[GC11]{gabbay:frenrs}
Murdoch~J. Gabbay and Vincenzo Ciancia,
  \emph{\href{http://www.gabbay.org.uk/papers.html\#frenrs}{Freshness and
  name-restriction in sets of traces with names}}, Foundations of software
  science and computation structures, 14th International Conference ({FOSSACS}
  2011), Lecture Notes in Computer Science, vol. 6604, Springer, 2011,
  pp.~365--380.

\bibitem[GG08]{gabbay:subfmf}
Murdoch~J. Gabbay and Michael Gabbay,
  \emph{\href{http://www.gabbay.org.uk/papers.html\#subfmf}{Substitution for
  {F}raenkel-{M}ostowski foundations}}, Proceedings of the 2008 AISB Symposium
  on Computing and Philosophy, 2008, pp.~65--72.

\bibitem[GG10]{gabbay:simcks}
Michael~J. Gabbay and Murdoch~J. Gabbay,
  \emph{\href{http://www.gabbay.org.uk/papers.html\#simcks}{A simple class of
  {K}ripke-style models in which logic and computation have equal standing}},
  International Conference on Logic for Programming Artificial Intelligence and
  Reasoning ({LPAR} 2010), 2010.

\bibitem[GG16]{gabbay:repdul}
Murdoch~J. Gabbay and Michael~J. Gabbay,
  \emph{\href{http://www.gabbay.org.uk/papers.html\#repdul-tr}{Representation
  and duality of the untyped lambda-calculus in nominal lattice and topological
  semantics, with a proof of topological completeness}}, Annals of Pure and
  Applied Logic (2016), submitted 2012. See also arXiv preprint
  1305.5968.

\bibitem[GH08]{gabbay:nomrs}
Murdoch~J. Gabbay and Martin Hofmann,
  \emph{\href{http://www.gabbay.org.uk/papers.html\#rens}{Nominal renaming
  sets}}, Proceedings of the 15th International Conference on Logic for
  Programming, Artificial Intelligence, and Reasoning ({LPAR} 2008), Springer,
  November 2008, pp.~158--173.

\bibitem[GLP11]{gabbay:stodnb}
Murdoch~J. Gabbay, Tadeusz Litak, and Daniela Petri\c{s}an,
  \emph{\href{http://www.gabbay.org.uk/papers.html\#stodnb}{Stone duality for
  nominal Boolean algebras with NEW}}, Proceedings of the 4th international
  conference on algebra and coalgebra in computer science ({CALCO} 2011),
  Lecture Notes in Computer Science, vol. 6859, Springer, 2011, pp.~192--207.

\bibitem[GM06a]{gabbay:capasn}
Murdoch~J. Gabbay and Aad Mathijssen,
  \emph{\href{http://www.gabbay.org.uk/papers.html\#capasn}{Capture-avoiding
  Substitution as a Nominal Algebra}}, Proceedings of the 3rd International
  Colloquium on Theoretical Aspects of Computing ({ICTAC} 2006) (Berlin),
  Lecture Notes in Computer Science, vol. 4281, Springer, November 2006,
  pp.~198--212.

\bibitem[GM06b]{gabbay:oneaah}
\bysame,
  \emph{\href{http://www.gabbay.org.uk/papers.html\#oneaah}{One-and-a-halfth-order
  logic}}, Proceedings of the 8th {ACM-SIGPLAN} International Symposium on
  Principles and Practice of Declarative Programming ({PPDP} 2006), ACM, July
  2006, pp.~189--200.

\bibitem[GM07]{gabbay:forcie}
\bysame, \emph{\href{http://www.gabbay.org.uk/papers.html\#forcie}{A Formal
  Calculus for Informal Equality with Binding}}, WoLLIC'07: 14th Workshop on
  Logic, Language, Information and Computation, Lecture Notes in Computer
  Science, vol. 4576, Springer, July 2007, pp.~162--176.

\bibitem[GM08a]{gabbay:capasn-jv}
\bysame,
  \emph{\href{http://www.gabbay.org.uk/papers.html\#capasn-jv}{Capture-Avoiding
  Substitution as a Nominal Algebra}}, Formal Aspects of Computing \textbf{20}
  (2008), no.~4-5, 451--479.

\bibitem[GM08b]{gabbay:lamcna}
\bysame, \emph{\href{http://www.gabbay.org.uk/papers.html\#lamcna}{The
  lambda-calculus is nominal algebraic}}, Reasoning in simple type theory:
  Festschrift in Honour of {P}eter {B.} {A}ndrews on his 70th Birthday
  (Christoph Benzm{\"u}ller, Chad Brown, J{\"o}rg Siekmann, and Rick Statman,
  eds.), Studies in Logic and the Foundations of Mathematics, IFCoLog, December
  2008.

\bibitem[GM08c]{gabbay:oneaah-jv}
\bysame,
  \emph{\href{http://www.gabbay.org.uk/papers.html\#oneaah-jv}{One-and-a-halfth-order
  Logic}}, Journal of Logic and Computation \textbf{18} (2008), no.~4,
  521--562.

\bibitem[GM09]{gabbay:nomuae}
\bysame, \emph{\href{http://www.gabbay.org.uk/papers.html\#nomuae}{Nominal
  universal algebra: equational logic with names and binding}}, Journal of
  Logic and Computation \textbf{19} (2009), no.~6, 1455--1508.

\bibitem[GM10]{gabbay:nomalc}
\bysame, \emph{\href{http://www.gabbay.org.uk/papers.html\#nomalc}{A nominal
  axiomatisation of the lambda-calculus}}, Journal of Logic and Computation
  \textbf{20} (2010), no.~2, 501--531.

\bibitem[GM11]{gabbay:nomhss}
Murdoch~J. Gabbay and Dominic~P. Mulligan,
  \emph{\href{http://www.gabbay.org.uk/papers.html\#nomhss}{Nominal {H}enkin
  {S}emantics: simply-typed lambda-calculus models in nominal sets}},
  Proceedings of the 6th International Workshop on Logical Frameworks and
  Meta-Languages ({LFMTP} 2011), EPTCS, vol.~71, September 2011, pp.~58--75.

\bibitem[Gol89]{goldblatt:varca}
Robert Goldblatt, \emph{Varieties of complex algebras}, Annals of Pure and
  Applied Logic \textbf{44} (1989), no.~3, 173--242.

\bibitem[GP01]{gabbay:newaas-jv}
Murdoch~J. Gabbay and Andrew~M. Pitts,
  \emph{\href{http://www.gabbay.org.uk/papers.html\#newaas-jv}{A New Approach
  to Abstract Syntax with Variable Binding}}, Formal Aspects of Computing
  \textbf{13} (2001), no.~3--5, 341--363.

\bibitem[Hai00]{haim:dualwo}
Mariana Haim, \emph{Duality for lattices with operators: A modal logic
  approach}, Ph.D. thesis, University of Amsterdam, 2000.

\bibitem[Hal06]{halmos:algl}
Paul~R. Halmos, \emph{Algebraic logic}, {AMS} Chelsea Publishing, 2006.

\bibitem[HMT85]{henkin:cyla}
Leon Henkin, J.~Donald Monk, and Alfred Tarski, \emph{Cylindric algebras},
  North Holland, 1971 and 1985, Parts I and II.

\bibitem[ID96]{dill:betvts}
C.~Norris Ip and David~L. Dill, \emph{Better verification through symmetry},
  Formal Methods in System Design \textbf{9} (1996), 41--75.

\bibitem[Joh03]{johnstone:skeett}
Peter~T. Johnstone, \emph{Sketches of an elephant: A topos theory compendium},
  Oxford Logic Guides, vol. 43 and 44, OUP, 2003.

\bibitem[JT52]{jonnson:booao}
Bjarni Jonnson and Alfred Tarski, \emph{Boolean algebras with operators},
  American Journal of Mathematics \textbf{74} (1952), no.~1, 127--162.

\bibitem[Kel82]{kelly:bascec}
Max Kelly, \emph{Basic concepts of enriched category theory}, London
  Mathematical Society Lecture Note Series, no.~64, Cambridge University Press,
  1982.

\bibitem[KW96]{keenan:genqll}
Edward Keenan and Dag Westerst\r{a}hl, \emph{Generalized quantifiers in
  linguistics and logic}, Handbook of Logic and Language (J.~Van~Benthem and
  A.~Ter~Meulen, eds.), Elsevier, 1996, pp.~837--894.

\bibitem[MM92]{MLM:sgl}
Saunders {Mac Lane} and Ieke Moerdijk, \emph{Sheaves in geometry and logic: A
  first introduction to topos theory}, Universitext, Springer, 1992.

\bibitem[Mon00]{monk:intcsa}
Donald Monk, \emph{An introduction to cylindric set algebras}, Logic journal of
  the {IGPL} \textbf{8} (2000), no.~4, 451--492.

\bibitem[MVF{\etalchar{+}}02]{mccune:shosab}
William McCune, Robert Veroff, Branden Fitelson, Kenneth Harris, Andrew Feist,
  and Larry Wos, \emph{Short single axioms for boolean algebra}, Journal of
  Automated Reasoning \textbf{1} (2002), no.~29, 1--16.

\bibitem[Per82]{perlis:epiop}
Alan~J. Perlis, \emph{Epigrams on programming}, September 1982, Available
  online \url{www.cs.yale.edu/quotes.html}, pp.~7--13.

\bibitem[Pit13]{pitts:nomsns}
Andrew~M. Pitts, \emph{Nominal sets: Names and symmetry in computer science},
  Cambridge University Press, May 2013.

\bibitem[Sch67]{schonfinkel:ubebml-book}
Moses Sch\"onfinkel, \emph{On the building blocks of mathematical logic}, From
  Frege to G\"odel: a source book in mathematical logic, 1879-1931 (Jean van
  Heijenoort, ed.), Harvard University Press, 1967, Translated from the German
  by Stefan Bauer-Mengelberg.

\bibitem[SP05]{shinwell:freocu}
Mark~R. Shinwell and Andrew~M. Pitts, \emph{Fresh objective {Caml} user
  manual}, Tech. Report UCAM-CL-TR-621, University of Cambridge, 2005.

\bibitem[Tar44]{tarski:semct}
Alfred Tarski, \emph{The semantic conception of truth and the foundations of
  semantics}, Philosophy and Phenomenological Research \textbf{4} (1944).

\bibitem[TG87]{tarski:forstw}
Alfred Tarski and Steven Givant, \emph{A formalization of set theory without
  variables}, vol.~41, American Mathematical Society Colloquium Publications,
  1987.

\bibitem[vD94]{dalen:logs}
Dirk van Dalen, \emph{Logic and structure}, Universitext, Springer, 1994,
  Third, augmented edition.

\bibitem[Wes89]{westerstahl:quafnl}
Dag Westerst\r{a}hl, \emph{Quantifiers in formal and natural languages},
  Handbook of Philosophical Logic, Synth\`ese, vol.~4, Reidel, 1989,
  pp.~1--131.

\end{thebibliography}

\newcommand{\etalchar}[1]{$^{#1}$}
\hyphenation{Mathe-ma-ti-sche}
\providecommand{\bysame}{\leavevmode\hbox to3em{\hrulefill}\thinspace}
\providecommand{\MR}{\relax\ifhmode\unskip\space\fi MR }
\providecommand{\MRhref}[2]{%
  \href{http://www.ams.org/mathscinet-getitem?mr=#1}{#2}
}
\providecommand{\href}[2]{#2}

\appendix

\section{Nominal algebraic axiomatisation of fresh-finite limits and equality}
\label{sect.alg}

Definition~\ref{defn.nom.poset} set up the notion of a fresh-finite limit $\freshwedge{A}X$.
Proposition~\ref{prop.ffc.char} split this notion up into three parts which are sometimes more convenient for proofs: $\ltop$, $\land$, and $\freshwedge{a}$.
These definitions are `poset-flavoured'; that is, we talk about $\leq$ and greatest lower bounds for $\leq$.

We now set about axiomatising this structure in \emph{nominal algebra} \cite{gabbay:nomuae}. 
That is, we will characterise FOLeq algebras in terms of an underlying nominal set and functions on that set satisfying nominal equalities.

It is convenient to do this in three stages: bounded meet-semilattices with $\hnu$, then bounded lattices with $\hnu$, then FOLeq algebras.

\subsection{Bounded meet-semilattices with $\protect\hnu$}

\begin{defn}
A \deffont{bounded meet-semilattice} in nominal sets is a tuple $\ns M=(|\ns M|,\act,\land,\ltop)$ where $(|\ns M|,\act)$ is a nominal set and $\ltop{\in}|\ns M|$ is an equivariant \deffont{top element} and $\land:(\ns M\times\ns M)\Func\ns M$ is an equivariant function, such that $\land$ and $\ltop$ form an idempotent monoid:
$$
(x\land y)\land z=x\land(y\land z)
\qquad
x\land y=y\land x
\qquad
x\land x=x
\qquad
x\land \ltop=x
$$
Here, $x,y,z$ range over elements of $|\ns M|$.
\end{defn} 

\begin{defn}
\label{defn.meet.semi.poset}
As standard, a meet-semilattice acquires a partial order by setting $x\leq y$ when $x\land y=x$.
A bounded meet-semilattice yields a poset with a top element ($\ltop$).
\end{defn}

\begin{defn}
\label{defn.hnu.M}
A \deffont{generalised universal quantifier} $\hnu$ on a bounded meet-semilattice $\ns M$ is an equivariant map $\hnu: (\mathbb A\times\ns M)\Func\ns M$ such that:
\begin{frameqn}
\begin{array}{l@{\qquad}l@{\quad}r@{\ }l}
\rulefont{\hnu\alpha}
&
b\#x\limp& \hnu b.(b\ a)\act x=&x
\\
\rulefont{\hnu{\land}}
&
& \hnu a.(x\land y)=&(\hnu a.x)\land(\hnu a.y)
\\
\rulefont{\hnu\#}
&
a\#x\limp& \hnu a.x=&x
\\
\rulefont{\hnu{\leq}}
&
&\hnu a.x\leq& x
\end{array}
\end{frameqn}
\end{defn}
(Recall that $\hnu a.x\leq x$ is shorthand for $(\hnu a.x)\land x=\hnu a.x$.)

\begin{lemm}
\label{lemm.hnu.leq}
If $x\leq y$ then $\hnu a.x\leq \hnu a.y$.
\end{lemm}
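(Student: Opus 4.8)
The plan is to exploit the one axiom that does any work here, namely $\rulefont{\hnu{\land}}$, which says that $\hnu a$ commutes with binary meets. Monotonicity then follows by the standard argument that any meet-preserving map is automatically monotone for the induced order. So the whole proof is essentially a one-line rewrite, and I would keep it at that level rather than belabouring it.

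First I would unfold the hypothesis $x\leq y$ using Definition~\ref{defn.meet.semi.poset}: by definition $x\leq y$ holds precisely when $x\land y=x$. Applying the function $\hnu a$ to both sides of the equation $x\land y=x$ yields $\hnu a.(x\land y)=\hnu a.x$. Now by $\rulefont{\hnu{\land}}$ the left-hand side rewrites to $(\hnu a.x)\land(\hnu a.y)$, giving $(\hnu a.x)\land(\hnu a.y)=\hnu a.x$. Reading Definition~\ref{defn.meet.semi.poset} in the other direction, this last equation is exactly the statement $\hnu a.x\leq\hnu a.y$, which is what we want.

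The main point to notice—rather than an obstacle—is that none of the other three axioms ($\rulefont{\hnu\alpha}$, $\rulefont{\hnu\#}$, $\rulefont{\hnu{\leq}}$) is needed: monotonicity of $\hnu a$ is a purely order-theoretic consequence of its being a meet-semilattice homomorphism in the relevant argument. This is reassuring, since it shows $\rulefont{\hnu{\land}}$ alone is carrying the content, and it suggests that analogous monotonicity facts for $\land$ itself and for any other meet-preserving operation will have the same trivial proof.
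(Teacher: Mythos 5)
Your proof is correct and is exactly the paper's proof: unfold $x\leq y$ as $x\land y=x$, apply $\hnu a$ to both sides, and rewrite with \rulefont{\hnu{\land}} to obtain $(\hnu a.x)\land(\hnu a.y)=\hnu a.x$, i.e.\ $\hnu a.x\leq\hnu a.y$. Your observation that \rulefont{\hnu{\land}} alone carries the content is also accurate.
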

\begin{proof}
$x\leq y$ means $x\land y=x$.
We apply $\hnu a$ to both sides and use \rulefont{\hnu{\land}}.
\end{proof}

\begin{prop}
\label{prop.poset.structure}
Suppose $\ns M$ is a bounded meet-semilattice with $\hnu$.
Then with the partial order from Definition~\ref{defn.meet.semi.poset} we have the following:
\begin{itemize*}
\item
$\ltop$ is a top element.
\item
$x\land y$ is a limit for $\{x,y\}$.
\item
$\hnu a.x$ is an $a\#$limit for $\{x\}$.
\end{itemize*}
\end{prop}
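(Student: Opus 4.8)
The plan is to read off the three claims from the monoid structure of $\land$, the induced order of Definition~\ref{defn.meet.semi.poset} (so $x\leq y$ abbreviates $x\land y=x$), and the four axioms for $\hnu$. The first two bullets are pure bounded meet-semilattice bookkeeping. For $\ltop$, the monoid law $x\land\ltop=x$ says exactly $x\leq\ltop$ for every $x{\in}|\ns M|$, so $\ltop$ is a top element (a limit for $\varnothing$). For $x\land y$, I would first check it is a lower bound of $\{x,y\}$: commutativity, associativity and idempotence give $(x\land y)\land x=(x\land x)\land y=x\land y$, hence $x\land y\leq x$, and symmetrically $x\land y\leq y$. Then I would check it is greatest: if $z\leq x$ and $z\leq y$, i.e. $z\land x=z=z\land y$, then $z\land(x\land y)=(z\land x)\land y=z\land y=z$, so $z\leq x\land y$. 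These are just the finite case of Proposition~\ref{prop.ffc.char}, now carried out equationally.

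The substance is the third bullet, that $\hnu a.x$ is the $a\#$limit $\freshwedge{a}x$ of $\{x\}$; by Notation~\ref{nttn.lall} this means $\hnu a.x$ must be the $\leq$-greatest element of $\{x'\mid x'\leq x\ \land\ a\#x'\}$. Membership in this set splits into two obligations. First, $\hnu a.x\leq x$ is immediately axiom \rulefont{\hnu{\leq}}. For greatestness I would take any $z$ with $z\leq x$ and $a\#z$ and argue $z\leq\hnu a.x$ as follows: monotonicity of $\hnu$ (Lemma~\ref{lemm.hnu.leq}, itself a one-line consequence of \rulefont{\hnu{\land}}) gives $\hnu a.z\leq\hnu a.x$, while \rulefont{\hnu\#} together with $a\#z$ gives $\hnu a.z=z$; combining these yields $z=\hnu a.z\leq\hnu a.x$.

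The main obstacle is the remaining membership condition $a\#\hnu a.x$, since \rulefont{\hnu{\leq}} and \rulefont{\hnu\#} on their own do not deliver a freshness. Here I would mirror the argument of Lemma~\ref{lemm.freshwedge.alpha} and Corollary~\ref{corr.supp.freshwedge}. Conservation of support (part~3 of Theorem~\ref{thrm.no.increase.of.supp}) bounds $\supp(\hnu a.x)\subseteq\{a\}\cup\supp(x)$, so it only remains to discharge the single atom $a$. I would choose $b$ fresh (so $b\#x$, and hence $b\#\hnu a.x$ by the support bound), compute by equivariance that $(b\ a)\act\hnu a.x=\hnu b.(b\ a)\act x$, and then invoke \rulefont{\hnu\alpha} (the $\alpha$-renaming law for $\hnu$) to identify $\hnu b.(b\ a)\act x$ with $\hnu a.x$, giving $(b\ a)\act\hnu a.x=\hnu a.x$. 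Part~3 of Corollary~\ref{corr.stuff} then converts this fixed-point-under-a-fresh-swap into $a\#\hnu a.x$, completing the verification. The only delicate point is lining up ``$b$ fresh for $x$'' with ``$b$ fresh for $\hnu a.x$'' before applying the swap, which the support bound supplies automatically.
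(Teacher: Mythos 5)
Your proof is correct and follows essentially the same route as the paper's: \rulefont{\hnu{\leq}} for the lower-bound half, Lemma~\ref{lemm.hnu.leq} together with \rulefont{\hnu\#} for greatestness, and \rulefont{\hnu\alpha} combined with part~3 of Corollary~\ref{corr.stuff} for the freshness $a\#\hnu a.x$, which is exactly the paper's argument. You merely spell out details the paper leaves implicit, namely the equational verification of the first two bullets and the support bound $\supp(\hnu a.x)\subseteq\{a\}\cup\supp(x)$ used to secure $b\#\hnu a.x$ before applying the swap.
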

\begin{proof}
The first two parts are standard; the interesting case is part~3. 

From \rulefont{\hnu\alpha} and part~3 of Corollary~\ref{corr.stuff} we derive that $a\#\hnu a.x$. 
With this and \rulefont{\hnu{\leq}} we have that $\hnu a.x$ is an $a\#$lower bound for $x$.

Now suppose $z\leq x$ and $a\#z$.
By Lemma~\ref{lemm.hnu.leq} and \rulefont{\hnu\#} we deduce $z\leq\hnu a.x$.
\end{proof}

\begin{corr}
Every bounded meet-semilattice with $\hnu$ can be viewed as a nominal poset with fresh-finite limits, and vice versa.
\end{corr}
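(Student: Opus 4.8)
The plan is to prove the asserted equivalence in its two directions, observing that one of them has essentially been carried out already.

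For the direction from a bounded meet-semilattice with $\hnu$ to a nominal poset with fresh-finite limits, I would simply combine the two preceding results. Definition~\ref{defn.meet.semi.poset} equips $\ns M$ with the partial order $x\leq y \iff x\land y = x$, and Proposition~\ref{prop.poset.structure} shows that under this order $\ltop$, $x\land y$, and $\hnu a.x$ are respectively a top element, a limit for $\{x,y\}$, and an $a\#$-limit for $\{x\}$. By Proposition~\ref{prop.ffc.char} the presence of exactly these three kinds of limits is equivalent to finite fresh-completeness, so $(|\ns M|,\act,\leq)$ is a nominal poset with fresh-finite limits. Nothing further is needed here.

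For the converse, starting from a finitely fresh-complete nominal poset $\mathcal L=(|\mathcal L|,\act,\leq)$, I would define $\ltop=\freshwedge{\varnothing}\varnothing$, $x\land y=\freshwedge{\varnothing}\{x,y\}$, and $\hnu a.x=\freshwedge{a}x$, all of which exist by finite fresh-completeness (Notation~\ref{nttn.lall}) and are unique by Lemma~\ref{lemm.comp.unique}. Equivariance follows because $\pi\act$ preserves the defining universal properties (using equivariance of $\leq$) and limits are unique; alternatively it is an instance of part~2 of Theorem~\ref{thrm.equivar}. The monoid laws are the standard facts about greatest lower bounds: commutativity and idempotence hold because $\freshwedge{\varnothing}$ depends only on the underlying set, $x\land\ltop=x$ holds because $x\leq\ltop$, and associativity holds because both $(x\land y)\land z$ and $x\land(y\land z)$ are the greatest lower bound of $\{x,y,z\}$. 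I would also record that the order recovered from this semilattice agrees with the original, since $x\land y=x$ (that is, $\freshwedge{\varnothing}\{x,y\}=x$) holds exactly when $x\leq y$; this is what makes the two passages mutually inverse.

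It then remains to verify the four axioms of Definition~\ref{defn.hnu.M} for $\hnu a.x=\freshwedge{a}x$. Axiom \rulefont{\hnu{\leq}} is immediate, since $\freshwedge{a}x$ is by definition an $a\#$-lower bound of $x$; axiom \rulefont{\hnu\alpha} is exactly Lemma~\ref{lemm.freshwedge.alpha}; and for \rulefont{\hnu\#}, if $a\#x$ then $x$ itself lies in $\{x'\mid x'\leq x\wedge a\#x'\}$ and dominates every member of it, so it is the $a\#$-greatest lower bound, giving $\hnu a.x=x$. The only step with real content is \rulefont{\hnu{\land}}, which I expect to be the main obstacle and would prove by antisymmetry. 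Since $a\#\freshwedge{a}x$ and $a\#\freshwedge{a}y$ (Corollary~\ref{corr.supp.freshwedge}), we get $a\#(\freshwedge{a}x)\land(\freshwedge{a}y)$ by Theorem~\ref{thrm.no.increase.of.supp}, and this meet is an $a\#$-lower bound of $x\land y$, hence $\leq\freshwedge{a}(x\land y)$; conversely $\freshwedge{a}(x\land y)$ is an $a\#$-lower bound of both $x$ and $y$, hence $\leq\freshwedge{a}x$ and $\leq\freshwedge{a}y$, and so $\leq(\freshwedge{a}x)\land(\freshwedge{a}y)$. Antisymmetry yields the equation, completing the converse and the corollary.
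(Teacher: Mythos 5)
Your proof is correct and follows essentially the same route as the paper, whose entire proof is the one-line citation ``Using the constructions of Propositions~\ref{prop.ffc.char} and~\ref{prop.poset.structure}'': the first direction is exactly your combination of Definition~\ref{defn.meet.semi.poset} with Propositions~\ref{prop.poset.structure} and~\ref{prop.ffc.char}, and your converse spells out precisely the verification (semilattice axioms for $\freshwedge{\varnothing}$, and \rulefont{\hnu\alpha}, \rulefont{\hnu{\land}}, \rulefont{\hnu\#}, \rulefont{\hnu{\leq}} for $\freshwedge{a}$, with \rulefont{\hnu{\land}} by antisymmetry using Corollary~\ref{corr.supp.freshwedge}) that the paper's citation compresses. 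The added detail is sound throughout; note only that you correctly read \rulefont{\hnu\alpha} in its intended form $\hnu b.(b\ a)\act x=\hnu a.x$, matching Lemma~\ref{lemm.freshwedge.alpha}.
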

\begin{proof}
Using the constructions of Propositions~\ref{prop.ffc.char} and~\ref{prop.poset.structure}.
\end{proof} 

Definition~\ref{defn.SemiAll.hom} and Proposition~\ref{prop.semi.iso} merely package the observations above in categorical language:

\begin{defn}
\label{defn.SemiAll.hom}
Given $\ns M$ and $\ns M'$ bounded meet-semilattices with $\hnu$, a \deffont{homomorphism} $F:\ns M\Func\ns M'$ is a function on the underlying nominal sets such that:
\begin{itemize*}
\item
$F(\pi\act x)=\pi\act F(x)$, meaning that $F$ is \emph{equivariant} (Definition~\ref{defn.equivariant}).
\item
$F(\ltop_{\ns M})=\ltop_{\ns M'}$ and $F(x\land_{\ns M} y)=F(x)\land_{\ns M'}F(y)$ and $F(\hnu_{\ns M} a.x)=\hnu_{\ns M'} a.x$.
\end{itemize*}
\label{defn.FFL.hom}
Given $\ns N$ and $\ns N'$ nominal posets with fresh-finite limits, a \deffont{homomorphism} $G:\ns N\Func\ns N'$ is a function on the underlying nominal sets such that:
\begin{itemize*}
\item
$G(\pi\act x)=\pi\act G(x)$ (so $G$ is \emph{equivariant}).
\item
$G(\freshwedge{A}X)=\freshwedge{A}\{G(x)\mid x\in X\}$.
\end{itemize*}
Write \theory{Semi\hnu} for the category of bounded meet-semilattices with $\hnu$ and homomorphisms between them, and \theory{Poset\freshwedge{A}} for the category of nominal posets with fresh-finite limits.
\end{defn}

\begin{prop}
\label{prop.semi.iso}
The natural functors mapping between \theory{Semi\hnu} and \theory{Poset\freshwedge{A}} define an isomorphism of categories.
\end{prop}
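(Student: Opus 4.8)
The plan is to exhibit the two functors explicitly, verify they are well-defined on objects and arrows, and then check that both composites are the identity functor on the nose (this is an isomorphism, not merely an equivalence, of categories). Write $\mathcal F:\theory{Semi\hnu}\Func\theory{Poset\freshwedge{A}}$ for the functor sending a bounded meet-semilattice with $\hnu$ to the nominal poset on the same underlying nominal set $(|\ns M|,\act)$, with the order $x\leq y\defiff x\land y=x$ of Definition~\ref{defn.meet.semi.poset} and the fresh-finite limits assembled from $\ltop$, $\land$, and $\hnu a$ via Propositions~\ref{prop.poset.structure} and~\ref{prop.ffc.char}. Write $\mathcal G$ for the reverse assignment, sending a nominal poset with fresh-finite limits to $(|\mathcal L|,\act,\land,\ltop,\hnu)$ with $\ltop=\freshwedge{\varnothing}\varnothing$, $x\land y=\freshwedge{\varnothing}\{x,y\}$, and $\hnu a.x=\freshwedge{a}\{x\}$. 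That both assignments land in the intended category at the level of objects is exactly the Corollary following Proposition~\ref{prop.poset.structure}.

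The heart of the argument is the correspondence on morphisms, and here the key step is that a map preserving the three generating operations $\ltop$, $\land$, $\hnu a$ automatically preserves every fresh-finite limit. First I would note that a $\theory{Poset\freshwedge{A}}$-homomorphism $G$ (Definition~\ref{defn.SemiAll.hom}) preserves $\ltop$, $\land$, and $\hnu a$ as the special cases $\freshwedge{\varnothing}\varnothing$, $\freshwedge{\varnothing}\{x,y\}$, and $\freshwedge{a}\{x\}$, and is equivariant, so $\mathcal G$ carries it to a $\theory{Semi\hnu}$-homomorphism. Conversely, given a $\theory{Semi\hnu}$-homomorphism $F$ (reading the evidently intended $F$ on the right-hand sides of Definition~\ref{defn.SemiAll.hom}), I would invoke the factorisation of Proposition~\ref{prop.ffc.char}, namely $\freshwedge{A}X=\freshwedge{a_1}\cdots\freshwedge{a_k}(x_1\land\cdots\land x_m)$, and push $F$ through it: preservation of $\hnu a$ handles the outer layers and preservation of $\land$ the inner meet, giving $F(\freshwedge{A}X)=\freshwedge{A}\{F(x)\mid x\in X\}$ where the final equality is the same factorisation now read in the target. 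Monotonicity of $F$, wherever it is needed, is free from preservation of $\land$ via $x\leq y\liff x\land y=x$.

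Finally I would check functoriality and that the two composites are identities. Functoriality is immediate since both functors act as the identity on underlying functions, which compose. On objects, $\mathcal G\mathcal F$ is the identity by the three computations $\freshwedge{\varnothing}\varnothing=\ltop$, $\freshwedge{\varnothing}\{x,y\}=x\land y$, $\freshwedge{a}\{x\}=\hnu a.x$ supplied by Proposition~\ref{prop.poset.structure}; and $\mathcal F\mathcal G$ is the identity because the order reconstructed from a meet agrees with the original order, and the reconstructed $\freshwedge{A}X$ agrees with the original by Proposition~\ref{prop.ffc.char}. On morphisms both composites fix the underlying function, hence are the identity.

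The main obstacle — indeed the only point beyond bookkeeping — is the backward half of the morphism correspondence: showing that preserving $\ltop$, $\land$, and $\hnu a$ forces preservation of arbitrary $\freshwedge{A}X$. Everything turns on the factorisation of Proposition~\ref{prop.ffc.char} being available in both the source and the target, which it is, since both are nominal posets with fresh-finite limits; once that is in hand the verification is a routine induction on the number of generating operations.
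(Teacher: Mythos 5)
Your proof is correct and takes essentially the approach the paper intends: the paper states Proposition~\ref{prop.semi.iso} without any explicit proof, remarking just beforehand that it and Definition~\ref{defn.SemiAll.hom} ``merely package the observations above in categorical language'', and your write-up is the natural expansion of exactly that packaging --- objects via the Corollary to Proposition~\ref{prop.poset.structure}, morphisms via the factorisation $\freshwedge{A}X=\freshwedge{a_1}\cdots\freshwedge{a_k}(x_1\land\cdots\land x_m)$ of Proposition~\ref{prop.ffc.char} read in both source and target, and both composites the identity on underlying nominal sets. You also correctly read past the typo in Definition~\ref{defn.SemiAll.hom}, where $F(\hnu_{\ns M}\,a.x)=\hnu_{\ns M'}\,a.x$ is evidently intended to be $F(\hnu_{\ns M}\,a.x)=\hnu_{\ns M'}\,a.F(x)$.
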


\subsection{Bounded lattices with $\protect\hnu$}

Our version of FOLeq is classical and has negation.
This means that it has $\land$, but by dualising with negation it also has $\lor$.
This has a minor effect on the natural axiomatisation of $\hnu$.
We therefore briefly sketch the case of $\hnu$ for bounded lattices, of which FOLeq is a special case.

\begin{defn}
\label{defn.bounded.lattice}
A \deffont{bounded lattice} in nominal sets is a tuple $\ns L=(|\ns L|,\act,\land,\lor,\ltop,\lbot)$ where $(|\ns L|,\act)$ is a nominal set which we may just write $\ns L$, and $\ltop,\lbot{\in}|\ns L|$ are equivariant \deffont{top} and \deffont{bottom} elements and $\land,\lor:(\ns L\times\ns L)\Func\ns L$ are equivariant functions, such that $\land$ and $\ltop$ form an idempotent monoid and so do $\lor$ and $\lbot$,
$$
\begin{array}{c@{\qquad}c@{\qquad}c@{\qquad}c}
(x\land y)\land z=x\land(y\land z)
&
x\land y=y\land x
&
x\land x=x
&
x\land \ltop=x
\\
(x\lor y)\lor z=x\lor(y\lor z)
&
x\lor y=y\lor x
&
x\lor x=x
&
x\lor \lbot=x
\end{array}
$$
and $\land$ and $\lor$ satisfy \emph{absorption}
$$
x\land(x\lor y)=x
\qquad\qquad
x\lor(x\land y)=x .
$$
Here, $x,y,z$ range over elements of $|\ns L|$.
\end{defn} 

Similarly to Definition~\ref{defn.meet.semi.poset}, a bounded lattice is a poset by taking $x\leq y$ to mean $x\land y=x$ or $x\lor y=y$ (the two conditions are provably equivalent). 

\begin{defn}
\label{defn.hnu.L}
A \deffont{generalised universal quantifier} $\hnu$ on a bounded lattice $\ns L$ is an equivariant map $\hnu: (\mathbb A\times\ns L)\Func\ns L$ satisfying the equalities in Figure~\ref{fig.genhnu}.
\end{defn}

\begin{figure}[tH]
$$
\begin{array}{l@{\qquad}l@{\quad}r@{\ }l}
\rulefont{\hnu\alpha}
&
b\#x\limp& \hnu b.(b\ a)\act x=&x
\\
\rulefont{\hnu{\land}}
&
& \hnu a.(x\land y)=&(\hnu a.x)\land(\hnu a.y)
\\
\rulefont{\hnu{\lor}}
&
a\#y\limp& \hnu a.(x\lor y)=&(\hnu a.x)\lor y
\\
\rulefont{\hnu{\leq}}
&
&\hnu a.x\leq& x
\end{array}
$$
\caption{Nominal algebra axioms for $\protect\hnu$ in a bounded lattice} 
\label{fig.genhnu}
\end{figure}

\begin{rmrk}
Comparing Definitions~\ref{defn.hnu.M} and Definition~\ref{defn.hnu.L}, we note that \rulefont{\hnu\#} has vanished, and that \rulefont{\hnu{\lor}} is not a perfect dual to \rulefont{\hnu{\land}}.

The disappearance of \rulefont{\hnu\#} is explained in Proposition~\ref{prop.bounded.simple}.

The lack of duality between \rulefont{\hnu{\land}} and \rulefont{\hnu{\lor}} is natural; \rulefont{\hnu{\leq}} orients $\hnu$ with respect to $\leq$ so it would be surprising (and wrong) if $\hnu$ displayed perfectly dual behaviour with respect to $\lor$.\footnote{This is to be contrasted with the axiomatisation of and Stone duality result for the $\new$ quantifier in \cite{gabbay:stodnb}. The $\new$ quantifier is self-dual, and \emph{is} symmetric in its interaction between $\land$ and $\lor$.} 
\end{rmrk}

\begin{prop}
\label{prop.bounded.simple}
Suppose $\ns L$ is a bounded lattice with $\hnu$ and $x{\in}|\ns L|$.
Then \rulefont{\hnu\#} from Definition~\ref{defn.hnu.M} follows from the other axioms of a generalised universal quantifier on a bounded lattice.

That is, $a\#x$ implies $\hnu a.x=x$ 
\end{prop}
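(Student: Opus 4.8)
The plan is to prove the two inequalities $\hnu a.x\leq x$ and $x\leq\hnu a.x$ separately and then conclude $\hnu a.x=x$ by antisymmetry of $\leq$ (recall the order comes from the lattice operations as in Definition~\ref{defn.bounded.lattice}). The first inequality $\hnu a.x\leq x$ is immediate from \rulefont{\hnu{\leq}} and requires no hypothesis on $a$ at all. Hence the entire content of the proposition is to derive $x\leq\hnu a.x$ from the freshness assumption $a\#x$, and this is where the freshness is used.

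The key observation I would exploit is that \rulefont{\hnu{\lor}} already carries the information we need, provided we feed it the bottom element. Since $a\#x$, I would instantiate \rulefont{\hnu{\lor}} with its ``$x$'' taken to be $\lbot$ and its ``$y$'' taken to be our $x$, so that the side-condition $a\#y$ becomes exactly $a\#x$ and is discharged. This yields
$$
\hnu a.(\lbot\lor x)=(\hnu a.\lbot)\lor x .
$$
By the lattice laws of Definition~\ref{defn.bounded.lattice} we have $\lbot\lor x=x$, so the left-hand side is just $\hnu a.x$, giving $\hnu a.x=(\hnu a.\lbot)\lor x$. Since $x\leq x'\lor x$ holds for any $x'$ (here $x'=\hnu a.\lbot$), this exhibits $\hnu a.x$ as an upper bound of $x$, i.e. $x\leq\hnu a.x$.

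Combining $x\leq\hnu a.x$ with the reverse inequality $\hnu a.x\leq x$ from \rulefont{\hnu{\leq}} and antisymmetry finishes the argument. I expect there to be no real obstacle: in particular neither \rulefont{\hnu\alpha} nor \rulefont{\hnu{\land}} is needed. The only subtle point to watch is that the freshness side-condition of \rulefont{\hnu{\lor}} sits on the component that is pulled out of the quantifier, which is precisely why instantiating the \emph{other} component to $\lbot$ is exactly what lets $a\#x$ discharge the hypothesis. This is also the conceptual reason \rulefont{\hnu\#} ``has vanished'' in the lattice setting: it is subsumed by the disjunction axiom as soon as a bottom element $\lbot$ is available.
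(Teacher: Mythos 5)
Your proof is correct and is essentially the paper's own argument: the paper likewise obtains $x\leq\hnu a.x$ from a single application of \rulefont{\hnu{\lor}} to a trivial join decomposition of $x$, only it uses idempotence, $\hnu a.x=\hnu a.(x\lor x)=(\hnu a.x)\lor x$, where you use the unit law $x=\lbot\lor x$. The one small correction to your closing remark: since the paper's variant needs only idempotence, the availability of $\lbot$ is not what subsumes \rulefont{\hnu\#} --- the disjunction axiom alone already does.
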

\begin{proof}
We note that 
$
\hnu a.x=
\hnu a.(x\lor x)
\stackrel{\rulefont{\hnu{\lor}}}{=}
(\hnu a.x)\lor x
$
so that $x\leq\hnu a.x$.
Furthermore by \rulefont{\hnu{\leq}} $\hnu a.x\leq x$, and we are done.
\end{proof}

\begin{corr}
\label{corr.a.fresh.bounded}
Suppose $\ns L$ is a bounded lattice with a generalised universal quantifier $\hnu$.
Then $\hnu a.x$ is the $a\#$limit for $\{x\}$.
\end{corr}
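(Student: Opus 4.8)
The plan is to reduce this to the semilattice case already handled in Proposition~\ref{prop.poset.structure}. Recall from Notation~\ref{nttn.lall} that $\freshwedge{a}x$ is by definition the greatest element of $\{x'{\in}|\ns L| \mid x'{\leq} x \land a\#x'\}$, so I must verify three things about $\hnu a.x$: that it is a lower bound ($\hnu a.x\leq x$), that it is $a$-fresh ($a\#\hnu a.x$), and that it dominates every $a$-fresh lower bound of $x$.

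First I would observe that a bounded lattice $\ns L$ is in particular a bounded meet-semilattice, obtained by forgetting $\lor$ and $\lbot$ and retaining $(|\ns L|,\act,\land,\ltop)$; and crucially the induced partial order agrees, since $x\leq y$ is defined by $x\land y=x$ both in Definition~\ref{defn.meet.semi.poset} and in the bounded-lattice setting. Next I would check that $\hnu$ is a generalised universal quantifier on this semilattice reduct in the sense of Definition~\ref{defn.hnu.M}. Of the four required axioms, \rulefont{\hnu\alpha}, \rulefont{\hnu{\land}}, and \rulefont{\hnu{\leq}} are immediately available, being among the axioms of Figure~\ref{fig.genhnu}. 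The remaining axiom \rulefont{\hnu\#} is exactly the content of Proposition~\ref{prop.bounded.simple}, which derives $a\#x\limp\hnu a.x=x$ from the bounded-lattice axioms.

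With all four axioms of Definition~\ref{defn.hnu.M} now in hand, I would simply invoke part~3 of Proposition~\ref{prop.poset.structure}, which states precisely that $\hnu a.x$ is an $a\#$limit for $\{x\}$. Unwinding that argument: $a\#\hnu a.x$ comes from \rulefont{\hnu\alpha} together with part~3 of Corollary~\ref{corr.stuff}; the lower-bound property is \rulefont{\hnu{\leq}}; and greatestness among $a$-fresh lower bounds follows from monotonicity of $\hnu$ (Lemma~\ref{lemm.hnu.leq}, whose proof uses only \rulefont{\hnu{\land}} and so transfers unchanged) followed by \rulefont{\hnu\#}.

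I do not expect a genuine obstacle here; the only point requiring care is confirming that the semilattice reduct carries the same partial order and the same $\hnu$, so that Proposition~\ref{prop.poset.structure} genuinely applies rather than applying to some subtly different structure. Once Proposition~\ref{prop.bounded.simple} supplies \rulefont{\hnu\#}, the entire argument of the semilattice case goes through verbatim, and the corollary is immediate.
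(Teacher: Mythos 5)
Your proposal is correct and coincides with the paper's own proof: the paper likewise invokes Proposition~\ref{prop.bounded.simple} to recover \rulefont{\hnu\#}, observes that the restriction to $\ltop$, $\land$, and $\hnu$ is a bounded meet-semilattice with the same partial order, and then applies Proposition~\ref{prop.poset.structure}. Your extra care in checking that the semilattice reduct carries the same order and the same $\hnu$ is exactly the point the paper compresses into ``it is a fact that this has the same partial order.''
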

\begin{proof}
By Proposition~\ref{prop.bounded.simple} \rulefont{\hnu\#} holds, so the restriction of $\ns L$ to $\ltop$,\ $\land$, and $\hnu$ forms a bounded meet-semilattice, and it is a fact that this has the same partial order.
We use Proposition~\ref{prop.poset.structure}.
\end{proof}

Axioms of the type presented in Definition~\ref{defn.hnu.L} underlie the nominal algebraic axiomatisation of first-order quantification in e.g. \cite{gabbay:oneaah-jv,gabbay:stodfo}.

\subsection{FOLeq algebras algebraically}
\label{subsect.foleq.alg}

We are now ready to give a full axiomatisation of a FOLeq algebra, in nominal algebra in the sense of the formal language described in \cite{gabbay:nomuae}.
This nominal algebra theory builds on the nominal algebra axiomatisations of substitution and first-order logic originally developed in \cite{gabbay:capasn,gabbay:oneaah,gabbay:capasn-jv,gabbay:oneaah-jv}.

\maketab{tab9}{@{\hspace{-3em}}R{14em}@{\ }L{14em}}
\begin{thrm}
\label{thrm.eq}
Suppose $\ns U$ is a termlike $\sigma$-algebra.
The FOLeq algebras over $\ns U$ (Definition~\ref{defn.FOLeq}) are precisely characterised as an underlying nonempty nominal set $\mathcal L$ along with
\begin{itemize*}
\item
an equivariant $\sigma$-action 
$\tf{sub}:(\mathcal  L\times\mathbb A\times\ns U)\Func\ns L$, written infix $x[a{\sm}u]$;
\item
an equivariant function $\land:\mathcal L\times\mathcal L\Func\mathcal L$;
\item
an equivariant function $\lneg:\mathcal L\Func\mathcal L$,
\item
an equivariant function $\hnu:(\atoms\times\mathcal L)\Func\mathcal L$, and
\item
an equivariant function $=^\lmathcal:(\ns U\times\ns U)\Func\mathcal L$,
\end{itemize*}
such that:
\begin{enumerate*}
\item
\label{item.sub.cont}
$\tf{sub}$ satisfies the axioms of a $\sigma$-action over $\ns U$ from Figure~\ref{fig.nom.sigma} and
\begin{tab9}
(x\land y)[a\sm u]=&(x[a\sm u])\land(y[a\sm u])
\\
(\lneg x)[a\sm u]=&\lneg(x[a\sm u])
\\
b\#u\limp\ \ (\hnu b.y)[a\sm u]=&\hnu b.(y[a\sm u]),
\\
(v'{=^\lmathcal}v)[a\sm u]=&(v'[a\sm u])=^\lmathcal(v[a\sm u]) 
\end{tab9}
\item
if we take $\ltop=\lneg(x\land\lneg x)$ (for any $x$) and $x\lor y=\lneg(\lneg x\land\lneg y)$ and $\lbot=\lneg\ltop$ then $\ltop$,\ $\land$,\ $\lbot$, and $\lor$ satisfy the axioms of a bounded lattice from Definition~\ref{defn.bounded.lattice},
\item
in addition\footnote{This axiomatisation is not minimal, but it is readable.  The interested reader is referred elsewhere \cite{mccune:shosab} for what can be achieved in terms of trading off the number of axioms against readability.}
\begin{tab9}
x\land(y\lor z)=&(x\land y)\lor(x\land z) 
\\
\lneg\lneg x=&x,
\end{tab9} 
\item
$\hnu$ satisfies the axioms of a generalised universal quantifier from Figure~\ref{fig.genhnu}, and finally
\item
\label{item.eq.cont}
$=^\lmathcal$ satisfies the axioms
\begin{tab9}
(u{=^\lmathcal}u)=&\ltop
\\
(u{=^\lmathcal}v)\land z[a\sm u] =& (u{=^\lmathcal}v)\land z[a\sm v]
\end{tab9}
\end{enumerate*}
\end{thrm}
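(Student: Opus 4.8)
The plan is to prove a two-way correspondence between Definition~\ref{defn.FOLeq} and the equational presentation, leaning heavily on the machinery already assembled in this appendix. The pure lattice-and-quantifier content is handled by Corollary~\ref{corr.a.fresh.bounded} (that in a bounded lattice with a generalised universal quantifier $\hnu$, $\hnu a.x$ is the $a\#$limit of $\{x\}$) together with Proposition~\ref{prop.ffc.char} (which reassembles $\ltop$, $\land$, and $\freshwedge{a}$ into full fresh-finite limits). So the genuinely new work in Theorem~\ref{thrm.eq} over these earlier results is folding in the two pieces of structure a FOLeq algebra carries beyond a bounded lattice with $\hnu$: the compatible $\sigma$-action, and the equality. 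I would therefore run the proof as two passes, FOLeq algebra $\to$ equational structure and back, and in each pass treat the Boolean/quantifier axioms briskly and concentrate on the $\sigma$- and $=^\lmathcal$-clauses.

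First I would do the forward direction. Given a FOLeq algebra $\mathcal L$, set $\hnu a.x = \freshwedge{a}x$ and read $\ltop$, $\land$, $\lneg$ off the poset structure. The bounded-lattice equations and the distributivity/involution equations of items~2 and~3 are the standard identities of a bounded distributive complemented lattice, with $\lneg\lneg x=x$ supplied by Lemma~\ref{lemm.comp.unique}. For item~4, \rulefont{\hnu\alpha} is Lemma~\ref{lemm.freshwedge.alpha}, while \rulefont{\hnu{\leq}} and \rulefont{\hnu{\land}} are immediate from $\freshwedge{a}$ being a greatest lower bound (using Lemma~\ref{lemm.easy}), and \rulefont{\hnu{\lor}} is exactly part~2 of distributivity (Definition~\ref{defn.distrib}). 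The $\sigma$-clauses of item~1 are the compatibility conditions of Definition~\ref{defn.fresh.continuous} specialised via Lemma~\ref{lemm.easy}(1,3) and the negation clause of compatibility. For equality, the single element is $(a{=^\lmathcal}b)$ and I recover the binary operation by $v'{=^\lmathcal}v := (a{=^\lmathcal}b)[a\sm v',b\sm v]$ as in Definition~\ref{defn.extend.f.P}; the two equality axioms of item~5 then follow from conditions~1 and~2 of Definition~\ref{defn.eq} (rewritten as in Remark~\ref{rmrk.short}), and the substitution-commutation clause $(v'{=^\lmathcal}v)[a\sm u]=(v'[a\sm u]){=^\lmathcal}(v[a\sm u])$ is derived exactly as the equality case of Lemma~\ref{lemm.f.P.sub}: expand $=^\lmathcal$ through substitution into the base element, push the outer $[a\sm u]$ inwards with Lemma~\ref{lemm.sub.sub}, and garbage-collect the residue using $\supp(a{=^\lmathcal}b)\subseteq\{a,b\}$ and \rulefont{\sigma\#}.

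Next I would do the backward direction. From the bounded-lattice axioms together with the distributivity and $\lneg\lneg x=x$ axioms of items~2 and~3 one obtains a bounded distributive complemented lattice, hence a nominal poset under $x\leq y \Leftrightarrow x\land y=x$ whose order is equivariant by equivariance of $\land$. By Proposition~\ref{prop.bounded.simple} the axiom \rulefont{\hnu\#} is derivable, so Corollary~\ref{corr.a.fresh.bounded} applies and $\hnu a.x$ is the $a\#$limit of $\{x\}$; combined with $\ltop$, $\land$, and Proposition~\ref{prop.ffc.char} this makes $\mathcal L$ finitely fresh-complete, with finitely fresh-cocomplete then following using complements. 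Distributivity in the sense of Definition~\ref{defn.distrib} comes from the distributivity axiom (part~1, via the standard equivalence of the two distributive laws) and from \rulefont{\hnu{\lor}} (part~2). Compatibility of the $\sigma$-action is the content of the $\sigma$-clauses of item~1 together with monotonicity from Lemma~\ref{lemm.easy}(2). Finally, I take the element $(a{=^\lmathcal}b)$ for two fixed distinct atoms, whose support lies in $\{a,b\}$ by equivariance of $=^\lmathcal$ and conservation of support (part~3 of Theorem~\ref{thrm.no.increase.of.supp}); the substitution-commutation axiom lets me rewrite $(a{=^\lmathcal}b)[a\sm u,b\sm v]$ as $(u{=^\lmathcal}v)$, after which the conditions of Definition~\ref{defn.eq} are precisely the two equality axioms of item~5.

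The main obstacle I expect is bookkeeping rather than any single hard idea. The delicate points are matching the freshness side-conditions in \rulefont{\hnu{\lor}} and in the $\sigma$-clauses against the corresponding conditions of Definitions~\ref{defn.distrib} and~\ref{defn.fresh.continuous} (ensuring, for instance, that the ``$b\#u$'' condition on $(\hnu b.y)[a\sm u]=\hnu b.(y[a\sm u])$ lines up with the ``$A\cap(\supp(u)\cup\{a\})=\varnothing$'' condition of compatibility), and the translation, in both directions, between the single equality \emph{element} of Definition~\ref{defn.eq} and the equality \emph{operation} $=^\lmathcal:\ns U^2\Func\mathcal L$ of the theorem. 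Here the substitution-commutation clause is exactly the bridge that makes the round-trip work, so it is forced, not superfluous. Since the appendix has already isolated precisely these side-conditions and the commutation clause is derivable from the base support bound as in Lemma~\ref{lemm.f.P.sub}, I anticipate that once the two passes are set up the individual verifications are routine; the real care is in never dropping a freshness hypothesis.
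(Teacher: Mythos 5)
Your proposal is correct and follows essentially the same route as the paper's proof: both directions are assembled from the same ingredients (Lemma~\ref{lemm.comp.unique}, Lemma~\ref{lemm.freshwedge.alpha}, Proposition~\ref{prop.ffc.char}, Proposition~\ref{prop.bounded.simple} via Corollary~\ref{corr.a.fresh.bounded}, Definition~\ref{defn.fresh.continuous} with Lemma~\ref{lemm.easy}, and the element/operation translation for $=^\lmathcal$ via Definition~\ref{defn.extend.f.P} and Remark~\ref{rmrk.short}). If anything you are slightly more explicit than the paper, which glosses the substitution-commutation clause for $=^\lmathcal$ that you correctly derive as in Lemma~\ref{lemm.f.P.sub}.
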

\begin{proof}
We have all the pieces, we just need to put them together.

Suppose $\mathcal L$ is a FOLeq algebra.
We take $\ltop$ to be the top element, $x\land y$ to be the limit of $\{x,y\}$, $\hnu a.x$ to be $\freshwedge{a}x$ the $a\#$limit of $\{x\}$, $\lneg x$ to be the complement of $x$ in $\mathcal L$ (which is unique by Lemma~\ref{lemm.comp.unique}), and $u{=^\lmathcal}v$ to be $(a{=}b)[a\sm u,b\sm v]$ where ${=^\lmathcal}$ is the equality of $\mathcal L$ (the simultaneous $\sigma$-action is from Definition~\ref{defn.sim.sub.alpha}).
Then: 
\begin{itemize*}
\item
By assumption in Definition~\ref{defn.FOLeq} the $\sigma$-action satisfies the axioms of Figure~\ref{fig.nom.sigma}.
\item
By properties of limits and complements $\ltop$, $\land$, $\lbot$, and $\lor$ satisfy the axioms of a bounded lattice.
\item
By assumption $\mathcal L$ is distributive (Definition~\ref{defn.distrib}) and complemented, and it follows that $\land$ distributes over $\lor$ and $\lneg\lneg x=x$.
\item 
By Lemma~\ref{lemm.freshwedge.alpha} $\hnu$ satisfies \rulefont{\hnu\alpha} from Figure~\ref{fig.genhnu}.
By routine calculations on limits $\hnu$ satisfies \rulefont{\hnu\land}.\footnote{It is not quite that simple; we are using \emph{fresh-finite} limits.  But it all works out.}
By assumption $\mathcal L$ is distributive (Definition~\ref{defn.distrib}), so \rulefont{\hnu{\lor}} holds.
By the definition of a fresh-finite limit $\hnu a.x\leq x$.
\end{itemize*}

Now suppose $\mathcal L$ has all the structure of the statement of this theorem.
Definition~\ref{defn.FOLeq} defines a FOLeq algebra to be a finitely fresh-complete complemented nominal poset with a compatible $\sigma$-algebra structure and an equality. 
So we consider each property in turn:
\begin{itemize*}
\item
\emph{Finitely fresh-complete.}\quad
By standard arguments using the axioms $\ltop$ is a top element and $x\land y$ is a limit for $\{x,y\}$.
By Corollary~\ref{corr.a.fresh.bounded} $\hnu a.x$ is an $a\#$limit for $\{x\}$.
It follows by Proposition~\ref{prop.ffc.char} that $\mathcal L$ is finitely fresh-complete.
\item
\emph{Complements.}\quad
Again it follows from the axioms that $\lneg$ is a complement.
\item
\emph{Compatible $\sigma$-algebra structure.}\quad
By assumption $\tf{sub}$ satisfies the axioms from Figure~\ref{fig.nom.sigma}, so $\mathcal L$ has a $\sigma$-algebra structure of $\ns U$.
By our assumptions (condition~\ref{item.sub.cont} in the statement of this theorem) it is compatible.
\item
\emph{Equality.}\quad
It follows from our assumptions that $a{=^\lmathcal}b$ is an equality in the sense of Definition~\ref{defn.eq}.\footnote{Note that the statements are slightly different; Definition~\ref{defn.eq} assumes an element $a{=^\lmathcal}b$ whereas above we assume an equivariant function ${=^\lmathcal}$ and in condition~\ref{item.sub.cont} we assume that substitution distributes over it.}
\qedhere\end{itemize*}
\end{proof}

\section{On the Leibnitz equality}
\label{sect.more.equality}

\begin{rmrk}
In Remark~\ref{rmrk.leibnitz} we noted that Definition~\ref{defn.eq.powamgis} is a form of Leibnitz equality: $p\in(a=^{\ns P}b)$ when it `cannot distinguish between $a$ and $b$'.

Could it be that $(a=^{\ns P}b)$ is also equal to $\{p{\in}|\ns P|\mid (a\ b)\act p=p\}$?
After all, if $(a\ b)\act p=p$ then $p$ also `cannot distinguish $a$ and $b$'.

This notion is too weak.
To see this intuitively, consider that the unordered pair $\{a,b\}$ is invariant under swapping $a$ and $b$, but is not invariant under converting $b$ into $a$ to obtain $\{a\}$.
\end{rmrk}

We note that a strengthened version of Definition~\ref{defn.eq.powamgis} is possible:
\begin{lemm}
\label{lemm.eq.strong}
$p\in(u=^{\ns P}v)$ if and only if for \emph{all} atoms $c$,\ $p[u\ms c]=p[v\ms c]$.

In symbols: $(u=^{\ns P}v)=\{p{\in}|\ns P| \mid \Forall{c}p[u\ms c]=p[v\ms c]\}$.
\end{lemm}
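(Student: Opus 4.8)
The plan is to prove the nontrivial direction $p\in(u{=^{\ns P}}v)\Rightarrow\Forall{c}p[u\ms c]=p[v\ms c]$; the reverse is immediate, since $\Forall{c}$ implies $\New{c}$. So I would assume $\New{c}\,p[u\ms c]=p[v\ms c]$ and fix an \emph{arbitrary} atom $c_0$, with goal $p[u\ms c_0]=p[v\ms c_0]$. The entire difficulty is that $c_0$ need not be fresh for $u$, $v$, or $p$; for fresh $c_0$ the statement is essentially the hypothesis.

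First I would reduce to the representable case $\ns P=\powamgis(\ns X)$ for a $\sigma$-algebra $\ns X$ — this is the setting of Definition~\ref{defn.eq.powamgis} and Remark~\ref{rmrk.leibnitz}, and it also covers the points of Definition~\ref{defn.points} — so that membership gives a concrete handle. By Proposition~\ref{prop.sigma.iff} together with extensionality of sets, the condition $p[u\ms c]=p[v\ms c]$ unfolds into $\Forall{x{\in}|\ns X|}(x[c\sm u]\in p\liff x[c\sm v]\in p)$. Hence it suffices to fix an arbitrary witness $x_0{\in}|\ns X|$ and establish $x_0[c_0\sm u]\in p\liff x_0[c_0\sm v]\in p$.

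The key step is a transport along a fresh name. I would pick an atom $e$ that is simultaneously fresh, $e\#x_0,u,v$, and satisfies $p[u\ms e]=p[v\ms e]$; such $e$ exists because the freshness constraint excludes only finitely many atoms while the equality holds for cofinitely many by hypothesis. Crucially, $e$ need \emph{not} be fresh for $p$. Set $x'=x_0[c_0\sm e]$, which equals $(e\ c_0)\act x_0$ by Lemma~\ref{lemm.sub.alpha} since $e\#x_0$. Then \rulefont{\sigma\alpha} gives $x'[e\sm u]=x_0[c_0\sm u]$ and $x'[e\sm v]=x_0[c_0\sm v]$. Applying the chosen instance $p[u\ms e]=p[v\ms e]$ at the element $x'$ (again via Proposition~\ref{prop.sigma.iff}) yields $x_0[c_0\sm u]\in p\liff x'[e\sm u]\in p\liff x'[e\sm v]\in p\liff x_0[c_0\sm v]\in p$, which is exactly what is required. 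As $x_0$ and $c_0$ were arbitrary, $\Forall{c}p[u\ms c]=p[v\ms c]$ follows.

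The main obstacle is precisely this non-fresh $c_0$, and it is instructive that it \emph{cannot} be dispatched at the $\amgis$-level alone: there the only moves are commuting actions by \rulefont{\amgis\sigma} and cancelling a $\new$-family by exactness, and — as Remark~\ref{rmrk.subsequent} and the discussion preceding Lemma~\ref{lemm.amgis.alpha} stress — there is deliberately no \rulefont{\amgis\alpha} rule, so the bound-like atom $c_0$ in $[{\cdot}\ms c_0]$ can never be renamed on the $\amgis$ side; any attempt to use exactness just reproduces the same equality-at-$c_0$ one started from. The renaming that unblocks the argument lives on the $\sigma$-algebra side, where \rulefont{\sigma\alpha} \emph{does} rename the substituted variable, and passing to the powerset representation via Proposition~\ref{prop.sigma.iff} is exactly what exposes it. For this reason I expect the cleanest statement to be for $\ns P$ presented as a sub-$\amgis$-algebra of some $\powamgis(\ns X)$, which suffices here since the exact $\amgis$-algebras of interest (the $\amgis$-powersets of Definition~\ref{defn.powamgis} and the points of Definition~\ref{defn.points}) are all of this form.
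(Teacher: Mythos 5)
Your proof is correct and is essentially the paper's own argument: the paper likewise fixes an arbitrary atom $a$ and element $x$, chooses one $c$ with $c\#u,v,x$ (pointedly \emph{not} requiring $c\#p$) at which the hypothesis instance holds, and transports along \rulefont{\sigma\alpha} via Proposition~\ref{prop.sigma.iff} --- your $x'=x_0[c_0\sm e]=(e\ c_0)\act x_0$ (via Lemma~\ref{lemm.sub.alpha}) is exactly the paper's $(c\ a)\act x$. Your explicit restriction to $\ns P$ presented inside some $\powamgis(\ns X)$ also matches the paper, whose proof implicitly assumes the same set-based setting by arguing through elements $x$ and Proposition~\ref{prop.sigma.iff}.
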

\begin{proof}
Clearly if for \emph{all} atoms $c$,\ $p[u\ms c]=p[v\ms c]$, then for all but finitely many $c$,\ $p[u\ms c]=p[v\ms c]$.

Conversely, suppose $p[u\ms c]=p[v\ms c]$ for all but finitely many $c$, and consider any atom $a$ and element $x$.
Now choose one $c$ such that $c\#u,v,x$ and $p[u\ms c]=p[v\ms c]$ (this is always possible because all but finitely many $c$ satisfy each condition). 
By \rulefont{\sigma\alpha} $x[a\sm u]=((c\ a)\act x)[c\sm u]$ and $x[a\sm v]=((c\ a)\act x)[c\sm v]$ and it follows using Proposition~\ref{prop.sigma.iff} repeatedly that $x\in p[u\ms a]$ if and only $x\in p[v\ms a]$. 
\end{proof}

\begin{rmrk}
Given Lemma~\ref{lemm.eq.strong}, could Definition~\ref{defn.eq.powamgis} use $\forall c$ instead of the less familiar $\new c$? 

Yes, but the use of the $\new$-quantifier is preferable.
It is a weaker proof-obligation; 
it is strictly easier to prove $\New{c}p[u\ms c]=p[u\ms c]$ than it is to prove $\Forall{c}p[u\ms c]=p[u\ms c]$, because we have to worry about `fewer values' for $c$.

This is exploited immediately after Definition~\ref{defn.eq.powamgis} in Lemma~\ref{lemm.eq.sanity.check} when we use two $\new$-quantifiers, one for $a'$ and one for $c$.
If we had taken Lemma~\ref{lemm.eq.strong} as our definition instead, then we would have had to worry about the case $a'=c$; something similar happens in Proposition~\ref{prop.xeqx}.

Thus, the $\new$ form used in Definition~\ref{defn.eq.powamgis} is more elementary and convenient, though it is indeed logically equivalent to the $\forall$ form of Lemma~\ref{lemm.eq.strong}.
\end{rmrk}

We mention that Definitions~\ref{defn.exact.amgis.algebra} and~\ref{defn.eq.powamgis} have a kind of purely nominal precedent:
\begin{lemm}
\label{lemm.precedent}
Suppose $\ns X$ is a nominal set and $X,Y{\in}|\nompow(\ns X)|$ are finitely supported subsets of $\ns X$ (Subsection~\ref{subsect.finsupp.pow}), and suppose $a\#X$ and $a\#Y$.
Write $X_{\#a}=\{x{\in}X\mid a\#x\}$ and $Y_{\#a}=\{y{\in}Y\mid a\#y\}$.
Then 
$$
X=Y
\quad\text{if and only if}\quad
X_{\#a}=Y_{\#a}  .
$$
\end{lemm}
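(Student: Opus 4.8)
The plan is to prove Lemma~\ref{lemm.precedent} directly, as a pair of set inclusions, exploiting the finite support of $X$ and $Y$ together with the freshness assumptions $a\#X$ and $a\#Y$. The forward direction is trivial: if $X=Y$ then clearly $X_{\#a}=\{x\in X\mid a\#x\}=\{y\in Y\mid a\#y\}=Y_{\#a}$, so no work is needed there. The substance is entirely in the converse, so I would assume $X_{\#a}=Y_{\#a}$ and aim to show $X\subseteq Y$ (the reverse inclusion following by symmetry of the hypotheses).

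First I would take an arbitrary $x\in X$ and note that I want to show $x\in Y$. The obstacle is that $x$ need \emph{not} be fresh for $a$ (Lemma~\ref{lemm.pow.not.true} warns us that elements of a finitely supported set can have support exceeding that of the set), so $x$ need not lie in $X_{\#a}$ and I cannot directly invoke the hypothesis. The standard nominal move is to rename: choose a fresh atom $b$, so that $b\#x,X,Y$ (and $b\neq a$), and consider $(b\;a)\act x$. Since $a\#X$ and $b\#X$, Corollary~\ref{corr.stuff} gives $(b\;a)\act X=X$, so from $x\in X$ we get $(b\;a)\act x\in X$. Moreover by Proposition~\ref{prop.pi.supp} we have $\supp((b\;a)\act x)=(b\;a)\act\supp(x)$, and since $b\#x$ means $a\mapsto b$ sends any occurrence of $a$ in $\supp(x)$ to $b$, the atom $a$ is fresh for $(b\;a)\act x$; that is, $a\#(b\;a)\act x$. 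Hence $(b\;a)\act x\in X_{\#a}$.

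Now I would apply the hypothesis: $X_{\#a}=Y_{\#a}\subseteq Y$, so $(b\;a)\act x\in Y$. Finally, since $a\#Y$ and $b\#Y$, Corollary~\ref{corr.stuff} again gives $(b\;a)\act Y=Y$, so applying $(b\;a)$ to $(b\;a)\act x\in Y$ yields $x=(b\;a)\act(b\;a)\act x\in(b\;a)\act Y=Y$. This establishes $X\subseteq Y$; by the symmetric argument (swapping the roles of $X$ and $Y$, which is legitimate because the hypotheses $a\#X$, $a\#Y$ and $X_{\#a}=Y_{\#a}$ are symmetric) we get $Y\subseteq X$, and therefore $X=Y$.

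The main obstacle, as flagged above, is precisely that an element $x\in X$ carrying $a$ in its support cannot be tested directly against the hypothesis, and the whole proof turns on the renaming trick that replaces $x$ by an $\alpha$-variant $(b\;a)\act x$ which \emph{is} fresh for $a$ while remaining inside the set $X$ (using $a\#X$). The implicit use of the some/any property (Theorem~\ref{thrm.new.equiv}) in choosing the fresh $b$ makes this choice harmless; I would present the choice of $b$ as ``choose fresh $b$'' in the usual nominal style. Everything else is routine bookkeeping with Corollary~\ref{corr.stuff} and Proposition~\ref{prop.pi.supp}.
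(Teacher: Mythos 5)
Your proof is correct and is essentially identical to the paper's: the forward direction is dismissed as immediate, and the converse is proved by taking $x\in X$, choosing a fresh $b$, using $(b\;a)\act X=X$ (Corollary~\ref{corr.stuff}) and Proposition~\ref{prop.pi.supp} to place $(b\;a)\act x$ in $X_{\#a}=Y_{\#a}$, then pulling back into $Y$ via $(b\;a)\act Y=Y$. The only cosmetic difference is that you spell out the symmetric inclusion $Y\subseteq X$, which the paper leaves implicit.
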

\begin{proof}
The left-to-right implication is immediate.
For the right-to-left implication it suffices to prove that $X\subseteq Y$.
So suppose $x\in X$.
Choose fresh $b$; then $(b\ a)\act x\in X$ and by Proposition~\ref{prop.pi.supp} $a\#(b\ a)\act x$ so $(b\ a)\act x\in Y$ so that $x\in Y$.
\end{proof}
For more on this, see \cite[Subsection~9.5]{gabbay:fountl}.

\end{document}